\newcommand{\nc}{\newcommand}
\def\hh#1{\hspace*{0.#1cm}}
\nc{\qed}{$\square$}
\newcommand{\den}[1]{[\![#1]\!]}  
\newcommand{\denn}[1]{[\![\![#1]\!]\!]} 
\nc{\rw}{\to} 
\nc{\tor}{\to} 
\nc{\crwlto}{\rightarrowtriangle}  
\nc{\clto}{\crwlto}                
\nc{\clp}{{\cal P} \vdash_{\mi{CRWL}^{+}}} 
\nc{\denp}[1]{\den{#1}^+} 
\nc{\cldt}{{\cal P} \vdash_{\mi{CRWL}^{d}}} 
\nc{\dend}[1]{\den{#1}^d} 
\nc{\dg}[1]{\den{#1}_{\mi{CRWL}}}
\nc{\cl}{{\cal P} \vdash_{\mi{CRWL}}}
\nc{\gl}{{\cal P} \vdash_{\mathit{CRWL}_{\mi{let}}}}
\nc{\dgl}[1]{\den{#1}_{\mi{CRWL}_{\mi{let}}}}
\nc{\ddgl}[1]{\denn{#1}_{\mi{CRWL}_{\mi{let}}}}
\nc{\ddcl}[1]{\ddgl{#1}}
\nc{\dc}[1]{\dg{#1}} 
\nc{\dcl}[1]{\dgl{#1}} 
\nc{\vdcrwl}{\vdash_{\mathit{CRWL}}}
\nc{\vdcrwll}{\vdash_{\mathit{CRWL_{let}}}}
\nc{\fd}{\to_{lw}}    
\nc{\nr}{\leadsto} 
\nc{\fnr}{{\,\leadsto^l}}  
\nc{\fnrl}{\fnr}  
\nc{\fnrL}{{\,\leadsto^L}}
\nc{\fnre}{\leadsto^{l^*}}  
\nc{\fnrc}[1]{\leadsto^{l^{#1}}}  
\nc{\var}{{\cal V}}
\nc{\ra}{\tor}
\nc{\leqhyp}{\Subset}
\nc{\tot}[1]{{#1}^\tau}
\nc{\tlr}[1]{\widehat{#1}} 
\nc{\jn}{\Join} 
\nc{\tr}{\underline{\mbox{\textbf{t}}}}
\nc{\prog}{{\cal P}}
\nc{\com}[1]{} 
\nc{\crwl}{CRWL}
\nc{\crwllet}{\mbox{\crwl$_{let}$}}
\nc{\crwll}{$\crwllet$}
\nc{\dsord}{\unlhd} 
\nc{\clrule}[1]{\crule{#1}} 
\nc{\lrrule}[1]{\crule{#1}} 
\nc{\clinfer}[3]{\infer[\clrule{{#1}}]{{#2}}{{#3}}}
\nc{\ohs}{\leqhyp} 
\nc{\vran}{vran} 
\nc{\wrt}{wrt.}
\nc{\crule}[1]{({#1})} 
\nc{\refp}[1]{\ref{#1} (page \pageref{#1})}
\nc{\nf}[1]{\downarrow^{#1}} 
\nc{\trs}{{TRS}} 
\nc{\ctrs}{CS} 
\nc{\trss}{\trs's} 
\nc{\ctrss}{\ctrs's} 
\newcommand{\cntx}{{\cal C}}
\nc{\f}{{\,\to^{l}\,}}    
\nc{\fnf}{\to^{\mathit{lnf}}}    
\nc{\fnfe}{\to^{{\mathit{lnf}}^*}}    
\nc{\fL}{{\,\to^{L}\,}}
\nc{\fLe}{\to^{L^*}}  
\nc{\fLnf}{{\,\to^{Lnf}\,}}
\nc{\fLnfe}{{\,\to^{{Lnf}^*}\,}}
\nc{\fe}{\to^{l^*}}  
\nc{\fc}[1]{\to^{l^{#1}}}  
\nc{\fLc}[1]{\to^{L^{#1}}}  
\nc{\fnrLe}{\leadsto^{L^*}}  
\nc{\mi}[1]{\mathit{#1}}
\nc{\eqehs}{\asymp} 
\newcommand{\kk}{}
\newcommand{\ekk}{}
\newcommand{\conscrwl}{\vdash_{\textit{CRWL}}}
\newcommand{\conscrwllet}{\vdash_{\textit{CRWL$_{{\it let}}$}}}
\nc{\con}{{\cal C}}  
\nc{\cnn}[1]{\con[#1]}  
\nc{\cnnp}[1]{\con'[#1]}  
\nc{\ordsus}{\preceq} 
\nc{\ordSus}{\lesssim} 
\nc{\ordap}{\sqsubseteq} 
\nc{\hde}{\varphi} 
\nc{\hdes}{H} 
\nc{\del}{\delta} 
\nc{\ds}{Den} 
\nc{\hds}{HD} 
\nc{\uhs}{\Cup\ } 
\nc{\Uhs}{\mbox{\Large $\uhs$}} 
\nc{\Uhss}[1]{\begin{array}[t]{c}\Uhs\\[-1.2ex]\mbox{\!\!\scriptsize{$#1$}}\end{array}} 
\nc{\sd}[1]{\Delta {#1}} 
\nc{\hl}{\hat{\lambda}} 
\nc{\partes}[1]{{\cal P}(#1)} 
\tikzstyle{npath}=[circle,draw=red!60,fill=red!30,thick,inner sep=0pt,minimum size=7mm]
\tikzstyle{nnarr}=[npath, minimum size=5mm] 
\tikzstyle{apath}=[thick]
\tikzstyle{fondoTerm}=[line width=4mm,join=round,orange!20]
\newtheorem{theorem}{Theorem}
\newtheorem{lemma}{Lemma}
\newtheorem{definition}{Definition}
\newtheorem{example}{Example}
\newtheorem{counterexample}{Counterexample}
\newtheorem{corollary}{Corollary}
\newtheorem{proposition}{Proposition}
\definecolor{orange}{cmyk}{0, 0.6, 0.8, 0}
\definecolor{morado}{cmyk}{0.76, 0.98, 0, 0}
\definecolor{olive}{cmyk}{0.64,0,0.95,0.4}
\definecolor{grisoscuro}{rgb}{0.5,0.5,0.5}
\newcommand{\coment}[1]{}
\newcommand{\sobra}[1]{}
\newcommand{\nuevo}[1]{{#1}}
\newcommand{\tachar}[1]{}
\newcommand{\tacharb}[1]{}
\newcommand{\longversion}[1]{}
\newcommand{\todo}[1]{}
\newcommand{\nota}[1]{}
\title[Theory and Practice of Logic Programming]
{
Rewriting and narrowing for constructor systems with call-time choice
semantics\thanks{This work has been partially supported by the Spanish projects
  FAST-STAMP (TIN2008-06622-C03-01/TIN), PROMETIDOS-CM (S2009TIC-1465) and GPD-UCM (UCM-BSCH-GR58/ 08-910502).}
}
\author[L\'opez-Fraguas et al.]
{FRANCISCO J. L\'OPEZ-FRAGUAS, ENRIQUE MARTIN-MARTIN, \authorbreak JUAN RODR\'IGUEZ-HORTAL\'A and JAIME S\'ANCHEZ-HERN\'ANDEZ\\
 Departamento de Sistemas Inform\'aticos y Computaci\'on\\
 Universidad Complutense de Madrid, Spain\\
 \textnormal{(\textit{e-mail:}} \textnormal{\texttt{fraguas@sip.ucm.es}}, \textnormal{\texttt{emartinm@fdi.ucm.es}}, \\ \textnormal{\texttt{juan.rodriguez.hortala@gmail.com}}, \textnormal{\texttt{jaime@sip.ucm.es}}\textnormal{)}
 }
\begin{document}

\nc{\ppdp}{\fbox{PPDP}\\}
\nc{\wlp}{\fbox{WLP}\\}
\nc{\flops}{\fbox{FLOPS}\\}

\maketitle


\begin{abstract}
Non-confluent and non-terminating constructor-based \nuevo{term} rewriting systems are useful for the purpose of specification and programming. In particular, existing functional logic languages use such kind of rewrite systems to define possibly non-strict  non-deterministic functions. The semantics  adopted for non-determinism is \emph{call-time choice}, whose combination with non-strictness is a non trivial issue, addressed  years ago from a semantic point of view with the Constructor-based Rewriting Logic ({CRWL}), a well-known semantic framework commonly accepted as suitable semantic basis of modern functional logic languages. A drawback of \crwl\ is that it does not come with a proper notion of one-step reduction, which would be very useful to understand and reason about how computations proceed. In this paper we develop thoroughly the theory for the first order version of let-rewriting, a simple reduction notion close to that of classical term rewriting, but extended with a let-binding construction to adequately express the combination of call-time choice with non-strict semantics.
Let-rewriting can be seen as a particular textual presentation of term graph rewriting.
We investigate the properties of let-rewriting, most remarkably their equivalence with respect to a conservative extension of the \crwl-semantics coping with let-bindings, and we show by some case studies
that having two interchangeable formal views (reduction/semantics) of the same language is a powerful reasoning tool.
After that, we provide a notion of let-narrowing which is adequate for call-time choice as proved by soundness and completeness results of let-narrowing with respect to let-rewriting.
Moreover, we relate those let-rewriting and let-narrowing relations (and hence
\crwl) with ordinary term rewriting and narrowing, providing in particular
soundness and completeness of let-rewriting with respect to term rewriting for a class of programs which are deterministic in a semantic sense.\\
To appear in \emph{Theory and Practice of Logic Programming} (TPLP).
\end{abstract}
\begin{keywords}
term rewriting systems, constructor-based rewriting logic, narrowing, non-determinism, call-time choice semantics, sharing, local bindings
\end{keywords}

\section{Introduction}\label{intro}

Term rewriting systems (TRS, \cite{BaaderNipkow-98}) are a well-known and useful formalism from the point of view of specification and programming. The theory of TRS underlies many of the proposals made in the last
decades  for so-called \emph{functional logic programming}, attempting to integrate into a single language the main features of both functional and logic programming ---see \cite{DeGroot86,Hanus94JLP,Hanus07ICLP} for surveys corresponding  to different historical stages of the development of functional logic languages---. Typically, functional logic programs are modeled by some kind of TRS to define functions, and logic programming capabilities are achieved by using some kind of \emph{narrowing} as operational mechanism.
Narrowing, a notion coming from the field of automated theorem proving, generalizes  rewriting by using unification instead of matching in reduction steps. Up to 14 different variants of narrowing were identified in \cite{Hanus94JLP} as being used in different proposals for the integration of functional and logic programming.


Modern functional logic languages  like {\em Curry} \cite{HanusKuchenMoreno-Navarro95ILPS,Han06curry} or
{\em Toy} \cite{LS99,toyreport} consider that programs are   constructor-based term rewrite systems, possibly non-terminating and non-confluent,
thus defining possibly non-strict non-deterministic  functions. For instance,  in  the program
of Figure \ref{coin}, non-confluence comes from the two rules of \emph{coin} and non-termination is due to the rule for \emph{repeat}.

\begin{figure*}
\fbox{
\begin{minipage}{0.60\textwidth}
\begin{center}
\begin{tabular}{ll}

$coin \ra 0$\hh9 & $repeat(X)\ \ra\ X\!:\!repeat(X)$\\
$coin \ra 1$     & $heads(X\!:\!Y\!:\!Y\!s)\ \ra\ (X,Y)$\\

\end{tabular}
\end{center}
\end{minipage}
} 
\caption{A non-terminating and non-confluent program}\label{coin}
\end{figure*}

For non-determinism, those systems adopt
\emph{call-time choice} semantics \cite{hussmann93,GHLR99}, also called sometimes {\em singular} semantics \cite{Sondergaard95}.
Loosely speaking, call-time choice  means to pick a value for each argument of a function application before
applying it. Call-time choice is easier to understand and implement in combination with strict semantics and
eager evaluation in terminating systems as in \cite{hussmann93}, but can be made also compatible ---via partial values and sharing---
with non-strictness and laziness in the presence of non-termination.

In the example of Figure \ref{coin} the expression $heads(repeat(coin))$ can take,
under call-time choice,
the values $(0,0)$ and $(1,1)$, but not  $(0,1)$ or $(1,0)$. The example  illustrates
also a key point here: ordinary term rewriting (called \emph{run-time choice} in \cite{hussmann93}) is an unsound procedure for call-time choice semantics, since a possible term rewriting derivation is:
\[\begin{array}{l}
heads(repeat(coin)) \ra heads(coin:repeat(coin)) \ra\\
heads(0:repeat(coin)) \ra  heads(0\!:\!coin\!:\! repeat(coin)) \ra\\
 heads(0:1:repeat(coin)) \ra (0,1)
\end{array}\]
In operational terms, call-time choice requires to \emph{share} the value of all copies of a given subexpression created during reduction (all the occurrences of $coin$, in the reduction above). In contrast, with ordinary term rewriting all copies evolve independently. 

\kk
It is commonly accepted (see e.g. \cite{Hanus07ICLP}) that call-time choice semantics
combined with non-strict semantics is adequately formally expressed by the
CRWL framework\footnote{\crwl\ stands for Constructor Based ReWriting Logic.}
\cite{GHLR96,GHLR99}, whose main component is a proof calculus  that determines the semantics of programs and expressions.
The flexibility and usefulness of \crwl\ is evidenced by the large set of  extensions  that have been devised for it, to cope with relevant aspects of declarative
programming: higher order functions, types, constraints, constructive failure, \ldots; see \cite{Rod01}
for a survey  on the \crwl\ approach.
However, a drawback of the \crwl-framework
is its lack of a proper one-step reduction mechanism that could play a role similar to term rewriting with respect to equational logic.
Certainly  \crwl\ includes operational procedures in the form of goal-solving calculi
\cite{GHLR99,vado03} to solve so-called \emph{joinability} conditions, but they are too complex to be seen as a basic  way to
explain or understand how a reduction can proceed in the presence of  non-strict non-deterministic functions
with call-time choice semantics.

On the other hand, other works have been more influential on the operational side of the field,
specially those based on the notion of  \emph{needed narrowing} \cite{AEH94,AEH00},
a variant of narrowing that organizes
the  evaluation of arguments in function calls in an adequate way (optimal, for some classes of programs).
Needed narrowing
became the `official' operational procedure of functional logic languages,
and has also been subject of several variations and improvements (see \cite{Hanus07ICLP,EscobarMT05}).

These two coexisting  branches of research (one based on \crwl, and the
other based on classical term rewriting, mostly via needed narrowing) have remained disconnected
for many years from the technical point of view, despite the fact that they both refer
to what intuitively is the same programming language paradigm.

A major problem to establish the connection was that the  theory underlying needed narrowing is classical term rewriting, which, as we saw above, is not valid for non-determinism with call-time choice semantics.
This was not a flaw in the conception of needed narrowing, as it emerged in a time when non-deterministic functions had not yet started to play a distinctive role in the functional logic programming paradigm.
The problem is overcome in practice by adding a sharing mechanism to the implementation of narrowing, using for instance standard Prolog programming techniques \cite{CheongFribourg93,LLR93,AntoyHanus00FROCOS}. But this is merely an implementation patch that cannot be used as a precise and
sound technical basis for the application of results and techniques from
the semantic side to the operational side and vice versa. Other works, specially \cite{EchahedJanodet98JICSLP,AHHOV05} have addressed in a more formal way the issue of sharing in functional logic programming, but they are not good starting points to establish a relationship with the  \crwl\ world (see `Related work' below).

In \cite{ppdp2007} we aimed at establishing a bridge, by looking for a new  variant of \nuevo{term} rewriting tailored to call-time choice as realized by \crwl, trying to fulfil the following requirements:

\begin{itemize}
\item  it should be based on a notion of rewrite step useful to follow
 how a computation proceeds step by step.
\item  it should be simple enough to be easily understandable for non-expert potential users.
 (e.g., students or novice programmers) of functional logic languages adopting call-time choice.
\item it should be provably equivalent to \crwl, as a well-established technical formulation of call-time choice.
\item it should serve as a basis of subsequent notion of narrowing
and evaluation strategies. 
\end{itemize}

That was realized in \cite{ppdp2007} by means of let-rewriting, a simple modification of \nuevo{term} rewriting using local bindings in the form of let-expressions to express sharing. Let-rewriting will be fully presented in Section \ref{let-rewriting}, but its main intuitions can be summarized  as follows:
\begin{itemize}
\item[(i)]  do not rewrite a function call if any of its arguments is evaluable (i.e., still contains other function calls), even if there is a matching rule;
\item[(ii)] instead, extract those evaluable arguments  to outer let-bindings of the form $let~X=e~in~e'$;
\item[(iii)]
if after some reduction steps the \emph{definiens} $e$ of the let-binding  becomes a constructor term $t$ ---a value--- then the binding $X/t$ can be made effective in the body $e'$. In this way, the values obtained for $e$ in the reduction are shared, and therefore call-time choice is respected.
\end{itemize}

Consider, for instance, the program example of Figure \ref{coin} and the expression $$\emph{heads(repeat(coin))}$$ for which we previously performed an ordinary term rewriting reduction ending in $(0,1)$. Now we are going to apply liberally the previous intuitive hints as a first illustration of let-rewriting.
Note first that no rewrite step using a program rule can be done with the whole expression \emph{heads(repeat(coin))}, since in this case there is no matching rule.
But we can extract  the argument \emph{repeat(coin)} to a let-binding, obtaining:
\[let~X=repeat(coin)~in~heads(X)\]
Now we cannot rewrite  \emph{repeat(coin)}, even  though the program rule for \emph{repeat} matches it, because \emph{coin} is evaluable.
Again, we can create a let-binding for \emph{coin}, that will be used to share the value selected for \emph{coin},
if at any later step in the reduction \emph{coin} is indeed reduced:
\[let~Y=coin~in~let~X=repeat(Y)~in~heads(X)\]
At this point there is no problem with rewriting \emph{repeat(Y)}, which gives:
\[let~Y=coin~in~let~X=Y:repeat(Y)~in~heads(X)\]
Rewriting \emph{repeat(Y)} again, we have:
\[let~Y=coin~in~let~X=Y:Y:repeat(Y)~in~heads(X)\]
Reducing \emph{repeat(Y)} indefinitely leads to non-termination,
but, at the same time,  its presence inhibits the application of  the binding for $X$. What we can do is creating a new let-binding for the remaining \emph{repeat(Y)}, which results in:
\[let~Y=coin~in~let~Z=repeat(Y)~in~let~X=Y:Y:Z~in~heads(X)\]
Now, the binding for $X$ can be  performed, obtaining:
\[let~Y=coin~in~let~Z=repeat(Y)~in~heads(Y:Y:Z)\]
At this point, we can use the rule for \emph{heads} to evaluate $heads(Y:Y:Z)$, because nothing evaluable remains in its argument $Y:Y:Z$, arriving at:
\[let~Y=coin~in~let~Z=repeat(Y)~in~(Y,Y)\]
 We proceed now by reducing \emph{coin}, for instance, to $0$ (reducing it to $1$ is also possible):
\[let~Y=0~in~let~Z=repeat(Y)~in~(Y,Y)\]
Performing the binding for $Y$ leads to:
\[let~Z=repeat(0)~in~(0,0)\]
Since $Z$ does not occur in $(0,0)$, its binding is junk that could be deleted (there will be a rule for that in the definition of let-rewriting),
and the reduction is finished yielding the value
\[(0,0)\]
It is apparent that $(1,1)$ is another possible result, but not $(0,1)$ nor $(1,0)$, a behavior coherent with call-time choice.

In this example we have tried to proceed in a more or less natural `lazy' way. However, the previous intuitive precepts ---and its complete and precise realization in Section \ref{let-rewriting}--- do not assume any particular strategy for organizing reductions, but only determine which are the `legal movements' in call-time choice respectful reductions. Strategies have been left aside in the paper, not only for simplicity, but also to keep them independent of the basic rules for \nuevo{term} rewriting with sharing (see however Section \ref{subsection:strategies}).

Let-rewriting was  later on extended to cope with narrowing  \cite{wlp2007lnai} and higher order features \cite{LRSflops08}. 

This paper is a substantially revised and completed presentation of the theory of first order let-rewriting and let-narrowing proposed in \cite{ppdp2007,wlp2007lnai}; some contents have been also taken from \cite{LRSflops08}.
Here, we   unify technically  those papers and develop a deeper investigation of the properties of let-rewriting and related semantics issues.
\\[1ex]

\noindent \textbf{Related work~~}
Our let-rewriting and let-narrowing relations are not the only nor the first formal operational  procedures tuned up to accomplish with the call-time choice semantics of functional logic languages. We have already mentioned the goal-solving calculi associated to the \crwl-framework and its variants \cite{GHLR99,GHR97,vado03}.

A natural option to express different levels of sharing in rewriting is given by the  theory of term graph rewriting \cite{BarendregtA87,Plump01}.  In \cite{EchahedJanodet97IMAG,EchahedJanodet98JICSLP}, the theory of needed rewriting and narrowing was
extended to the framework of so-called admissible graph rewriting systems, aiming at formally modeling the operational behavior of functional logic programs.
Originally, those works considered orthogonal systems,  and extra variables were not allowed. These restrictions were dropped in \cite{AntoyBrownChiangENTCS06} (however, a formal treatment of the
extension is missing).

As a matter of fact, our let-rewriting relation can be
understood as a particular textual adaptation and presentation of term graph rewriting in which a shared node is made explicit in the syntax by giving it a name in a let-binding.
The achievements of Echahed's  works are somehow incomparable to ours, even if both are attempts to formalize sharing in constructor based systems. They focus and succeed on
adapting known optimal strategies to the graph rewriting and narrowing setting; they also take profit of the fine-grained descriptions permitted by graphs to manage aspects of data structures like cycles or pointers. However, they do not try to establish a technical relationship with other formulations of call-time choice. In contrast, proving
equivalence of our operational formalisms wrt. the CRWL semantic framework has been a main motivation of our work, but we do not  deal with the issue of strategies, except for a short informal
discussion at the end of the paper.

It is our thought that proving equivalence with respect to \crwl\ of  term graph rewriting as given in  \cite{EchahedJanodet97IMAG} would have been a  task much harder than the route we follow here.
We see a reason for it. The basic pieces that term rewriting and \crwl\ work with are purely syntactic: terms, substitutions, etc.
Graph rewriting recast these notions in terms of graphs, homomorphisms, etc. In contrast, let-rewriting and let-narrowing keep the same set of basic
pieces of term rewriting and \crwl. In this way, the formalisms are relatively close and moving  from one to another becomes technically more natural and comfortable. This applies also to
some further developments of our setting that we have made so far,  
like  the extension to higher order features given in \cite{LRSflops08}, the combination of semantics proposed in \cite{LRSpepm09}, or the application of let-rewriting as underlying formal notion of reduction for type systems in functional logic languages \cite{LFMMRH10,LFMMRH_APLAS2010}. 

Another proposal that can be seen as reformulation of graph rewriting was given in \cite{AHHOV05}, inspired in Launchbury's natural semantics \cite{Lau93} for lazy evaluation in functional programming. It presents two operational (natural and small-step) semantics for functional logic programs supporting sharing  and residuation (a specific feature of Curry).
These semantics use a flat representation of programs coming from an implicit program transformation encoding the demand analysis used by needed narrowing, and some kind of heaps to express bindings for variables. As in our case, let-expressions are used to express sharing. The approach is useful as a technical basis for implementation and program manipulation purposes;
but we think that the approach is too low-level and close to a particular operational strategy to be a completely satisfactory choice as basic abstract reduction mechanism for call-time choice.
In \cite{LRSentcs06} we established a technical relation of \crwl\  with the operational procedures of \cite{AHHOV05}. But this  turned out to be a really hard task, even if it was done only for a restricted class of programs and expressions.

Our work focuses on term rewriting systems as basic formalism, as happens with the majority of papers about the foundations of functional logic programming, in particular the \crwl-series.
The idea of reformulating graph rewriting in a syntactic style by expressing  sharing through let-bindings has been applied also to other contexts,  most remarkably to
$\lambda$-calculus considered as a basis of functional programming \cite{AriolaA95,AriolaFMOW95,AriolaF97,MaraistOW98}.
In a different direction, but still in relation with $\lambda$-calculus, other papers
\cite{KutznerSchmidt98,SchmidtSchaussM08} have extended it with some kind of non-deterministic choice, an idea that comes back to McCarthy's \emph{amb} \cite{mccarthy63basis}.
As a final note, we should mention that our initial ideas about let-rewriting were somehow inspired by  \cite{LS01lpar,Jaime04}
where indexed unions of set expressions ---a construction generalizing the idea of let-expressions ---were used
to express sharing in an extension of \crwl\ to deal with constructive failure.

The rest of the paper is organized as follows. Section \ref{prelim} presents
some preliminaries about term rewriting and the \crwl\ framework; although with them the paper becomes almost self-contained, some familiarity with the basic notions of TRS certainly help to read the paper. Section
\ref{discussion} contains a first discussion
about failed or partial solutions to the problem of expressing non-strict call-time choice by a simple notion of rewriting.
Section \ref{let-rewriting} is the central part of the paper. First, it introduces local
bindings in the syntax to express sharing, defines let-rewriting as an
adequate notion of rewriting for them and proves some intrinsic properties of let-rewriting.
After that, in Section \ref{crwllet}, we extend the {CRWL}-logic to a new $\crwllet$-logic able to deal with lets in programs and expressions, and we investigate in depth
the properties of the induced semantics, mostly through the notion of \emph{hypersemantics}. Finally, in Section \ref{equivalence} we prove results of soundness and completeness of let-rewriting with respect to $\crwllet$, which have as corollary the equivalence of both, and hence the equivalence of let-rewriting and \crwl\ for programs and expressions not containing lets, as the original \crwl\ ones are.
Section \ref{SemEqs} aims at showing the power of having reduction and semantics as equivalent interchangeable tools for reasoning, including a remarkable case study.
In Section \ref{let-narrowing} we generalize the notion of let-rewriting to that of let-narrowing and give soundness and completeness results of the latter with respect to the former.
At the end of the section we give some hints on how computations can be organized according to known narrowing strategies.
Section \ref{letR-classR} addresses the relationship between
 let-rewriting and  classical term rewriting, proving in particular their
equivalence for  semantically deterministic programs.
Finally, Section
\ref{conclusions} analyzes our contribution and suggests further work. 
For the sake of readability, most of the (fully detailed) proofs have been moved to \ref{proofs}.


\section{Preliminaries}\label{prelim}
\subsection{Constructor based term rewriting systems}\label{trs}
We assume a fixed first order signature $\Sigma=CS\cup FS$, where $CS$ and $FS$
are two disjoint sets of constructor and defined function symbols respectively,
each of them with an associated
arity. We  write $CS^n$ and $FS^n$ for the set of constructor and function
symbols of arity $n$ respectively, and $\Sigma^n$ for $CS^n \cup FS^n$. As usual notations we
write $c,d,\ldots$ for constructors, $f,g,\ldots$ for functions and $X,Y, \ldots$ for
variables taken from a denumerable set $\var$. The notation
$\overline{o}$ stands for tuples of any kind of syntactic objects.

The set $Exp$ of {\it expressions} is defined as $Exp \ni e::= X \mid h(e_1,\ldots,e_n)$, where $X \in \var$, $h\in \Sigma^n$ and
$e_1,\ldots,e_n\in Exp$. The set $CTerm$ of {\it constructed terms} (or {\it
  c-terms}) has the same definition of $Exp$, but with $h$ restricted to $CS^n$
(so $CTerm \varsubsetneq Exp$). The intended meaning is that $Exp$ stands for
evaluable expressions, i.e., expressions that can contain (user-defined)
function symbols, while $CTerm$ stands for data terms representing values. We
will write $e,e',\ldots$ for expressions and $t,s,p,t', \ldots$ for c-terms. The set
of variables occurring in an expression $e$ will be denoted as $var(e)$.

\emph{Contexts} (with one hole) are defined by $Cntxt \ni {\cal C}::=[\ ]\mid
h(e_1,\ldots,{\cal C},\ldots,e_n)$, where
$h\in \Sigma^n$.
The application of a context ${\cal C}$ to an expression $e$, written as ${\cal C}[e]$, is defined inductively 
as follows:
$$
\begin{array}{rcl}
[\ ][e] & = & e \\
h(e_1,\ldots,{\cal C},\ldots,e_n)[e] & = & h(e_1,\ldots,{\cal C}[e],\ldots,e_n)
\end{array}
$$

{\em Substitutions} are finite mappings $\sigma:\var \longrightarrow Exp$ which extend naturally to
$\sigma : Exp \longrightarrow Exp$. We write $e\sigma$ for the application of the substitution
$\sigma$ to $e$. The domain and variable range of a substitution $\sigma$ are defined as $dom(\sigma) =
\{X\in \var \mid X\sigma \neq X\}$ and $\vran(\sigma) = \bigcup_{X\in dom(\sigma)}var(X\sigma)$.
By $[X_1/e_1, \ldots, X_n/e_n]$ we denote the substitution $\sigma$ such that $Y\sigma = e_i$ if $Y \equiv X_i$ for some $X_i \in \{X_1, \ldots, X_n\}$, and $Y\sigma = Y$ otherwise.
Given a set of variables $D$, the notation $\sigma|_{D}$ represents the
substitution $\sigma$ restricted to $D$ and $\sigma|_{\backslash D}$ is a
shortcut for $\sigma|_{({\cal V} \backslash D)}$. A {\em c-substitution} is a
substitution $\theta$ such that $X\theta \in CTerm$ for all $X \in
dom(\theta)$. We write $Subst$ and $CSubst$ for the sets of substitutions and
c-substitutions respectively.

A {\em term rewriting system} 
is any set of rewrite rules of the form $l \tor r$ where $l, r \in Exp$ and $l \not\in \var$. A {\em constructor based rewrite rule} or {\em program rule} has the form $f(p_1, \ldots, p_n) \tor r$
where $f\in FS^n$, $r \in Exp$ and $(p_1, \ldots, p_n)$ is a linear tuple of c-terms, where linear
means that no variable occurs twice in the tuple.
Notice that we allow $r$ to have extra variables (i.e., variables not occurring in the left-hand side). To be precise, we say that $X$ is an extra variable in the rule $l \tor r$ iff $X \in var(r) \setminus var(l)$, and by $vExtra(R)$ we denote the set of extra variables in a rule $R$.
Then a {\em constructor system} 
or \emph{program} $\prog$ is any set of program rules, i.e., a term rewriting system composed only of program rules.

Given a program ${\cal P}$, its associated \emph{rewrite relation} $\rw_{\cal P}$ is defined as ${\cal C}[l\sigma] \rw_{\cal P} {\cal C}[r\sigma]$ for any context ${\cal C}$, rule $l \tor r \in {\cal P}$ and $\sigma \in Subst$.
There, the subexpression $l\sigma$ is called the redex used in that \emph{rewriting step}.
Notice that $\sigma$ can instantiate extra variables to any expression.
For any binary relation ${\cal R}$ we write ${\cal R}^*$ for the reflexive and transitive closure of ${\cal R}$, and ${\cal R}^n$ for the composition of ${\cal R}$ with itself $n$ times.
We write $e_1 \stackrel{*}{\rw_{\cal P}} e_2$  for a term rewriting \textit{derivation} or \textit{reduction} from $e_1$ to $e_2$, and  $e_1 \stackrel{n}{\rw_{\cal P}} e_2$ for a $n$-step reduction.  $e_2$ is a \emph{normal form} wrt. $\rw_{\cal P}$, written as $\nf{{\cal P}}\!\! e_2$, if 
there is not any
$e_3$ such that $e_2 \ {\rw_{\cal P}}\ e_3$;
and $e_2$ is a normal form for $e_1$ wrt. $\rw_{\cal P}$, written as $e_1 \nf{{\cal P}}\!\! e_2$, iff $e_1 \stackrel{*}{\rw_{\cal P}} e_2$ and $e_2$ is a normal form.
When presenting derivations,  we will sometimes underline the redex used at each rewriting step.
In the following, we will usually omit the reference to ${\cal  P}$ when writing $e_1 \rw_{\cal P} e_2$, or denote it by $\prog \vdash e_1 \rw e_2$. 

A program ${\cal P}$ is confluent if for any $e,e_1,e_2\in
Exp$ such that $e\ra^*_{\cal P} e_1$, $e\ra^*_{\cal P} e_2$ there exists $e_3\in Exp$ such that both
$e_1\ra^*_{\cal P} e_3$ and $e_2\ra^*_{\cal P} e_3$.

\subsection{The \crwl\ framework}\label{sectPrelimCRWL}
We present here a simplified version of the \crwl\ framework \cite{GHLR96,GHLR99}.
The original \crwl\/ logic  considered also the possible presence
of \emph{joinability} constraints as conditions in rules in order to give a better treatment of
strict equality as a built-in, a subject orthogonal to the aims of this work.
Furthermore,  it is possible to replace
conditions by the use of an \emph{if\_then} function, as has been technically
proved in \cite{Jaime04} for \crwl\ and in \cite{AntoyJSC05} for term rewriting.
Therefore,
we consider only unconditional program rules.

In order to deal with non-strictness at the semantic level, we enlarge $\Sigma$
with a new constant (i.e., a 0-ary constructor symbol) $\perp$ that stands for the undefined value. The sets $Exp_\perp$, $CTerm_\perp$, $Subst_\perp$,  $CSubst_\perp$
of partial expressions, etc., are defined naturally. Notice that $\perp$ does not appear in programs.
Partial expressions are ordered by the {\em approximation} ordering $\sqsubseteq$ defined as the least
partial ordering satisfying $\perp \sqsubseteq e$ and $e \sqsubseteq e'
\Rightarrow {\cal C}[e] \sqsubseteq {\cal C}[e']$
for all $e,e' \in Exp_\perp, {\cal C} \in {\it Cntxt}$. This partial ordering
can be extended to substitutions: given $\theta,\sigma\in Subst_\bot$ we say
$\theta\sqsubseteq\sigma$ if $X\theta\sqsubseteq X\sigma$ for all $X\in\var$.

The semantics of a program ${\cal P}$ is determined in  \crwl\ by means of a proof calculus able to derive
reduction statements of the form $e \rightarrowtriangle t$, with $e \in Exp_\perp$ and $t \in CTerm_\perp$,
meaning informally that $t$ is (or approximates to) a possible value of $e$, obtained by
evaluating $e$ using ${\cal P}$ under call-time choice.

The \crwl-proof calculus is presented in Figure
\ref{fig:crwl}. Rule {\bf (B)} allows any expression to be undefined or not
evaluated (non-strict semantics).
Rule {\bf (OR)} expresses that to evaluate a function call we must choose a compatible
program rule, perform parameter passing (by means of a c-substitution $\theta$)
and then reduce the right-hand side. The use of c-substitutions in {(OR)} is essential
to express call-time choice; notice also that by the effect of $\theta$ in  {(OR)},
extra variables in the right-hand side of a rule can
be replaced by any partial c-term, but not by any expression as in ordinary \nuevo{term} rewriting $\ra_{\cal P}$.
We write ${\cal P} \conscrwl e \crwlto t$ to express that $e\crwlto t$ is derivable in the \crwl-calculus using
the program ${\cal P}$,
but in many occasions we will omit the mention to ${\cal P}$, writing simply $e \crwlto t$.

%
%
%

\begin{figure}
\framebox[.85\textwidth]{
\begin{minipage}{0.8\textwidth}
\begin{center}
  { 
    \begin{tabular}{l}
      {\bf (B)}$\ $
      $\begin{array}{c}
        \\[-.2cm]
        \hline
        \\[-.6cm]
        e\crwlto\bot
      \end{array}$
      \qquad\qquad
      {\bf (RR)}$\ $
      $\begin{array}{c}
        \\[-.2cm]
        \hline
        \\[-.6cm]
        X\crwlto X
      \end{array}$ $\qquad  X\in \var$\\[0.5cm]

      {\bf (DC)}$\ $
      $\begin{array}{c}
        e_1\crwlto t_1\ \ldots\ e_n\crwlto t_n\\[-.2cm]
        \hline
        \\[-.6cm]
        c(e_1,\ldots,e_n)\crwlto c(t_1,\ldots,t_n)
      \end{array}$ $\qquad c\in CS^n$ \\[0.5cm] 

      {\bf (OR)}$\ $
      $\begin{array}{c}
        e_1\crwlto p_1\theta \ldots\ e_n\crwlto p_n\theta~~r\theta \crwlto t\\[-.2cm]
        \hline
        \\[-.6cm]
        f(e_1,\ldots,e_n)\crwlto t
      \end{array}$ \qquad
      $
      \begin{array}{l}
        (f(p_1,\ldots,p_n) \tor r) \in \prog\\
        \theta \in CSubst_\perp
      \end{array}$\\[0.5cm]
    \end{tabular}
    } 
\end{center}
\end{minipage}
} 
    \caption{Rules of \crwl}
    \label{fig:crwl}
\end{figure}

\begin{definition}[\crwl-denotation]
Given a program
$\mathcal{P}$, the \emph{CRWL-denotation} of an expression $e \in Exp_\perp$ is
defined as $$\den{e}^{\mathcal{P}}_{\it CRWL}=\{t\in CTerm_\perp \mid \mathcal{P} \conscrwl e\crwlto t\}$$
We will usually omit the subscript \crwl\ and/or the superscript $\prog$ when implied by the context.
\end{definition}

%
%
%

As an example, Figure \ref{fig:crwlder} shows a {\em \crwl-derivation} or {\em \crwl-proof} for the
statement $heads(repeat(coin)) \crwlto (0,0)$, using the
program of Figure \ref{coin}.
Observe that in the derivation there is only one reduction statement for $coin$ (namely $coin \crwlto 0$), and the
obtained value $0$ is then {\em shared} in  the whole derivation, as corresponds to call-time choice.
\begin{figure*}
\begin{center}
\fbox{
{\scriptsize
\infer[{OR}]{heads(repeat(coin)) \crwlto (0,0)}
 {\infer[{OR}]{\hh{20} repeat(coin) \crwlto 0:0:\perp\hh{20}}
   {\infer[{OR}]{\mathbf{coin \crwlto 0}}{\infer[{DC}]{0 \crwlto 0}{}}\hh3
    \infer[{DC}]{\hh{20} 0:repeat(0) \crwlto 0:0:\perp\hh{20}}
       {\infer[{DC}]{0 \crwlto 0}{}\hh2
        \infer[{OR}]{\hh{15} repeat(0) \crwlto 0:\perp\hh{15}}
         {\infer[{DC}]{0 \crwlto 0}{}\hh2
          \infer[{DC}]{0:repeat(0) \crwlto 0:\perp}
             {\infer[{DC}]{0 \crwlto 0}{}\hh2
              \infer[{B}]{repeat(0) \crwlto \perp}{}
             }
         }
      }
   }\hh4
  \!\!\!\!\!\!\!\!\!\!\!\!\!\!\!\!\infer[{DC}]{(0,0) \crwlto (0,0)}{\infer[{DC}]{0 \crwlto 0}{}\ \infer[{DC}]{0 \crwlto 0}{}}
 }
} 
}
\end{center}
    \caption{A \crwl-derivation for $heads(repeat(coin)) \crwlto (0,0)$}
    \label{fig:crwlder}
\end{figure*}

In  alternative derivations, $coin$ could be reduced to $1$ (or to $\perp$).
It is easy to check that: 
%
$$
\den{heads(repeat(coin))} = \{(0,0),(1,1),(\perp,0),(0,\perp),(\perp,1),(1,\perp),(\perp,\perp),\perp\}
$$
Note that $(1,0), (0,1) \not\in \den{heads(repeat(coin))}$.

~\\
\indent We stress the fact that the \crwl-calculus {\em is not} an operational mechanism
for executing programs, but a way of describing the logic of programs.
In particular,  the rule (B) is a semantic artifact to reflect in a \crwl-proof of a statement $e\crwlto t$ the fact that, for obtaining $t$
as value of $e$, one does not need to know the value of a  certain subexpression $e'$ (to which the rule (B) is applied). But the calculus comes with no indication of when to apply (B) in a successful proof.
At the operational
level, the \crwl\ framework is accompanied  with various lazy narrowing-based goal-solving calculi
\cite{GHLR99,vado03} not considered in this paper.


~\\
One of the most important
properties of \crwl\
is its compositionality, a property very close to the DET-additivity property for algebraic specifications of \cite{hussmann93}  or the referential transparency property of \cite{Sondergaard90}. Compositionality shows that the \crwl-denotation of any expression placed in a context only depends on the \crwl-denotation of that expression. This implies that the semantics of a whole expression depends only on the semantics of its constituents, as shown by the next result, which is an adaptation of a similar one proved for the higher order case in \cite{LRSflops08}.
\begin{theorem}[Compositionality of \crwl]\label{thCompoCrwl}
For any $\con \in Cntxt$, $e, e' \in Exp_\perp$
$$
\den{\con[e]} = \bigcup\limits_{t \in \den{e}} \den{\con[t]}
$$
As a consequence: $\den{e} = \den{e'} \Leftrightarrow \forall \con \in Cntxt. \den{\con[e]} = \den{\con[e']}$
\end{theorem}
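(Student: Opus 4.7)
The plan is to prove the two inclusions separately, using structural induction on the context $\con$ and a case analysis on the last rule of the relevant \crwl-derivation. Before starting, I would isolate two auxiliary facts that will be used throughout. (i) \textbf{Monotonicity/polarity:} if $\cl e\clto t$ and $s\sqsubseteq t$ then $\cl e\clto s$, proved by a straightforward induction on the derivation of $e\clto t$ (the rule (B) absorbs the case $s=\perp$; for (DC) and (OR), $s$ must match the head and we propagate the approximation to the immediate subterms). (ii) For any $t\in CTerm_\perp$, $\den{t}=\{s\in CTerm_\perp\mid s\sqsubseteq t\}$: the direction $\supseteq$ comes from (RR), (DC) and (B); the direction $\subseteq$ by inspection, since the only rules applicable when the root is a constructor or variable are (B), (RR) and (DC).

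For the inclusion $\subseteq$, I would proceed by induction on the structure of $\con$. The base case $\con=[\,]$ amounts to checking that $s\in\den{e}$ implies $s\in\bigcup_{t\in\den{e}}\den{t}$, which is immediate by taking $t=s$ and using fact (ii) above. In the inductive case $\con=h(e_1,\ldots,\con',\ldots,e_n)$, I would look at the last rule of the derivation $\con[e]\clto s$. If it is (B) then $s=\perp$, and any $t\in\den{e}$ (e.g. $\perp$) gives $\perp\in\den{\con[t]}$ by another application of (B). If it is (DC) or (OR), the derivation contains a premise $\con'[e]\clto s'$ for some $s'\in CTerm_\perp$; by the induction hypothesis applied to $\con'$, there exists $t\in\den{e}$ with $\con'[t]\clto s'$, and re-assembling this subderivation with the unchanged premises for the other $e_i$'s yields $\con[t]\clto s$ as required.

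For the reverse inclusion $\supseteq$, I would again induct on $\con$. The base case reduces to showing that if $s\in\den{t}$ and $t\in\den{e}$ then $s\in\den{e}$; by fact (ii) $s\sqsubseteq t$, and by the monotonicity lemma applied to $\cl e\clto t$ we get $\cl e\clto s$. The inductive case only needs to graft the derivation of $\con'[e]\clto s'$ (obtained from the IH) into the same schema (DC) or (OR) that produced $\con[t]\clto s$; again (B) is handled trivially. The consequence $\den{e}=\den{e'}\Leftrightarrow \forall\con.\,\den{\con[e]}=\den{\con[e']}$ is then immediate: the $\Rightarrow$ direction follows by applying the just-proved equality on both sides, and the $\Leftarrow$ direction by instantiating $\con=[\,]$.

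I expect the main obstacle to be the bookkeeping in the inductive step of $\subseteq$: one must be careful that the witness $t\in\den{e}$ produced by the IH is the same across all the premises involved, which is fine because only one premise actually mentions $\con'[e]$ (the hole appears in a unique position), and the other premises do not depend on $e$ at all. Apart from this, everything reduces to routine case analysis on the four \crwl-rules, with rule (B) playing the usual role of swallowing degenerate cases.
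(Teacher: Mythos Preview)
Your proposal is correct and uses essentially the same ingredients as the paper: the polarity/monotonicity property (your fact (i), the paper's Proposition~\ref{propCrwlletPolar}) and the characterization $\den{t}=\{s\mid s\sqsubseteq t\}$ for c-terms (your fact (ii), the paper's Lemma~\ref{lemmashells}). The only real difference is the induction scheme: you induct on the structure of $\con$ with a case analysis on the last rule of the relevant derivation, whereas the paper inducts on the size of the derivation and treats $\con=[\,]$ as a separate sub-case at each level. Both are valid; your scheme is arguably cleaner here because the hole appears at a unique position, so exactly one premise of (DC)/(OR) involves $\con'[e]$ and the remaining premises transfer verbatim, a point you correctly single out as the only bookkeeping issue.
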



According to this result we can express for example
\begin{center}
$\den{heads(repeat(coin))}=\bigcup_{{t\in\den{coin}}}\bigcup_{s\in\den{repeat(t)}}
\den{heads(s)}$
\end{center}
The right hand side has an intuitive reading that reflects call-time choice:
get a
value $t$ of $coin$, then get a value $s$ of $repeat(t)$ and then get a
value of $heads(s)$.

In Theorem \ref{thCompoCrwlbis} we give an alternative formulation to the compositionality property.
Although it is  essentially equivalent to Theorem \ref{thCompoCrwl}, it is a somehow more abstract statement, based on the
notion of \emph{denotation of a context} introduced in Definition \ref{def:denContxt}. Our main reason for developing such alternative is to give good insights for the compositionality results
of the extension of \crwl\ to be presented in Section \ref{sectEqLetrwCrwlLet}.

We will use sometimes $\ds$ as an alias for $\partes{CTerm_\perp}$, i.e, for the kind of objects that are \crwl-denotations of expressions%
\footnote{$\ds$ is indeed a superset of the set of actual denotations, which are particular elements of $\partes{CTerm_\perp}$, namely \emph{cones} ---see \cite{GHLR99}---. But this is not relevant to the use we make of $\ds$.}.
We define the denotation of a context $\con$ as  a denotation transformer that reflects call-time choice.

\begin{definition}[Denotation of a context]\label{def:denContxt}
Given $\con \in Cntxt$, its denotation is a function $\den{\con}:\ds \rightarrow \ds$ defined as
\[\den{\con}\del = \bigcup_{t\in \del}\den{\con[t]}, ~\forall \del\in \ds\]
\end{definition}

With this notion, compositionality can be trivially re-stated as follows:

\begin{theorem}[Compositionality of \crwl, version 2]\label{thCompoCrwlbis}
For any $\con \in Cntxt$ and $e, e' \in Exp_\perp$
$$
\den{\con[e]} = \den{\con}\den{e}
$$
As a consequence: $\den{e} = \den{e'} \Leftrightarrow \forall \con \in Cntxt. \den{\con[e]} = \den{\con[e']}$
\end{theorem}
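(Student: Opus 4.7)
The plan is to observe that Theorem~\ref{thCompoCrwlbis} is essentially a direct reformulation of Theorem~\ref{thCompoCrwl} once Definition~\ref{def:denContxt} is unfolded, so no new work on the \crwl-calculus itself is required: I would simply chain the two previously established facts.

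First I would prove the equality $\den{\con[e]} = \den{\con}\den{e}$. By Definition~\ref{def:denContxt}, the right-hand side is $\den{\con}\den{e} = \bigcup_{t \in \den{e}} \den{\con[t]}$. By Theorem~\ref{thCompoCrwl}, the left-hand side is $\den{\con[e]} = \bigcup_{t \in \den{e}} \den{\con[t]}$. The two expressions are syntactically the same, so the claim holds.

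Then I would derive the consequence. For the direction ($\Rightarrow$), assuming $\den{e} = \den{e'}$, I apply the first part of the theorem twice to obtain $\den{\con[e]} = \den{\con}\den{e} = \den{\con}\den{e'} = \den{\con[e']}$ for any $\con \in Cntxt$. For ($\Leftarrow$), I instantiate the universally quantified statement at the empty context $\con = [\ ]$; since $[\ ][e] = e$ and $[\ ][e'] = e'$ by the base case of the inductive definition of context application, we get $\den{e} = \den{e'}$ immediately.

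There is no real obstacle here: the theorem is a cosmetic repackaging of the compositionality result already proved, meant to set up the denotation-transformer perspective that will be useful in Section~\ref{sectEqLetrwCrwlLet}. The only thing worth being careful about is making sure that the function $\den{\con}$ is well-defined on $\ds = \partes{CTerm_\perp}$, i.e.\ that $\bigcup_{t \in \del} \den{\con[t]}$ makes sense for an arbitrary $\del \in \ds$; this is immediate because $\den{\con[t]} \subseteq CTerm_\perp$ for every $t \in CTerm_\perp$, so the union lies in $\partes{CTerm_\perp} = \ds$.
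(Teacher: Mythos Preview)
Your proposal is correct and matches the paper's own treatment: the paper explicitly describes Theorem~\ref{thCompoCrwlbis} as a trivial restatement of Theorem~\ref{thCompoCrwl} via Definition~\ref{def:denContxt}, and does not spell out a separate proof. Your handling of the consequence (using the first part for $\Rightarrow$ and instantiating $\con=[\ ]$ for $\Leftarrow$) is exactly the intended argument.
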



The formulation of compositionality given by Theorem \ref{thCompoCrwlbis} makes
even more apparent than Theorem \ref{thCompoCrwl} the fact that the syntactic
decomposition of an expression $e$ in the form $\con[e']$ has a direct semantic
counterpart, in the sense that the semantics of $e$ is determined by the
semantics of its syntactic constituents $\con$ and $e'$.
However, Theorems \ref{thCompoCrwl} and \ref{thCompoCrwlbis} are indeed of the same strength, since each of them can be easily proved from the other.

\section{\crwl\ and rewriting: a first discussion}\label{discussion} 
Before presenting let-rewriting we find interesting to discuss a couple of (in principle) shorter solutions to the problem of expressing non-strict call-time choice semantics
by means of a simple  one-step reduction relation.
A first question is whether a new  relation is needed at all: maybe call-time choice can be expressed by ordinary \nuevo{term} rewriting via a suitable program transformation. The next result  shows that in a certain technical sense this is not possible: due to different closedness under substitution and compositionality properties of call-time choice and term rewriting, none of them can be naturally simulated by each other.

\begin{proposition}\label{exNoTransTrsCrwl} 
There is a program $\prog$ for which the following two conditions hold:
\begin{itemize}
\item[i)] no term rewriting system (constructor based or not) ${\cal P}'$ verifies
$$
\prog \vdash_{\crwl} e \clto t \mbox{ iff } \prog' \vdash e \rw^* t  ~\mbox{, for all $e \in Exp, t \in CTerm$}
$$
\item[ii)] no program ${\cal P}'$ verifies
$$
 \prog \vdash e \rw^* t \mbox{ iff }  \prog' \vdash_{\crwl} e \clto t ~\mbox{, for all $e \in Exp, t \in CTerm$}
$$
\end{itemize}
\end{proposition}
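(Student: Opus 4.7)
The plan is to exhibit a single program $\mathcal{P}$ witnessing both (i) and (ii), exploiting the fact that call-time choice under \crwl\ and classical term rewriting differ observably on it. A convenient choice is the compact program
\[ a \to 0, \quad a \to 1, \quad f(X) \to (X, X). \]
Under ordinary rewriting $\mathcal{P} \vdash f(a) \rw (a, a) \rw (0, a) \rw (0, 1)$, whereas $(0, 1) \notin \den{f(a)}_{\crwl}^{\mathcal{P}}$ because sharing under call-time choice forces the two copies of $a$ to agree, exactly as in the $heads(repeat(coin))$ analysis.

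For part (i), I assume a TRS $\mathcal{P}'$ satisfying the stated biconditional. The open statement $f(X) \clto (X, X)$ is derivable in \crwl\ by (OR) with identity c-substitution and (RR) for $X \clto X$, so the biconditional forces $\mathcal{P}' \vdash f(X) \rw^* (X, X)$. Since classical rewriting is closed under arbitrary substitutions, applying $[X/a]$ yields $\mathcal{P}' \vdash f(a) \rw^* (a, a)$. The \crwl-derivations $a \clto 0$ and $a \clto 1$ likewise give $\mathcal{P}' \vdash a \rw^* 0$ and $\mathcal{P}' \vdash a \rw^* 1$, which can be applied \emph{independently} to the two occurrences, producing $\mathcal{P}' \vdash f(a) \rw^* (0, 1)$. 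The biconditional then yields $(0, 1) \in \den{f(a)}_{\crwl}^{\mathcal{P}}$, contradicting call-time choice.

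For part (ii), I assume a \crwl\ program $\mathcal{P}'$ satisfying the biconditional. Since $\mathcal{P} \vdash f(a) \rw^* (0, 1)$, we obtain $(0, 1) \in \den{f(a)}^{\mathcal{P}'}$; compositionality (Theorem~\ref{thCompoCrwl}) then supplies some $t \in \den{a}^{\mathcal{P}'}$ with $(0, 1) \in \den{f(t)}^{\mathcal{P}'}$. I lift $t$ to a ground total c-term $s$ in two steps: first apply the c-substitution mapping every variable of $t$ to $0$ (closure of \crwl\ under c-substitutions preserves $(0,1) \in \den{f(\cdot)}^{\mathcal{P}'}$ since $(0,1)$ has no variables); then replace every $\perp$ in the result by $0$ and invoke monotonicity of \crwl-denotations wrt.\ $\sqsubseteq$. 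The outcome is a ground $s \in CTerm$ with $(0,1) \in \den{f(s)}^{\mathcal{P}'}$, so by the biconditional $\mathcal{P} \vdash f(s) \rw^* (0, 1)$. But in $\mathcal{P}$ the only reduction of $f(s)$ is $f(s) \rw (s, s)$, which is a normal form (since $s$ has no defined symbols) and cannot equal $(0, 1)$ for any single ground c-term $s$. Contradiction.

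The main delicate step is the lifting of $t$ to a ground total c-term $s$ in (ii); it relies on closure of \crwl\ under c-substitutions and monotonicity in $\sqsubseteq$, two standard properties provable by straightforward induction on the structure of \crwl-derivations. The conceptual content of the proposition is then a clean contrast of algebraic properties: (i) fails because ordinary $\rw^*$ is closed under arbitrary substitutions whereas \crwl\ is only closed under c-substitutions, and (ii) fails because \crwl\ is compositional whereas ordinary $\rw^*$ is not.
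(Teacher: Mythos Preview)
Your proof is correct and follows essentially the same approach as the paper: the same witness program (up to renaming $coin$ to $a$ and $c$ to the pair constructor), the same use of closure of $\rw^*$ under arbitrary substitutions for part (i), and the same appeal to compositionality plus monotonicity for part (ii). The only cosmetic difference is that in (ii) the paper argues by an explicit case distinction on the shape of $t$ ($\perp$, $0$, $1$, a variable, or $d(\overline{s})$), whereas you give a uniform two-step lifting (ground the variables via a c-substitution, then replace $\perp$ by $0$ via monotonicity) to reach a total ground c-term $s$; both arrive at the same contradiction $f(s) \rw^* (0,1)$ in $\mathcal{P}$.
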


\begin{proof}
The following simple program ${\cal P}$ suffices:
\begin{center}
$~~f(X) \ra c(X,X)$\hh5
$~~coin \ra 0$\hh5
$~~coin \ra 1$
\end{center}

\emph{i)}~ We reason by contradiction. Assume there is a term rewriting system ${\cal P}'$ such that:   ${\cal P} \conscrwl e \crwlto t$
$\Leftrightarrow$ $e \ra_{{\cal P}'}^* t$, for all $e,t$.
Since  ${\cal P} \conscrwl f(X) \crwlto c(X,X)$, we must have $f(X) \ra
_{{\cal P}'}^*c(X,X)$. Now, since $\ra_{{\cal P}'}^*$ is closed under
substitutions \cite{BaaderNipkow-98},  we have  $f(coin)  \ra _{{\cal P}'}^*c(coin,coin)$, and therefore
$f(coin)  \ra _{{\cal P}'}^*c(coin,coin) \ra _{{\cal P}'}^* c(0,1)$.
But it is easy to see that ${\cal P}   \conscrwl f(coin) \crwlto c(0,1)$ does not hold.

\emph{ii)}~  Assume now there is a program ${\cal P}'$ such that:   $\prog \vdash e \rw^* t$ $\Leftrightarrow$  $\prog' \vdash_{\crwl} e \clto t$, for all $e,t$.
Since $\prog \vdash f(coin) \rw^* c(0,1)$, we have
$\prog' \vdash_{\crwl} f(coin) \clto  c(0,1)$.
By compositionality of call-time choice (Theorem \ref{thCompoCrwl}), there must exist a possibly partial
$t \in CTerm_\perp$ such that $\prog' \vdash_{\crwl} coin \clto  t$ and
$\prog' \vdash_{\crwl} f(t) \clto c(0,1)$. Now we distinguish cases on the value of $t$:
\begin{itemize}
\item[(a)] If $t\equiv \perp$, then
monotonicity of $CRWL$-derivability ---see \cite{GHLR99} or Proposition \ref{propCrwlletPolar} below--- proves that $\prog' \vdash_{\crwl} f(s) \clto c(0,1)$
for any $s\in CTerm_\perp$,
in particular $\prog' \vdash_{\crwl} f(0) \clto c(0,1)$. Then, by the assumption on $\prog'$, it should be  $\prog \vdash f(0) \rw^* c(0,1)$, but this is not true.
\item[(b)] If $t\equiv 0$, then  $\prog' \vdash f(0) \clto c(0,1)$ as before.
The cases $t\equiv 1$, $t\equiv Y$ or  $t\equiv d(\overline{s})$ for a constructor $d$ different
from $0,1$ lead to similar contradictions.
\end{itemize}
\end{proof}


 Notice that Proposition \ref{exNoTransTrsCrwl} does not make any assumption about signatures: in any of \emph{i) or ii)}, no extension of the signature can lead to a simulating $\prog'$. This does not contradict Turing completeness of term rewriting systems. Turing completeness arguments typically rely on encodings not preserving the structure of data, something not contemplated in Proposition \ref{exNoTransTrsCrwl}.

In a second trial, requiring minimal changes over ordinary term rewriting, we impose  that the substitution $\theta$ in a rewriting step must be a c-substitution,
as in the rule {(OR)} of \crwl. 
This is done in the one-step rule {\bf (OR$^{rw}$)} in Figure \ref{br}. According to it, the step $heads(repeat(coin)) \ra heads(coin:repeat(coin))$ in the introductory example of Figure \ref{coin} would not be legal anymore.
However, (OR$^{rw}$) corresponds essentially to innermost evaluation, and  is not enough to deal with  non-strictness, as the following example shows:

\begin{example}
Consider the rules $f(X) \ra 0$ and ${\it
  loop}\ra {\it loop}$.
With a non-strict semantics $f(loop)$ should be reducible to $0$.
But (OR$^{rw}$)
does not allow the step $f(loop) \ra 0$;  only
$f(loop) \ra f(loop) \ra \ldots$ is a valid (OR$^{rw}$)-reduction, thus leaving $f(loop)$ semantically undefined,
as would correspond to a strict semantics.
\end{example}

At this point, the  rule {(B)} of \crwl\ is a help, since it allows to discard the evaluation of any (sub)-expression by reducing it to $\perp$.
The result of this discussion is the one-step reduction relation $\rightarrowtail$ given in Figure \ref{br}.

\begin{figure*}
\framebox[.95\textwidth]{
\hspace{-1cm}
\begin{minipage}{.925\textwidth}
\begin{tabular}{@{$\hspace{0.5cm}$}lllll @{$\hspace{0.5cm}$}}
{\bf (B$^{rw}$)}  &${\cal C}[e]$ & $\rightarrowtail$ & ${\cal C}[\perp]$\hh3 & $\forall \con \in {\it Cntxt}, e \in Exp_\perp$\\
{\bf (OR$^{rw}$)}\ &${\cal C}[f(t_1\theta,\ldots,t_n\theta)]$ & $\rightarrowtail$ & ${\cal C}[r\theta]$ & $\forall \con \in {\it Cntxt}$,
\hspace{-.35cm} $\begin{array}[t]{l}f(t_1,\ldots,t_n)\ra r \in \mathcal{P},\\ \theta\in CSubst_{\bot}\end{array}$\\
\end{tabular}
\end{minipage}
} 
\caption{A one-step reduction relation for non-strict call-time choice}\label{br}\label{fig:CrwlRw}
\end{figure*}

This relation satisfies our initial goals to a partial extent, as it
is not difficult to prove the following equivalence result.

\begin{theorem}\label{brcrwl}
Let ${\cal P}$ be a \crwl-program, $e \in Exp_\perp$ and $t \in CTerm_\perp$. Then:
$${\cal P} \conscrwl e \crwlto t\ \mathit{iff}\ e \rightarrowtail^*_{\cal P} t$$
\end{theorem}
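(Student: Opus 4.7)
The plan is to prove both directions of the biconditional: completeness (the CRWL-derivability implies $\rightarrowtail^*$-reducibility) by induction on the structure of the \crwl-proof, and soundness (the converse) by induction on the length of the $\rightarrowtail^*$-derivation. The central technical tools I would rely on are (i) compositionality of \crwl\ (Theorem \ref{thCompoCrwl}) to insert and extract subderivations through contexts, (ii) monotonicity of \crwl\ (standard: if $e \sqsubseteq e'$ and $e \conscrwl e\crwlto t$ then $e' \crwlto t$) to absorb (B$^{rw}$)-steps, and (iii) the basic observation that whenever $p$ is a c-term and $\theta \in CSubst_\perp$, the statement $p\theta \crwlto p\theta$ is derivable by (RR), (B) and (DC). A further trivial observation is that $\rightarrowtail^*$ is contextually closed by definition, so a reduction at depth can be lifted into any surrounding context.

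For completeness, I would case-split on the last rule of the \crwl-proof of $e \crwlto t$. The case (B) is simulated by a single (B$^{rw}$)-step using the empty context; (RR) is simulated by the empty reduction; (DC) uses the induction hypothesis on each argument derivation $e_i \crwlto t_i$, together with contextual closure, to reduce $c(\bar e) \rightarrowtail^* c(\bar t)$ step-by-step. The interesting case is (OR): given a derivation of $f(\bar e) \crwlto t$ using $f(\bar p) \to r$ and $\theta \in CSubst_\perp$, the IH gives $e_i \rightarrowtail^* p_i\theta$ and $r\theta \rightarrowtail^* t$; I then chain $f(\bar e) \rightarrowtail^* f(p_1\theta,\ldots,p_n\theta)$ (applying each argument reduction under the context $f(p_1\theta,\ldots,\_,\ldots,e_n)$), followed by one (OR$^{rw}$)-step to $r\theta$ (legal because each $p_i\theta \in CTerm_\perp$), followed by the reduction $r\theta \rightarrowtail^* t$ from IH.

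For soundness, the base case $e \rightarrowtail^0 t$ is just $e \equiv t \in CTerm_\perp$, derivable in \crwl\ from (RR), (DC) and (B). For the inductive step with $e \rightarrowtail e' \rightarrowtail^* t$, the IH gives $\prog \conscrwl e' \crwlto t$; I must lift this to $e$. If the first step is (B$^{rw}$) with $e = \con[e_0]$ and $e' = \con[\perp]$, then $\con[\perp] \sqsubseteq \con[e_0]$ and monotonicity gives $\con[e_0] \crwlto t$. If it is (OR$^{rw}$) with $e = \con[f(\bar p\theta)]$ and $e' = \con[r\theta]$, I apply compositionality to $\con[r\theta] \crwlto t$ to obtain some $s \in \den{r\theta}$ with $\con[s] \crwlto t$, then build a fresh \crwl-proof of $f(\bar p\theta) \crwlto s$ by (OR) using the same $\theta$ (with premises $p_i\theta \crwlto p_i\theta$ from observation (iii), and $r\theta \crwlto s$ by choice of $s$), and finally reinsert this subproof into the context via compositionality to conclude $\con[f(\bar p\theta)] \crwlto t$. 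The main obstacle I anticipate is precisely this contextual reinsertion in the (OR$^{rw}$)-case of soundness: it is what really forces an appeal to compositionality (or an equivalent contextual-monotonicity lemma) rather than a purely syntactic rearrangement, so the cleanest exposition is to first discharge compositionality explicitly and then let the rest of the argument fall out by straightforward induction.
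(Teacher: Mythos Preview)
Your proof is correct, but it takes a genuinely different route from the paper's. The paper does not prove the equivalence from scratch: it simply observes that the one-step relation $\rightarrowtail$ of Figure~\ref{br} is exactly the \emph{BRC} proof calculus of \cite{GHLR99}, and then invokes the equivalence between BRC-derivability and GORC-derivability (which is what the present paper calls \crwl) already established in that reference. In other words, the paper's proof is a two-line reduction to the literature.

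Your argument, by contrast, is self-contained: a structural induction on the \crwl-proof for the left-to-right direction, and an induction on the length of the $\rightarrowtail^*$-derivation for the converse, with compositionality (Theorem~\ref{thCompoCrwl}) and polarity/monotonicity doing the contextual work in the (OR$^{rw}$) and (B$^{rw}$) cases respectively. This is essentially a reconstruction of the BRC\,=\,GORC proof from \cite{GHLR99} specialised to the present notation. The trade-off is clear: the paper's route is much shorter and avoids rehearsing known material, while your route makes the result independent of the cited reference and exposes where the semantic tools (compositionality, monotonicity) actually enter. Either is acceptable; the paper chose brevity.
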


This result has
an interesting reading: non-strict call-time choice can be achieved via innermost
evaluation if at any step one has the possibility of reducing a subexpression to $\perp$ (then, we could speak also of \emph{call-by-partial value}). For instance,
 a $\rightarrowtail$-rewrite sequence with the example of Figure \ref{coin} would be:
\[\begin{array}{l}
heads(repeat(coin)) \rightarrowtail  heads(repeat(0))\rightarrowtail\\
 heads(0:repeat(0)) \rightarrowtail heads(0:0:repeat(0)) \rightarrowtail\\
 heads(0:0:\perp) \rightarrowtail (0,0)
\end{array}\]

This gives useful intuitions about non-strict call-time choice and can actually serve for a very easy implementation of it, but has a major drawback:
in general, reduction  of a subexpression $e$ requires a \emph{don't know} guessing between
{(B$^{rw}$)} and {(OR$^{rw}$)}, because at the moment of reducing $e$ it is not known whether its value will be needed or not later on in the computation.
Instead of reducing to  $\perp$,  let-rewriting will create a  let-binding \emph{let U=e in \ldots}, which does not imply any guessing and keeps $e$  for its eventual  future use.

\section{Rewriting with local bindings}

\label{let-rewriting}


Inspired by
\cite{AriolaA95,AriolaFMOW95,AriolaF97,MaraistOW98,Plump98,Jaime04},
let-rewriting extends the syntax of expressions by adding local bindings to
express sharing and call-time choice. Formally the syntax for {\it let-expressions} is:
$$
LExp \ni e ::= X~|~h(e_1, \ldots, e_n)~|~let~X=e_1~in~e_2
$$
where $X \in {\cal V}$, $h \in \Sigma^n$, 
and $e_1, e_2, \ldots, e_n \in LExp$. The intended behaviour of $let~X=e_1~in~e_2$ is that the expression $e_1$ will be reduced only once (at most) and then its corresponding value will be shared within $e_2$. For $let~X=e_1~in~e_2$ we call $e_1$ the {\em definiens} of $X$, and $e_2$ the {\em body} of the let-expression.

The sets $FV(e)$ of {\em free} and $BV(e)$ {\em bound} variables of $e \in LExp$ are defined as:
$$
\begin{array}{c}
  \begin{array}{l}
    FV(X)=\{X\}\\
    FV(h(\overline{e}))= \bigcup_{e_i\in\overline{e}}FV(e_i)\\
    FV(let~X=e_1~in~e_2)=FV(e_1)\cup(FV(e_2)\backslash\{X\})\\[.3cm]
    BV(X)=\emptyset\\
    BV(h(\overline{e}))= \bigcup_{e_i\in\overline{e}}BV(e_i)\\
    BV(let~X=e_1~in~e_2)=BV(e_1)\cup BV(e_2)\cup \{X\}
  \end{array}
\end{array}
$$
Notice that with the given definition of $FV(let~X=e_1~in~e_2)$ there are not recursive let-bindings in the language
since the possible occurrences of $X$ in $e_1$ are not considered bound
and therefore refer to a `different' $X$. For example, the expression
$let~X=f(X)~in~g(X)$ can be equivalently written as $let~Y=f(X)~in~g(Y)$.
This is similar to what is done in \cite{MaraistOW98,AriolaFMOW95,AriolaF97}, but not in
\cite{AHHOV05,Lau93}.
Recursive lets have their own interest but there is not a general
consensus in the functional logic community about their meaning in presence of
non-determinism.
We remark also that the let-bindings introduced by let-rewriting derivations to
be presented in Section \ref{sect:letRwRelation}
are not recursive.
Therefore, recursive lets  are not considered in this work.

We will use the notation $let~\overline{X = a}~in~e$ as a shortcut for
$let~X_{1}=a_1~in\ldots in~let$ $X_{n}=a_n$ $in$ $e$.
The notion of {\it one-hole context} is also extended  to the new syntax:
$$
{\cal C} ::= [\ ]~|~let~X={\cal C}~in~e~|~let~X=e~in~{\cal C}~|~h(\ldots, {\cal C}, \ldots)
$$
By default, we will use contexts with lets  from now on.

Free variables of contexts are defined as for expressions, so that $FV(\con)=FV(\con [h])$, for any $h \in \Sigma^0$.
However, the set $BV(\con)$ of 
\emph{variables bound by a context} is defined quite differently because it
consists only of those let-bound variables visible from the hole of $\con$.
Formally:
\begin{center}
\begin{tabular}{l}
$BV([\ ]) = \emptyset$\\
$BV(h(\ldots,\con,\ldots)) = BV(\con)$\\
$BV(let~X = e~in~\con) = \{X\} \cup BV(\con)$\\
$BV(let~X = \con~in~e) = BV(\con)$
\end{tabular}
\end{center}

As a noticeable difference with respect to \cite{ppdp2007}, from now on we will
allow to use lets in any program, so our program rules have the shape $f(p_1, \ldots, p_n)\to r$, for $f\in FS^n$, $(p_1, \ldots, p_n)$ a linear tuple of c-terms, and $r\in LExp$. 
Notice, however, that the notion of c-term does not change: c-terms do not contain function symbols nor lets, although they can contain bound variables when put in an appropriate context as happens for example with the subexpression $(X,X)$ in the expression $let~X=coin~in~(X,X)$.
%

As usual with syntactical binding constructs, we assume a variable convention
according to which bound variables can be consistently renamed as to ensure that
the same variable symbol does not occur free and bound within an
expression. Moreover, to keep simple the management of substitutions, we assume
that whenever  $\theta$ is applied to an expression $e\in LExp$, the necessary
renamings are done in $e$ to ensure $BV(e) \cap (dom(\theta)\cup
\vran(\theta)) = \emptyset$. With all these conditions the rules defining
application of substitutions are simple while avoiding variable capture:
$$
\begin{array}{l}
X\theta = \theta(X), \textrm{ for $X\in\var$}\\
h(e_1, \ldots, e_n)\theta = h(e_1\theta, \ldots, e_n\theta), \textrm{
  for $h\in \Sigma^n$}\\ 
(let~X=e_1~in~e_2)\theta = let~X=e_1\theta~in~e_2\theta
\end{array}
$$

The following example illustrates the use of these conventions.
$$
\begin{array}{l}
(let~X=c(X)~in~let~Y=z~in~d(X, Y))[X/c(Y)] \\
= (let~U=c(X)~in~let~V=z~in~d(U, V))[X/c(Y)] \\
= let~U=c(c(Y))~in~let~V=z~in~d(U, V)
\end{array}
$$

The following substitution lemma will be often a useful technical tool:

\begin{lemma}[Substitution lemma for let-expressions] \label{auxBind}
Let $e, e'\in LExp_{\perp}$, $\theta \in Subst_{\perp}$ and $X \in \var$ such that $X \not\in dom(\theta) \cup \vran(\theta)$.
Then:
$$(e[X/e'])\theta \equiv e\theta[X/e'\theta]$$
\end{lemma}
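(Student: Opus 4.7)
The plan is to prove the equality by structural induction on $e$, relying throughout on the variable convention that ensures bound variables in $e$ are chosen disjoint from $\{X\} \cup FV(e') \cup dom(\theta) \cup \vran(\theta)$ (and hence also from $FV(e'\theta)$, since $\vran(\theta)$ bounds the free variables introduced by $\theta$).

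For the base case $e \equiv Y$ for some variable $Y$, I would split on whether $Y \equiv X$ or not. If $Y \equiv X$, both sides reduce to $e'\theta$ after unfolding (using $X \notin dom(\theta)$ to see that $X\theta \equiv X$, so $X\theta[X/e'\theta] \equiv e'\theta$). If $Y \not\equiv X$, then $(Y[X/e'])\theta \equiv Y\theta$, and I need $Y\theta[X/e'\theta] \equiv Y\theta$; this holds because either $Y \notin dom(\theta)$ and $Y\theta \equiv Y \not\equiv X$, or $Y \in dom(\theta)$ and $X \notin var(Y\theta)$ by the assumption $X \notin \vran(\theta)$. The inductive step for $e \equiv h(e_1, \ldots, e_n)$ is immediate from the definitions of substitution application and the induction hypothesis applied to each $e_i$.

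The interesting step is $e \equiv let~Y = e_1~in~e_2$. Here I invoke the variable convention to assume $Y \notin \{X\} \cup FV(e') \cup dom(\theta) \cup \vran(\theta) \cup FV(e'\theta)$, performing $\alpha$-renaming if needed. Then
\[
(let~Y = e_1~in~e_2)[X/e'] \equiv let~Y = e_1[X/e']~in~e_2[X/e'],
\]
and applying $\theta$ (which does not touch $Y$, by the convention) yields
\[
let~Y = (e_1[X/e'])\theta~in~(e_2[X/e'])\theta.
\]
On the other hand, $(let~Y=e_1~in~e_2)\theta \equiv let~Y = e_1\theta~in~e_2\theta$, and applying $[X/e'\theta]$ (again safe because $Y$ is disjoint from $X$ and from $FV(e'\theta)$) gives $let~Y = e_1\theta[X/e'\theta]~in~e_2\theta[X/e'\theta]$. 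The induction hypothesis on $e_1$ and $e_2$ equates the corresponding components.

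The main obstacle I anticipate is purely bookkeeping: ensuring at each binder that the freshness assumptions needed to commute substitutions past the binder are actually guaranteed by the paper's variable convention. Once one is careful that $Y \notin dom(\theta) \cup \vran(\theta)$ (so the outer $\theta$ slides under the $let$ unchanged) and $Y \notin FV(e'\theta) \cup \{X\}$ (so $[X/e'\theta]$ likewise slides under the $let$ without capture), the equality falls out directly. No clever step is needed beyond this careful handling of binders.
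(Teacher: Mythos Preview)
Your proposal is correct and follows exactly the same approach as the paper: structural induction on $e$, splitting the variable base case on whether the variable is $X$ and using $X\notin dom(\theta)$ and $X\notin\vran(\theta)$ respectively. In fact you are more thorough than the paper, which only spells out the two variable base cases and leaves the $h(\overline{e})$ and $let$ cases implicit; your explicit treatment of the $let$ case via the variable convention is precisely the bookkeeping the paper relies on tacitly.
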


\subsection{The let-rewriting relation}\label{sect:letRwRelation}

Let-expressions can be reduced step by step by means of the
{\it let-rewriting} relation $\f$, shown in Figure \ref{letrcalc}.
Rule \crule{\textbf{Contx}} allows us to use any subexpression as redex in the derivation. \crule{\textbf{Fapp}} performs a rewriting step in the proper sense, using a program rule. Note that only c-substitutions are allowed, to avoid copying of unevaluated expressions which would destroy sharing and call-time choice.
To  prevent that the restriction of \crule{\textbf{Fapp}} to total c-substitutions results in a strict semantics, we also provide the rule \crule{\textbf{LetIn}} that suspends the evaluation of a subexpression by introducing a let-binding.
If its value is needed later on,  its evaluation can be performed by some \crule{\textbf{Contx}} steps and the result propagated by \crule{\textbf{Bind}}. This latter rule is safe \wrt\ call-time choice because it only propagates c-terms, that is, either completely defined values (without any bound variable) or partially computed values with some suspension (bound variable) on it, which will be safely managed by the calculus. On the other hand, if the bound variable disappears from the body of the let-binding during evaluation, rule \crule{\textbf{Elim}} can be used for garbage collection.
This rule is useful to ensure that normal forms corresponding to values are c-terms.
Finally, \crule{\textbf{Flat}}  is needed for flattening nested lets; otherwise
some reductions could become wrongly blocked or forced to diverge. Consider for
example the program $\{loop \tor loop, g(s(X)) \tor 1\}$ and the expression
$g(s(loop))$, which can be reduced to $let~X=(let~Y=loop~in~s(Y))~in~g(X)$ by
applying \crule{\textbf{LetIn}} twice. Then, without \crule{\textbf{Flat}} we could only perform reduction
steps on $loop$, thus diverging; by using \crule{\textbf{Flat}}, we can obtain
$let~Y=loop~in~let~X=s(Y)~in~g(X)$, which can be finally
reduced to $1$ by applying \crule{\textbf{Bind}}, \crule{\textbf{Fapp}} and
\crule{\textbf{Elim}}. The condition $Y\not\in FV(e_3)$ in \crule{\textbf{Flat}} could be dropped by
the variable convention, but we have included it to keep the rules
independent of the convention. Quite different is the case of \crule{\textbf{Elim}}, where the
condition $X \not\in FV(e_2)$ is indeed necessary.


\begin{figure*}
\framebox{
\begin{minipage}{0.95\textwidth}
\begin{center}
  \begin{description}
     \item[(Fapp)] $f(p_1, \ldots, p_n)\theta~\f~r\theta$, ~~~ if $(f(p_1, \ldots, p_n) \tor r) \in \prog, \theta \in CSubst$\\[0.15cm]
     \item[(LetIn)] $h(\ldots, e, \ldots) \f let~X=e~in~h(\ldots, X, \ldots)$,\\
       if $h\in \Sigma$, $e\equiv f(\overline{e'})$ with $f\in FS$ or $e\equiv let~Y=e'~in~e''$, and $X$ is a fresh variable\\[0.15cm]
     \item[(Bind)] $let~X=t~in~e~\f~ e[X/t]$, ~~~ if $t \in CTerm$\\[0.15cm]
     \item[(Elim)] $let~ X = e_1~in~e_2 \f e_2$, ~~~ if  $X \not\in FV(e_2)$\\[0.15cm]
     \item [(Flat)] $let~X = (let~Y =e_1~in~e_2)~in~e_3 ~\f~ let~Y = e_1~in~(let~X=e_2~in~e_3)$\\
 if $Y \not\in FV(e_3)$ \\[0.15cm]
     \item[(Contx)] ${\cal C}[e] \f {\cal C}[e']$,\\
       if ${\cal C} \neq [\ ]$, $e  \f e'$ using any of the previous rules, and in case $e \f e'$ is a (Fapp) step using $(f(\overline{p}) \tor r) \in \prog$ and $\theta \in CSubst$, then $\vran(\theta|_{\setminus var(\overline{p})}) \cap BV({\cal C}) = \emptyset$.
\end{description}
\end{center}
\end{minipage}
}
\caption{Rules of the let-rewriting relation $\f$}
\label{letrcalc}
\end{figure*}

Note that, in contrast to \crwl\ or the relation $\rightarrowtail$ in  Section \ref{discussion}, let-rewriting does not need to use the semantic value $\perp$, which  does not appear in programs nor in computations.

\begin{example}\label{fig:letDer} 
Consider the program of Figure \ref{coin}. We can perform the following let-rewriting derivation for the expression $heads(repeat(coin))$, where in each step the corresponding redex has been underlined for the sake of readability.
$$
\begin{small}
\begin{array}{ll}
\underline{heads(repeat(coin))} & \crule{LetIn} \\
\f let~X=\underline{repeat(coin)}~in~heads(X) & \crule{LetIn} \\
\f \underline{let~X=(let~Y=coin~in~repeat(Y))~in~heads(X)} & \crule{Flat} \\
\f let~Y=coin~in~let~X=\underline{repeat(Y)}~in~heads(X) & \crule{Fapp} \\
\f let~Y=coin~in~let~X=\underline{Y:repeat(Y)}~in~heads(X) & \crule{LetIn} \\
\f let~Y=coin~in~\underline{let~X=(let~Z=repeat(Y)~in~Y:Z)~in~heads(X)} & \crule{Flat} \\
\f let~Y=coin~in~let~Z=repeat(Y)~in~\underline{let~X=Y:Z~in~heads(X)} & \crule{Bind} \\
\f let~Y=coin~in~let~Z=\underline{repeat(Y)}~in~heads(Y:Z) & \crule{Fapp}\\
\f let~Y=coin~in~let~Z=\underline{Y:repeat(Y)}~in~heads(Y:Z) & \crule{LetIn} \\
\f let~Y=coin~in~\underline{let~Z=(let~U=repeat(Y)~in~Y:U)~in~heads(Y:Z)} & \crule{Flat}\\
\f let~Y=coin~in~let~U=repeat(Y)~in~\underline{let~Z=Y:U~in~heads(Y:Z)} & \crule{Bind}\\
\f let~Y=coin~in~let~U=repeat(Y)~in~\underline{heads(Y:Y:U)} & \crule{Fapp}\\
\f let~Y=coin~in~\underline{let~U=repeat(Y)~in~(Y, Y)} & \crule{Elim} \\
\f let~Y=\underline{coin}~in~(Y, Y) & \crule{Fapp}\\
\f \underline{let~Y=0~in~(Y, Y)} & \crule{Bind}  \\
\f (0,0) \\
\end{array}
\end{small}
$$

Note that there is not a unique $\f$-reduction leading to $(0,0)$. The definition of $\f$, like traditional term rewriting, does not prescribe any particular strategy. The definition of on-demand evaluation strategies for let-rewriting is out of the scope of this paper, and is only informally discussed in Section  \ref{subsection:strategies}.

\end{example}

We study now some properties of let-rewriting that have intrinsic interest and will be useful when establishing a relation to CRWL in next sections.

The same example used in  Proposition \ref{exNoTransTrsCrwl} to show that \crwl\ is not closed under general substitutions shows also that the same applies to let-rewriting. However, let-rewriting is closed under c-substitutions, as expected in a semantics for call-time choice.

\begin{lemma}[Closedness under $CSubst$ of let-rewriting]\label{LRwCerr}
For any $e, e' \in LExp$, $\theta \in CSubst$ we have that $e \f^n e'$ implies $e\theta \f^n e'\theta$.
\end{lemma}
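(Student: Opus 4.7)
The natural plan is to proceed by induction on the number of steps $n$. The base case $n=0$ is immediate since $e\theta \equiv e'\theta$. For the inductive step, once the one-step case is settled, if $e \f^n e_1 \f e'$ then by the induction hypothesis $e\theta \f^n e_1\theta$, and a single application of the one-step case yields $e_1\theta \f e'\theta$, giving $e\theta \f^{n+1} e'\theta$.

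So the real work is the one-step case: assuming $e \f e'$, show $e\theta \f e'\theta$. This is done by case analysis on the rule used. Throughout, the variable convention lets me assume that all bound variables appearing in $e$ (and in contexts used below) are disjoint from $dom(\theta) \cup \vran(\theta)$, so $\theta$ can be pushed inside let-bindings without any renaming. The cases:

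\begin{itemize}
\item \crule{Fapp}: $f(\overline{p})\sigma \f r\sigma$ with $\sigma \in CSubst$. Applying $\theta$ and using that $f(\overline{p})$ has no bound variables, $(f(\overline{p})\sigma)\theta = f(\overline{p})(\sigma\theta)$ and similarly for $r\sigma$. Since the composition of two c-substitutions is again a c-substitution, \crule{Fapp} applies with $\sigma\theta$.
\item \crule{LetIn}: the fresh variable $X$ is, by the variable convention, outside $dom(\theta)\cup\vran(\theta)$, so applying $\theta$ gives exactly a \crule{LetIn} step on $e\theta$ with the same fresh $X$.
\item \crule{Bind}: $let~X=t~in~e_0 \f e_0[X/t]$ with $t\in CTerm$. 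Since $\theta\in CSubst$ we have $t\theta\in CTerm$, so \crule{Bind} is applicable to $let~X=t\theta~in~e_0\theta$, yielding $e_0\theta[X/t\theta]$. The substitution lemma (Lemma~\ref{auxBind}) then gives $(e_0[X/t])\theta \equiv e_0\theta[X/t\theta]$, using $X\notin dom(\theta)\cup\vran(\theta)$.
\item \crule{Elim} and \crule{Flat}: the required side conditions $X\notin FV(e_2)$, respectively $Y\notin FV(e_3)$, transfer to $e_2\theta$ and $e_3\theta$ because $X,Y\notin dom(\theta)\cup\vran(\theta)$ by the variable convention.
\item \crule{Contx}: $\con[e_0] \f \con[e_0']$ from an inner step $e_0 \f e_0'$. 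By the variable convention, $\con\theta$ has the same hole and $(\con[e_0])\theta = \con\theta[e_0\theta]$. The inner step on $e_0$ translates, by the previous cases, into an inner step $e_0\theta \f e_0'\theta$ of the same kind.
\end{itemize}

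The one delicate point I would handle carefully is the preservation of the side condition attached to \crule{Contx} when the inner step is a \crule{Fapp} using $(f(\overline{p})\to r)$ and $\sigma \in CSubst$: namely that $\vran(\sigma|_{\setminus var(\overline{p})}) \cap BV(\con)=\emptyset$. After applying $\theta$, the inner substitution becomes $\sigma\theta$, and I must show $\vran((\sigma\theta)|_{\setminus var(\overline{p})}) \cap BV(\con\theta)=\emptyset$. Note $BV(\con\theta)=BV(\con)$ by the variable convention. For any $Z\in dom(\sigma\theta)\setminus var(\overline{p})$, the variables of $Z\sigma\theta$ split into: (a) variables of $Z\sigma$ that are not in $dom(\theta)$, which already lie outside $BV(\con)$ by the original side condition; and (b) variables coming from $Y\theta$ for some $Y\in dom(\theta)\cap var(Z\sigma)$, which lie in $\vran(\theta)$ and hence outside $BV(\con)$ by the variable convention. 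This is the step I expect to be the main obstacle, since it is the only place where the interaction between the original substitution, the applied $\theta$, and the binders in the surrounding context must all be reconciled; the rest of the proof is essentially a routine propagation using the substitution lemma and the variable convention.
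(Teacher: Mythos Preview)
Your proof is correct and follows essentially the same approach as the paper: induction on $n$, reducing to the one-step case, and then case analysis on the rule applied, using the variable convention and the substitution lemma (Lemma~\ref{auxBind}) exactly where the paper does. Your treatment of the \crule{Contx}/\crule{Fapp} side condition is in fact more explicit than the paper's, which simply asserts that $\vran((\sigma\theta)|_{\setminus var(\overline{p})}) \cap BV(\con)=\emptyset$ follows from the original condition; your split into cases (a) and (b) spells out why.
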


Another interesting matter is the question of
what happens in let-rewriting derivations in which the rule \crule{Fapp} is not used---and as a consequence, the program is ignored.

\begin{definition}[The $\fnf$ relation]\label{def:fnf}
The relation $\fnf$ is defined by the rules of Figure \ref{letrcalc} except \lrrule{Fapp}. 
As a consequence, for any program $\fnf\ \subseteq \f$. 
\end{definition}

We can think about any let-expression $e$ as an 
expression from $Exp$ in which some additional sharing information has been encoded using the let-construction. When we avoid the use of the rule \crule{Fapp} in derivations, we do not make progress in the evaluation of the implicit let-less expression corresponding to $e$, but we change the sharing-enriched representation of that expression in the let-rewriting syntax.
Following terminology from term graph rewriting ---as in fact a let-expression is a textual representation of a term graph--- all the rules of let-rewriting except \crule{Fapp} move between two isomorphic term graphs \cite{Plump01,Plump98}.
The $\fnf$ relation 
will be used to reason about these kind of derivations.

The first interesting property of $\fnf$ is that it is a terminating relation.
\begin{proposition}[Termination of $\fnf$]\label{termlr}
For any program $\prog$, the relation  $\fnf$ is terminating.
As a consequence, every $e \in LExp$ has at least one $\fnf$-normal form $e'$
(written as $e \nf{\it lnf} e'$).
\end{proposition}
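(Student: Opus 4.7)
My plan is to prove termination by exhibiting a well-founded measure on $LExp$ that strictly decreases under every $\fnf$-step. Because the five rules have fundamentally different effects — \crule{LetIn} introduces a new binder, \crule{Bind} duplicates c-terms via substitution, \crule{Elim} garbage-collects, and \crule{Flat} merely re-associates nested lets — no single polynomial interpretation handles them uniformly: \crule{LetIn} unavoidably adds let-overhead and therefore \emph{grows} any size-like polynomial. I would therefore use a lexicographic combination $M(e)=(\text{App}_w(e),\|e\|)\in\mathbb{N}^2$.

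The secondary measure $\|\cdot\|$ is the polynomial interpretation $\|X\|=1$, $\|h(e_1,\ldots,e_n)\|=1+\sum_i\|e_i\|$, $\|let~X=e_1~in~e_2\|=\|e_1\|(1+\|e_2\|)$. The only non-trivial check is \crule{Bind}: since $\|X\|=1$, the body $e$ satisfies $\|e\|\geq k$ whenever $X$ occurs $k$ times, so the multiplicative let-factor $\|t\|(1+\|e\|)$ absorbs the substitution cost $k(\|t\|-1)$; \crule{Flat} yields a clean margin of exactly $\|e_1\|\cdot\|e_3\|\geq 1$ by direct expansion, and \crule{Elim} is immediate. For the dominating measure I would define $\text{App}_w(e)=\sum_{p\in\text{Extr}(e)}(\text{cfDepth}_e(p)+1)$, where $\text{Extr}(e)$ is the set of positions of function- or let-sub-expressions of $e$ whose immediate parent in $e$ (if any) is not a let — precisely the positions where \crule{LetIn} can fire — and $\text{cfDepth}_e(p)$ is the number of non-let strict ancestors of $p$ in $e$.

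The hard part is verifying that \crule{LetIn} strictly decreases $\text{App}_w$. This requires case analysis on whether the wrapper $h$ is a function or a constructor and whether the redex position $q$ has a non-let parent. The subtlest sub-case is a constructor wrapper with a non-let parent, where the newly-created let at $q$ itself becomes a new extractable position at the same outer cfDepth as the one that was lost; the $+1$ weighting, combined with the loss of the moved sub-expression (weight $c+2$, where $c=\text{cfDepth}_e(q)$) and the drop-by-one in cfDepth of each of its $k$ extractable descendants (because the former $h$-ancestor is no longer counted), still yields a strict decrease of $1+k$. For the other three rules, $\text{App}_w$ is non-increasing by inspection: \crule{Flat} preserves $\text{Extr}$ and its cfDepths bijectively, \crule{Bind} preserves them modulo the possible loss of the root let when $e$'s head is a variable, and \crule{Elim} only drops contributions from the discarded $e_1$. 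Once both measures are validated, $M$ decreases lexicographically under every $\fnf$-rule, propagation under \crule{Contx} follows from the strict monotonicity of both measures with respect to subexpression replacement, and well-foundedness of the lex order on $\mathbb{N}\times\mathbb{N}$ gives both strong normalization of $\fnf$ and the existence of normal forms stated in the proposition.
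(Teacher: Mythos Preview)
Your secondary measure $\|\cdot\|$ does \emph{not} decrease under \crule{Bind}, and since the primary component $\text{App}_w$ can be unchanged by \crule{Bind}, the lexicographic pair fails.  The flaw is in the sentence ``the multiplicative let-factor $\|t\|(1+\|e\|)$ absorbs the substitution cost $k(\|t\|-1)$'': that cost estimate is \emph{additive}, but your interpretation of $let$ is \emph{multiplicative}, so occurrences of $X$ sitting inside definientia of lets nested in $e$ incur a multiplicative blow-up that a single outer factor $\|t\|$ cannot compensate.  Concretely, take $t=c(a)$ (so $\|t\|=2$) and
\[
e \;=\; let~Y=X~in~(let~Z=X~in~Y),\qquad\text{so}\quad \|e\|=3.
\]
Then $\|let~X=t~in~e\|=\|t\|(1+\|e\|)=2\cdot 4=8$, while
\[
\|e[X/t]\|=\|let~Y=t~in~(let~Z=t~in~Y)\|=2\cdot(1+2\cdot(1+1))=10.
\]
Both sides contain no function symbols and every let has a let (or no) parent, so $\text{App}_w$ is identical on the two sides; hence your measure strictly \emph{increases} along this \crule{Bind} step.  (Pushing the nesting depth further makes the gap arbitrarily large: with $n$ nested lets each binding $X$ you get $\|e[X/t]\|$ of order $\|t\|^{\,n}$.)

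The fix the paper uses is much lighter: instead of a polynomial interpretation, it takes as the component handling \crule{Bind}, \crule{Elim} and \crule{Flat} simply the \emph{number of $let$s} in $e$ (followed by a nesting-depth sum to break ties for \crule{Flat}).  Because in \crule{Bind} the substituted $t$ is a c-term it contains no lets, so one let disappears and none are created --- there is no duplication issue at all.  Your idea of putting a (LetIn)-counting component first is the right one and matches the paper; it is only the choice of the second component that needs to be replaced by something insensitive to duplication of c-terms.
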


However, for nontrivial signatures the relation $\fnf$ is not confluent (hence the relation $\f$ is not confluent either).
\begin{example}
Consider a signature such that $f, g \in FS^0, c \in CS^2$ and $f \not\equiv g$.
Then $c(f, g) \fnfe let~X=f~in~let~Y=g~in~c(X, Y)$ and $c(f, g) \fnfe let~Y=g~in~let~X=f~in~c(X, Y)$, but these expressions do not have a common reduct.
%
\end{example}

The lack of confluence of let-rewriting is alleviated
by a strong semantic property of $\fnf$ which, combined with the adequacy to \crwl\ of let-rewriting that we will see below, may be used as a substitute for confluence in some situations. These questions will be treated in detail in Section \ref{SectSoundLetRw}.

The next result characterizes $\fnf$-normal forms.
What we do in 
$\fnf$ derivations is exposing the computed part of $e$ ---its outer constructor
part--- concentrating it in the body of the resulting let, that is, the part
which is not a function application whose evaluation is pending. This is why we call it  \emph{`Peeling lemma'}.

%
%
%
\begin{lemma}[Peeling lemma]\label{T36}\label{lemma:bigPeeling}
%
%
For any $e, e' \in LExp$, 
if $e \nf{\it lnf} e'$ then $e'$ has the shape $e' \equiv let~\overline{X=f(\overline{t})}~in~e''$ such that $e'' \in \var$ or $e'' \equiv h(\overline{t'})$ with $h \in \Sigma$, $\overline{f} \subseteq FS$ and $\overline{t},\overline{t'} \subseteq CTerm$. \\
Moreover if $e \equiv h(e_1, \ldots, e_n)$ with $h \in \Sigma$, then
$$
e \equiv h(e_1, \ldots, e_n) \fnfe let~\overline{X=f(\overline{t})}~in~h(t_1, \ldots, t_n) \equiv e'
$$
under the conditions above, and verifying also that $t_i \equiv e_i$ whenever $e_i \in CTerm$.

\end{lemma}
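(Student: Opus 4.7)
The plan is to reduce both parts to a tight analysis of which rule of $\fnf$ is blocked in a normal form, together with an explicit derivation for the second part.

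For the first part I would proceed by induction on the size of $e'$. If $e' \in \var$, the conclusion is immediate with an empty list of bindings and $e'' \equiv e'$. If $e' \equiv h(\overline{e})$ with $h \in \Sigma$, each $e_i$ is itself a normal form because \crule{Contx} is blocked, so the induction hypothesis gives $e_i \equiv let~\overline{Y=f(\overline{t})}~in~e_i''$; the list of bindings must be empty, since otherwise $h(\ldots,let\ldots,\ldots)$ would fire \crule{LetIn} via \crule{Contx}, and $e_i''$ cannot have head in $FS$ for the same reason. Iterating this observation on the arguments of $e_i''$ one concludes $e_i \in CTerm$, and the case holds with no outer bindings and $e'' \equiv h(e_1,\ldots,e_n)$. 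If $e' \equiv let~X=e_1~in~e_2$, blocking of \crule{Bind}, \crule{Flat}, and \crule{Elim} gives respectively that $e_1 \notin CTerm$, $e_1$ is not a let, and $X \in FV(e_2)$; the same dichotomy applied to the induction hypothesis on $e_1$ forces $e_1 \equiv f(\overline{t})$ with $f \in FS$ and $\overline{t} \subseteq CTerm$; prepending this binding to the form given by the induction hypothesis on $e_2$ closes the case.

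For the second part I would build an explicit $\fnfe$-derivation from $e \equiv h(e_1,\ldots,e_n)$ by processing its arguments one by one. For a position with $e_i \in CTerm$, do nothing and set $t_i \equiv e_i$; such an $e_i$ is itself a $\fnf$-normal form and cannot be touched by any rule of $\fnf$ later on (no outer let can capture its variables thanks to the variable convention, and no \crule{Fapp} is available), so it will remain untouched. For a position with $e_i \notin CTerm$, apply \crule{LetIn} under the outer $h$ to extract $e_i$ into a fresh binding $Z_i$; reduce the definiens of $Z_i$ in isolation via \crule{Contx} to its normal form $let~\overline{X^i = f^i(\overline{s^i})}~in~e_i''$, supplied by the first part; lift all internal bindings above $Z_i$ by iterated \crule{Flat}. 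If $e_i'' \in CTerm$, apply \crule{Bind} and take $t_i \equiv e_i''$; if $e_i'' \notin CTerm$, then by the first part it has the form $h_i(\overline{s})$ with $h_i \in FS$ and $\overline{s} \subseteq CTerm$, so leave the binding $Z_i = h_i(\overline{s})$ in place and take $t_i \equiv Z_i$ (a variable, hence a c-term). After processing every position, the result has the shape $let~\overline{X=f(\overline{t})}~in~h(t_1,\ldots,t_n)$ required by the statement.

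The main obstacle will be the bookkeeping of the second part: choosing fresh names that respect the variable convention and the side condition on \crule{Contx}, interleaving \crule{Flat} and \crule{Bind} so that the body ends up exactly as $h(t_1,\ldots,t_n)$ with its arguments in the right positions, and justifying each \crule{Bind} step via the substitution lemma (Lemma \ref{auxBind}). Since no \crule{Fapp} is used in $\fnf$, call-time-choice considerations do not arise and all of this reduces to routine syntactic manipulation.
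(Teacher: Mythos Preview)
Your Part 1 is correct and is essentially the contrapositive of the paper's argument: you induct on the structure of the normal form and show the shape directly, while the paper defines a grammar for ``ill-shaped'' expressions and shows each of them admits a $\fnf$ step. The content is the same; your phrasing is perhaps slightly cleaner since the induction hypothesis already delivers $\overline{t'} \subseteq CTerm$, so the ``iterating'' you mention is in fact unnecessary.

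For Part 2 the approaches genuinely differ. The paper does not build an explicit derivation: it invokes termination (Proposition~\ref{termlr}) to obtain \emph{some} normal form, appeals to Part 1 for its shape, and then argues by a one-line invariant that c-term arguments survive intact (\lrrule{LetIn} cannot extract a c-term, and no other $\fnf$ rule matches one), so in any normal form the body is still $h(\ldots)$ with the original c-term arguments in place. Your constructive route is also viable, but as written it has a small gap: when $e_i \notin CTerm$ is \emph{constructor}-headed, say $e_i \equiv c(g(\overline{a}))$ with $c \in CS$ and $g \in FS$, the side condition of \lrrule{LetIn} forbids extracting $e_i$ directly, since the rule only lifts function-headed or let-rooted subexpressions. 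The fix is easy---first reduce $e_i$ in place via \lrrule{Contx} to its own $\fnf$-normal form, which by Part 1 is either a c-term (then nothing to do) or let-rooted/function-headed (then \lrrule{LetIn} applies)---but it should be stated. Once patched, your construction goes through; it is longer than the paper's invariant argument but yields a more explicit grip on the resulting derivation.
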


The next property of $\f$ and $\fnf$ uses the notion of \emph{shell} $|e|$ of an expression $e$, that is the partial c-term corresponding to the outer constructor part of $e$. More precisely:

\begin{definition}[Shell of a let-expression]
$$
\begin{array}{lcll}
|X| & = & X & \mbox{ for } X \in \var\\
|c(e_1, \ldots, e_n)| & = & c(|e_1|, \ldots, |e_n|) & \mbox{ for } c \in CS\\
|f(e_1, \ldots, e_n)| & = & \perp & \mbox{ for } f \in FS \\
|let~X=e_1~in~e_2| & = & |e_2|[X/|e_1|] \\
\end{array}
$$
\end{definition}
Notice that in the 
case of a let-rooted expression, the information contained in 
the binding is taken into account for building up the shell of 
the whole expression: for instance $|c(let~X=2~in~s(X))| = c(s(2))$.

During a computation, the evolution of shells reflects the progress towards a value. The next result shows that shells never decrease. Moreover,  only \lrrule{Fapp} may change shells. As discussed above, `peeling' steps (i.e. $\fnf$- steps) just modify the representation of the implicit term graph corresponding to a let-expression; thus, they preserve the shell.

\begin{lemma}[Growing of shells]\label{LCascCrec}
\begin{enumerate}
    \item[i)] $e \fe e'$ implies $|e| \sqsubseteq |e'|$, for any $e, e' \in LExp$
    \item[ii)] $e \fnfe e'$ implies $|e| \equiv |e'|$, for any $e, e' \in LExp$
\end{enumerate}
\end{lemma}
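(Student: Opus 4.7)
The plan is to first establish the statements for single-step reductions and then extend to arbitrary reductions $\fe$ and $\fnfe$ by a straightforward induction on the number of steps, using transitivity of $\sqsubseteq$ for part i) and transitivity of $\equiv$ for part ii). So the core of the proof will be the one-step cases: $e \f e' \Rightarrow |e| \sqsubseteq |e'|$, and if moreover the step is not \crule{Fapp} then $|e| \equiv |e'|$.

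Before doing case analysis on the rules, I would isolate two auxiliary facts that will be applied repeatedly. \textbf{(A) Substitution lemma for shells:} for any $e \in LExp$, $X \in \var$ and $t \in CTerm_\perp$, $|e[X/t]| \equiv |e|[X/t]$. This is proved by structural induction on $e$: the interesting case is $e \equiv let~Y=e_1~in~e_2$, where by the variable convention we may assume $Y \notin \{X\} \cup var(t)$, so both sides unfold to $|e_2|[Y/|e_1|][X/t]$ after applying Lemma~\ref{auxBind} to swap the substitutions. \textbf{(B) Monotonicity of the shell in contexts:} for every context ${\cal C}$, $|e| \sqsubseteq |e'|$ implies $|{\cal C}[e]| \sqsubseteq |{\cal C}[e']|$, and $|e| \equiv |e'|$ implies $|{\cal C}[e]| \equiv |{\cal C}[e']|$. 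This follows by induction on ${\cal C}$; the let-cases reduce to monotonicity of the substitution $|\cdot|[X/\cdot]$ under $\sqsubseteq$, which is immediate from the definition of $\sqsubseteq$.

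With these tools, I would handle the one-step \crule{Contx}-free cases as follows. For \crule{Fapp}: the LHS is function-rooted so its shell is $\perp$, hence trivially $\sqsubseteq$ any shell, covering part i); part ii) does not apply. For \crule{LetIn}, $h(\ldots,e,\ldots) \f let~X=e~in~h(\ldots,X,\ldots)$: unfolding the RHS gives $|h(\ldots,X,\ldots)|[X/|e|]$, which is $h(\ldots,|e|,\ldots)$ if $h \in CS$ and $\perp$ if $h \in FS$ --- identical in both cases to $|h(\ldots,e,\ldots)|$. For \crule{Bind}, $let~X=t~in~e \f e[X/t]$ with $t \in CTerm$: the LHS shell is $|e|[X/|t|] = |e|[X/t]$ (because $t$ is a c-term), and by (A) this equals $|e[X/t]|$. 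For \crule{Elim}: since $X \notin FV(e_2)$, also $X \notin FV(|e_2|)$, so $|e_2|[X/|e_1|] \equiv |e_2|$. For \crule{Flat}, a direct calculation shows both sides equal $|e_3|[X/|e_2|[Y/|e_1|]]$; the side condition $Y \notin FV(e_3)$, which implies $Y \notin FV(|e_3|)$, is what makes Lemma~\ref{auxBind} applicable to reorder the two outer substitutions on the right-hand side. Finally, the \crule{Contx} case is immediate from (B) applied to the inner step.

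The main obstacle I foresee is bookkeeping the substitution reasoning in the \crule{Flat} and \crule{Bind} cases, in particular ensuring that the variable convention and the side conditions of the rewrite rules together ensure that the substitution lemma for shells (A) applies with the required freshness conditions. Once (A) is proved carefully, the rest of the argument is a routine case split; in particular, part ii) follows because every rule other than \crule{Fapp} was shown to preserve the shell on the nose, so the induction on derivation length gives $|e| \equiv |e'|$ for all of $\fnfe$.
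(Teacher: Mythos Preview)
Your proposal is correct and follows essentially the same approach as the paper: the paper also proves the one-step case by rule-wise case analysis and then extends by induction on the length of the derivation, using exactly your auxiliary facts --- your (A) is (a special case of) the paper's Lemma~\ref{LCasc1}, and your (B) is the paper's Lemmas~\ref{LLetOrd3} and~\ref{lemma:context}. One small imprecision: in the \crule{Flat} case, the side condition $Y \notin FV(e_3)$ is not what licenses Lemma~\ref{auxBind} (that needs $X \notin \{Y\}\cup FV(|e_1|)$, which comes from the variable convention); rather, it is used afterwards to drop the resulting $[Y/|e_1|]$ from $|e_3|$.
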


\subsection{The {\it CRWL$_{\it let}$} logic}
\label{crwllet}


In this section we  extend the \crwl\ logic to deal with let-expressions, obtaining an enlarged framework that will be useful as a bridge to establish the connection between \crwl\ and let-rewriting.

As in the \crwl\ framework, we  consider partial let-expressions $e\in LExp_\perp$,
defined in the natural way. The approximation order $\ordap$ is also extended to $LExp_\perp$ but now using the notion of context for let-expressions, which in particular implies that $let~X=e_1~in~e_2 \ordap let~X=e'_1~in~e'_2$ iff $e_1 \ordap e'_1$ and $e_2 \ordap e'_2$.
The \crwll\ logic results of adding the following rule \crule{\textbf{Let}} to the \crwl\ logic of Section \ref{sectPrelimCRWL}:


$${\bf (Let)}\
\begin{array}{c}
  e_1\crwlto t_1~~~e_2[X/t_1]\crwlto t\\[-.2cm]
  \hline
  \\[-.6cm]
  let~X=e_1~in~e_2\crwlto t
\end{array}
$$



We write ${\cal P}
\conscrwllet e \crwlto t$ if
$e\crwlto t$ is derivable in the \crwll-calculus using the program ${\cal
  P}$.
In many occasions, we will omit ${\cal P}$.

\begin{definition}[\emph{CRWL$_{\it let}$}-denotation]
Given a program
$\mathcal{P}$, the \emph{CRWL$_{\it let}$-denotation} of  $e \in
LExp_\perp$ is defined as: $$\den{e}^{\mathcal{P}}_{\it CRWL_{\it let}}=\{t\in
CTerm_\perp \mid \mathcal{P} \conscrwllet e\crwlto t\}$$
We will omit the sub(super)-scripts when they are clear by the context.
\end{definition}

There is an obvious relation between \crwl\ and \crwll\ for programs and expressions without lets:
\begin{theorem}[\crwl\ vs. \crwll]\label{thEquivCrwlCrwllet} For any program $\prog$ without lets, and any $e \in Exp_\perp$:
$$
\dc{e}^{\prog} = \dcl{e}^{\prog}
$$
\end{theorem}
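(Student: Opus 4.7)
The plan is to prove both inclusions separately, with one being essentially trivial and the other relying on a simple structural observation about which \crwll-rules can possibly appear in a derivation when the starting point contains no let-bindings.

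For the inclusion $\dc{e}^{\prog} \subseteq \dcl{e}^{\prog}$, I would just remark that \crwll\ is defined by adding the rule \crule{Let} to the rules of \crwl, so every \crwl-derivation is literally a \crwll-derivation. Hence any $t \in \dc{e}^{\prog}$ trivially lies in $\dcl{e}^{\prog}$.

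For the reverse inclusion $\dcl{e}^{\prog} \subseteq \dc{e}^{\prog}$, the key observation is that if neither $\prog$ nor $e$ contains let-bindings, then no \crwll-derivation of $e \crwlto t$ can use the rule \crule{Let} at all. I would prove by induction on the structure of the \crwll-derivation that every subgoal $e' \crwlto t'$ appearing in it satisfies $e' \in Exp_\perp$ (i.e.\ contains no lets). The base cases \crule{B} and \crule{RR} have no premises. For \crule{DC}, the premises $e_i \crwlto t_i$ inherit the let-free property from $c(e_1,\ldots,e_n)$. For \crule{OR}, the argument premises $e_i \crwlto p_i\theta$ involve let-free $e_i$ (by assumption) and $p_i\theta$ which is a c-term hence let-free; the body premise $r\theta \crwlto t$ involves $r$ from a program rule (let-free because $\prog$ has no lets) and a c-substitution $\theta$, so $r\theta$ is let-free. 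Finally, \crule{Let} can never apply, because its conclusion $let~X=e_1~in~e_2 \crwlto t$ requires a let-rooted expression, contradicting the invariant.

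Since no instance of \crule{Let} occurs, the \crwll-derivation is built entirely with the rules \crule{B}, \crule{RR}, \crule{DC}, \crule{OR}, which are precisely the rules of \crwl. Therefore it is a valid \crwl-derivation, yielding $t \in \dc{e}^{\prog}$. I do not anticipate any real obstacle: the only subtlety is making sure the induction invariant is stated over all subgoals of the proof tree, and that c-substitutions in \crule{OR} preserve let-freeness, which is immediate from $CTerm_\perp \subseteq Exp_\perp$.
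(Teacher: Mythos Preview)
Your proposal is correct and follows exactly the same approach as the paper: one inclusion is immediate because \crwll\ extends \crwl, and the other holds because the rules of \crwll\ never introduce let-bindings and \crule{Let} applies only to let-rooted expressions, so a \crwll-proof starting from a let-free expression under a let-free program uses only \crwl\ rules. The paper states this in two sentences, whereas you unfold it into an explicit induction on the derivation with the invariant that every subgoal has a let-free left-hand side; the content is the same.
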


This result allows us to skip the mention to \crwl\ or \crwll\ when  referring to the denotation $\den{e}$ of an expression: if some let-binding occurs in $e$ ---or in the program wrt. which the denotation is considered--- then $\den{e}$ can be interpreted only as $\dcl{e}$; otherwise, both denotations coincide.

The \crwll\ logic inherits from \crwl\ a number of useful properties.

\begin{lemma}\label{lemmashells} For any program  $e \in LExp_\perp$, $t,t' \in CTerm_\perp$:
  \begin{enumerate}
  \item[i)] $ t \clto t'$ iff $t' \ordap t$.
  \item[ii)] $|e|\in\den{e}$.
  \item[iii)] $\den{e} \subseteq (|e|\!\!\uparrow)\!\!\downarrow$, where for a given $E \subseteq LExp_\bot$ its upward closure is $E\!\!\uparrow = \{e' \in LExp_\bot |~\exists e \in E.~e \ordap e'\}$, its downward closure is $E\!\!\downarrow = \{e' \in LExp_\bot |~ \exists e \in E.~e' \ordap e\}$, and those operators are overloaded for let-expressions as $e\!\!\uparrow = \{e\}\!\!\uparrow$ and $e\!\!\downarrow = \{e\}\!\!\downarrow$.
  \end{enumerate}
\end{lemma}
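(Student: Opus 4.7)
The plan is to prove the three parts in order: part (iii) relies on part (i), and part (ii) needs only a mild strengthening of the inductive hypothesis. The main obstacle will be the \crule{Let} case of part (iii). For part (i), the implication $t \clto t' \Rightarrow t' \ordap t$ follows by induction on the \crwllet-derivation, since the only rules applicable at the root to a c-term $t$ are \crule{B}, \crule{RR}, and \crule{DC}; the converse is a straightforward structural induction on $t'$, with \crule{B} handling $t' \equiv \perp$, \crule{RR} handling variables, and \crule{DC} handling the constructor case.

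For part (ii), I would strengthen the claim to: for every $e \in LExp_\perp$ and every capture-avoiding $\theta \in CSubst_\perp$, $e\theta \clto |e\theta|$ is \crwllet-derivable; the original statement follows by taking $\theta$ to be the identity. The proof is then a routine structural induction on $e$. The variable, constructor, and function-symbol cases use \crule{RR}, \crule{DC}, and \crule{B} respectively. For $e \equiv let~X = e_1~in~e_2$ I would apply \crule{Let} with $t_1 := |e_1\theta|$: the first premise comes from the inductive hypothesis on $e_1$, and the second premise $e_2\theta[X/|e_1\theta|] \clto |e_2\theta|[X/|e_1\theta|]$ comes from the inductive hypothesis on $e_2$ with the extended c-substitution $\theta \cup [X/|e_1\theta|]$, combined with the auxiliary identity $|e[X/t]| = |e|[X/t]$ for any $t \in CTerm_\perp$ (itself a short structural induction on $e$).

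For part (iii), I would proceed by induction on the \crwllet-derivation of $e \clto t$. The cases \crule{B} (trivial, $t \equiv \perp$), \crule{RR} (trivial, $t = X = |e|$), \crule{DC} (combine the inductively supplied upper bounds positionally under $c$), and \crule{OR} (here $|e| = \perp$, so $(|e|\!\!\uparrow)\!\!\downarrow$ is all of $CTerm_\perp$) are routine. The main obstacle is \crule{Let}: the derivation uses $e_1 \clto t_1$ and $e_2[X/t_1] \clto t$, and one must produce a common upper bound $u$ for $t$ and $|e| = |e_2|[X/|e_1|]$. The inductive hypothesis supplies an upper bound $u_1$ of $\{t_1, |e_1|\}$ and an upper bound $u_2$ of $\{t, |e_2|[X/t_1]\}$ (using $|e_2[X/t_1]| = |e_2|[X/t_1]$ from part (ii)), but $u_1$ and $u_2$ cannot simply be glued, since $u_2$ may record information at an $X$-leaf of $|e_2|$ that conflicts with $u_1$.

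The idea that unlocks this case is that part (i) controls the value of $t$ at each $X$-leaf of $|e_2|$: at any such leaf the corresponding sub-derivation inside $e_2[X/t_1] \clto t$ is essentially a derivation $t_1 \clto s$, so $s \ordap t_1$ by part (i), and hence the subterm of $t$ at that position is bounded by $t_1$. Taking $u_1$ to be the least upper bound of $\{t_1, |e_1|\}$ in the lattice of partial c-terms, I would define $u$ by placing $u_1$ at each $X$-leaf of $|e_2|$ and copying $u_2$ everywhere else. Then $u \sqsupseteq t$ (at $X$-leaves because $t$'s subterm is $\ordap t_1 \ordap u_1$, elsewhere directly from $u_2 \sqsupseteq t$) and $u \sqsupseteq |e|$ (at $X$-leaves because $|e_1| \ordap u_1$, elsewhere because $|e_2|$'s constructor skeleton already sits below $|e_2|[X/t_1] \ordap u_2$). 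The only subtle point in a clean write-up is tracking the correspondence between $X$-positions in $|e_2|$ and the sub-derivations of $e_2[X/t_1] \clto t$ when $e_2$ contains nested lets, which is handled by repeated application of the identity $|e[X/s]| = |e|[X/s]$.
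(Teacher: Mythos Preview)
Your arguments for parts (i) and (ii) are fine. For (ii) you generalize the induction hypothesis over c-substitutions; the paper instead keeps the plain statement and, in the \crule{Let} case, appeals to closedness of \crwll\ under c-substitutions (Proposition~\ref{closednessCSubst}) to turn the hypothesis $e_2 \clto |e_2|$ into $e_2[X/|e_1|] \clto |e_2|[X/|e_1|]$. Both routes work.

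For part (iii) there is a genuine gap. Your construction of $u$ hinges on the claim that at every $X$-leaf $p$ of $|e_2|$ the subterm $t|_p$ satisfies $t|_p \ordap t_1$. You justify this by saying that ``the corresponding sub-derivation inside $e_2[X/t_1] \clto t$ is essentially a derivation $t_1 \clto s$'' and that nested lets are handled by the identity $|e[X/s]| = |e|[X/s]$. But that identity only relates \emph{shells}; it says nothing about how the value $t$ produced by a derivation aligns with positions of $|e_2|$. In the presence of nested lets there is no direct sub-derivation of $e_2[X/t_1] \clto t$ sitting at position $p$: e.g.\ for $e_2 \equiv let~Y=c(X,f)~in~Y$ the $X$-position $1$ of $|e_2|=c(X,\perp)$ corresponds to a sub-derivation buried inside the \crule{Let}/\crule{DC} structure, and extracting the bound $t|_1 \ordap t_1$ requires an inductive argument over the structure of $e_2$ that simultaneously tracks the substitution. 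That auxiliary induction is essentially a lemma of the form ``for all $e$ and all $\theta\in CSubst_\perp$, if $e\theta \clto t$ then at each variable-leaf $Y$ of $|e|$ one has $t|_p \ordap \theta(Y)$''---which is already as strong as the statement you are trying to prove, once generalized to substitutions.

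This is exactly why the paper does \emph{not} induct on the derivation. It first strengthens (iii) to its hypersemantics form (Lemma~\ref{LemEx2iii}): for every $\theta \in CSubst_\perp$, $\den{e\theta} \subseteq (|e\theta|\!\!\uparrow)\!\!\downarrow$; and then proceeds by structural induction on $e$. In the \crule{Let} case one first applies the IH to $e_1$ to obtain an upper bound $t'_1$ of $\{t_1,|e_1\theta|\}$, then uses monotonicity of \crwll\ under $\ordap$ on substitutions (Proposition~\ref{PropMonSubstCrwlLet}) to lift the premise $e_2\theta[X/t_1] \clto t$ to $e_2\theta[X/t'_1] \clto t$, and only then applies the IH to $e_2$ with the substitution $\theta[X/t'_1]$. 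The resulting bound $t'$ satisfies both $t \ordap t'$ and $|e_2\theta|[X/|e_1\theta|] \ordap |e_2\theta|[X/t'_1] = |e_2\theta[X/t'_1]| \ordap t'$, with no positional bookkeeping at all. The move you are missing is precisely this generalization over c-substitutions together with the monotonicity step; without it, the \crule{Let} case does not close.
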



The first part of the previous result shows that c-terms can only be reduced to smaller c-terms. The other parts express that the shell of an expression represents `stable' information contained in the expression in a similar way to Lemma \ref{LCascCrec}, as the shell is in the denotation by \emph{ii)}, and everything in the denotation comes from refining it by \emph{iii)}.

The following results are adaptations to \crwll\ of properties known for \crwl\  \cite{GHLR99,vado02}.
The first one states that if we can compute a value for an expression then from greater expressions we can reach smaller values. The second one says that \crwll-derivability is closed for partial c-substitutions.

\begin{proposition}[Polarity of \crwll]\label{propCrwlletPolar} For any  $e, e' \in LExp_\perp$, $t, t' \in CTerm_\perp$, if $e \ordap e'$ and $t' \ordap t$ then $ e \clto t$ implies $ e' \clto t'$ with a proof of the same size or smaller---where the size of a \crwll-proof is measured as the number of rules of the calculus used in the proof.
\end{proposition}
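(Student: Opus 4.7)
The plan is to proceed by structural induction on the given \crwll-derivation of $e \clto t$, performing a case analysis on the last rule applied. In every case the new derivation will either (i) be obtained by a trivial application of \crule{B}, which has size $1$, or (ii) reuse the same final rule with subproofs delivered by the induction hypothesis. Since each subproof returned by the IH has size no bigger than the original, the overall size bound follows immediately.

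First I would handle the two degenerate shapes. If $t' \equiv \perp$ then a single application of \crule{B} proves $e' \clto \perp$, trivially with size $1 \leq $ size of the original proof. So from now on assume $t' \not\equiv \perp$, which (together with $t' \ordap t$) forces $t$ to have a non-$\perp$ root.

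Now, the case analysis on the last rule. For \crule{B} the previous paragraph covers it. For \crule{RR} we have $e \equiv t \equiv X$, and $X \ordap e'$ forces $e' \equiv X$ (since $X$ contains no $\perp$); then $t' \ordap X$ with $t' \not\equiv \perp$ gives $t' \equiv X$, so \crule{RR} applies again. For \crule{DC}, $e \equiv c(\overline{e_i})$ and $t \equiv c(\overline{t_i})$; from $e \ordap e'$ the shape of $e'$ must be $c(\overline{e'_i})$ with $e_i \ordap e'_i$, and $t' \ordap t$ with $t' \not\equiv \perp$ gives $t' \equiv c(\overline{t'_i})$ with $t'_i \ordap t_i$. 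The induction hypothesis applied to each premise $e_i \clto t_i$ yields $e'_i \clto t'_i$, and \crule{DC} reassembles the proof. For \crule{OR}, $e \equiv f(\overline{e_i})$, so $e' \equiv f(\overline{e'_i})$ with $e_i \ordap e'_i$; the IH on each $e_i \clto p_i\theta$ yields $e'_i \clto p_i\theta$ (keeping $p_i\theta$ fixed since $p_i\theta \ordap p_i\theta$ is reflexive), and the IH on $r\theta \clto t$ with $t' \ordap t$ yields $r\theta \clto t'$; one more application of \crule{OR} with the same rule and $\theta$ closes the case.

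The most delicate case is \crule{Let}, where $e \equiv let~X=e_1~in~e_2$, and the premises are $e_1 \clto t_1$ and $e_2[X/t_1] \clto t$. From $e \ordap e'$ the refinement preserves the let-shape, so $e' \equiv let~X=e'_1~in~e'_2$ with $e_1 \ordap e'_1$ and $e_2 \ordap e'_2$. The IH applied to $e_1 \clto t_1$ (with the trivial refinement $t_1 \ordap t_1$) yields $e'_1 \clto t_1$. For the second premise I need $e'_2[X/t_1] \clto t'$: the auxiliary fact required here is that $\ordap$ is preserved by substitution in the body, i.e.\ $e_2 \ordap e'_2$ implies $e_2[X/t_1] \ordap e'_2[X/t_1]$, which is a routine structural induction on $e_2$ using the fact that $\ordap$ is a context-closed order. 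Granted this, the IH on $e_2[X/t_1] \clto t$ with the refinements $e_2[X/t_1] \ordap e'_2[X/t_1]$ and $t' \ordap t$ yields $e'_2[X/t_1] \clto t'$, and \crule{Let} reassembles $e' \clto t'$.

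The main obstacle I anticipate is precisely this substitution-monotonicity lemma for $\ordap$ in the \crule{Let} case, together with the administrative bookkeeping that the size of the reconstructed proof is never larger than the original. The latter is evident from the construction, since every recursive call is bounded by the IH and the collapsing case uses a proof of size $1$; the former is a one-liner once one observes that $[X/t_1]$ applies pointwise through the let-expression syntax and that $\ordap$ is stable under that pointwise action.
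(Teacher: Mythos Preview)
Your proposal is correct and follows essentially the same approach as the paper: induction on the derivation with a case analysis on the last rule, singling out \crule{Let} as the only nontrivial case and appealing to the substitution-monotonicity of $\ordap$ (the paper's Lemma~\ref{lemaAproxTheta}) to justify $e_2[X/t_1] \ordap e'_2[X/t_1]$. Your upfront dispatch of the case $t' \equiv \perp$ is a mild presentational improvement over the paper, which simply declares the non-\crule{Let} cases ``straightforward.''
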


\begin{proposition}[Closedness under c-substitutions of \crwll]\label{closednessCSubst}
For any   $e \in LExp_\perp$, $t \in CTerm_\perp$, $\theta \in CSubst_\perp$, $t \in \den{e}$ implies $t\theta \in \den{e\theta}$.
\end{proposition}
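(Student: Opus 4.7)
The plan is to proceed by induction on the size (number of rule applications) of the \crwll-derivation witnessing $t \in \den{e}$, performing a case analysis on the last rule applied.

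The easy cases are \crule{B}, \crule{RR} and \crule{DC}. For \crule{B}, $t = \perp$, so $t\theta = \perp$ and \crule{B} gives $e\theta \clto \perp$. For \crule{RR}, $e \equiv X$ and $t \equiv X$; then $X\theta \in CTerm_\perp$ because $\theta \in CSubst_\perp$, and Lemma \ref{lemmashells}(i) yields $X\theta \clto X\theta$. For \crule{DC}, the conclusion $c(\overline{e}) \clto c(\overline{t})$ comes from premises $e_i \clto t_i$; applying the inductive hypothesis to each premise and reassembling with \crule{DC} gives $c(\overline{e})\theta \equiv c(\overline{e\theta}) \clto c(\overline{t\theta}) \equiv c(\overline{t})\theta$.

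The main case is \crule{OR}, which is the point where the proof needs care. Suppose the step is $f(\overline{e}) \clto t$ from premises $e_i \clto p_i\sigma$ and $r\sigma \clto t$, using a rule $(f(\overline{p}) \tor r) \in \prog$ and $\sigma \in CSubst_\perp$. By the usual variable convention I may rename the rule so that its variables are disjoint from $dom(\theta) \cup \vran(\theta)$, so that $p_i\theta \equiv p_i$ and $r\theta \equiv r$. The natural move is to compose $\sigma$ with $\theta$: define $\sigma' \in CSubst_\perp$ by $X\sigma' = (X\sigma)\theta$ for $X \in var(\overline{p}) \cup var(r)$. Then a straightforward calculation using Lemma \ref{auxBind} gives $p_i\sigma' \equiv p_i\sigma\theta$ and $r\sigma' \equiv r\sigma\theta$. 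The inductive hypothesis applied to the premises yields $e_i\theta \clto p_i\sigma\theta \equiv p_i\sigma'$ and $r\sigma\theta \equiv r\sigma' \clto t\theta$, so a single \crule{OR} step using the rule and $\sigma'$ concludes $f(\overline{e})\theta \equiv f(\overline{e\theta}) \clto t\theta$.

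The remaining case, \crule{Let}, is handled similarly using the substitution lemma. If $let~X = e_1~in~e_2 \clto t$ comes from $e_1 \clto s$ and $e_2[X/s] \clto t$, the variable convention gives $X \notin dom(\theta) \cup \vran(\theta)$. The inductive hypothesis yields $e_1\theta \clto s\theta$ and $(e_2[X/s])\theta \clto t\theta$, and Lemma \ref{auxBind} rewrites the latter as $e_2\theta[X/s\theta] \clto t\theta$; since $s\theta \in CTerm_\perp$, a \crule{Let} step gives $(let~X=e_1~in~e_2)\theta \equiv let~X = e_1\theta~in~e_2\theta \clto t\theta$. The only delicate point throughout is the composition of substitutions in the \crule{OR} case together with the application of the substitution lemma; the rest is bookkeeping.
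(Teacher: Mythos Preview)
Your proof is correct and follows essentially the same approach as the paper: induction on the size of the \crwll-derivation with a case analysis on the last rule applied. The paper only spells out the \crule{Let} case (declaring the others ``straightforward''), while you detail all of them; your \crule{Let} case matches the paper's almost verbatim. One minor remark: in the \crule{OR} case the appeal to Lemma~\ref{auxBind} is not really needed---the identities $p_i\sigma' \equiv p_i\sigma\theta$ and $r\sigma' \equiv r\sigma\theta$ are immediate from the definition of $\sigma'$ as (a restriction of) the composition $\sigma\theta$, and freshness of the rule variant ensures $\sigma\theta \in CSubst_\perp$ can be used directly in \crule{OR}.
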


Compositionality is a more delicate issue. Theorem \ref{thCompoCrwl} does not hold for \crwll, as shown by the following example: consider the program $\{f(0) \tor 1\}$, the expression $e \equiv f(X)$ and the context $\con \equiv let~X=0~in~[]$.
 $\con[e]$ can produce the value $1$. However, $f(X)$ can only be reduced to $\perp$, and  $\con[\perp]$ cannot reach the value $1$. The point in that example is that the subexpression $e$ needs some information from the context to produce a value that is then used by the context to compute the value for the whole expression $\con[e]$. This information may only be the definientia of some variables of $e$ that get bound when put in $\con$; with this idea in mind we can state the following weak
 compositionality result for \crwll.

\begin{theorem}[Weak Compositionality of \crwll]\label{thCompoCrwllet}\label{lem:weakcomp}
For any $\con \in Cntxt$, $e \in LExp_\perp$
$$
\den{\con[e]} = \bigcup\limits_{t \in \den{e}} \den{\con[t]}\qquad \mbox{if } BV(\con) \cap FV(e) = \emptyset
$$
As a consequence, $\den{let~X=e_1~in~e_2} = \bigcup_{t_1 \in \den{e_1}}\den{e_2[X/t_1]}$.
\end{theorem}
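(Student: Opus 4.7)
The plan is to prove both inclusions by a single structural induction on the context $\con$, extracting from any CRWL$_{let}$-proof of $\con[e] \crwlto s$ the intermediate value $t \in \den{e}$ reached by the subproof that concerns the (unique) occurrence of $e$. In the base case $\con = [\,]$ the statement reduces to $\den{e} = \bigcup_{t \in \den{e}} \den{t}$; by Lemma \ref{lemmashells}.i) we have $\den{t} = t\!\!\downarrow$, so the right-hand side is the downward closure of $\den{e}$, which equals $\den{e}$ because denotations are downward-closed by Polarity (Proposition \ref{propCrwlletPolar}).

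For the inductive step I distinguish three shapes of $\con$. When $\con = h(e_1, \ldots, \con', \ldots, e_n)$, any CRWL$_{let}$-proof of $\con[e] \crwlto s$ ends with (B), (DC), or (OR); in the non-trivial cases one premise has the form $\con'[e] \crwlto t'$, to which I apply the IH (the condition $BV(\con') \cap FV(e) = \emptyset$ is inherited from $\con$), obtaining a $t \in \den{e}$ with $\con'[t] \crwlto t'$, and I reassemble the same rule with the remaining premises unchanged; the $\supseteq$ direction is symmetric. The case $\con = let~X=\con'~in~e'$ is analogous: the (Let)-step has premises $\con'[e] \crwlto t_1$ and $e'[X/t_1] \crwlto s$, and the IH transforms the first while the second is reused verbatim.

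The main obstacle is the remaining case $\con = let~X=e'~in~\con'$, where the (Let)-step has right premise $\con'[e][X/t_1] \crwlto s$ — an expression in which $e$ is no longer plainly plugged into a context. Here the hypothesis $BV(\con) \cap FV(e) = \emptyset$ becomes essential: it forces $X \notin FV(e)$, and Lemma \ref{auxBind} then gives $\con'[e][X/t_1] \equiv \con'[X/t_1][e]$, where $\con'[X/t_1]$ is the natural extension of the substitution to contexts. Since $t_1 \in CTerm_\perp$ contains no let-bindings, $BV(\con'[X/t_1]) = BV(\con')$ remains disjoint from $FV(e)$, so the IH applies and yields $t \in \den{e}$ with $\con'[X/t_1][t] \crwlto s$; by the variable convention $X \notin FV(t)$, so a second application of Lemma \ref{auxBind} gives $\con'[X/t_1][t] \equiv \con'[t][X/t_1]$, exactly the right premise needed to reassemble (Let) into $(let~X=e'~in~\con'[t]) \crwlto s$. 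The $\supseteq$ direction follows the same manipulations in reverse.

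Finally, the stated consequence follows by instantiating the theorem with $\con \equiv let~X=[\,]~in~e_2$ and $e = e_1$, for which $BV(\con) = \emptyset$ trivially, and then observing that for every $t_1 \in CTerm_\perp$ we have $\den{let~X=t_1~in~e_2} = \den{e_2[X/t_1]}$; this last equality is immediate from rule (Let) combined with Lemma \ref{lemmashells}.i) (which characterizes $t_1 \crwlto t'_1$ as $t'_1 \ordap t_1$) and Polarity.
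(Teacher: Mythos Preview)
Your approach differs from the paper's: you induct on the structure of $\con$, while the paper inducts on the size of the \crwll-proof (separately for each direction). Both strategies are viable in principle, and your derivation of the consequence---directly from rule (Let), Lemma~\ref{lemmashells}.i), and monotonicity---is in fact cleaner than the paper's, which makes a forward reference to Proposition~\ref{propFnfPreservHipSem}.

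There is, however, a real gap in the critical case $\con = let~X=e'~in~\con'$. Strict structural induction on $\con$ gives you the induction hypothesis only for the immediate subcontext $\con'$, but you then apply it to the \emph{substituted} context $\con'[X/t_1]$, which is not a subterm of $\con$. The paper's proof-size induction sidesteps this neatly: the premise $\con'[e][X/t_1] \clto s$ is a strict subproof of the original, so the IH applies to it for \emph{any} context/expression decomposition of its conclusion, in particular for $\con'[X/t_1]$ and $e$. To repair your argument you must switch to induction on the depth of the hole (or any size measure of contexts that is invariant under c-substitution), so that the IH is universally quantified over all strictly smaller contexts; then $\con'[X/t_1]$, having the same depth as $\con'$, falls under the hypothesis. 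A second, smaller point: the identity $(\con'[e])[X/t_1] \equiv (\con'[X/t_1])[e]$ is not Lemma~\ref{auxBind} (which commutes two substitutions on an expression) but Lemma~\ref{T28}, which pushes a substitution through a context; the paper invokes Lemma~\ref{T28} at exactly this spot, after checking its side condition $(dom([X/t_1])\cup\vran([X/t_1]))\cap BV(\con')=\emptyset$ via the variable convention.
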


In spite of not being a fully general compositionality result, Theorem \ref{lem:weakcomp} can be used to prove new properties of
\crwll, like the following monotonicity property related to substitutions, that will be used later on.
It is formulated for the partial order $\ordap$ over $LSubst_\perp$ (defined naturally as it happened for $Susbt_\perp$) and the preorder $\dsord$ over $LSubst_\perp$, defined by $\sigma \dsord \sigma'$ iff $\forall X \in \var, \den{\sigma(X)} \subseteq \den{\sigma'(X)}$.
\begin{proposition}[Monotonicity for substitutions of \crwll]\label{PropMonSubstCrwlLet}
If $\sigma \ordap \sigma'$ or $\sigma \dsord \sigma'$ then $\den{e\sigma} \subseteq \den{e\sigma'}$, for any $e \in LExp_\perp$ and $\sigma, \sigma' \in LSubst_\perp$.
\end{proposition}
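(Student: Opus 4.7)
The plan is to prove the two implications separately, since the hypotheses $\sigma \ordap \sigma'$ and $\sigma \dsord \sigma'$ have very different flavours.

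For the first case, $\sigma \ordap \sigma'$, the argument is short. I would first observe, by a straightforward induction on the structure of $e$ (using that $\ordap$ is closed under contexts), that $\sigma \ordap \sigma'$ implies $e\sigma \ordap e\sigma'$ for every $e \in LExp_\perp$. Then, given any $t \in \den{e\sigma}$, Polarity of \crwll\ (Proposition \ref{propCrwlletPolar}) applied with $t' \equiv t$ immediately yields $t \in \den{e\sigma'}$.

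For the second case, $\sigma \dsord \sigma'$, I would proceed by induction on the structure of $e$, with the aim of showing $\den{e\sigma} \subseteq \den{e\sigma'}$. The variable case is exactly the hypothesis, and for $e \equiv h(e_1, \ldots, e_n)$ I would inspect the last rule of a \crwll-proof of $t \in \den{e\sigma}$: the case \crule{B} is trivial, while \crule{DC} and \crule{OR} are rebuilt routinely by applying the inductive hypothesis to each argument $\den{e_i\sigma} \subseteq \den{e_i\sigma'}$ and reusing the same c-substitution $\theta$ in the \crule{OR} step. The interesting case is $e \equiv let~X=e_1~in~e_2$. Here I would prefer to argue at the denotational level via Weak Compositionality (Theorem \ref{lem:weakcomp}), which (after renaming $X$ to be fresh with respect to $\sigma, \sigma'$) gives
\[
\den{(let~X=e_1~in~e_2)\sigma} \;=\; \bigcup_{t_1 \in \den{e_1\sigma}} \den{e_2\sigma[X/t_1]},
\]
and symmetrically for $\sigma'$. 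The induction hypothesis on $e_1$ gives $\den{e_1\sigma} \subseteq \den{e_1\sigma'}$, so the outer union only enlarges when we pass from $\sigma$ to $\sigma'$; it then remains to show $\den{e_2\sigma[X/t_1]} \subseteq \den{e_2\sigma'[X/t_1]}$ for each $t_1$. Writing $\sigma_X$ and $\sigma'_X$ for the extensions that map $X$ to $t_1$ and agree with $\sigma$, $\sigma'$ elsewhere, the variable-convention assumption $X \notin dom(\sigma) \cup \vran(\sigma) \cup dom(\sigma') \cup \vran(\sigma')$ (combined with Lemma \ref{auxBind}) yields $e_2\sigma[X/t_1] \equiv e_2\sigma_X$ and likewise for $\sigma'$; and one checks directly that $\sigma_X \dsord \sigma'_X$, since the two substitutions only differ at variables distinct from $X$. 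The induction hypothesis applied to $e_2$ with the extended substitutions then closes the case.

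The main obstacle I anticipate is the bookkeeping around the \crule{Let} case: one has to be careful that the renaming needed by the variable convention does not break either of the hypotheses ($\ordap$ or $\dsord$) on substitutions, and one has to justify cleanly, via Lemma \ref{auxBind}, that the transition from $e_2\sigma[X/t_1]$ to $e_2\sigma_X$ is indeed licit and produces a substitution still comparable in $\dsord$. Once this is stated carefully, everything else is a routine inductive rebuilding of \crwll-derivations or a direct use of Weak Compositionality and Polarity.
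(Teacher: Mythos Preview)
Your proposal is correct and follows essentially the same approach as the paper. The only minor stylistic differences are: for the $\ordap$ case the paper does a direct induction on the size of the \crwll-proof rather than factoring through $e\sigma \ordap e\sigma'$ plus Polarity (though your route is exactly Lemma~\ref{LLetOrd2} combined with Proposition~\ref{propCrwlletPolar}, so both are equally valid); and for the $h(e_1,\ldots,e_n)$ subcase under $\dsord$ the paper applies Weak Compositionality iteratively on each argument position instead of inspecting the last proof rule, while the \crule{Let} case is handled identically via Theorem~\ref{lem:weakcomp} in both.
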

%
The limitations of Theorem \ref{thCompoCrwllet} make us yearn for another semantic notion for let-expressions with a better compositional behaviour. We have already seen that the problem with  $\crwllet$ is the possible loss of definientia when extracting an expression from its context. But in fact what bound variables need is access to the \emph{values} of their corresponding definientia, as it is done in the rule \crule{Let} where the value of the definiens is transmited to the body of the let-binding by applying a c-substitution replacing the bound variable by that value.
With these ideas in mind we define the stronger notion of  \emph{hyperdenotation} (sometimes we say \emph{hypersemantics}), which gives a more active role to variables in expressions:
in contrast to the denotation  of an expression $e$, which is a set of
c-terms, its hyperdenotation $\denn{e}$ is a function mapping c-substitutions to denotations, i.e., to sets of c-terms.

\begin{definition}[Hyperdenotation]\label{hyper}~
The  hyperdenotation of an expression $e \in LExp_{\perp}$ under a program $\prog$ is a function $\denn{e}^\prog : CSubst_\perp \rightarrow \ds$ defined by $\denn{e}^\prog\ \theta = \den{e\theta}^\prog$.
\end{definition}

As usual, in most cases we will omit the mention to $\prog$.
We will use sometimes $\hds$ as an alias for $CSubst_\perp \rightarrow \ds$, i.e, for the kind of objects that are hyperdenotations of expressions.


The notion of hyperdenotation is strictly more powerful than the notion of $\crwllet$\ denotation. Equality
of hyperdenotations implies equality of denotations ---because if $\denn{e} = \denn{e'}$ then $\den{e} = \den{e\epsilon} = \denn{e}\epsilon = \denn{e'}\epsilon = \den{e'\epsilon} = \den{e'}$---
but the opposite does not hold: consider the program $\{f(0) \tor 1\}$ and the expressions $f(X)$ and $\perp$; they have the same denotation (the set $\{\perp\}$) but different hyperdenotations, as $\denn{\perp}[X/0] \not\ni 1 \in \denn{f(X)}[X/0]$.
Hypersemantics are useful to characterize the meaning of expressions present in a context in which some of its variables
may get bound, like in the body of a let-binding or in the right hand side of a program rule.
Therefore are useful to reason about expressions put in arbitrary contexts, in which let-bindings may freely appear.

Most remarkably, hyperdenotations allow to recover strong compositionality
results for let-expressions similar to Theorems \ref{thCompoCrwl} and
\ref{thCompoCrwlbis}. We find it more intuitive to start the analog to the latter. Semantics of contexts were defined as denotation transformers (Definition \ref{def:denContxt}).
Analogously, the hypersemantics $\denn{\con}$ of a context $\con$ is a hyperdenotation transformer defined as follows:

\begin{definition}[Hypersemantics of a context]
Given $\con \in Cntxt$, its hyperdenotation is a function $\denn{\con}:\hds \rightarrow \hds$ defined by induction over the structure of $\con$ as follows:
\begin{itemize}
 \item $\denn{[]} \hde\theta = \hde\theta$
 \item $\denn{h(e_1, \ldots, \con, \ldots, e_n)}\hde\theta = \bigcup\limits_{t \in \denn{\con}\hde\theta} \den{h(e_1\theta, \ldots, t, \ldots, e_n\theta)}$
 \item $\denn{let~X=\con~in~e}\hde\theta = \bigcup\limits_{t \in \denn{\con}\hde\theta} \den{let~X=t~in~e\theta}$
 \item $\denn{let~X=e~in~\con}\hde\theta = \bigcup\limits_{t \in \denn{e}\theta} \denn{\con}\hde(\theta[X/t])$
\end{itemize}
\end{definition}

With this notion, our first version of strong compositionality
for hypersemantics looks like Theorem \ref{thCompoCrwlbis}.

\begin{theorem}[Compositionality of hypersemantics]\label{CompHipSem}
For all $\con \in Cntxt$, $e \in LExp_\perp$
$$\denn{\con[e]} = \denn{\con}\denn{e}$$
As a consequence: $\denn{e} = \denn{e'} \Leftrightarrow \forall \con \in Cntxt. \denn{\con[e]} = \denn{\con[e']}$.
\end{theorem}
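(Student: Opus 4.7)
The plan is to prove the equation $\denn{\con[e]} = \denn{\con}\denn{e}$ pointwise, that is, for every $\theta \in CSubst_\perp$, by structural induction on the context $\con$. Fix $e \in LExp_\perp$ and proceed as follows.

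For the base case $\con = [\,]$, both sides applied to $\theta$ collapse to $\denn{e}\theta$ directly from the definition of $\denn{[\,]}$.

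For the inductive step I distinguish the three productions. In the function-symbol case $\con = h(e_1,\ldots,\con',\ldots,e_n)$ I first push $\theta$ inside: $\con[e]\theta = h(e_1\theta,\ldots,\con'[e]\theta,\ldots,e_n\theta)$. Since the surrounding context $h(e_1\theta,\ldots,[\,],\ldots,e_n\theta)$ has empty $BV$, the weak compositionality result (Theorem~\ref{lem:weakcomp}) gives
\[
\den{\con[e]\theta} \;=\; \bigcup_{t\in\den{\con'[e]\theta}} \den{h(e_1\theta,\ldots,t,\ldots,e_n\theta)}.
\]
By the inductive hypothesis $\den{\con'[e]\theta} = \denn{\con'[e]}\theta = \denn{\con'}\denn{e}\theta$, and substituting this into the union reproduces exactly the definition of $\denn{h(e_1,\ldots,\con',\ldots,e_n)}\denn{e}\theta$. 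The case $\con = let~X=\con'~in~e'$ is analogous: the surrounding context $let~X=[\,]~in~e'\theta$ also has empty $BV$ (because the hole is in the definiens), so weak compositionality applies verbatim.

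The delicate case, which is where I expect the main obstacle, is $\con = let~X=e'~in~\con'$, since the hole now lies under a binder and $\denn{\con'}\denn{e}$ must be evaluated under an extended substitution. Using the variable convention to assume $X \notin dom(\theta) \cup \vran(\theta)$, I apply the corollary of weak compositionality, $\den{let~X=e_1~in~e_2} = \bigcup_{t\in\den{e_1}} \den{e_2[X/t]}$, to $\con[e]\theta = let~X=e'\theta~in~\con'[e]\theta$, obtaining
\[
\den{\con[e]\theta} \;=\; \bigcup_{t\in\den{e'\theta}} \den{\con'[e]\theta\,[X/t]}.
\]
The substitution lemma (Lemma~\ref{auxBind}), together with the freshness of $X$ with respect to $\theta$, lets me identify $\con'[e]\theta\,[X/t]$ with $\con'[e]\,(\theta[X/t])$, so the inner denotation is $\denn{\con'[e]}(\theta[X/t])$, which equals $\denn{\con'}\denn{e}(\theta[X/t])$ by the inductive hypothesis. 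Noting $\den{e'\theta} = \denn{e'}\theta$, the resulting expression is precisely $\denn{let~X=e'~in~\con'}\denn{e}\theta$ by definition.

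The stated consequence follows from the main equation. The $\Rightarrow$ direction is immediate: if $\denn{e}=\denn{e'}$ then $\denn{\con}\denn{e} = \denn{\con}\denn{e'}$, and by the equation these are $\denn{\con[e]}$ and $\denn{\con[e']}$. For $\Leftarrow$, choose $\con = [\,]$, which yields $\denn{e}=\denn{e'}$ directly. The principal care to be exercised throughout is keeping the variable convention coherent whenever the hole lies under a $let$; beyond that bookkeeping, everything is a direct reduction to Theorem~\ref{lem:weakcomp} and the inductive hypothesis.
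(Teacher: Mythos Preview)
Your proposal is correct and follows essentially the same approach as the paper: structural induction on $\con$, reducing each case to weak compositionality (Theorem~\ref{lem:weakcomp}) and the inductive hypothesis, with the substitution lemma handling the binder case. Your treatment of $\con = let~X=\con'~in~e'$ is in fact slightly more direct than the paper's, which takes a small detour through the (Bind) rule (via Proposition~\ref{propFnfPreservHipSem}) to rewrite $\den{let~X=t~in~s\theta}$ as $\den{s\theta[X/t]}$ before applying weak compositionality; you observe instead that the surrounding context $let~X=[\,]~in~e'\theta$ already has empty $BV$, so weak compositionality yields the required union form immediately.
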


This result implies that in any context we can replace any  subexpression by another one having
the same hypersemantics (and therefore also the same semantics) without changing the
hypersemantics (hence the semantics) of the global expression.

In Theorems \ref{thCompoCrwlbis} and \ref{CompHipSem} the role of call-time choice is hidden in the definition of semantics and hypersemantics of a context, respectively.
To obtain a version of strong compositionalty of hypersemantics closer to Theorem \ref{thCompoCrwl} and \ref{thCompoCrwllet}, we need some more notions and notations about hyperdenotations or, more generally, about functions in $\hds$.
Since they are set-valued functions, many usual set operations and relations can be lifted naturally in a pointwise manner to $\hds$. The precise definitions become indeed clearer if we give them for general sets, abstracting away the details about $\hds$. We introduce also some notions about decomposing set-valued functions that will be useful for hyperdenotations.
We use freely $\lambda$-notation  to write down a function in the mathematical sense; we may write $\lambda x\in A$ to indicate its domain $A$, if it not clear by the context.

\begin{definition}[Operations and relations for set-valued functions]\label{def:setFunOp}
Let  $A,B$ be two sets,  ${\cal F}$  the set of functions $A \rightarrow\partes{B}$,  and $f,g\in {\cal F}$. Then:
\begin{itemize}
\item[i)]  The \emph{hyperunion} of $f,g$ is defined as $f \uhs g = \lambda x\in A. f(x) \cup g(x)$.
\item[ii)] More generally, the \emph{hyperunion of a family}  ${\cal I} \subseteq {\cal F}$,
written indistinctly as  $\Uhs {\cal I}$ or $\Uhss{f\in {\cal I}}f$, is defined as
\[ \Uhs {\cal I} \equiv\! \Uhss{f\in {\cal I}}f ~=_{\it def}~ \lambda x\in A.\bigcup_{f\in {\cal I}}f(x) \]
Notice that $f \uhs g = \Uhs\{f,g\}$.
\item[iii)] We say that $f$ is \emph{hyperincluded} in $g$, written $f \ohs g$, iff $\forall x\in A.f(x) \subseteq g(x)$.
\item[iv)] A \emph{decomposition} of $f$ is any ${\cal I} \subseteq {\cal F}$ such that $\Uhs {\cal I} = f$.
\item[v)] The \emph{elemental decomposition} of $f$ is the following set of functions of ${\cal F}$:
\[\sd{f} = \{\lambda x\in A.\left\{\begin{array}{l}\{b\} \mbox{~if~} x=a\\\emptyset \mbox{~otherwise}\end{array}\right.~\mid~a\in A,b\in f(a)\}\]
Or, using the abbreviation $\hl a.\{b\}$ as a shorthand for $\lambda x.\left\{\begin{array}{l}\{b\} \mbox{~if~} x=a\\\emptyset \mbox{~otherwise}\end{array}\right.$,
\[\sd{f} = \{\hl a.\{b\}~\mid~a\in A,b\in f(a)\}\]
\end{itemize}
\end{definition}
Decompositions are used to split set-valued functions into  smaller pieces; elemental decompositions do it with minimal ones.
For instance, if $f:\{a,b\} \rightarrow \partes{\{0,1,2\}}$ is given by $f(a)\! =\! \{0,2\}$ and $f(b)\! =\! \{1,2\}$, then
$\sd{f}\!=\! \{\hl a. \{0\},\hl a. \{2\},\hl b. \{1\},\hl b. \{2\}\}$.

Hyperinclusion and hyperunion share many properties of standard set inclusion and union. Some of them are collected in the next result, that refer also to decompositions:

\begin{proposition}\label{HipSemDecUnion}
Consider two sets $A,B$, and let ${\cal F}$ be the set of functions $A \rightarrow\partes{B}$. Then:
\begin{enumerate}
    \item[i)] $\ohs$ is indeed a partial order on ${\cal F}$, and $\sd{f}$ is indeed a decomposition of $f\in {\cal F}$, i.e., $\Uhs{(\sd{f})} = f$.
    \item[ii)] Monotonicity of hyperunion wrt. inclusion: for any ${\cal I}_1,{\cal I}_2 \subseteq {\cal F}$
$$
{\cal I}_1 \subseteq {\cal I}_2 \mbox{ implies } \Uhs{\cal I}_1 \ohs \Uhs{\cal I}_2
$$
    \item[iii)] Distribution of unions: for any ${\cal I}_1,{\cal I}_2 \subseteq {\cal F}$
$$
\Uhs{({\cal I}_1 \cup {\cal I}_2)} = (\Uhs{\cal I}_1) \uhs (\Uhs{\cal I}_2)
$$
    \item[iv)] Monotonicity of decomposition wrt. hyperinclusion: for any $f_1, f_2 \in {\cal F}$
$$
f_1 \ohs f_2 \mbox{ implies } \sd{f_1} \subseteq \sd{f_2}
$$
\end{enumerate}
\end{proposition}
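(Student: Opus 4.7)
The plan is to observe that every assertion in the proposition concerns functions valued in $\partes{B}$, and the operations $\uhs, \Uhs$ and relation $\ohs$ are defined pointwise. So the proposal is to reduce each item to the corresponding well-known fact about inclusion and union in $\partes{B}$, by fixing an arbitrary $x \in A$ and unfolding the definitions.

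For \emph{i)}, I would first show that $\ohs$ is a partial order by checking reflexivity, antisymmetry and transitivity: each of these for $f \ohs g$ reduces, via $f(x) \subseteq g(x)$ for every $x$, to the corresponding property of the partial order $\subseteq$ on $\partes{B}$. For the second claim, I would evaluate the candidate decomposition at an arbitrary point:
\[
(\Uhs \sd{f})(x) \;=\; \bigcup_{g \in \sd{f}} g(x) \;=\; \bigcup_{a \in A, b \in f(a)} (\hl a.\{b\})(x).
\]
Splitting the union between the $a = x$ and $a \neq x$ cases, the second contributes only $\emptyset$ while the first yields $\bigcup_{b \in f(x)}\{b\} = f(x)$. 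Hence $\Uhs \sd{f} = f$.

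For \emph{ii)}, I would fix $x \in A$ and note $(\Uhs {\cal I}_1)(x) = \bigcup_{f \in {\cal I}_1} f(x) \subseteq \bigcup_{f \in {\cal I}_2} f(x) = (\Uhs {\cal I}_2)(x)$, which is just monotonicity of union in its indexing set. For \emph{iii)}, analogously,
\[
(\Uhs({\cal I}_1 \cup {\cal I}_2))(x) = \bigcup_{f \in {\cal I}_1 \cup {\cal I}_2} f(x) = \bigcup_{f \in {\cal I}_1} f(x) \,\cup\, \bigcup_{f \in {\cal I}_2} f(x) = ((\Uhs {\cal I}_1) \uhs (\Uhs {\cal I}_2))(x).
\]
For \emph{iv)}, every element of $\sd{f_1}$ has the form $\hl a.\{b\}$ with $b \in f_1(a)$; from $f_1 \ohs f_2$ we get $b \in f_2(a)$, so $\hl a.\{b\} \in \sd{f_2}$.

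There is no real obstacle here: the entire proof is a series of routine pointwise unfoldings of the definitions in Definition~\ref{def:setFunOp}. The only point requiring mild care is the elemental decomposition in \emph{i)}, where one must observe that the seemingly large indexing of $\sd{f}$ over all pairs $(a,b)$ collapses correctly at each evaluation point $x$, because only those pairs with $a = x$ contribute.
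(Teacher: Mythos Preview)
Your proposal is correct and follows essentially the same approach as the paper: both proofs proceed by pointwise unfolding of the definitions and reduce each item to the corresponding elementary fact about $\subseteq$ and $\cup$ in $\partes{B}$. The only cosmetic difference is in part \emph{i)}, where the paper proves $\Uhs(\sd{f}) = f$ via mutual hyperinclusion while you compute $(\Uhs \sd{f})(x)$ directly by splitting on $a = x$ versus $a \neq x$; these are the same argument organized slightly differently.
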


We will apply all these notions, notations and properties  to the case when $A\equiv CSubst_\perp$ and $B\equiv CTerm_\perp$ (i.e. $\partes{B}\equiv \ds$ and therefore ${\cal F} ~\equiv~\hds$). Therefore, we can speak of the hyperunion of two hyperdenotations, or of a family of them, we can elementarily decompose a hyperdenotation, etc.

\begin{proposition}[Distributivity under context of hypersemantics unions]\label{HipSemDistCntx}
$$
\denn{\con}(\Uhs{\hdes}) = \Uhss{\hde\!\in\!\hdes}\denn{\con}\hde
$$

\end{proposition}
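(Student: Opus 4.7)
The plan is to prove the equality by structural induction on the context $\con$, using the four defining clauses of $\denn{\con}$. To check that two hyperdenotations are equal, I fix an arbitrary $\theta \in CSubst_\perp$ and verify that both sides yield the same subset of $CTerm_\perp$, after first unfolding $\Uhs\hdes$ pointwise via Definition \ref{def:setFunOp}(ii) as $\lambda\theta.\bigcup_{\hde\in\hdes}\hde\theta$. The only algebraic principle used throughout is that set-theoretic unions commute with themselves.

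The base case $\con \equiv [\,]$ is immediate from $\denn{[\,]}\hde\theta = \hde\theta$: both sides, applied to $\theta$, reduce to $\bigcup_{\hde\in\hdes}\hde\theta$. For the two inductive cases $\con \equiv h(e_1,\ldots,\con',\ldots,e_n)$ and $\con \equiv let~X=\con'~in~e$, the relevant defining clause presents $\denn{\con}\hde\theta$ as a union over $t \in \denn{\con'}\hde\theta$ of a set that is independent of $\hde$. Invoking the induction hypothesis on $\con'$ rewrites $\denn{\con'}(\Uhs\hdes)\theta$ as $\bigcup_{\hde\in\hdes}\denn{\con'}\hde\theta$, and the conclusion follows by swapping the two nested set unions.

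The slightly more delicate case is $\con \equiv let~X=e~in~\con'$, where the hole lies under a binder. Here the defining clause treats $\hde$ and $\theta$ asymmetrically: $\denn{\con}\hde\theta = \bigcup_{t \in \denn{e}\theta}\denn{\con'}\hde(\theta[X/t])$. The important observation is that the outer union ranges over $t \in \denn{e}\theta$, which does not depend on $\hde$ at all. Hence after applying the induction hypothesis to $\con'$ with the updated argument $\theta[X/t]$, the hyperunion indexed by $\hdes$ can again be pulled past the union over $t$, yielding $\bigcup_{\hde\in\hdes}\denn{\con}\hde\theta = (\Uhss{\hde\in\hdes}\denn{\con}\hde)\theta$.

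I do not anticipate a real obstacle here: at every step the only manipulation is commuting two set unions, which is always legitimate. The main point on which to be careful is purely notational, namely to remember that $\Uhs\hdes$ is itself a function in $\hds$ and must be unfolded pointwise before manipulating, and to keep track of which of $\hde$ or $\theta$ is active in each clause of the definition of $\denn{\con}$, especially in the under-binder case.
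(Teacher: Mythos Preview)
Your proposal is correct and follows essentially the same approach as the paper: structural induction on $\con$, unfolding the four defining clauses of $\denn{\con}$, applying the induction hypothesis on $\con'$, and then commuting the two nested set unions (with the explicit remark in the under-binder case that the outer index set $\denn{e}\theta$ is independent of $\hde$). The paper's proof is written out as a chain of equalities for each case, but the content is identical to what you describe.
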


With this result we can  easily prove our desired new version of a strong compositionality result for hypersemantics, with a style closer to the formulations of Theorems \ref{thCompoCrwl} and \ref{thCompoCrwllet}.
This new form of compositionality will be used in Section \ref{bubbling} for building a straightforward proof of the adequacy of a transformation that otherwise becomes highly involved by using other techniques.

\begin{theorem}[Compositionality of hypersemantics, version 2]\label{CompHipSemDos}
For any $\con \in Cntxt$, $e \in LExp_\perp$:
$$
 \denn{\con[e]} = \Uhss{\hde\!\in\!\hdes}\denn{\con}\hde \mbox{, for any decomposition $\hdes$ of $\denn{e}$}
$$
In particular:
$
\denn{\con[e]} = \Uhss{\hde\!\in\!\sd{\denn{e}}}\denn{\con}\hde.
$

\noindent As a consequence: $\denn{e} = \denn{e'} \Leftrightarrow \forall \con \in Cntxt. \denn{\con[e]} = \denn{\con[e']}$.
\end{theorem}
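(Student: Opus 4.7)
The plan is to assemble the statement from the two previous compositionality ingredients: the functional compositionality Theorem \ref{CompHipSem} ($\denn{\con[e]} = \denn{\con}\denn{e}$) and the distributivity Proposition \ref{HipSemDistCntx} ($\denn{\con}(\Uhs{\hdes}) = \Uhss{\hde\!\in\!\hdes}\denn{\con}\hde$), together with the elementary fact from Proposition \ref{HipSemDecUnion}(i) that every decomposition of a hyperdenotation yields back that hyperdenotation when hyperunioned.

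First I would fix an arbitrary decomposition $\hdes$ of $\denn{e}$, so by definition $\Uhs\hdes = \denn{e}$. Then I would chain equalities: apply Theorem \ref{CompHipSem} to rewrite $\denn{\con[e]} = \denn{\con}\denn{e}$; replace $\denn{e}$ by $\Uhs\hdes$; and finally invoke Proposition \ref{HipSemDistCntx} to obtain $\denn{\con}(\Uhs\hdes) = \Uhss{\hde\!\in\!\hdes}\denn{\con}\hde$. This gives the main equation. For the ``in particular'' clause I would instantiate $\hdes$ with $\sd{\denn{e}}$, which is indeed a decomposition of $\denn{e}$ by Proposition \ref{HipSemDecUnion}(i).

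For the consequence, the forward direction is immediate: if $\denn{e} = \denn{e'}$ then by Theorem \ref{CompHipSem}, $\denn{\con[e]} = \denn{\con}\denn{e} = \denn{\con}\denn{e'} = \denn{\con[e']}$ for every $\con$. For the converse, I would specialize to the empty context $\con \equiv [\,]$, so $\con[e] \equiv e$ and $\con[e'] \equiv e'$, which directly yields $\denn{e} = \denn{e'}$.

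There is no real obstacle here: the theorem is a straightforward reformulation of Theorem \ref{CompHipSem} in the presence of the distributivity law, and all the work has already been done in proving Proposition \ref{HipSemDistCntx} and Proposition \ref{HipSemDecUnion}. The only subtlety worth a sentence in the proof is noting that decompositions are by definition precisely those families whose hyperunion reproduces the target, so the hypothesis on $\hdes$ is used in exactly one place, namely in the substitution $\denn{e} = \Uhs\hdes$.
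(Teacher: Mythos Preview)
Your proposal is correct and follows essentially the same three-step chain as the paper's proof: Theorem~\ref{CompHipSem}, then the definition of decomposition, then Proposition~\ref{HipSemDistCntx}. Your additional treatment of the ``in particular'' clause (via Proposition~\ref{HipSemDecUnion}(i)) and of the biconditional consequence (via the empty context for the converse) spells out what the paper leaves implicit, and is fine.
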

\begin{proof}
$$
\begin{array}{ll}
\denn{\con[e]} = \denn{\con}\denn{e}                 & \mbox{ by compositionality \emph{v.1} (Theorem \ref{CompHipSem})} \\
               = \denn{\con}(\Uhs{\hdes})  & \mbox{ by definition of decomposition (Def. \ref{def:setFunOp} {\it iv)}}\\
= \Uhss{\hde\!\in\!\hdes}\denn{\con}\hde & \mbox{ by distributivity (Proposition \ref{HipSemDistCntx})}
\end{array}
$$
\end{proof}


As happened with Theorems \ref{thCompoCrwl} and \ref{thCompoCrwlbis} with respect to denotations, Theorems \ref{CompHipSem} and \ref{CompHipSemDos} are different aspects of the same property, which shows that the hypersemantics of a whole let-expression depends only on the hypersemantics of its constituents; it also allows us to interchange in a context any pair of expressions with the same hypersemantics. This is reflected on the fact  that we have attached $\denn{e} = \denn{e'} \Leftrightarrow \forall \con \in Cntxt. \denn{\con[e]} = \denn{\con[e']}$ as a trivial consequence both in Theorem \ref{CompHipSem} and Theorem \ref{CompHipSemDos}. Moreover, Theorem \ref{CompHipSem} can also be proved by a combination of Theorem \ref{CompHipSemDos} and Propositions \ref{HipSemDecUnion} {\it i)} and \ref{HipSemDistCntx}, in a similar way to the proof for Theorem \ref{CompHipSemDos} above.
$$
\begin{array}{ll}
\denn{\con[e]} = \Uhss{\hde\!\in\!\hdes}\denn{\con}\hde & \mbox{ by compositionality \emph{v.2} (Theorem \ref{CompHipSemDos})} \\
= \denn{\con}(\Uhs{(\sd{\denn{e}})}) & \mbox{ by distributivity (Proposition \ref{HipSemDistCntx})} \\
= \denn{\con}\denn{e}  & \mbox{ because $\sd{\denn{e}}$ decomposes $e$ (Proposition \ref{HipSemDecUnion} {\it i)})} \\
\end{array}
$$
Therefore Theorems \ref{CompHipSem} and \ref{CompHipSemDos} are results with  the same strength, two sides of the same coin that will be useful tools for reasoning with hypersemantics.


~\\
To conclude, we present the following  monotonicity property under contexts of hypersemantics, which will be useful in the next section.
\begin{lemma}[Monotonicity under contexts of hypersemantics]\label{LemMonConHipSemNu} For any $\con \in Cntxt, \hde_1,\hde_2 \in \hds$:
$$\hde_1 \ohs \hde_2 \mbox{ implies that } \denn{\con}\hde_1 \ohs \denn{\con}\hde_2$$
\end{lemma}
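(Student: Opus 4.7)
The plan is to prove the lemma by structural induction on the context $\con$, using the inductive definition of $\denn{\con}$ given just before Theorem~\ref{CompHipSem}. Since hyperinclusion $\ohs$ is defined pointwise on $CSubst_\perp$ (recall $f \ohs g$ iff $\forall x.\,f(x) \subseteq g(x)$), it suffices to fix an arbitrary $\theta \in CSubst_\perp$ and show $\denn{\con}\hde_1\,\theta \subseteq \denn{\con}\hde_2\,\theta$.

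The base case $\con = [\,]$ is immediate: $\denn{[\,]}\hde_i\,\theta = \hde_i\,\theta$, so the conclusion is just the assumption $\hde_1 \ohs \hde_2$. For the two inductive cases where the hole lies underneath a constructor/function symbol or in the definiens of a let, namely $\con = h(\ldots,\con',\ldots)$ and $\con = let~X = \con'~in~e$, the defining clause of the hypersemantics of the context has the shape $\bigcup_{t \in \denn{\con'}\hde\,\theta} \den{\cdots}$, where the inner denotation $\den{\cdots}$ depends on $t$ and $\theta$ but not on $\hde$. By the inductive hypothesis $\denn{\con'}\hde_1\,\theta \subseteq \denn{\con'}\hde_2\,\theta$, so the union on the $\hde_1$ side ranges over a subset of the index set for the $\hde_2$ side, and monotonicity of indexed set union gives the required inclusion.

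The remaining case $\con = let~X = e~in~\con'$ is slightly different: the defining equation is $\denn{\con}\hde\,\theta = \bigcup_{t\in\denn{e}\theta}\denn{\con'}\hde(\theta[X/t])$, where now the index set $\denn{e}\theta$ does not depend on $\hde$ at all, but $\hde$ appears inside the body of the union. Here we simply apply the inductive hypothesis at each substitution $\theta[X/t]$, obtaining $\denn{\con'}\hde_1(\theta[X/t]) \subseteq \denn{\con'}\hde_2(\theta[X/t])$, and then union-preserves-inclusion over the common index set $\denn{e}\theta$ yields the conclusion.

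There is no real obstacle here; each case reduces directly to monotonicity of set-theoretic union over an indexed family, together with the inductive hypothesis. As a sanity check, one can re-derive the same result purely algebraically without induction: from $\hde_1 \ohs \hde_2$, Proposition~\ref{HipSemDecUnion}~\emph{iv)} yields $\sd{\hde_1} \subseteq \sd{\hde_2}$; applying Proposition~\ref{HipSemDistCntx} to each side (with $\hdes_i := \sd{\hde_i}$, so that $\Uhs\hdes_i = \hde_i$ by Proposition~\ref{HipSemDecUnion}~\emph{i)}) rewrites $\denn{\con}\hde_i$ as $\Uhss{\hde\in\sd{\hde_i}}\denn{\con}\hde$, and then Proposition~\ref{HipSemDecUnion}~\emph{ii)} (monotonicity of hyperunion with respect to inclusion of families) closes the argument. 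This alternative route is shorter but relies on the earlier machinery; the direct induction is more elementary and self-contained.
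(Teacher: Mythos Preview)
Your proof is correct. Interestingly, your ``sanity check'' alternative route is essentially the paper's own proof: the paper argues algebraically via elemental decompositions, using Proposition~\ref{HipSemDecUnion}~\emph{i)} to write $\denn{\con}\hde_i = \denn{\con}(\Uhs{(\sd{\hde_i})})$, then (implicitly via Proposition~\ref{HipSemDistCntx}) distributes $\denn{\con}$ over the hyperunion, and finally applies Proposition~\ref{HipSemDecUnion}~\emph{ii)} to the inclusion $\sd{\hde_1}\subseteq\sd{\hde_2}$ coming from~\emph{iv)}. Your primary argument, the direct structural induction on $\con$, is a genuinely different and more elementary route: it avoids the decomposition machinery entirely and reduces each case to monotonicity of indexed unions. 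The induction buys self-containment (it does not depend on Propositions~\ref{HipSemDecUnion} or~\ref{HipSemDistCntx}), while the algebraic route buys brevity once that machinery is in place; since the paper develops that machinery anyway, it opts for the shorter path.
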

\begin{proof}
Assume $\hde_1 \ohs \hde_2$. Then:
$$
\begin{array}{ll}
\denn{\con}\hde_1 = \denn{\con}(\Uhs{(\sd{\hde_1})}) & \mbox{ by Proposition \ref{HipSemDecUnion} {\it i)}}
\\
= \denn{\con}(\Uhs{\{\hl\mu.\{t\}~|~\mu \in CSubst_\perp, t \in \hde_1\mu\}}) & \mbox{ by definition of $\sd{}$}
\\
\ohs \denn{\con}(\Uhs{\{\hl\mu.\{t\}~|~\mu \in CSubst_\perp, t \in \hde_2\mu\}}) & \mbox{ by Proposition \ref{HipSemDecUnion} {\it ii)}} \\
= \denn{\con}(\Uhs{(\sd{\hde_2})}) & \mbox{ by definition of $\sd{}$}
\\
= \denn{\con}\hde_2 & \mbox{ by Proposition \ref{HipSemDecUnion} {\it i)}} \\
\end{array}
$$
\end{proof}

We have now the tools needed to tackle the task of formally relating \crwl\ and let-rewriting.


\subsection{Equivalence of let-rewriting to \crwl\ and {\it CRWL}$_{let}$}\label{sectEqLetrwCrwlLet}\label{equivalence}

In this section we  prove  soundness and completeness results of let-rewriting  with respect to $\crwllet$ and \crwl.

\subsubsection{Soundness}\label{SectSoundLetRw}

Concerning  soundness we want to prove that $\f$-steps do not create new {CRWL}-semantic values. More precisely:

\begin{theorem}[Soundness of let-rewriting]\label{T14}\label{C1}
For all $e, e' \in LExp$, if $e \fe e'$ then $\den{e'} \subseteq \den{e}$.
\end{theorem}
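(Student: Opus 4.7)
The plan is to prove the single-step version first, namely that $e \f e'$ implies $\den{e'} \subseteq \den{e}$, and then extend to arbitrary derivations by a straightforward induction on the length of the reduction: if $e \f e_1 \fe e'$, the inductive hypothesis gives $\den{e'} \subseteq \den{e_1}$ and the single-step result gives $\den{e_1} \subseteq \den{e}$, so that $\den{e'} \subseteq \den{e}$.

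For the single-step case I would proceed by case analysis on which rule of Figure~\ref{letrcalc} is applied. The base rules admit direct arguments based on the weak compositionality of \crwll\ (Theorem~\ref{thCompoCrwllet}) together with basic properties of the calculus. Specifically: for \crule{Fapp}, if $r\theta \clto t$ then the (OR) rule of \crwll\ with the same c-substitution $\theta$ yields $f(\overline{p})\theta \clto t$, using Lemma~\ref{lemmashells}(i) to derive the auxiliary statements $p_i\theta \clto p_i\theta$. For \crule{LetIn}, weak compositionality expresses both $\den{h(\ldots,e,\ldots)}$ and $\den{let\ X=e\ in\ h(\ldots,X,\ldots)}$ as $\bigcup_{t_1 \in \den{e}} \den{h(\ldots,t_1,\ldots)}$, so both denotations coincide. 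For \crule{Bind}, since $t \in \den{t}$ by Lemma~\ref{lemmashells}(i), weak compositionality gives $\den{e[X/t]} \subseteq \bigcup_{t_1 \in \den{t}} \den{e[X/t_1]} = \den{let\ X=t\ in\ e}$. For \crule{Elim}, from $\bot \in \den{e_1}$ (rule (B)) together with $X \notin FV(e_2)$ we obtain $\den{e_2} = \den{e_2[X/\bot]} \subseteq \bigcup_{t_1 \in \den{e_1}} \den{e_2[X/t_1]} = \den{let\ X=e_1\ in\ e_2}$. Finally, \crule{Flat} is handled by unfolding both sides twice with weak compositionality and using $Y \notin FV(e_3)$ to conclude that the two iterated unions coincide.

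The interesting case is \crule{Contx}, where ${\cal C}[e] \f {\cal C}[e']$ with $e \f e'$. Assuming the variable convention so that $BV({\cal C})$ is disjoint from the free variables of both $e$ and $e'$ (with the side condition $\vran(\theta|_{\setminus var(\overline{p})}) \cap BV({\cal C}) = \emptyset$ precisely ensuring this for extra variables introduced by an inner \crule{Fapp}, and freshness of $X$ ensuring this for \crule{LetIn}), weak compositionality applies to ${\cal C}[e]$ and ${\cal C}[e']$, giving
\[
\den{{\cal C}[e']} \;=\; \bigcup_{t \in \den{e'}} \den{{\cal C}[t]} \;\subseteq\; \bigcup_{t \in \den{e}} \den{{\cal C}[t]} \;=\; \den{{\cal C}[e]},
\]
where the inclusion uses the inductive hypothesis $\den{e'} \subseteq \den{e}$ plus monotonicity of union.

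The main obstacle is the \crule{Contx} case, specifically verifying that the conditions $BV({\cal C}) \cap FV(e) = \emptyset$ and $BV({\cal C}) \cap FV(e') = \emptyset$ required by weak compositionality actually hold. These are not automatic because reduction may introduce new free variables (from extra variables in program rules, or from fresh $X$ in \crule{LetIn}), and are precisely what the variable convention and the side condition attached to \crule{Contx} are designed to control; a careful bookkeeping of free and bound variables across the step is therefore needed. If the strengthening to hypersemantics turns out to be cleaner, an alternative is to prove $\denn{e'} \ohs \denn{e}$ instead, combining compositionality of hypersemantics (Theorem~\ref{CompHipSem}) with monotonicity under contexts (Lemma~\ref{LemMonConHipSemNu}); the soundness statement then follows by specializing at the empty substitution.
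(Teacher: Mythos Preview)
Your direct approach via weak compositionality has a genuine gap in the \crule{Contx} case, and the paper explicitly explains why. You claim the variable convention ensures $BV({\cal C}) \cap FV(e) = \emptyset$, but this is false: the convention only forbids a variable from occurring both free and bound in the \emph{whole} expression ${\cal C}[e]$; it says nothing about the free variables of the subexpression $e$ relative to the binders of ${\cal C}$. Indeed, having $FV(e)$ overlap $BV({\cal C})$ is precisely how let-bindings work. For instance, in $let~X=coin~in~g(X)$ with $g(X)\to X$, a \crule{Contx}+\crule{Fapp} step reduces $g(X)$ inside the context ${\cal C}=let~X=coin~in~[\ ]$, and here $X\in FV(g(X))\cap BV({\cal C})$. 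The side condition on \crule{Contx} only controls \emph{new} variables coming from extra variables of the program rule; it does not (and cannot) make the pre-existing free variables of $e$ disjoint from $BV({\cal C})$. So weak compositionality simply does not apply, and the monotonicity step $\den{e'}\subseteq\den{e}\Rightarrow\den{{\cal C}[e']}\subseteq\den{{\cal C}[e]}$ is unjustified. The paper even gives a counterexample (Counterexample~1) showing this implication fails in general for \crwll-denotations.

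The ``alternative'' you mention at the end is not merely cleaner---it is the paper's actual proof, and it is needed precisely to overcome this obstacle. The paper first proves the stronger hyper-soundness statement $\denn{e'}\ohs\denn{e}$ (Theorem~\ref{T27}); for \crule{Contx} this uses Lemma~\ref{T25}, which combines full compositionality of hypersemantics (Theorem~\ref{CompHipSem}) with monotonicity under contexts (Lemma~\ref{LemMonConHipSemNu}), neither of which carries any disjointness side condition. Theorem~\ref{T14} then follows by evaluating at the empty substitution. So you should promote your alternative to the main argument and drop the weak-compositionality route for \crule{Contx}.
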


Notice that because of non-determinism
$\subseteq$ cannot be replaced by $=$ in this theorem. For example, with the program $\prog = \{coin \tor 0, coin \tor 1\}$ we can perform the step $coin \f 0$, for which $\den{0} = \{0, \perp\}$, $\den{coin} =  \{0, 1, \perp\}$. 

~\\
\noindent It is interesting to explain why a direct reasoning with denotations fails to prove Theorem \ref{T14}.

A proof could proceed straightforwardly by a case distinction on the rules for $\f$ to prove the soundness of a single $\f$ step. The problem is that the case for a \crule{Contx} step would need the following monotonicity property under context of $\crwllet$ denotations:
$$
\den{e} \subseteq \den{e'} \mbox{ implies } \den{{\cal C}[e]} \subseteq \den{{\cal C}[e']}
$$
Unfortunately, the property is false, for the same reasons that already explained the weakness of Theorem \ref{thCompoCrwllet}: the possible capture of variables when switching from $e$ to $\con[e]$.

\begin{counterexample}
Consider the program $\prog = \{f(0)\tor 1\}$. We have $\den{f(X)} = \{\bot\} \subseteq \{\bot,0\} = \den{0}$, but when these expressions are placed within the context $let~X=0~in~[\ ]$ we obtain $\den{let~X=0~in~f(X)} = \{\bot,1\} \not\subseteq \{\bot,0\} = \den{let~X=0~in~0}$.
\end{counterexample}

The good thing is that we can overcome these problems by using hypersemantics.
Theorem \ref{T14} will be indeed an easy corollary of the following generalization to hypersemantics.
\begin{theorem}[Hyper-Soundness of let-rewriting]\label{T27}
For all $e, e' \in LExp$, if $e \fe e'$ then $\denn{e'} \ohs \denn{e}$.
\end{theorem}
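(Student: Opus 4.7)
The plan is to first establish the single-step version, namely that $e \f e'$ implies $\denn{e'} \ohs \denn{e}$, and then lift it to $\fe$ by a trivial induction on the number of steps, using transitivity of $\ohs$. The main soundness theorem \ref{T14} will follow as an immediate corollary by evaluating both hyperdenotations at the empty c-substitution and using $\denn{e}\epsilon = \den{e}$.

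For the single-step version I would proceed by induction on the derivation of $e \f e'$, with one case per rule of Figure \ref{letrcalc}. The \crule{Contx} case is the critical reason for switching from denotation to hypersemantics: given $\con[e_0] \f \con[e_0']$ with $e_0 \f e_0'$, the induction hypothesis gives $\denn{e_0'} \ohs \denn{e_0}$, and then Lemma \ref{LemMonConHipSemNu} yields $\denn{\con}\denn{e_0'} \ohs \denn{\con}\denn{e_0}$, which by compositionality of hypersemantics (Theorem \ref{CompHipSem}) is exactly $\denn{\con[e_0']} \ohs \denn{\con[e_0]}$. This is precisely the step that fails at the level of ordinary denotations, as the counterexample in the text shows.

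The base cases reduce to pointwise inclusions of $\crwllet$-denotations after applying an arbitrary $\mu \in CSubst_\perp$, and each is settled by a short appeal to the \crwll\ rules and the weak compositionality of Theorem \ref{thCompoCrwllet}. For \crule{Fapp} the composition $\theta\mu$ is a c-substitution, so a direct application of \crule{OR} with parameter-passing substitution $\theta\mu$, using Lemma \ref{lemmashells}$i)$ for the trivial $p_i\theta\mu \clto p_i\theta\mu$, closes the case. For \crule{LetIn} and \crule{Flat}, weak compositionality unfolds both sides of the step to the same iterated union over denotations and the two sides become equal (so $\ohs$ holds in fact as equality), with the freshness of $X$ and the side condition $Y \notin FV(e_3)$ ensuring the capture-freeness premise of Theorem \ref{thCompoCrwllet}. \crule{Bind} uses the substitution lemma (Lemma \ref{auxBind}) to rewrite $e[X/t]\mu$ as $e\mu[X/t\mu]$ and then builds a \crule{Let} proof with $t\mu \clto t\mu$ on the left premise. \crule{Elim} is immediate: produce $e_1\mu \clto \perp$ by \crule{B}, then since $X \notin FV(e_2)$ one has $e_2\mu[X/\perp] = e_2\mu$, and \crule{Let} assembles the derivation.

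The likely obstacle is bookkeeping around the variable convention: ensuring in \crule{LetIn}, \crule{Bind}, \crule{Flat} that the bound variables introduced or involved in the rewrite step are kept disjoint from $dom(\mu) \cup \vran(\mu)$ so that substitution commutes with let-formation as in the definition of substitution application. Once the convention is applied uniformly (as we may, since the statement is about hypersemantics quantified over all $\mu$), the side conditions attached to the \crule{Contx} rule concerning $\vran(\theta|_{\setminus var(\overline{p})}) \cap BV(\con) = \emptyset$ play no role in the semantic argument — they only certify that the inner step was legally applicable — and the proof goes through cleanly.
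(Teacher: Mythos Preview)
Your proposal is correct and follows essentially the same line as the paper: a case distinction on the let-rewriting rule for the one-step case, handling \crule{Contx} via Lemma~\ref{T25} (which is exactly the combination of Theorem~\ref{CompHipSem} and Lemma~\ref{LemMonConHipSemNu} you spell out), and then a straightforward induction on the length of the derivation.

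The only noticeable deviation is your treatment of \crule{LetIn} and \crule{Flat}: you appeal to weak compositionality (Theorem~\ref{thCompoCrwllet}) to unfold both sides into the same iterated union and conclude equality, whereas the paper builds the required $\crwllet$-derivations by hand (splitting on $h\in CS$ vs.\ $h\in FS$ in the \crule{LetIn} case). Your route is cleaner, but be careful with dependencies: the ``consequence'' form $\den{let~X=e_1~in~e_2}=\bigcup_{t_1\in\den{e_1}}\den{e_2[X/t_1]}$ as stated in Theorem~\ref{thCompoCrwllet} is actually proved in the appendix using Proposition~\ref{propFnfPreservHipSem}, which in turn relies on Theorem~\ref{T27}. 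To stay non-circular you must use only the main clause of Theorem~\ref{thCompoCrwllet} and discharge the remaining step $\den{let~X=t~in~e_2}=\den{e_2[X/t]}$ for $t\in CTerm_\perp$ directly from the \crule{Let} rule together with Lemma~\ref{lemmashells}\,i) and polarity. Also, a small terminological point: the inner step in \crule{Contx} is by one of the \emph{previous} rules only, so this is really a case distinction rather than a structural induction---your ``induction hypothesis'' for \crule{Contx} is just the already-established non-\crule{Contx} cases.
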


And, in order to prove this generalized theorem, we also devise a generalization of the faulty monotonicity property of $\crwllet$\ denotations above mentioned. That generalization is an easy consequence of the compositionality and monotonicity under contexts of hypersemantics. 
\begin{lemma}\label{T25}
For all $e, e' \in LExp_{\perp}$ and $\con \in Cntxt$, if $\denn{e} \ohs \denn{e'}$ then $\denn{\con[e]} \ohs \denn{\con[e']}$.
\end{lemma}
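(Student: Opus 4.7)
The plan is to observe that this lemma is essentially an immediate corollary of two previously established results: the compositionality of hypersemantics (Theorem \ref{CompHipSem}) and the monotonicity of the context operator $\denn{\con}$ under hyperinclusion (Lemma \ref{LemMonConHipSemNu}). Once these are in hand, no further work on the structure of contexts is required; the proof becomes a short chain of (in)equalities.

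Concretely, I would proceed as follows. Assume $\denn{e} \ohs \denn{e'}$. By Theorem \ref{CompHipSem} applied to $\con[e]$, we have $\denn{\con[e]} = \denn{\con}\denn{e}$, and likewise $\denn{\con[e']} = \denn{\con}\denn{e'}$. Now $\denn{e}$ and $\denn{e'}$ are elements of $\hds$, so by Lemma \ref{LemMonConHipSemNu} the assumption $\denn{e} \ohs \denn{e'}$ yields $\denn{\con}\denn{e} \ohs \denn{\con}\denn{e'}$. Chaining these three facts gives $\denn{\con[e]} \ohs \denn{\con[e']}$, which is the desired conclusion.

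There is essentially no obstacle. The only thing worth double-checking is that Lemma \ref{LemMonConHipSemNu} is genuinely proved without circularly depending on the present Lemma \ref{T25}; inspecting its proof in the excerpt, it uses only the elemental decomposition (Definition \ref{def:setFunOp}), Proposition \ref{HipSemDecUnion} parts \emph{i)} and \emph{ii)}, and the definition of $\sd{}$, so no circularity arises. A stylistic alternative would be to avoid Lemma \ref{LemMonConHipSemNu} altogether and argue directly via compositionality \emph{v.2} (Theorem \ref{CompHipSemDos}): take any decompositions $\hdes \subseteq \sd{\denn{e}}$ and $\hdes' \subseteq \sd{\denn{e'}}$ and use Proposition \ref{HipSemDecUnion} \emph{iv)} plus the monotonicity of hyperunion (Proposition \ref{HipSemDecUnion} \emph{ii)}) to conclude. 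But since Lemma \ref{LemMonConHipSemNu} has already packaged exactly what we need, the direct route through it is cleaner and is the approach I would adopt.
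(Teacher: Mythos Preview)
Your proof is correct and matches the paper's own proof essentially verbatim: apply Theorem \ref{CompHipSem} to rewrite $\denn{\con[e]}$ and $\denn{\con[e']}$ as $\denn{\con}\denn{e}$ and $\denn{\con}\denn{e'}$, then use Lemma \ref{LemMonConHipSemNu} to pass the hyperinclusion through $\denn{\con}$. Your remark about non-circularity of Lemma \ref{LemMonConHipSemNu} is also accurate.
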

\begin{proof}
$$
\begin{array}{ll}
\denn{\con[e]} = \denn{\con}\denn{e} & \mbox{ by Theorem \ref{CompHipSem}} \\
\ohs \denn{\con}\denn{e'} & \mbox{ by Lemma \ref{LemMonConHipSemNu}, as $\denn{e} \ohs \denn{e'}$} \\
= \denn{\con[e']}  & \mbox{ by Theorem \ref{CompHipSem}} \\
\end{array}
$$
\end{proof}

With the help of Lemma \ref{T25}, we  can now prove Theorem \ref{T27} by a simple case distinction on the rules for $\f$ and a trivial induction on the length of the derivation. Now, Theorem \ref{T14} follows as an easy consequence.
\begin{proof}[Proof for Theorem \ref{T14}]
Assume $e \fe e'$. By Theorem \ref{T27} we have $\denn{e'} \ohs \denn{e}$, and therefore $\den{e'\theta} \subseteq \den{e\theta}$ for every $\theta \in CSubst_\perp$. Choosing $\theta = \epsilon$ (the empty substitution) we obtain $\den{e'} \subseteq \den{e}$ as desired.
\end{proof}

The moral then is that \emph{when reasoning about the semantics of expressions and programs with {\it lets} it is usually better to lift the problem to the hypersemantic world}, and then particularize to semantics the obtained result.
This is done, for instance, in the following result:

\begin{proposition}[The $\fnf$ relation preserves hyperdenotation]
\label{propFnfPreservHipSem}~
For all $e, e' \in LExp$, if $e ~\fnfe~ e'$ then $\denn{e} = \denn{e'}$---and therefore $\den{e} = \den{e'}$.
\end{proposition}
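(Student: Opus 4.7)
The plan is to prove the result by induction on the length of the $\fnfe$ derivation. The reflexive case is immediate, so the core of the work is showing that a single $\fnf$-step preserves the hyperdenotation, i.e.\ that if $e \fnf e'$ then $\denn{e} = \denn{e'}$. Once this is established, the corollary $\den{e}=\den{e'}$ is obtained by instantiating both hyperdenotations at the empty c-substitution $\epsilon$, since $\den{e}=\denn{e}\epsilon$.

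For the step-case I would proceed by case analysis on the $\fnf$-rule used. For the non-context rules (\crule{LetIn}, \crule{Bind}, \crule{Elim}, \crule{Flat}) my strategy is to unfold $\denn{\cdot}\theta$ as $\den{(\cdot)\theta}$ for an arbitrary $\theta \in CSubst_\perp$, push $\theta$ inside the let-structure using the substitution lemma (Lemma~\ref{auxBind}) together with the variable convention (so bound variables of the let-bindings do not clash with $\theta$), and then rewrite the resulting denotation via Weak Compositionality (Theorem~\ref{lem:weakcomp}). Concretely:
\begin{itemize}
\item For \crule{LetIn}, both $\denn{h(\ldots,e,\ldots)}\theta$ and $\denn{let~X=e~in~h(\ldots,X,\ldots)}\theta$ unfold to $\bigcup_{t\in\den{e\theta}}\den{h(\ldots,t,\ldots)\theta}$ via Theorem~\ref{lem:weakcomp} (the side condition $BV(\con)\cap FV(e)=\emptyset$ holds thanks to the freshness of $X$).
\item For \crule{Bind}, $\denn{let~X=t~in~e}\theta$ unfolds to $\bigcup_{s\in\den{t\theta}}\den{e\theta[X/s]}$. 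By Lemma~\ref{lemmashells}\,(i) one has $\den{t\theta}=\{s \mid s\ordap t\theta\}$ which contains $t\theta$; Proposition~\ref{PropMonSubstCrwlLet} (monotonicity under $\ordap$) then collapses the union to $\den{e\theta[X/t\theta]}$, which equals $\den{(e[X/t])\theta}=\denn{e[X/t]}\theta$ by Lemma~\ref{auxBind}.
\item For \crule{Elim}, Theorem~\ref{lem:weakcomp} gives $\denn{let~X=e_1~in~e_2}\theta=\bigcup_{t\in\den{e_1\theta}}\den{e_2\theta[X/t]}$; because $X\notin FV(e_2)$ (and hence of $e_2\theta$), the body is independent of $t$, and the union reduces to $\den{e_2\theta}$ since $\den{e_1\theta}$ is nonempty (rule \crule{B} ensures $\perp\in\den{e_1\theta}$).
\item For \crule{Flat}, two successive unfoldings of Theorem~\ref{lem:weakcomp} applied to both sides yield the same nested union $\bigcup_{s\in\den{e_1\theta}}\bigcup_{t\in\den{e_2\theta[Y/s]}}\den{e_3\theta[X/t]}$, using $Y\notin FV(e_3)$ to discard the irrelevant $[Y/s]$ substitution on $e_3$.
\end{itemize}

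For the \crule{Contx} case I would rely on Lemma~\ref{T25}: by the induction hypothesis the inner step satisfies $\denn{e}=\denn{e'}$, which means $\denn{e}\ohs\denn{e'}$ and $\denn{e'}\ohs\denn{e}$; applying Lemma~\ref{T25} to each inclusion lifts both directions through $\con$, yielding $\denn{\con[e]}=\denn{\con[e']}$. I expect the main technical nuisance to be the bookkeeping around the variable convention when $\theta$ is pushed past let-binders in the cases \crule{Bind} and \crule{Flat}: making sure that bound variables stay disjoint from $dom(\theta)\cup\vran(\theta)$ so that Lemma~\ref{auxBind} applies cleanly, and that the side conditions of Theorem~\ref{lem:weakcomp} ($BV(\con)\cap FV(e)=\emptyset$) are met at each unfolding. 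Once this is set up, the equalities are essentially equational manipulations.
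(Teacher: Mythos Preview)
Your approach is correct and takes a somewhat different route from the paper. The paper first invokes Theorem~\ref{T27} (Hyper-Soundness) to get the inclusion $\denn{e'}\ohs\denn{e}$ for free, and then proves only the reverse inclusion $\denn{e}\ohs\denn{e'}$ by a proof-theoretic case analysis: it assumes a $\crwllet$-derivation of $e\theta\clto t$ and reconstructs one for $e'\theta\clto t$. You instead argue both directions at once by unfolding the denotations via Weak Compositionality (Theorem~\ref{lem:weakcomp}) and showing that both sides compute the same set. Your route is more denotational and arguably cleaner; the paper's route reuses already-proved machinery (Theorem~\ref{T27}) to cut the case analysis in half. For the \crule{Contx} case both proofs coincide, appealing to Lemma~\ref{T25}.

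One caution about citations: the unfolding you use repeatedly, $\den{let~X=e_1~in~e_2}=\bigcup_{t_1\in\den{e_1}}\den{e_2[X/t_1]}$, is stated in the paper as the ``consequence'' of Theorem~\ref{lem:weakcomp}, and the paper's appendix proves that consequence \emph{using} Proposition~\ref{propFnfPreservHipSem} (and similarly Theorem~\ref{CompHipSem}, on which Lemma~\ref{T25} rests, cites Proposition~\ref{propFnfPreservHipSem} for the step $\den{let~X=t~in~s\theta}=\den{s\theta[X/t]}$). So citing these results as black boxes would be formally circular---a defect the paper itself shares. The fix is immediate and exactly what your \crule{Bind} argument already sketches: the identity $\den{let~X=e_1~in~e_2}=\bigcup_{t_1\in\den{e_1}}\den{e_2[X/t_1]}$ is a direct reading of the \crule{Let} rule of $\crwllet$ (combined with rule \crule{B} for the $\perp$ case), and the special case with $e_1\in CTerm_\perp$ collapses to $\den{e_2[X/e_1]}$ via Lemma~\ref{lemmashells}\,(i) and Proposition~\ref{PropMonSubstCrwlLet}, with no appeal to Proposition~\ref{propFnfPreservHipSem}. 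Making this explicit dissolves the circularity for both your proof and the paper's.
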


This result mirrors semantically the fact that $\fnf$ performs transitions between let-expressions corresponding to the same implicit term graph. Proposition \ref{propFnfPreservHipSem} in some sense lessens the importance of the lack of confluence for the $\fnf$ relation seen in Section \ref{sect:letRwRelation}.
Preservation of hyperdenotation may be used in some situations as a substitute for confluence, specially taking into account that let-rewriting and $\crwllet$ enjoy a really strong equivalence, as it is shown in this section.


~\\
Finally, we combine the previous results in order to get our main result concerning the soundness of let-rewriting with respect to the $\crwllet$ 
 calculus:
\kk
\begin{theorem}[Soundness of let-rewriting]
\label{LSound}
For any program $\prog$ and $e \in LExp$ we have:
\begin{enumerate}
\item[i)] $e \f^* e'$ implies $\gl e \crwlto |e'|$, for any  $e' \in LExp$.
 \item[ii)] $e \f^* t$ implies $\gl e \crwlto t$, for any  $t \in CTerm$.
\end{enumerate}
Furthermore, if neither $\prog$ nor $e$  have lets then we also have:
\begin{enumerate}
\item[iii)] $e \f^* e'$ implies $\cl e \crwlto |e'|$, for any  $e' \in LExp$.
\item[iv)] $e \f^* t$ implies $\cl e \crwlto t$, for any  $t \in CTerm$.
\end{enumerate}
\end{theorem}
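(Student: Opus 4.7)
The proof is essentially a packaging of the results already in hand: Theorem \ref{T14} (soundness of let-rewriting in the form $\den{e'} \subseteq \den{e}$), Lemma \ref{lemmashells} \emph{ii)} (the shell $|e|$ always belongs to $\den{e}$), and Theorem \ref{thEquivCrwlCrwllet} (agreement of \crwl\ and \crwll\ for let-free programs/expressions).

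For part \emph{i)}, the plan is a one-line chain. From $e \fe e'$ Theorem \ref{T14} gives $\dcl{e'} \subseteq \dcl{e}$ in the $\crwllet$-sense. By Lemma \ref{lemmashells} \emph{ii)} we have $|e'| \in \dcl{e'}$. Combining the two, $|e'| \in \dcl{e}$, which is exactly $\gl e \crwlto |e'|$. No induction on the derivation is needed because Theorem \ref{T14} is already stated for the reflexive-transitive closure $\fe$.

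For part \emph{ii)}, note that when $t \in CTerm$ we have $|t| \equiv t$ by the definition of the shell (constructor cases are preserved and no function symbol or let appears in $t$). So \emph{ii)} is the specialisation of \emph{i)} to $e' \equiv t$.

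For parts \emph{iii)} and \emph{iv)}, observe that, although the rewriting step $e \fe e'$ may well introduce let-bindings inside $e'$, the targets $|e'|$ and $t$ lie in $CTerm_\perp$ and are thus let-free; the starting expression $e$ and the program $\prog$ are let-free by hypothesis. Consequently Theorem \ref{thEquivCrwlCrwllet} applies to $e$ and yields $\dcl{e}^\prog = \dc{e}^\prog$. Plugging this equality into \emph{i)} and \emph{ii)} immediately delivers $\cl e \crwlto |e'|$ and $\cl e \crwlto t$, respectively.

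The only point that deserves a moment of care is that the hypothesis of Theorem \ref{thEquivCrwlCrwllet} asks the program and the expression whose denotation is being taken to be let-free; here we apply it to the starting expression $e$, which is let-free by assumption, so the hypothesis is met even though intermediate let-expressions appear during the derivation. Since no genuinely new argument is required beyond invoking already-established results, I expect no real obstacle: the whole proof should fit in a few lines once the four items are listed in the order \emph{i)} $\Rightarrow$ \emph{ii)} $\Rightarrow$ \emph{iii)} $\Rightarrow$ \emph{iv)}.
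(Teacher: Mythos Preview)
Your proposal is correct and follows essentially the same approach as the paper: part \emph{i)} combines Theorem \ref{T14} with Lemma \ref{lemmashells} \emph{ii)}, part \emph{ii)} specialises via $|t|=t$, and parts \emph{iii)}–\emph{iv)} are obtained from \emph{i)}–\emph{ii)} through Theorem \ref{thEquivCrwlCrwllet}. Your extra remark that Theorem \ref{thEquivCrwlCrwllet} need only be applied to the let-free starting expression $e$ (not to intermediate let-expressions) is a nice clarification that the paper leaves implicit.
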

\begin{proof}
\begin{enumerate}
 \item[i)] Assume $e \f^* e'$. Then, by 
Theorem \ref{C1} we have $\dcl{e'} \subseteq
\dcl{e}$. Since $|e'| \in \dcl{e'}$ by Lemma \ref{lemmashells}, we get $|e'| \in \dcl{e}$,
which means 
$ e\crwlto |e'|$.
 \item[ii)] Trivial by {\it (i)}, since $|t| = t$ for any $t\in CTerm$.
 \item[iii)] Just combining {\it i)} and Theorem \ref{thEquivCrwlCrwllet}.
 \item[iv)] Just combining {\it ii)} and Theorem \ref{thEquivCrwlCrwllet}.
\end{enumerate}
\end{proof}
\ekk

\subsubsection{Completeness}\label{SectCompLetRw}
Now we look for the reverse implication of Theorem \ref{LSound}, that is, the completeness of let-rewriting as its ability to compute, for any given expression, any value that can been computed by the \crwl-calculi. 
With the aid of the Peeling Lemma \ref{T36}
we can  prove the following strong completeness result for let-rewriting, which still has a certain technical nature.

\begin{lemma}[Completeness lemma for let-rewriting]
\label{T32}
For any $e \in LExp$ and $t \in CTerm_{\perp}$ such that $t \not\equiv \perp$,
$$
e \clto t \mbox{ implies } e \fe let~\overline{X = a}~in~t'
$$
for some $t' \in CTerm$ and  $\overline{a} \subseteq LExp$ such that $t\sqsubseteq |let~\overline{X=a}~in~t'|$ and $|a_i| = \perp$ for every $a_i \in \overline{a}$. As a consequence, $t \sqsubseteq t'[\overline{X/\perp}]$.
\end{lemma}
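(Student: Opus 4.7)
The plan is to proceed by induction on the size of the \crwll-derivation of $e \clto t$, with case analysis on the last inference rule. Rule (B) is excluded by the assumption $t \not\equiv \perp$, while rule (RR) gives $e \equiv X \equiv t$, handled by the zero-step derivation with $t' \equiv X$ and an empty binding list.

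For (DC) with $e \equiv c(\overline{e})$ and $t \equiv c(\overline{t})$, I apply the IH to each subproof $e_i \clto t_i$ having $t_i \not\equiv \perp$, obtaining $e_i \fe let~\overline{Y_i = b_i}~in~s_i$ with the promised shape and shell-containment. Arguments with $t_i \equiv \perp$ are either left in place (when they are c-terms or variables) or extracted via \crule{LetIn} (when they are function calls, so that the resulting binding has $\perp$ shell). Lifting these sub-reductions into the $c(\ldots)$ context via \crule{Contx}, and then applying repeated \crule{LetIn}, \crule{Flat}, and \crule{Bind} to hoist internal let-bindings outward, yields $e \fe let~\overline{X=a}~in~c(\overline{t''})$ of the desired shape, with the shell bound following from the IH estimates and from $|a_k| = \perp$ for every extracted binding $a_k$.

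The (OR) case, with rule $(f(\overline{p}) \tor r) \in \prog$, $\theta \in CSubst_\perp$, and subproofs $e_i \clto p_i\theta$ and $r\theta \clto t$, proceeds similarly at the arguments level to produce $e \fe let~\overline{X=a}~in~f(\overline{t''})$ with $p_i\theta \sqsubseteq t''_i[\overline{X/\perp}]$ for each $i$. Linearity of $\overline{p}$ together with this containment yields a total matching substitution $\theta'$ on $var(\overline{p})$ such that $p_i\theta' \equiv t''_i$; for extras of the rule, $\theta'$ is defined as a totalization of $\theta$, replacing $\perp$'s with fresh variables not in $BV$ of the outer context, so that $\theta \ordap \theta'$ holds pointwise, $\theta' \in CSubst$, and the \crule{Contx} side-condition is respected. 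An application of \crule{Fapp} under \crule{Contx} then reaches $e \fe let~\overline{X=a}~in~r\theta'$. Polarity of \crwll\ (Proposition \ref{propCrwlletPolar}) lifts $r\theta \clto t$ to $r\theta' \clto t$ without size increase, so the IH applies to the latter, giving $r\theta' \fe let~\overline{Z=c}~in~u$ with $t \sqsubseteq u[\overline{Z/\perp}]$; a final \crule{Flat} merge then yields the required form.

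For (Let) with $e \equiv let~X=e_1~in~e_2$ and subproofs $e_1 \clto t_1$ and $e_2[X/t_1] \clto t$, I split on $t_1$. When $t_1 \not\equiv \perp$, the IH on $e_1$ combined with \crule{Contx}, \crule{Flat}, and \crule{Bind} transforms $e$ into $let~\overline{Y_1=b_1}~in~e_2[X/s_1]$. The continuation is obtained by applying the IH to $e_2[X/t_1] \clto t$---after totalizing $t_1$ by fresh-variable replacement of its inner $\perp$'s and lifting the \crwll-proof via polarity---and then transferring the resulting let-rewriting reduction to $e_2[X/s_1]$ using closedness under c-substitutions (Lemma \ref{LRwCerr}) and monotonicity for substitutions (Proposition \ref{PropMonSubstCrwlLet}). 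When $t_1 \equiv \perp$, polarity gives $e_2 \clto t$ with proof no larger (since $e_2[X/\perp] \ordap e_2$), so the IH on $e_2$ applied under \crule{Contx} reduces the body of the outer let; the residual $X$-binding is then massaged into the required shape by $\fnf$-peeling $e_1$ via the Peeling Lemma and applying \crule{Bind} or \crule{Elim}. The main obstacle throughout is precisely this bookkeeping: forcing every surviving let-binding into the $|a_i| = \perp$ form, managing variable capture under substitutions and flattenings, and reconciling partial sub-values with the $CSubst$-requirement of \crule{Fapp}. The final consequence $t \sqsubseteq t'[\overline{X/\perp}]$ is then immediate, since $|a_i| = \perp$ implies $|let~\overline{X=a}~in~t'| = t'[\overline{X/\perp}]$ by unfolding the shell definition.
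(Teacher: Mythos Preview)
Your overall inductive strategy matches the paper's: induction on the size of the \crwll-proof, case analysis on the last rule, and use of polarity/monotonicity to preserve proof size when passing to total c-substitutions so that the IH remains applicable. Two points deserve attention.

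\textbf{A genuine gap in the (DC) case.} When $t_i \equiv \perp$, you only distinguish ``c-term or variable'' (left in place) versus ``function call'' (extracted via \crule{LetIn}). This misses the cases where $e_i$ is a constructor application that is not a c-term, e.g.\ $c_1(f(\overline{e}))$, or a let-rooted expression. Extracting such an $e_i$ with \crule{LetIn} gives a binding whose shell is $c_1(\perp,\ldots)$ or $|e_i| \neq \perp$, violating the invariant $|a_k|=\perp$; leaving it in place breaks $t'' \in CTerm$. The paper closes this gap by invoking the Peeling Lemma (Lemma~\ref{T36}) on such $e_i$, normalizing it to $let~\overline{X_1=f_1(\overline{t})}~in~h(\overline{t'})$ with function-call definientia, after which \crule{Flat} and \crule{Bind} recover the required shape. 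The same device is used in (OR) to turn the residual arguments into c-terms before the \crule{Fapp} step.

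\textbf{An unnecessary detour in the (Let) case with $t_1 \not\equiv \perp$.} Your plan---totalize $t_1$ by fresh-variable replacement, apply the IH to $e_2[X/\hat{t_1}]\clto t$, then transfer the resulting reduction to $e_2[X/s_1]$ via closedness under c-substitutions---can be made to work, but it is roundabout and forces extra bookkeeping about where the fresh variables may surface in $u$. The paper's route is simpler: the IH on $e_1 \clto t_1$ already produces a \emph{total} $t'_1 \in CTerm$ (your $s_1$) with $t_1 \sqsubseteq t'_1$. Polarity then lifts $e_2[X/t_1]\clto t$ directly to $e_2[X/t'_1]\clto t$ with no size increase, so the IH applies immediately to the very expression $e_2[X/t'_1]$ that you have already reached after \crule{Bind}. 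No separate totalization of $t_1$, no closedness argument, and no Lemma~\ref{LRwCerr} are needed.
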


Note the condition $t \not\equiv \perp$ is essential for this lemma to be true, as we can see by taking $\prog = \{loop \tor loop\}$ and $e \equiv loop$: while $loop \clto \perp$, the only $LExp$ reachable from $loop$ is $loop$ itself.\\

Our main result concerning  completeness of let-rewriting follows easily from
Lemma \ref{T32}. It shows that any c-term computed by \crwl\ or $\crwllet$ for
an expression can be refined by a let-rewriting derivation; moreover, if the c-term is total, then it can be exactly reached by let-rewriting.

\begin{theorem}[Completeness of let-rewriting]\label{LComp}\label{LCompTot}
For any program $\prog$, $e \in LExp$, and $t \in CTerm_{\perp}$ we have:
\begin{enumerate}
    \item[i)] $\gl e \crwlto t$ implies $e \f^* e'$ for some $e' \in LExp$ such that $t \sqsubseteq |e'|$
        \item[ii)] Besides, if $t \in CTerm$ then $\gl e \crwlto t$ implies $e \f^* t$
\end{enumerate}
Furthermore, if neither $\prog$ nor $e$ have lets then we also have
\begin{enumerate}
    \item[iii)] $\cl e \crwlto t$ implies $e \f^* e'$ for some $e' \in LExp$ such that $t \sqsubseteq |e'|$
        \item[iv)] Besides, if $t \in CTerm$ then $\cl e \crwlto t$ implies $e \f^* t$
\end{enumerate}

\end{theorem}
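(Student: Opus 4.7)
The plan is to derive all four parts as essentially direct corollaries of the Completeness Lemma \ref{T32}, together with Theorem \ref{thEquivCrwlCrwllet} for parts (iii) and (iv). The whole argument hinges on a case split on whether $t$ equals $\perp$ and, when it does not, on a careful reading of the shape of the let-expression produced by Lemma \ref{T32}.

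For part (i) I would distinguish two cases. If $t \equiv \perp$, then $\perp \sqsubseteq |e|$ trivially, so $e' := e$ with zero rewriting steps works. If $t \not\equiv \perp$, Lemma \ref{T32} directly provides $e \f^* let~\overline{X=a}~in~t'$ with $t \sqsubseteq |let~\overline{X=a}~in~t'|$, which is the required $e'$. Part (ii) follows a similar route: since $t \in CTerm$ we have $t \not\equiv \perp$, so Lemma \ref{T32} delivers $e \f^* let~\overline{X=a}~in~t'$ with $t \sqsubseteq t'[\overline{X/\perp}]$ and $|a_i| \equiv \perp$ for every $a_i$. The key extra observation is that $t$, being total, is maximal under $\sqsubseteq$, so $t \equiv t'[\overline{X/\perp}]$. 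Since $t$ contains no $\perp$ while replacing $\overline{X}$ by $\perp$ would introduce $\perp$ wherever any $X_i$ appears, no $X_i$ can occur in $t'$; hence $t' \equiv t$ and each $X_i \not\in FV(t')$. Then \crule{Elim}, applied via \crule{Contx} from the innermost let-binding outward, peels off all the bindings in turn and yields $e \f^* t$.

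Parts (iii) and (iv) are immediate consequences of (i) and (ii) combined with Theorem \ref{thEquivCrwlCrwllet}: for a let-free program $\prog$ and a let-free expression $e$, $\cl e \crwlto t$ means $t \in \dg{e} = \dgl{e}$, so $\gl e \crwlto t$ also holds, and we invoke (i) or (ii) accordingly.

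The only step requiring attention is the totality argument in (ii): one must notice that maximality of $t$ forces all bindings introduced by Lemma \ref{T32} to be vacuous so that \crule{Elim} can safely dispose of them, proceeding from the innermost binding outward so that the side condition $X_i \not\in FV(t')$ is in place at each stage. Everything else is just an unpacking of lemmas already proved.
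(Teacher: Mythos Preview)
Your proposal is correct and follows essentially the same approach as the paper: the same case split on $t\equiv\perp$ for part (i), the same maximality argument for part (ii) followed by repeated \crule{Elim} steps, and the same appeal to Theorem \ref{thEquivCrwlCrwllet} for parts (iii) and (iv). Your remark that the \crule{Elim} steps must proceed from the innermost binding outward (since some $X_i$ might occur in a later $a_j$) is a welcome clarification that the paper leaves implicit.
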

\begin{proof}
Regarding  part \emph{i)}, if $t \equiv \perp$ then we are done with $e \f^0 e$ as $\forall e, \perp \sqsubseteq |e|$. On the other hand, if $t \not\equiv \perp$ then by Lemma \ref{T32} we have $e \f^* let~\overline{X}=\overline{a}~in~t'$ such  that $t \sqsubseteq |let~\overline{X}=\overline{a}~in~t'|$.

To prove part \emph{ii)}, assume $\gl e \crwlto t$. Then, by Lemma \ref{T32}, we get  $e \f^* let~\overline{X}=\overline{a}~in~t'$ such that $t \sqsubseteq |let~\overline{X}=\overline{a}~in~t|\equiv t'[\overline{X}/\overline{\perp}] $, for some $t' \in CTerm, \overline{a} \subseteq LExp$. As $t \in CTerm$ then $t$ is maximal wrt. $\sqsubseteq$, so
$t \sqsubseteq  t'[\overline{X}/\overline{\perp}]$ implies $t'[\overline{X}/\overline{\perp}] \equiv t$, but then $t'[\overline{X}/\overline{\perp}] \in CTerm$ so it must happen that $FV(t') \cap \overline{X} = \emptyset$ and therefore $t' \equiv t'[\overline{X}/\overline{\perp}] \equiv t$. But then $let~\overline{X}=\overline{a}~in~t' \f^* t' \equiv t$ by zero or more steps of \lrrule{Elim}, so $e \f^* let~\overline{X}=\overline{a}~in~t' \f^* t$, that is $e \f^* t$.

Finally, parts \emph{ii)} and \emph{iv)} follow from  \emph{ii)}, \emph{iii)} and Theorem \ref{thEquivCrwlCrwllet}.
\end{proof}





As an immediate corollary of this completeness result and soundness (Theorem
\ref{LSound}), we obtain the following result relating let-rewriting to \crwl\
and \crwll\ for total c-terms, which gives a clean and easy way to understand
the formulation of the adequacy of let-rewriting.

\begin{corollary}[Equivalence  of $\crwllet$ and let-rewriting for total values]\label{LCorrTot}
For any program $\prog$, $e \in LExp$, and $t \in CTerm$ we have
\begin{center}
$\gl e \crwlto t \mbox{ iff } e \f^* t$.
\end{center}

\noindent Besides if neither $\prog$ nor $e$ have lets then we also have

\begin{center}
$\cl e \crwlto t \mbox{ iff } e \f^* t$.
\end{center}
\end{corollary}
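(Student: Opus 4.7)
The plan is to prove this corollary by directly combining the soundness (Theorem \ref{LSound}) and completeness (Theorem \ref{LComp}) results for let-rewriting, which have already been established for total c-terms.

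For the first biconditional, the forward direction ($\Rightarrow$) follows immediately from Theorem \ref{LComp} part \emph{ii)}: if $\gl e \crwlto t$ with $t \in CTerm$, then $e \f^* t$. The reverse direction ($\Leftarrow$) follows immediately from Theorem \ref{LSound} part \emph{ii)}: if $e \f^* t$ with $t \in CTerm$, then $\gl e \crwlto t$.

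For the second biconditional, assuming neither $\prog$ nor $e$ have lets, the forward direction ($\Rightarrow$) follows from Theorem \ref{LComp} part \emph{iv)}, and the reverse direction ($\Leftarrow$) follows from Theorem \ref{LSound} part \emph{iv)}. Alternatively, one can observe that under the let-free assumption, Theorem \ref{thEquivCrwlCrwllet} gives $\den{e}_{\crwl}^{\prog} = \den{e}_{\crwllet}^{\prog}$, so $\cl e \crwlto t$ iff $\gl e \crwlto t$, and the result reduces to the first biconditional.

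Since all the real work was done in proving soundness and completeness for total c-terms (where the $\sqsubseteq$ relation in the general completeness statement collapses to equality because total c-terms are maximal in the approximation ordering), there is no genuine obstacle here: the corollary is essentially a bookkeeping combination. No step requires nontrivial reasoning beyond citing the two preceding theorems, so the proof will be a couple of lines.
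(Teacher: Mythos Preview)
Your proposal is correct and matches the paper's approach exactly: the paper itself introduces the corollary as ``an immediate corollary of this completeness result and soundness (Theorem \ref{LSound})'' and gives no further proof, so your citation of parts \emph{ii)} and \emph{iv)} of Theorems \ref{LSound} and \ref{LComp} (with the optional detour through Theorem \ref{thEquivCrwlCrwllet} for the let-free case) is precisely the intended argument.
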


As final consequence  of Theorems \ref{LSound} and \ref{LComp}  we obtain another strong equivalence result for both formalisms, this time expressed in terms of semantics and hypersemantics.

\begin{theorem}[Equivalence  of $\crwllet$ and let-rewriting]\label{TheorEquivLetRwDown}
For any program $\prog$ and $e \in LExp$:
\begin{enumerate}
    \item[i)] $\den{e} = \{|e'|~|~ e \fe e'\}\!\!\downarrow $
    \item[ii)] $\denn{e} = \lambda \theta\in CSubst_\bot.(\{|e'|~|~ e \fe e'\}\!\!\downarrow)$
\end{enumerate}
where $\downarrow $ is the downward closure operator defined in Lemma \ref{lemmashells}.
\end{theorem}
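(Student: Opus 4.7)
The plan is to prove part \emph{i)} by double inclusion directly from the soundness and completeness results already established, and then obtain part \emph{ii)} as a pointwise consequence of \emph{i)} via the definition of hyperdenotation.

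For the $\supseteq$ inclusion of \emph{i)}, I would pick an arbitrary $t$ in $\{|e'|~|~e \fe e'\}\!\!\downarrow$, so $t \ordap |e'|$ for some $e'$ with $e \fe e'$. Soundness (Theorem \ref{LSound}, part \emph{i)}) gives $\gl e \clto |e'|$, and then polarity of $\crwllet$ (Proposition \ref{propCrwlletPolar}), applied with $e \ordap e$ and $t \ordap |e'|$, yields $\gl e \clto t$, i.e., $t \in \den{e}$.

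For the $\subseteq$ inclusion of \emph{i)}, I would take $t \in \den{e}$, i.e., $\gl e \clto t$. The completeness lemma in the form of Theorem \ref{LComp}, part \emph{i)}, produces some $e' \in LExp$ such that $e \fe e'$ and $t \ordap |e'|$; this means $t \in \{|e'|~|~e \fe e'\}\!\!\downarrow$. The potential subtlety here is handling the case $t \equiv \perp$, but Theorem \ref{LComp} \emph{i)} already covers it via the trivial zero-step derivation $e \f^0 e$, so nothing special is needed.

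For part \emph{ii)}, by Definition \ref{hyper} we have $\denn{e}\theta = \den{e\theta}$ for every $\theta \in CSubst_\perp$. Applying \emph{i)} to $e\theta$ gives $\den{e\theta} = \{|e'|~|~e\theta \fe e'\}\!\!\downarrow$, so the equality follows pointwise. The main obstacle, if any, is purely bookkeeping: one must be careful that the statement of \emph{ii)} is understood with $e\theta$ (not $e$) on the right-hand side, since otherwise by instantiating $\theta = \epsilon$ one would falsely conclude $\den{e\theta} = \den{e}$ for all $\theta$. Given that caveat, \emph{ii)} is an immediate pointwise lift of \emph{i)} and requires no new machinery beyond the definition of hyperdenotation and the soundness/completeness results already invoked for \emph{i)}.
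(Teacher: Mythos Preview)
Your proof is correct and follows essentially the same route as the paper: the $\supseteq$ inclusion via Theorem~\ref{LSound} plus polarity (Proposition~\ref{propCrwlletPolar}), the $\subseteq$ inclusion via Theorem~\ref{LComp}~\emph{i)}, and part~\emph{ii)} as a pointwise lift through the definition of hyperdenotation. Your caveat about part~\emph{ii)} is well taken: as printed, the right-hand side should read $e\theta \fe e'$ rather than $e \fe e'$, and the paper's one-line proof (``trivial by applying the previous item and the definition of hypersemantics'') confirms that this is the intended reading.
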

\begin{proof}
\begin{enumerate}
    \item[i)] We prove both inclusions. Regarding $\den{e} \subseteq \{|e'|~|~ e \fe e'\}\!\!\downarrow$, assume $t \in \den{e}$. By Theorem \ref{LComp} there must exist some $e' \in LExp$ such that $e \fe e'$ and $t \ordap |e'|$, therefore $|e'| \in \{|e'|~|~ e \fe e'\}$. But this, combined with $t \ordap |e'|$, results in $t \in \{|e'|~|~ e \fe e'\}\!\!\downarrow$.

Regarding the other inclusion, consider some $t \in \{|e'|~|~ e \fe e'\}\!\!\downarrow$. By definition of the $\downarrow$ operator, there must exist some $e' \in LExp$ such that $t \ordap |e'|$ and $e \fe e'$. But that implies $|e'| \in \den{e}$, by Theorem \ref{LSound}, which combined with $t \ordap |e'|$ and the polarity property (Proposition \ref{propCrwlletPolar}) gives us that $t \in \den{e}$.

    \item[ii)] Trivial by applying the previous item and the definition of hypersemantics of an expression.
\end{enumerate}
\end{proof}

\section{Semantic reasoning}\label{SemEqs}
Having equivalent  notions of semantics and
reduction allows to reason interchangeably at the rewriting and  semantic levels.
In this section we show the power of such technique in different situations.
We start with a concrete example, adapted from \cite{LRSrta09},  where semantic reasoning leads easily to conclusions non-trivially achievable when thinking directly in operational terms.
\begin{example}
Imagine a program using constructors $a,b \in \ctrs^0,c\in \ctrs^1,d\in \ctrs^2$ and defining a function $f \in FS^1$ for which we know that $f(a)$ can be let-rewritten  to $c(a)$ and $c(b)$ but no other c-terms. Consider also an expression $e$ having $f(a)$ as subexpression, i.e., $e$ has the shape $\con[f(a)]$.
We are interested now in the following question: can we safely replace in $e$ the subexpression $f(a)$ by any other ground expression $e'$ let-reducible to the same set of values\footnote{More precisely, to the same set of shells in the sense of Theorem \ref{TheorEquivLetRwDown} part $i)$.}? By safely we mean not changing the values reachable from $e$.

The question is less trivial than it could appear. For instance, if reductions were made with term rewriting instead of let-rewriting ---i.e., considering run-time instead of call-time choice--- the answer is negative \cite{LRSrta09}. To see that, consider the program
$$
\begin{array}{lll}
f(a) \tor c(a) & g \tor a  & h(c(X)) \tor d(X,X)
\\
f(a) \tor c(b) & g \tor b  &
\end{array}
$$
and the expressions  $e \equiv h(f(a))$  and $e' \equiv c(g)$.
All this is compatible with the assumptions of our problem. However, $e$ is reducible by term rewriting
only to $d(a,a)$ and $d(b,b)$, while replacing $f(a)$ by $e'$ in $e$ gives $h(c(g))$, which is reducible by term rewriting
to two additional values, $d(a,b)$ and $d(b,a)$; thus, the replacement of $f(a)$ by $e'$ has been unsafe.

However, the answer to our question is affirmative in general for let-rewriting,  as it is very easily proved by a semantic reasoning using compositionality of \crwll: the assumption on $f(a)$ and $e'$ means that they have the same denotation
$\den{f(a)} = \den{e'} = \{c(a),c(b)\}\downarrow$ and, since they are ground, the same hyperdenotation
$\denn{f(a)} = \denn{e'} = \lambda\theta.\{c(a),c(b)\}\downarrow$. By compositionality of hypersemantics,  $\con[f(a)]$ and $\con[e']$ have the same (hyper)denotation, too. By equivalence of \crwll\ and let-rewriting this implies that both expressions reach the same value by let-rewriting.

Despite its simplicity, the example raises naturally interesting questions about replaceability,  for which semantic methods could be simpler than direct reasonings about reduction sequences. This is connected to the \emph{full abstraction} problem that we have investigated for run-time and call-time choice in \cite{LRSrta09,LR10}.
\end{example}


Semantic methods can be also used to prove the correctness of new  operational rules  not directly provided by our set of let-rewriting rules. Such  rules can be useful for different purposes: to make computations simpler, for program transformations, to obtain new properties of the framework, \ldots
Consider for instance the following generalization of the (LetIn) rule in Figure \ref{letrcalc}:
\begin{center}
\textbf{(CLetIn)}~~$\con[e] \f let~X=e~in~\con[X]$,~~ if $BV(\con) \cap FV(e) = \emptyset$ and $X$ is fresh
\end{center}
This rule allows to create let-bindings in more situations and to put them in outer positions than the original (LetIn) rule. If we have not considered it in the definition of let-rewriting is because it would destroy the strong termination property of Proposition \ref{termlr}, as it is easy to see.
However, this rule may shorten derivations. For instance, the derivation in Example \ref{fig:letDer} could be shortened to:

$$
\begin{small}
\begin{array}{ll}
\underline{heads(repeat(coin))} & \crule{CLetIn}  \\
\f let~C=coin~in~heads(\underline{repeat(C)}) & \crule{Fapp} \\
\f let~C=coin~in~heads(C:\underline{repeat(C)}) & \crule{Fapp} \\
\f let~C=coin~in~\underline{heads(C:C:repeat(C))} & \crule{CLetIn} \\
\f let~C=coin~in~let~X=repeat(C)~in~\underline{heads(C:C:X)} & \crule{Fapp} \\
\f let~C=coin~in~\underline{let~X=repeat(C)~in~(C, C)}  & \crule{Elim} \\
\f let~C=\underline{coin}~in~(C, C)  & \crule{Fapp} \\
\f \underline{let~C=0~in~(C,C)} & \crule{Bind}\\
\f (0,0)
\end{array}
\end{small}
$$

Reasoning the correctness of (CLetIn) rule is not difficult by means of semantic methods. We only need to prove that the rule preserves hypersemantics. 

\begin{lemma}\label{lemCLetInPreserv}
If $BV(\con) \cap FV(e) = \emptyset$ and $X$ is fresh, then
$
\denn{\con[e]} = \denn{let~X=e~in~\con[X]}
$.
\end{lemma}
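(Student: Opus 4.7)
The plan is to prove the hyperdenotation equality pointwise, showing for every $\theta \in CSubst_\perp$ that $\denn{\con[e]}\theta = \denn{let~X=e~in~\con[X]}\theta$. By the definition of hyperdenotation this amounts to proving the $\crwllet$-denotation equality $\den{(\con[e])\theta} = \den{(let~X=e~in~\con[X])\theta}$. Since $X$ is fresh, the variable convention guarantees $X \notin dom(\theta) \cup \vran(\theta)$, so substitution distributes cleanly: the left side becomes $\den{(\con\theta)[e\theta]}$ and the right side becomes $\den{let~X=e\theta~in~(\con\theta)[X]}$, where I implicitly use that applying $\theta$ to a context followed by plugging commutes with plugging followed by applying $\theta$, modulo the usual bound variable renamings.

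Next, I would apply Theorem \ref{thCompoCrwllet} to both sides in turn. The weak compositionality part gives
\[\den{(\con\theta)[e\theta]} = \bigcup_{t \in \den{e\theta}} \den{(\con\theta)[t]},\]
provided $BV(\con\theta) \cap FV(e\theta) = \emptyset$. The ``As a consequence'' part applied to the right-hand side yields
\[\den{let~X=e\theta~in~(\con\theta)[X]} = \bigcup_{t \in \den{e\theta}} \den{((\con\theta)[X])[X/t]}.\]
Because $X$ is fresh, in particular $X \notin FV(\con\theta)$, so the substitution $[X/t]$ only affects the hole position, giving $((\con\theta)[X])[X/t] = (\con\theta)[t]$. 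Both sides thus collapse to the same union $\bigcup_{t \in \den{e\theta}} \den{(\con\theta)[t]}$, proving the desired equality.

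The main technical obstacle is the bookkeeping needed to justify the side condition $BV(\con\theta) \cap FV(e\theta) = \emptyset$ invoked above. This follows from the hypothesis $BV(\con) \cap FV(e) = \emptyset$ together with the standing variable convention: the convention ensures $BV(\con)$ is disjoint from $dom(\theta) \cup \vran(\theta)$, so $BV(\con\theta) = BV(\con)$; and $FV(e\theta) \subseteq (FV(e) \setminus dom(\theta)) \cup \vran(\theta)$, whose intersection with $BV(\con)$ is empty by the hypothesis and the convention respectively. No machinery beyond Theorem \ref{thCompoCrwllet} is required; alternatively one could arrive at the same conclusion more abstractly via Theorem \ref{CompHipSem} applied to the contexts $\con$ and $let~X=[\,]~in~\con[X]$, but the direct unfolding above seems shorter.
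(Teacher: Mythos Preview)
Your proof is correct and follows essentially the same approach as the paper: unfold the hyperdenotation pointwise, push $\theta$ through using freshness of $X$, apply the weak compositionality theorem (Theorem~\ref{thCompoCrwllet}) on each side, and use freshness again to identify the two resulting unions. You are slightly more explicit than the paper in justifying the side condition $BV(\con\theta)\cap FV(e\theta)=\emptyset$, which the paper leaves implicit under the variable convention.
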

\begin{proof}
Assume an arbitrary $\theta \in CSubst_{\perp}$:
$$
\begin{array}{ll}
\denn{let~X = e~in~\con[X]}\theta = \den{(let~X = e~in~\con[X])\theta} \\
= \den{let~X = e\theta~in~\con\theta[X]} & \mbox{as $X$ is fresh} \\
= \bigcup\limits_{t \in \den{e\theta}}\den{(\con\theta[X])[X/t]} & \mbox{by Theorem  \ref{lem:weakcomp}} \\
= \bigcup\limits_{t \in \den{e\theta}}\den{\con\theta[t]} & \mbox{as $X$ is fresh} \\
= \den{\con\theta[e \theta]} & \mbox{by Theorem  \ref{lem:weakcomp} } \\ 
= \den{(\con[e])\theta} = \denn{\con[e]}\theta 
\end{array}
$$
\end{proof}

The rule (CLetIn) is indeed used in some of the proofs in \nuevo{the online appendix}, together with another derived rule:
\begin{center}
\textbf{(Dist)}~~$\con[let~X = e_1~in~e_2] \f let~X =
e_1~in~\con[e_2]$,\\
~~~~~~ if $BV(\con) \cap FV(e_1) = \emptyset$ and $X \not\in
FV(\con)$
\end{center}
which also preserves hypersemantics:
\begin{lemma}\label{lDistHD}
If $BV(\con) \cap FV(e_1) = \emptyset$ and $X \not\in
FV(\con)$ then $\denn{\con[let~X = e_1~in~e_2]} = \denn{let~X =
e_1~in~\con[e_2]}$.
\end{lemma}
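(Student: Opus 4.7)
The plan is to prove the equality of hypersemantics pointwise: for an arbitrary $\theta \in CSubst_\perp$, show $\den{(\con[let~X=e_1~in~e_2])\theta} = \den{(let~X=e_1~in~\con[e_2])\theta}$. By the variable convention I may assume $X \notin dom(\theta) \cup \vran(\theta)$ and that $BV(\con)$ remains disjoint from the free variables of everything relevant after applying $\theta$; in particular, $X \notin FV(\con\theta)$ and $BV(\con) \cap FV(e_1\theta) = \emptyset$ still hold.

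The strategy mirrors that used in Lemma \ref{lemCLetInPreserv}: push both sides to a common form by alternating applications of the weak compositionality Theorem \ref{lem:weakcomp} (and its corollary for let-bindings). First I would rewrite the right-hand side as
\[
\den{let~X = e_1\theta~in~(\con\theta)[e_2\theta]} \;=\; \bigcup_{t \in \den{e_1\theta}} \den{(\con\theta)[e_2\theta]\,[X/t]}
\]
using the let-corollary of Theorem \ref{lem:weakcomp}. Since $X \notin FV(\con\theta)$, the inner substitution commutes with the context, giving $(\con\theta)[e_2\theta][X/t] = (\con\theta)[e_2\theta\,[X/t]]$ (formally by the substitution Lemma \ref{auxBind} applied to the hole, or simply by structural induction on $\con$).

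Next I would rewrite the left-hand side similarly: first pushing $\theta$ inside to obtain $\den{(\con\theta)[let~X=e_1\theta~in~e_2\theta]}$, then applying Theorem \ref{lem:weakcomp} with the context $\con\theta$ (the hypothesis $BV(\con) \cap FV(e_1) = \emptyset$, preserved under $\theta$, together with the variable convention, ensures the side condition $BV(\con\theta) \cap FV(let~X=e_1\theta~in~e_2\theta) = \emptyset$), and finally applying the let-corollary of Theorem \ref{lem:weakcomp} inside the union. After interchanging the two unions, this yields
\[
\bigcup_{t \in \den{e_1\theta}} \bigcup_{s \in \den{e_2\theta\,[X/t]}} \den{(\con\theta)[s]} \;=\; \bigcup_{t \in \den{e_1\theta}} \den{(\con\theta)[e_2\theta\,[X/t]]},
\]
where the last step is another application of Theorem \ref{lem:weakcomp}. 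This matches exactly the form obtained for the right-hand side.

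The main obstacle will be bookkeeping of the side conditions of Theorem \ref{lem:weakcomp}: at each application one must verify that the relevant bound variables of the context are disjoint from the free variables of the expression being plugged in, and that $X$ does not accidentally escape or get captured when moving it across $\con$. The two hypotheses $BV(\con) \cap FV(e_1) = \emptyset$ and $X \notin FV(\con)$ are precisely what is needed for these checks, and the variable convention (together with the assumption on $\theta$) keeps them intact through the whole derivation; aside from this, the calculation is essentially a rearrangement analogous to that of Lemma \ref{lemCLetInPreserv}.
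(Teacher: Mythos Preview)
Your argument has a genuine gap at the step where you apply Theorem \ref{lem:weakcomp} to decompose $\den{(\con\theta)[let~X=e_1\theta~in~e_2\theta]}$. The side condition of weak compositionality requires $BV(\con\theta)\cap FV(let~X=e_1\theta~in~e_2\theta)=\emptyset$, and this set of free variables is $FV(e_1\theta)\cup(FV(e_2\theta)\setminus\{X\})$. The hypothesis $BV(\con)\cap FV(e_1)=\emptyset$ covers the first part, but nothing covers $FV(e_2\theta)\setminus\{X\}$: free variables of $e_2$ may well be bound by $\con$, and the variable convention does not forbid this, since in the whole expression $\con[let~X=e_1~in~e_2]$ those occurrences are bound, not free. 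A concrete failure: with program $\{f(0)\to 1\}$, $\con=let~Y=0~in~[\ ]$, $e_1=1$, $e_2=f(Y)$, one has $\den{\con[let~X=e_1~in~e_2]}=\{1,\perp\}$ but $\bigcup_{s\in\den{let~X=e_1~in~e_2}}\den{\con[s]}=\bigcup_{s\in\{\perp\}}\den{let~Y=0~in~s}=\{\perp\}$. The same obstacle arises at your reverse application of Theorem \ref{lem:weakcomp} to $\den{(\con\theta)[e_2\theta[X/t]]}$.

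The paper avoids this difficulty by never splitting $\con$ from the whole let-subexpression. Instead it factors the (Dist) step as a (CLetIn) step extracting only $e_1$ (for which the needed disjointness $BV(\con')\cap FV(e_1)=\emptyset$ is exactly the hypothesis) followed by an inner (Bind) step and an $\alpha$-renaming, and then invokes Lemma \ref{lemCLetInPreserv} and Proposition \ref{propFnfPreservHipSem}, both of which preserve hypersemantics. If you want to keep your direct, equational style, you would need to work at the level of hypersemantics of contexts (Theorem \ref{CompHipSem} and the inductive definition of $\denn{\con}$), which has no disjointness side condition, rather than the weak compositionality of denotations.
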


These ideas can be made more general. Consider the equivalence relation $e_1 \eqehs e_2$ iff $\denn{e_1} = \denn{e_2}$. This relation is especially relevant because $e_1 \eqehs e_2$ iff  $\forall \con \in Cntxt. \denn{\con[e]} = \denn{\con[e']}$, by Theorem \ref{CompHipSem}. We can contemplate $\eqehs$ as an abstract, although non-effective, reduction relation, of which the relations $\fnf$ of Section \ref{let-rewriting} and the rules (CLetIn) and (Dist) are particular subrelations.
It is trivial to check that, by construction, the combined relation $\f \cup \eqehs$ is sound and complete \wrt\  $\crwllet$. We can use that relation to reason about the meaning or equivalence of let-expressions and programs. We  could also employ it in the definition of on-demand evaluation strategies for let-rewriting. As any subrelation of $\f \cup \eqehs$ is sound \wrt\  $\crwllet$, an approach to strategies for let-rewriting could consist in defining a suitable operationally effective subrelation of $\f \cup \eqehs$ and then proving its completeness and optimality  (if it is the case).


\subsection{A case study: correctness of bubbling}\label{bubbling}
\nc{\equiva}{\sim}
\nc{\hden}[1]{[\![\![#1]\!]\!]}
\nc{\fs}{\ra^{l^*}}
\nc{\rac}{\ra_{CRWL}}
We develop here another nice application of the `semantic route', where let-rewriting provides a good level of abstraction to formulate a new operational rule ---{\em bubbling}--- while the semantic point of view is appropriate for proving its correctness.

Bubbling, proposed in \cite{AntoyBrownChiangENTCS06}, is an operational rule
devised to improve the efficiency of functional logic computations. Its correctness was
formally studied in \cite{AntoyBrownChiang06RTA} in the framework
of a variant \cite{EchahedJanodet98JICSLP} of term graph rewriting.
%
The idea  of bubbling  is to concentrate all non-determinism of a system
into a \emph{choice} operation $?$ defined by the rules
$X~?~Y \tor X$ and $X~?~Y \tor Y$,
and to lift applications of  $?$ out of their surrounding context,
as illustrated by the following graph transformation taken from \cite{AntoyBrownChiang06RTA}:

\begin{center}
  \includegraphics[scale=0.22]{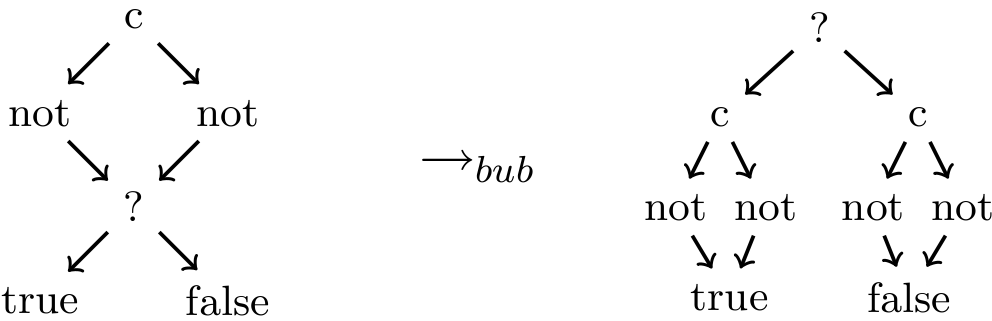}
\end{center}

As it is shown in \cite{AntoyBrownChiangENTCS06}, bubbling can be implemented in such a way
that many functional logic programs become more efficient, but we will not deal with these issues here.

Due to the technical particularities of term graph rewriting, not only the proof of correctness,
but even the definition of bubbling in \cite{AntoyBrownChiangENTCS06,AntoyBrownChiang06RTA}
are involved and need subtle care concerning the appropriate
contexts over which choices can be bubbled.
In contrast, bubbling can be expressed within
our framework in a remarkably easy and abstract way
as a new rewriting rule:
\begin{center}
{\bf (Bub)}~~ $\cnn{e_1 ? e_2} \ra^{bub} \cnn{e_1} ? \cnn{e_2}$, for $e_1,e_2 \in LExp$
\end{center}

With this rule, the bubbling step corresponding to the graph transformation of
the example above is:
$$
\begin{array}{l}
let~X=true~?~false~in~c(not(X), not(X)) \rw^{bub} \\
let~X=true~in~c(not(X), not(X))~?~let~X=false~in~c(not(X), not(X))
\end{array}
$$

Notice that the effect of this bubbling step is not a shortening of any existing
let-rewriting derivation; bubbling is indeed a genuine new rule, the
correctness of which must be therefore subject of proof. Call-time choice is
essential, since bubbling is not correct with respect to ordinary term rewriting, i.e., run-time choice. 
\begin{counterexample}[Incorrectness of bubbling for run-time choice]
Consider a function $pair$ defined by the rule $pair(X) \tor c(X,X)$ and the expression $pair(0~?~1)$ for $c \in CS^2$ and $0,1 \in CS^0$. Under term rewriting/run-time choice the derivation
$$
\begin{array}{l}
pair(0~?~1) \rw c(0~?~1, 0~?~1) \rw c(0, 0~?~1) \rw c(0, 1)
\end{array}
$$
is valid. But if we performed the bubbling step $$pair(0~?~1) \rw^{bub} pair(0)~?~pair(1)$$ then the c-term $c(0, 1)$ would not be reachable anymore by term rewriting from $pair(0)~?~pair(1)$.
\end{counterexample}


Formulating and proving the correctness of bubbling for call-time choice becomes easy by using semantics.
As we did before, we simply prove that bubbling steps preserve hypersemantics.
We need first a basic property of the (hyper)semantics of binary choice $?$.
Its proof stems almost immediately from the rules for $?$ and the definition of \crwl-(hyper)denotation.

\begin{proposition}[(Hyper)semantic properties of $?$]\label{propHypSemPropAlter} For any $e_1, e_2 \in LExp_\perp$
\begin{enumerate}
    \item[i)]
$
\den{e_1~?~e_2} = \den{e_1} \cup \den{e_2}
$
    \item[ii)]
$
\denn{e_1~?~e_2} = \denn{e_1} \uhs \denn{e_2}
$
\end{enumerate}
\end{proposition}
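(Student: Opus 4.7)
The plan is to establish part (i) by a direct unpacking of the (OR) rule applied to the two defining rules of $?$, and then derive part (ii) as an easy corollary by combining (i) with the definitions of hyperdenotation and hyperunion.

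For part (i), I would prove both inclusions separately. For the forward direction, suppose $t \in \den{e_1~?~e_2}$, so $e_1~?~e_2 \crwlto t$. The last step of the proof must be an application of rule (OR) using either the program rule $X~?~Y \tor X$ or $X~?~Y \tor Y$, together with some $\theta \in CSubst_\perp$. In the first case we obtain subproofs $e_1 \crwlto X\theta$, $e_2 \crwlto Y\theta$ and $X\theta \crwlto t$. By Lemma \ref{lemmashells} (i) applied to the last subproof, $t \sqsubseteq X\theta$, and then by polarity (Proposition \ref{propCrwlletPolar}) applied to $e_1 \crwlto X\theta$ we get $e_1 \crwlto t$, so $t \in \den{e_1}$. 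The case of the second rule is symmetric and yields $t \in \den{e_2}$. For the backward direction, if $t \in \den{e_1}$ (the case $t \in \den{e_2}$ is symmetric), take $\theta = [X/t, Y/\bot] \in CSubst_\perp$; then the rule (OR) with program rule $X~?~Y \tor X$ applies, since the subproofs $e_1 \crwlto t = X\theta$ holds by assumption, $e_2 \crwlto \bot = Y\theta$ holds by rule (B), and $X\theta \crwlto t$, i.e., $t \crwlto t$, holds by Lemma \ref{lemmashells} (i) since $t \sqsubseteq t$. Hence $e_1~?~e_2 \crwlto t$, so $t \in \den{e_1~?~e_2}$.

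For part (ii), I would reason pointwise on an arbitrary $\theta \in CSubst_\perp$:
$$
\denn{e_1~?~e_2}\theta = \den{(e_1~?~e_2)\theta} = \den{e_1\theta~?~e_2\theta} = \den{e_1\theta} \cup \den{e_2\theta} = \denn{e_1}\theta \cup \denn{e_2}\theta = (\denn{e_1} \uhs \denn{e_2})\theta,
$$
where the first and fourth equalities are the definition of hyperdenotation, the second is the fact that substitutions distribute through function application (here, through $?$), the third is part (i), and the last is the definition of hyperunion (Definition \ref{def:setFunOp} (i)).

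The only subtle step is handling the c-substitution $\theta$ in the backward direction of part (i): one must check that an arbitrary partial c-term $t$ can actually appear as $X\theta$, which is immediate since $CSubst_\perp$ ranges over c-substitutions with partial c-terms, and that $t \crwlto t$ is derivable, which is exactly the content of Lemma \ref{lemmashells} (i). Everything else is routine.
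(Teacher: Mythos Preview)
Your proof is correct and follows the same route as the paper: the paper dismisses part (i) as ``direct from definition of $?$ and the \crwl-proof calculus'' and gives exactly your chain of equalities for part (ii). One small omission in your forward direction of (i): the last step need not be \crule{OR}, since rule \crule{B} also applies when $t \equiv \perp$; but this case is immediate, as $\perp \in \den{e_1}$ by \crule{B} as well.
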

Combining this property with some of the powerful hypersemantic results from Section \ref{crwllet} leads to an appealing proof of the correctness of bubbling.

\begin{theorem}[Correctness of bubbling for call-time choice]\label{hypcorrbub}
If $e \ra^{bub} e'$ then $\denn{e} = \denn{e'}$, for any $e, e' \in LExp$.
\end{theorem}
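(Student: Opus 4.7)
The goal is to show $\denn{\cnn{e_1~?~e_2}} = \denn{\cnn{e_1}~?~\cnn{e_2}}$ for arbitrary $e_1, e_2 \in LExp$ and $\con \in Cntxt$. My plan is to chain together the compositionality of hypersemantics with the hypersemantic characterization of the choice operator, so that the whole argument becomes essentially an equational manipulation with no induction on contexts.

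First, I would apply the compositionality result (Theorem \ref{CompHipSem}) to rewrite
$\denn{\cnn{e_1~?~e_2}} = \denn{\con}\,\denn{e_1~?~e_2}$.
Then, by Proposition \ref{propHypSemPropAlter}(ii), the inner hyperdenotation splits as $\denn{e_1~?~e_2} = \denn{e_1} \uhs \denn{e_2}$, so the expression equals $\denn{\con}(\denn{e_1} \uhs \denn{e_2})$.

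The crucial step is to push the hyperunion through the context transformer $\denn{\con}$. This is exactly what Proposition \ref{HipSemDistCntx} provides (or equivalently Theorem \ref{CompHipSemDos} applied to the two-element decomposition $\{\denn{e_1}, \denn{e_2}\}$ of $\denn{e_1} \uhs \denn{e_2}$): we obtain
$\denn{\con}(\denn{e_1} \uhs \denn{e_2}) = \denn{\con}\denn{e_1} \uhs \denn{\con}\denn{e_2}$.
Applying Theorem \ref{CompHipSem} in the reverse direction to each summand yields $\denn{\cnn{e_1}} \uhs \denn{\cnn{e_2}}$, and one final use of Proposition \ref{propHypSemPropAlter}(ii) folds this back into $\denn{\cnn{e_1}~?~\cnn{e_2}}$, as required.

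I do not expect any real obstacle here: all the heavy lifting has already been done in Section \ref{crwllet}, where compositionality of hypersemantics and distribution of context hypersemantics over hyperunions were established precisely so that nonlocal operational rules like bubbling can be validated by a purely equational argument. In particular, call-time choice is encoded implicitly in the fact that Proposition \ref{HipSemDistCntx} distributes $\denn{\con}$ over unions rather than over the expressions themselves; this is exactly what fails in the run-time-choice counterexample, and it is what makes the proof work here in one line per step.
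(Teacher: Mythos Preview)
Your proposal is correct and follows exactly the same chain of equalities as the paper's own proof: Theorem~\ref{CompHipSem}, then Proposition~\ref{propHypSemPropAlter}(ii), then Proposition~\ref{HipSemDistCntx}, then Theorem~\ref{CompHipSem} again, and finally Proposition~\ref{propHypSemPropAlter}(ii). Your observation that the distributivity step is precisely where call-time choice is encoded is also in line with the paper's motivation for developing the hypersemantic machinery.
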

\begin{proof}
If $e \ra^{bub} e'$ then $e = \con[e_1~?~e_2]$ and $e' = \con[e_1]~?~\con[e_2]$, for some $e_1,e_2$. Then:
$$
\begin{array}{ll}
\!\!\denn{\con[e_1~?~e_2]} = \denn{\con}\denn{e_1~?~e_2} & \mbox{by Theorem \ref{CompHipSem}} \\
= \denn{\con}(\denn{e_1} \uhs \denn{e_2}) & \mbox{by Proposition \ref{propHypSemPropAlter} {\it ii)}} \\
= \denn{\con}\denn{e_1} \uhs \denn{\con}\denn{e_2} & \mbox{by Proposition \ref{HipSemDistCntx}} \\
= \denn{\con[e_1]} \uhs \denn{\con[e_2]} & \mbox{by Theorem \ref{CompHipSem}} \\
= \denn{\con[e_1]~?~\con[e_2]} & \mbox{by Proposition \ref{propHypSemPropAlter} {\it ii)}} \\
\end{array}
$$
\end{proof}

This property was proved also for the HO case in \cite{LRSflops08}. But the proof given here is much more elegant thanks to the new semantic tools developed in Section \ref{crwllet}.




\section{Let-narrowing}\label{let-narrowing}
It is well known that  there are  situations in functional logic computations where rewriting is not enough and must
be lifted to some kind of {\em narrowing}, because the expression being reduced contains variables
for which different bindings might produce different evaluation results. This might happen either
because variables are already present in the initial expression to reduce, or due to the presence
of extra variables in the program rules.
 In the latter case let-rewriting certainly works,
but not in an effective way, since the parameter passing substitution
in the rule \crule{Fapp} of Figure 
\refp{letrcalc} `magically' guesses the appropriate values for those extra variables
(see Example \ref{ex:narrder} below).
Some works \cite{AntoyHanus06,DiosLopez07,BrasselH07} have proved that  guessing can be replaced by a systematic
non-deterministic generation of all (ground) possible values.
However, this does not cover all aspects of narrowing, which is able to produce non-ground
answers, while generator functions are not.
In this section we present \emph{let-narrowing}, a natural lifting of let-rewriting devised to effectively deal with free and extra variables. 

Using the notation of contexts, the standard definition of narrowing as a lifting of term rewriting in ordinary \trss\ is the following:
${\cal C}[f(\overline{t})] \leadsto_\theta {\cal C}\theta[r\theta]$, if $\theta$ is a mgu of
$f(\overline{t})$ and $f(\overline{s})$, where $f(\overline{s}) \rw r$ is a fresh variant of a rule
of the TRS. The requirement that the binding substitution $\theta$ is a mgu can be relaxed
to accomplish with certain narrowing strategies like needed narrowing \cite{AEH00}, which use  unifiers but
not necessarily  most general ones.

This definition of narrowing cannot be directly translated as it is to the case of let-rewriting,
for two reasons. First, binding substitutions must be
c-substitutions, as for the case of let-rewriting.
Second, let-bound variables
should not be narrowed, but their values  should be rather obtained by evaluation
of their binding expressions.
The following example illustrates some of the points above.

\begin{example}\label{ex:narr}
Consider the following program over Peano natural numbers:
$$
\begin{array}{ll}
0+Y \rw Y         &~~~~~~ even(X) \rw {\it if}\ (Y\!+\!Y==X)~then\ true\\
s(X)+Y \rw s(X+Y) &~~~~~~ {\it if}\ true\ then\ Y \rw Y\\
0 == 0 \rw true     &~~~~~~ s(X) == s(Y) \rw X == Y\\
0 == s(Y) \rw false &~~~~~~ s(X) == 0 \rw false\\
coin \rw 0          &~~~~~~ coin \rw s(0)\\
\end{array}
$$
Notice the extra variable $Y$ in the rule for \emph{even}. The evaluation of \emph{even(coin)} by let-rewriting could start as follows:
$$
\begin{array}{l}
even(coin) \f let~X=coin~in~even(X) \\
\f let~X=coin~in~{\it if}~(Y+Y==X)~then~true \\
\f^* let~X=coin~in~let~U=Y+Y~in~let~V=(U==X)~in~{\it if}~V~then~true \\
\f^* let~U=Y+Y~in~let~V=(U==0)~in~{\it if}~V~then~true
\end{array}
$$
Now, because all function applications involve variables, the evaluation cannot continue merely by
rewriting, and therefore narrowing is required instead.
We should not perform standard narrowing steps that bind already let-bound variables; otherwise, the syntax of let-expressions can be lost. For instance, narrowing at \emph{if V then true} generates the binding $[V/true]$ that, if applied  naively to the surrounding context, results in the syntactically illegal expression:
\begin{center}
\emph{let U=Y+Y in let true=(U==0) in true}
\end{center}

What is harmless is to perform narrowing at $Y+Y$ ($Y$ is
a free variable). This gives the substitution $[Y/0]$ and the result $0$ for the
subexpression $Y+Y$. Placing it in
its surrounding context,
the derivation continues as follows:
$$
\begin{array}{l}
let~U=0~in~let~V=(U==0)~in~{\it if}~V~then~true\\
\f let~V=(0==0)~in~{\it if}~V~then~true\\
\f let~V=true~in~{\it if}~V~then~true\\
\f {\it if}~true~then~true \f true
\end{array}
$$

\end{example}

The previous example shows that  let-narrowing  \emph{must
protect bound variables} against substitutions, which is the key observation for defining narrowing in presence of let-bindings.

The one-step let-narrowing relation $e \fnr_\theta e'$ (assuming a given program $\cal P$)
is defined in Figure \ref{letNarrcalc}.

\begin{figure*}
\framebox{
\begin{minipage}{0.95\textwidth}
\begin{center}
  \begin{description}
     \item[(X)] $e ~\fnrl_\epsilon~ e'$~~~~~if $e \f e'$ using {\boldmath $X$}$\in\! \{LetIn, Bind, Elim, Flat\}$ in Figure \refp{letrcalc}.\\[0.15cm]
    \item[(Narr)] $f(\overline{t})~\fnrl_\theta~r\theta$, for any fresh variant $(f(\overline{p}) \tor r) \in {\cal P}$ and $\theta \in CSubst$ such that $f(\overline{t})\theta \equiv f(\overline{p})\theta$.\\[0.15cm]
     \item[(Contx)] ${\cal C}[e] ~\fnrl_\theta~ {\cal C}\theta[e']$,~~for ${\cal C} \neq []$,
 if $e \fnrl_\theta e'$ by any of the previous rules, and if the step is  \crule{Narr} using $(f(\overline{p}) \tor r) \in {\cal P}$, then:\\
 \begin{tabular}[t]{rl}
 (i) & $dom(\theta) \cap BV({\cal C}) = \emptyset$\\
 (ii) & $vRan(\theta|_{\setminus var(\overline{p})}) \cap BV({\cal C}) = \emptyset$
 \end{tabular}
 \end{description}
\end{center}
\end{minipage}
}
\caption{Rules of the let-narrowing relation $\fnr$}
\label{letNarrcalc}
\end{figure*}

\begin{itemize}
\item The rule
{\bf \crule{X}} collects  \emph{\crule{Elim}, \crule{Bind}, \crule{Flat}, \crule{LetIn}} of 
$\f$, that remain the same in
$\fnr$,
except for the decoration with the empty substitution $\epsilon$.
\item The rule {\bf \crule{Narr}} performs a narrowing step in a proper sense.
To avoid unnecessary loss of generality or applicability of our approach, we do not impose $\theta$ to be a mgu.
\nota{The comment about soundness and completeness has been rephrased and moved to the new section about 'organizing computations'}
For the sake of readability, we will sometimes decorate \crule{Narr} steps with $\theta|_{FV(f(\overline{t}))}$ instead of $\theta$,
i.e., with the projection over the variables in the narrowed expression.

\item The rule {\bf \crule{Contx}} indicates how to use the previous rules in inner positions.
The condition ${\cal C} \not= [\ ]$ simply avoids trivial overlappings of \crule{Contx} with the previous rules.
The rest of the conditions are set to ensure that the combination of \crule{Contx}  with \crule{Narr} makes
a proper treatment of bound variables:
\begin{itemize}

\item {\em (i)} expresses the protection of bound variables
  against narrowing justified in Example \ref{ex:narr}.


\item {\em  (ii)} is a rather technical condition needed to prevent undesired
situations when the narrowing step has used a program rule with  extra variables
and a unifier $\theta$ which is not a mgu.
Concretely, the condition states that the bindings created by  $\theta$ for the extra variables
in the  program rule do not introduce variables that are bound by the surrounding context ${\cal C}$.
To see the problems that can arise without \emph{(ii)}, consider for instance the program rules
$f \to Y$ and $loop \to loop$ and the expression
  $let~X=loop~in~f$. A legal reduction for this expression, respecting condition \emph{(ii)}
could be the following:
$$
let~X=loop~in~f ~\fnr_\epsilon~ let~X=loop~in~Z
$$
by applying \crule{Narr} to $f$ with $\theta = \epsilon$ taking the fresh  variant rule $f\to Z$, and using \crule{Contx} for the whole expression. However, if we drop  condition $(ii)$ we could perform a similar derivation using the same fresh  variant of the rule for $f$, but now using the substitution $\theta=[Z/X]$:
$$
let~X=loop~in~f ~\fnr_\epsilon~ let~X=loop~in~X
$$
which is certainly not intended because the free variable $Z$ in the previous derivation appears now as a bound variable, i.e., we get an undesired capture of variables.

We remark that if the substitution $\theta$ in \crule{Narr}  is chosen to be a standard mgu\footnote{By standard mgu of $t,s$ we mean an idempotent mgu $\theta$ with $dom(\theta)\cup ran(\theta) \subseteq var(t)\cup var(s)$.} of $f(\overline{t})$ and $f(\overline{p})$ (which is always possible) then the condition \emph{(ii)} is always fulfilled.

\end{itemize}

\end{itemize}

The one-step relation $\fnr_\theta$ is extended in the natural way to the multiple-steps narrowing
relation $\fnre_\theta$, which is defined as the least  relation verifying:
$$e \fnre_\epsilon e~~~~e \fnr_{\theta_1} e_1 \fnr_{\theta_2} \ldots e_n \fnr_{\theta_n} e'\ \Rightarrow\ e \fnre_{\theta_1\ldots\theta_n} e'$$
We write $e \fnrc{n}_\theta e'$ for a n-steps narrowing sequence.


\begin{example}\label{ex:narrder}
Example \ref{ex:narr} essentially contains already a narrowing derivation. For the sake of clarity,
we repeat it 
here making explicit the rule of let-narrowing used at each step
(maybe in combination with \crule{Contx}, which is not written).
Besides, if the step uses \crule{Narr}, the narrowed expression is underlined.
$$
\begin{array}{ll@{\hspace{-0.1cm}}}
      even(coin) \fnr_\epsilon & \crule{LetIn}\\
      let~X=coin~in~\underline{even(X)}   \fnr_\epsilon  & \crule{Narr}\\

      let~X=coin~in~{\it if}~Y+Y==X~then~true   \fnrc{3}_\epsilon & \crule{LetIn^2,Flat}\\
      let~X=\underline{coin}~in~let~U=Y+Y~in\\

      \qquad let~V=(U==X)~in~{\it if}~V~then~true   \fnr_\epsilon & \crule{Narr}\\
      let~X=0~in~let~U=Y+Y~in\\
      \qquad let~V=(U==X)~in~{\it if}~V~then~true   \fnr_\epsilon & \crule{Bind}\\

      let~U=\underline{Y\!+\!Y}~in~let~V\!=\!(U\!==\!0)~in~{\it if}~V~then~true   \fnr_{[Y\!/0]} & \crule{Narr}\\
      let~U=0~in~let~V=(U==0)~in~{\it if}~V~then~true   \fnr_\epsilon & \crule{Bind}\\

      let~V=\underline{(0==0)}~in~{\it if}~V~then~true   \fnr_\epsilon & \crule{Narr}\\
      let~V=true~in~{\it if}~V~then~true   \fnr_\epsilon & \crule{Bind}\\
      {\it if}~true~then~true   \fnr_\epsilon & \crule{Narr}\\

      true  &

\end{array}
$$

Notice that all \crule{Narr} steps in the derivation except one have $\epsilon$ as
narrowing substitution (because of the projection over the variables of the
narrowed expression), so they are really rewriting steps.
An additional remark  that could help to further explain
the relationship between the let-narrowing relation $\fnrl$ and the let-rewriting relation $\f$ is the following:
since we have $even(coin) \fnr_\theta~ true$ for some $\theta$, but $even(coin)$ is ground,
Theorem \ref{SoundLNarr} in next section ensures that there
must be also a  successful let-rewriting derivation $even(coin) \f^*~ true$.
This derivation could have the form:
    $$\begin{array}{ll}
even(coin) \f & \crule{LetIn}\\
let~X=coin~in~ even(X)  \f  & \crule{Fapp}\\
let~X=coin~in~if~(0+0==X)~then~true   \f & \\
\ldots\ldots\ldots\ \f~ true& \\
\end{array}
$$


The indicated \crule{Fapp}-step in this let-rewriting derivation has used the substitution $[Y/0]$, thus anticipating and `magically guessing'
the correct value of the extra variable $Y$ of the rule of $even$.
In contrast, in the let-narrowing derivation the binding for $Y$ is not done while
reducing $even(X)$ but in a later \crule{Narr}-step over $Y+Y$. This corresponds closely to
the behavior of narrowing-based systems like Toy or Curry.

\end{example}

\subsection{Soundness and completeness of the let-narrowing relation $\fnrl$}
\label{soundComplLetNarr}
In this section we show the adequacy of let-narrowing wrt. let-rewriting. From now on we assume a fixed program $\cal P$.

As usual with narrowing relations, {soundness} results are not difficult to formulate and prove. The following \emph{soundness} result for $\fnrl$ states that we can mimic any $\fnrl$ derivation with $\f$ by applying over the starting expression the substitution computed by the original let-narrowing derivation.
\begin{theorem}[Soundness of the let-narrowing relation $\fnrl$]\label{SoundLNarr}
For any $e, e' \in LExp$, $e \fnre_{\theta} e'$ implies $e\theta \f^* e'$.
\end{theorem}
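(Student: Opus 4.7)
My plan is to prove the theorem by induction on the number $n$ of steps in the narrowing derivation $e \fnrc{n}_\theta e'$. For the base case ($n=0$) we have $\theta = \epsilon$ and $e\equiv e'$, so $e\theta \f^0 e'$ trivially. For the inductive step, write the derivation as $e \fnr_{\theta_1} e_1 \fnrc{n-1}_{\theta'} e'$ with $\theta = \theta_1\theta'$. By induction hypothesis, $e_1\theta' \f^* e'$. If we can establish a \emph{one-step soundness lemma}, namely $e \fnr_{\theta_1} e_1 \Rightarrow e\theta_1 \f^* e_1$, then applying Lemma~\ref{LRwCerr} (closedness of $\f$ under c-substitutions) to $\theta'$ (which is a c-substitution, since it is a composition of the c-substitutions provided by each \crule{Narr} step) yields $e\theta_1\theta' \f^* e_1\theta'$, and chaining with the IH gives $e\theta \equiv e\theta_1\theta' \f^* e_1\theta' \f^* e'$.

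The core of the proof is therefore the one-step soundness lemma, which I would prove by induction on the structure of the $\fnr_{\theta_1}$ derivation, with one case per rule of Figure~\ref{letNarrcalc}. The \textbf{(X)} case is immediate: $\theta_1 = \epsilon$ and the same step is already a $\f$-step by definition. The \textbf{(Narr)} case is also straightforward: if $f(\overline{t}) \fnr_{\theta_1} r\theta_1$ via a variant $f(\overline{p}) \to r$ with $f(\overline{t})\theta_1 \equiv f(\overline{p})\theta_1$ and $\theta_1 \in CSubst$, then $f(\overline{t})\theta_1 \equiv f(\overline{p})\theta_1 \f r\theta_1$ by \crule{Fapp}.

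The main obstacle, as expected, is the \textbf{(Contx)} case: $\con[e_0] \fnr_{\theta_1} \con\theta_1[e'_0]$ with an inner step $e_0 \fnr_{\theta_1} e'_0$. By the inner IH we have $e_0\theta_1 \f^* e'_0$, and we must lift this derivation through the context $\con\theta_1$ so as to obtain $(\con[e_0])\theta_1 \f^* \con\theta_1[e'_0]$. This splits into two subclaims. First, the syntactic identity $(\con[e_0])\theta_1 \equiv \con\theta_1[e_0\theta_1]$, which follows from an easy substitution lemma for contexts once we invoke the variable convention together with condition~(i) of \crule{Contx}, ensuring $dom(\theta_1) \cap BV(\con) = \emptyset$ (so substituting into $\con$ does not interfere with the hole).

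The second subclaim is that $\con\theta_1[e_0\theta_1] \f^* \con\theta_1[e'_0]$ by iterated applications of \crule{Contx} of let-rewriting. Here the delicate point is the side condition of \crule{Contx} in Figure~\ref{letrcalc}: for every \crule{Fapp} step along $e_0\theta_1 \f^* e'_0$ using some rule with pattern $\overline{p'}$ and c-substitution $\sigma$, we must verify $\vran(\sigma|_{\setminus var(\overline{p'})}) \cap BV(\con\theta_1) = \emptyset$. Since the inner narrowing step is either \textbf{(X)} (no such constraint) or \textbf{(Narr)} (in which case the sole \crule{Fapp} step along the lifted derivation uses the same $\theta_1$ and rule of the narrowing step, and the requirement is exactly condition~(ii) of \crule{Contx} in $\fnr$, granting $\vran(\theta_1|_{\setminus var(\overline{p})}) \cap BV(\con) = \emptyset$), and since $BV(\con\theta_1) = BV(\con)$ under the variable convention combined with condition~(i), the condition transfers. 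Making this freshness bookkeeping fully precise is the only technically nontrivial part, while the overall structure of the argument is standard lifting plus substitution closure.
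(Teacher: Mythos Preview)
Your proof is correct and mirrors the paper's approach: one-step soundness by case analysis on the narrowing rule, then induction on derivation length using Lemma~\ref{LRwCerr}. The only refinement worth noting is in your first subclaim of the \crule{Contx} case: the paper obtains $(\con[e_0])\theta_1 \equiv \con\theta_1[e_0\theta_1]$ not via the variable convention but by first replacing $\theta_1$ with $\theta_1|_{\setminus var(\overline{p})}$ (harmless since the fresh pattern variables of the rule variant are absent from $\con[e_0]$ and from $\con$), after which Lemma~\ref{T28} applies directly from conditions~(i) \emph{and}~(ii) of \crule{Contx}---condition~(i) alone does not control $\vran(\theta_1)$.
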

%
%

Completeness is more complicated to prove. The key result is a generalization
to  let-rewriting of Hullot's {\em lifting lemma} \cite{Hullot80} for classical term rewriting and narrowing.
It states that  any rewrite sequence for a particular instance of an expression can be generalized by a narrowing derivation.

\begin{lemma}[Lifting lemma for the let-rewriting relation $\f$]\label{lem:lifting}
Let $e,e' \in LExp$ such that $e\theta \fe e'$ for some $\theta \in CSubst$, and let
${\cal W}, {\cal B} \subseteq {\cal V}$ with $dom(\theta) \cup FV(e) \subseteq {\cal W}$, $BV(e) \subseteq {\cal B}$
and $(dom(\theta) \cup \vran(\theta)) \cap {\cal B} = \emptyset$, and for each \crule{Fapp} step of $e\theta \fe e'$  using a rule $R \in \prog$ and a substitution $\gamma \in CSubst$ then $\vran(\gamma|_{vExtra(R)}) \cap {\cal B} = \emptyset$.
Then there  exist a derivation $e ~\fnrl^*_{\sigma}~ e''$ and $\theta' \in CSubst$ such that:
$$ \mbox{(i)~} e''\theta' = e'
\qquad \mbox{(ii)~} \sigma\theta' = \theta[{\cal W}]
\qquad \mbox{(iii)~} (dom(\theta') \cup \vran(\theta')) \cap {\cal B} = \emptyset
$$
Besides, the let-narrowing derivation can be chosen to use mgu's at each \crule{Narr} step.
Graphically:
\vspace*{-0.2cm}
\begin{center}
\begin{tikzpicture}[scale=0.65, auto] 
    \node[nnarr] (e) at (-2,0) {$e$};
    \node[nnarr] (epp) at (2,0) {$e''$};
    \node[nnarr] (etheta) at (-2,-2) {$e\theta$};
    \node[nnarr] (ep) at (2,-2) {$e'$};
    \draw [->, apath, densely dotted, snake=snake, line after snake=2mm] (e) to node {$l^*$} node[swap] {$\sigma$} (epp);
    \draw [->, apath] (etheta) to node {$l^*$} (ep);
    \draw [|->, apath, shorten <=1pt] (e) to node [swap] {$\theta$} (etheta);
    \draw [|->, apath, shorten <=1pt, dashed] (epp) to node {$\theta'$} (ep);

\begin{pgfonlayer}{background}
   \filldraw [fondoTerm]
      (e.north -| e.west) rectangle (ep.south -| ep.east);
\end{pgfonlayer}

\end{tikzpicture}
\end{center}

\end{lemma}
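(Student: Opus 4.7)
The plan is to prove the lifting lemma by induction on the length $n$ of the let-rewriting derivation $e\theta \fe e'$, following the classical strategy of Hullot but adapted to cope with let-bindings, c-substitutions, and the extra bookkeeping involving $\mathcal{W}$ and $\mathcal{B}$. In the base case ($n=0$) we simply take $e'' = e$, $\sigma = \epsilon$, $\theta' = \theta$; conditions (i)--(iii) then follow immediately from the hypotheses. For the inductive step we split the derivation as $e\theta \f e_1 \fe e'$ and first construct a single $\fnrl$ step from $e$ that lifts $e\theta \f e_1$, i.e.\ some $e \fnrl_{\sigma_1} e_1'$ with a c-substitution $\theta_1$ such that $e_1'\theta_1 = e_1$, $\sigma_1\theta_1 = \theta[\mathcal{W}]$ and $(dom(\theta_1) \cup \vran(\theta_1)) \cap \mathcal{B} = \emptyset$. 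Once that step is obtained, we apply the induction hypothesis to the shorter derivation $e_1'\theta_1 = e_1 \fe e'$, using an updated witness set $\mathcal{W}' \supseteq dom(\theta_1) \cup FV(e_1')$, and concatenate the resulting narrowing derivation behind the initial step.

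The one-step lifting is proved by case analysis on the rule used in $e\theta \f e_1$, combined with an auxiliary analysis of the \crule{Contx} structure to locate the redex. The easy cases are the four bookkeeping rules \crule{LetIn, Bind, Elim, Flat}: because none of them depends on the concrete instance $\theta$ (they only manipulate the let/syntactic skeleton, which is preserved by c-substitution since $\theta$ does not touch bound variables by the variable convention and the assumption $(dom(\theta)\cup \vran(\theta))\cap \mathcal{B}=\emptyset$), the very same manipulation is available on $e$, yielding an \crule{X} step with $\sigma_1 = \epsilon$ and $\theta_1 = \theta$. The interesting case is \crule{Fapp} inside a context $\con$: write $e = \con_0[f(\overline{s})]$ so that $e\theta = \con_0\theta[f(\overline{s})\theta]$ and the step uses $(f(\overline{p}) \tor r)\in\prog$ with some $\gamma\in CSubst$ satisfying $f(\overline{s})\theta = f(\overline{p})\gamma$. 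Take a fresh variant of the program rule (so that its variables are disjoint from $\mathcal{W}\cup\mathcal{B}\cup\vran(\theta)$). Then $\theta\uplus\gamma$ unifies $f(\overline{s})$ and $f(\overline{p})$, so a standard (idempotent) mgu $\mu$ of them exists with $dom(\mu)\cup\vran(\mu)\subseteq var(f(\overline{s}))\cup var(f(\overline{p}))$, and there is $\theta_1$ with $\mu\theta_1 = (\theta\uplus\gamma)[\mathcal{W}\cup var(\overline{p})]$. Using the variable convention and freshness of the variant, one verifies the side conditions \emph{(i), (ii)} of \crule{Contx} for this $\mu$, so $e \fnrl_\mu \con_0\mu[r\mu] = e_1'$, and then $e_1'\theta_1 = e_1$, $\mu\theta_1 = \theta[\mathcal{W}]$, and $(dom(\theta_1)\cup\vran(\theta_1))\cap \mathcal{B}=\emptyset$ follow by inspection, the last one using precisely the hypothesis that $\vran(\gamma|_{vExtra(R)})\cap\mathcal{B} = \emptyset$.

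The main obstacle, and the place where careful bookkeeping is needed, is this \crule{Fapp} case: controlling the variables introduced by the mgu and by the fresh variant so that (a) condition \emph{(ii)} of \crule{Contx}, $\vran(\mu|_{\setminus var(\overline{p})})\cap BV(\con_0) = \emptyset$, holds, and (b) the invariant $(dom(\theta')\cup\vran(\theta'))\cap\mathcal{B}=\emptyset$ is preserved both for the one-step witness $\theta_1$ and, inductively, for the composed witness after applying the IH. The Substitution Lemma \ref{auxBind} and a small bookkeeping lemma about decomposing a unifier through an idempotent mgu handle this cleanly, and choosing a standard mgu at each step automatically satisfies the extra-variable condition and yields the final clause of the lemma about using mgus. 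Gluing the initial \crule{X} or \crule{Narr} step with the IH-derivation, and composing the two c-substitutions, produces the required $e \fnrl^*_\sigma e''$ with $\sigma = \sigma_1\sigma_2$, $\theta' = \theta_2'$, and conditions (i)--(iii) follow by the usual calculations on substitution composition restricted to $\mathcal{W}$.
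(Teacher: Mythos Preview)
Your proposal is correct and follows essentially the same approach as the paper: induction on the length of the let-rewriting derivation, with a one-step lifting established by case analysis (the four bookkeeping rules with $\sigma_1=\epsilon$, $\theta_1=\theta$; the \crule{Fapp}-in-context case via a fresh variant and an idempotent mgu of $f(\overline{s})$ and $f(\overline{p})$), followed by applying the IH to the tail and composing substitutions. One small point the paper makes explicit that you should not overlook in the execution: when invoking the IH you must enlarge not only $\mathcal{W}$ but also $\mathcal{B}$ to $\mathcal{B}_1 = \mathcal{B}\cup BV(e_1)$, since a \crule{LetIn} step can introduce a fresh bound variable, and the invariant $BV(\cdot)\subseteq\mathcal{B}$ must be re-established for $e_1'$ before the IH applies.
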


With the aid of this lemma  we are now ready to state and prove the following strong completeness result for $\fnrl$.
\begin{theorem}[Completeness of the let-narrowing relation $\fnrl$]\label{TLNarrComp}
Let $e, e' \in LExp$ and $\theta \in CSubst$. If $e\theta \f^* e'$, then there exist
a let-narrowing derivation $e ~\fnrl^*_{\sigma}~ e''$ and $\theta' \in CSubst$ such that
$e''\theta' \equiv e'$ and $\sigma\theta' = \theta[FV(e)]$.
\end{theorem}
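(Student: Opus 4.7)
The plan is to derive the completeness theorem as a direct corollary of the Lifting Lemma \ref{lem:lifting}, instantiating its parameters $\mathcal{W}$ and $\mathcal{B}$ in the obvious way and then projecting the equality over $\mathcal{W}$ down to $FV(e)$.

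First, I would set $\mathcal{W} := dom(\theta) \cup FV(e)$, so that the hypothesis $dom(\theta) \cup FV(e) \subseteq \mathcal{W}$ of the Lifting Lemma holds trivially. For the bound-variable set I would take $\mathcal{B} := BV(e)$. By the variable convention on $e$ and $\theta$, we may assume $BV(e) \cap (dom(\theta) \cup \vran(\theta)) = \emptyset$, which takes care of the condition $(dom(\theta) \cup \vran(\theta)) \cap \mathcal{B} = \emptyset$.

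Next, I would invoke the Lifting Lemma to obtain a let-narrowing derivation $e \fnrl^*_\sigma e''$ and a substitution $\theta' \in CSubst$ such that (i) $e''\theta' \equiv e'$ and (ii) $\sigma\theta' = \theta[\mathcal{W}]$. Since $FV(e) \subseteq \mathcal{W}$, the equality (ii) restricted to $FV(e)$ gives exactly $\sigma\theta' = \theta[FV(e)]$, closing the proof.

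The main obstacle, and the only nonmechanical part, is verifying the extra-variable condition required by the Lifting Lemma: for every \crule{Fapp} step in the given derivation $e\theta \f^* e'$ that uses a rule $R \in \prog$ and a substitution $\gamma \in CSubst$, we need $\vran(\gamma|_{vExtra(R)}) \cap \mathcal{B} = \emptyset$. The substitution $\gamma$ comes from the given derivation and cannot be changed, but we are free to work up to $\alpha$-renaming: by the variable convention, each \crule{Fapp} step in $e\theta \f^* e'$ can be taken on a fresh variant of the program rule, and the values $\gamma$ assigns to extra variables can be consistently renamed so that $\vran(\gamma|_{vExtra(R)})$ is disjoint from the finite set $BV(e)$. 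This produces an $\alpha$-equivalent let-rewriting derivation ending in an expression $\alpha$-equivalent to $e'$; the Lifting Lemma then yields $e''$ and $\theta'$ whose composition recovers the original $e'$. Once this renaming argument is in place, the rest of the proof reduces to the one-line invocation sketched above.
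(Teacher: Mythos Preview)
Your proposal is correct and follows essentially the same route as the paper: invoke the Lifting Lemma with $\mathcal{B} = BV(e)$ and appeal to the variable convention for the side conditions. The only cosmetic difference is that the paper first replaces $\theta$ by $\theta|_{FV(e)}$ (harmless since $e\theta|_{FV(e)} \equiv e\theta$) and thus can take $\mathcal{W} = FV(e)$ directly, whereas you take the larger $\mathcal{W} = dom(\theta) \cup FV(e)$ and then restrict the resulting equality to $FV(e)$; both are equally valid.
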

\begin{proof}
Applying Lemma \ref{lem:lifting} to $e\theta|_{FV(e)} \f^* e'$ with ${\cal W} = FV(e)$ and ${\cal B} = BV(e)$, as $e\theta|_{FV(e)}\equiv e\theta$ and the additional conditions over ${\cal B}$ hold by the variable convention.
\end{proof}



Finally, by combining  Theorems \ref{SoundLNarr} and  \ref{TLNarrComp}, we obtain a strong adequacy theorem for let-narrowing with respect to let-rewriting. 

\begin{theorem}[Adequacy of the let-narrowing relation $\fnrl$ wrt. $\f$]\label{TLNarrAdeq}
Let $e,e_1\in LExp$ and $\theta \in CSubst$, 
then:
$$
e\theta\f^* e_1\ \Leftrightarrow\
\begin{array}{l}
\textrm{there exist a let-narrowing
  derivation } e\fnre_{\sigma} e_2 \textrm{ and}\\
\textrm{some }
\theta'\in CSubst \textrm{ such that } \sigma\theta'=\theta[FV(e)],\ e_2\theta'\equiv e_1

\end{array}
$$
\end{theorem}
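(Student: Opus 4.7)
The plan is to prove this biconditional by simply packaging the two adequacy results already established, namely Theorem \ref{SoundLNarr} (soundness) and Theorem \ref{TLNarrComp} (completeness), with the aid of closedness of let-rewriting under c-substitutions (Lemma \ref{LRwCerr}). I do not expect any new technical work; the whole content of the theorem is already proved, and the statement is essentially the cleaner, symmetric reformulation of what we already have. There is no significant obstacle I anticipate, since both directions reduce to invoking a prior result and manipulating domains of substitutions.

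For the ($\Rightarrow$) direction, I would simply apply Theorem \ref{TLNarrComp} directly: from $e\theta \f^* e_1$ that theorem yields exactly a let-narrowing derivation $e \fnre_\sigma e_2$ together with $\theta' \in CSubst$ satisfying $e_2\theta' \equiv e_1$ and $\sigma\theta' = \theta[FV(e)]$, which is the desired conclusion.

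For the ($\Leftarrow$) direction, I would proceed in three short steps. First, apply Theorem \ref{SoundLNarr} to the given derivation $e \fnre_\sigma e_2$ to obtain $e\sigma \f^* e_2$. Next, apply Lemma \ref{LRwCerr} to this rewrite sequence with the c-substitution $\theta'$, yielding $e\sigma\theta' \f^* e_2\theta'$. Finally, observe that since only the variables of $FV(e)$ matter when applying a substitution to $e$, the hypothesis $\sigma\theta' = \theta[FV(e)]$ gives $e\sigma\theta' \equiv e\theta$, while $e_2\theta' \equiv e_1$ by assumption. Substituting both equalities we conclude $e\theta \f^* e_1$, as required.
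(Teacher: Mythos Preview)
Your proposal is correct and essentially identical to the paper's own proof. For ($\Leftarrow$) the paper uses exactly your three steps (Theorem \ref{SoundLNarr}, then Lemma \ref{LRwCerr}, then the substitution equalities); for ($\Rightarrow$) the paper invokes the lifting lemma (Lemma \ref{lem:lifting}) directly rather than Theorem \ref{TLNarrComp}, but since Theorem \ref{TLNarrComp} is itself an immediate corollary of that lemma, the arguments coincide.
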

\begin{proof}
  \begin{description}
  \item[$(\Rightarrow)$] Assume $e\theta\fe e_1$. As $e\theta|_{FV(e)}\equiv
    e\theta$ then trivially $e\theta|_{FV(e)}\f^* e_1$. We can apply Lemma
    \ref{lem:lifting} taking ${\cal W}=FV(e)$ to get
    $e\fnre_{\sigma} e_2$ such that there exists $\theta'\in CSubst$ with
    $\sigma\theta'=\theta|_{FV(e)}[{\cal W}]$ and $e_2\theta'\equiv e_1$. But as
    ${\cal W}=FV(e)$ then $\sigma\theta'=\theta|_{FV(e)}[{\cal W}]$ implies
    $\sigma\theta'=\theta[FV(e)]$.

    We remark that the lifting lemma ensures that the
    narrowing derivation can be chosen to use mgu's at each {\clrule{Narr}} step.

  \item[$(\Leftarrow)$] Assume $e\fnre_{\sigma} e_2$ and $\theta'$ under the
    conditions above. Then by Theorem \ref{SoundLNarr} we have $e\sigma\f^* e_2$. As
    $\f$ is closed under c-substitutions (Lemma \ref{LRwCerr}) then $e\sigma\theta'\f^*
    e_2\theta'$. But as $\sigma\theta'=\theta[FV(e)]$, then $e\theta\equiv
    e\sigma\theta'\f^* e_2\theta'\equiv e_1$.
  \end{description}
\end{proof}

\subsection{Organizing computations}\label{subsection:strategies}

Deliberately, in this paper we have kept the definitions  of let-rewriting and narrowing apart from any particular computation strategy.
In this section we explain rather informally how the ideas of some known strategies for functional logic programming \cite{AntoyJSC05} can be adapted also to our formal setting.
For the sake of brevity we focus only on let-narrowing computations. As a running example, consider the program
\begin{center}
$\begin{array}{ll}
leq(0,Y) \ra true               & f(0) \ra 0 \\
leq(s(X),0) \ra false           & \\
leq(s(X),s(Y)) \ra leq(X,Y) \hh9 & \\
\end{array}
$
\end{center}
and the initial expression $leq(X,f(Y))$ to be let-narrowed using it.

As a first remark, when designing a strategy one can freely use `peeling' steps in a \emph{don't care} manner  using the relation $\fnf$ (Definition \ref{def:fnf}), since it is terminating and (hyper-)semantics-preserving. In our case one step suffices:
$leq(X,f(Y)) ~\fnrl_\epsilon let~U=f(Y)~in~leq(X,U)$.
After a peeling (multi-)step, a (Narr) step must be done. Where?
Certainly, the body $leq(\ldots)$ must be narrowed at some point. One \emph{don't know} possibility is narrowing at $leq(X,U)$ using the first rule for $leq$ that does not bind $U$:
$let~U=f(Y)~in~leq(X,U) ~\fnrl_{[X/0]} let~U=f(Y)~in~ true$.
A new peeling step leads to a first final result  $true$, with computed substitution $[X/0]$.

The second and third rules for $leq$ could lead to more results. Those rules have non-variable patterns as second arguments,  and then  the bound variable $U$ in $leq(X,U)$ inhibits a direct (Narr) step in that position. Typically it is said that $U$ is \emph{demanded} by those $leq$ rules. Therefore, we narrow  $f(Y)$ to get values for $U$, and then we `peel':

\begin{center}
$let~U=f(Y)~in~leq(X,U) ~\fnrl_{[Y/0]} let~U=0~in~leq(X,U) ~\fnrl_\epsilon leq(X,0)$ \hfill \emph{(1)}
\end{center}

The computation proceeds now by two don't know choices using the rules for $leq$, leading to two more solutions
$(true,[Y/0,X/0])$ and $(false,[Y/0,X/s(Z)])$.

This  implicitly applied strategy  can be seen as a translation to let-narrowing of \emph{lazy narrowing} \cite{MR92,AFIV03}. As a known drawback of lazy narrowing, notice that the second solution $(true,[Y/0,X/0])$ is redundant, since it is less general than the first one $(true,[X/0])$.
Redundancy is explained because we have narrowed the expression $f(Y)$ whose evaluation was demanded only by some of the rules for the outer function application $leq(X,f(Y))$, but after that we have used the rules not demanding the evaluation (the first rule for $leq$).
This problem is tackled successfully by \emph{needed narrowing} \cite{AEH00} which takes into account, when narrowing an inner expression, what are the rules for an outer function application demanding such evaluation. A needed narrowing step `anticipates' the substitution that will perform these rules when they are to be applied.
The ideas of needed narrowing can be adapted to our setting. In our example, we get the following derivation instead of (1):
$$
\begin{array}{lr}
  let~U=f(Y)~in~leq(X,U) ~\fnrl_{[X/s(Z),Y/0]} let~U=0~in~leq(s(Z),U) ~\fnrl_\epsilon & \\
leq(s(Z),0) ~\fnrl_\epsilon false  &
\end{array}\hfill \emph{(1')}$$

The first step does not use a mgu. This a typical feature of needed narrowing, and is also allowed by let-narrowing steps. Needed narrowing steps rely on \emph{definitional trees}  that structure demandness information from the rules of a given function. This information can be embedded also into a program transformation.
There are simple transformations for which the transformed program, under a lazy narrowing regime using mgu's, obtains the same solutions than the original program \cite{Zartmann97}, although it is not guaranteed that the number of steps is also preserved.
In our example, the definition of $leq$ can be transformed as follows:

\begin{center}
$\begin{array}{ll}
leq(0,Y) \ra true                & leqS(X,0) \ra false \\
leq(s(X),Y) \ra leqS(X,Y)   \hh9 & leqS(X,s(Y)) \ra leq(X,Y)\\
\end{array}
$
\end{center}
As happened with (1'), the derivation
$$
\begin{array}{ll}
let~U=f(Y)~in~leq(X,U)  ~\fnrl_{[X/s(Z)]} let~U=f(Y)~in~leqS(Z,U) ~\fnrl_{[Y/0]}\\
 let~U=0~in~leqS(Z,U) ~\fnrl_\epsilon leqS(Z,0) ~\fnrl_\epsilon false
\end{array}
$$
 gets rid of redundant solutions.

To which extent do our results guarantee the adequateness of the adaptation to let-narrowing of these strategies or others that could be defined? Certainly  any strategy is \emph{sound} for call-time choice semantics, because unrestricted $\fnrl$ is already sound (Theorem \ref{SoundLNarr}). This will be true also if the strategy uses derived rules in the sense of Section \ref{SemEqs}. With respect to completeness, we know that the space of let-narrowing derivations is complete wrt. let-rewriting (Theorem \ref{TLNarrComp}). But this does not imply  the completeness of the strategy, which in general will determine a smaller narrowing space. Therefore completeness of the strategy must be proved independently. Such a proof may use semantic methods (i.e., prove completeness wrt. \crwl-semantics) or operational methods (i.e., prove completeness wrt. $\f$-derivations).
We will not go deeper into the issue of strategies.

%
%
%
%
%


\section{Let-rewriting versus classical \nuevo{term} rewriting}\label{letR-classR}

In this section we examine the relationship between let-rewriting and ordinary term rewriting, with the focus put in the set of c-terms reachable by rewriting with each of these relations.
As term rewriting is not able to handle expressions with let-bindings, during this section we  assume that all considered programs do not have let-bindings in the right-hand side of its rules.

We will first prove in Section \ref{letsound} that let-rewriting is sound with respect to term rewriting, in the sense that any c-term that can be reached by a let-rewriting derivation from a given expression can also be reached by a term rewriting derivation starting from the same expression. As we know, completeness does not hold in general because
run-time choice computes more values than call-time choice for arbitrary programs. However, we will be able to prove completeness of let-rewriting \wrt\ term rewriting
for the class of {\em deterministic} programs, a notion close to confluence
that will be defined in Section \ref{letcompl}.
Finally, we will conclude in Section \ref{subsect:letnarrVsNarr} with a comparison between let-narrowing and narrowing, that will follow easily from the results in previous subsections and the adequacy of let-narrowing to let-rewriting.

Thanks to the strong equivalence between \crwl\ and let-rewriting we can
choose the most appropriate point of view for each of the two goals (soundness
and completeness): we will use  let-rewriting for proving  soundness, and
\crwl\ for defining the property of determinism and proving that, under
determinism, completeness of let-rewriting wrt. term rewriting also holds.

\subsection{Soundness of let-rewriting wrt. classical \nuevo{term} rewriting}
\label{letsound}


In order to relate let-rewriting to term rewriting, we first need to find a way for term rewriting to cope with let-bindings, which are not supported by its syntax, that is only able to handle expressions from $Exp$. Therefore,
we define the following syntactic transformation from $LExp$ into $Exp$ 
that takes care of
removing the let constructions, thus losing the
sharing information they provide.
\begin{definition}[Let-binding elimination transformation]\label{DefLetElimTrans}
Given $e \in LExp$ we define its transformation into a 
let-free expression $\tlr{e} \in Exp$ as:
$$
\begin{array}{l}
  \tlr{X}=_{\it def} X\\
  \tlr{h(e_1, \ldots, e_n)}=_{\it def} h(\tlr{e_1}, \ldots, \tlr{e_n})\\
  \tlr{let~X=e_1~in~e_2}=_{\it def} \tlr{e_2}[X/\tlr{e_1}]
\end{array}
$$
\end{definition}

Note that $\tlr{e} \equiv e$ for any $e \in Exp$.

We will need also the  following auxiliary lemma  showing 
the interaction between term rewriting derivations and substitution application.

\begin{lemma}[Copy lemma]\label{lemNoComp}
For all $e, e_1, e_2 \in Exp$, $X \in \var$:
\begin{enumerate}
    \item[i)]  $e_1 \rw e_2$ implies $e[X/e_1] \rw^* e[X/e_2]$.
    \item[ii)]  $e_1 \rw^* e_2$ implies $e[X/e_1] \rw^* e[X/e_2]$.
\end{enumerate}
\end{lemma}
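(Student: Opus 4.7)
The plan is to prove part \emph{i)} by structural induction on $e$, then derive part \emph{ii)} by induction on the length of the rewriting derivation, using \emph{i)} at each step.

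For part \emph{i)}, the base cases are immediate: if $e \equiv X$ then $e[X/e_1] \equiv e_1$ and $e[X/e_2] \equiv e_2$, so the single step $e_1 \rw e_2$ witnesses $e[X/e_1] \rw^* e[X/e_2]$; if $e \equiv Y$ for $Y \not\equiv X$, both sides are equal to $Y$ and we use zero steps. The interesting case is $e \equiv h(e'_1, \ldots, e'_n)$ with $h \in \Sigma^n$. By induction hypothesis, $e'_i[X/e_1] \rw^* e'_i[X/e_2]$ for every $i$. Using the context ${\cal C}_i \equiv h(e'_1[X/e_2], \ldots, [\ ], \ldots, e'_n[X/e_1])$ and the closure of $\rw$ under contexts, we can concatenate these derivations position by position to obtain
$$
h(e'_1[X/e_1], \ldots, e'_n[X/e_1]) \rw^* h(e'_1[X/e_2], \ldots, e'_n[X/e_2]),
$$
which is exactly $e[X/e_1] \rw^* e[X/e_2]$.

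For part \emph{ii)}, I would proceed by induction on $n$ where $e_1 \rw^n e_2$. If $n = 0$ then $e_1 \equiv e_2$, hence $e[X/e_1] \equiv e[X/e_2]$ and we are done in zero steps. If $n > 0$, decompose the derivation as $e_1 \rw e' \rw^{n-1} e_2$. By part \emph{i)} applied to $e_1 \rw e'$ we have $e[X/e_1] \rw^* e[X/e']$, and by induction hypothesis applied to $e' \rw^{n-1} e_2$ we get $e[X/e'] \rw^* e[X/e_2]$. Concatenating the two derivations gives $e[X/e_1] \rw^* e[X/e_2]$.

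No real obstacles are expected: the result is a classical property of term rewriting, and it goes through because $\rw^*$ is closed under arbitrary contexts, so each of the (possibly many) copies of $e_1$ sitting at the free occurrences of $X$ in $e$ can be rewritten independently. The only subtlety worth flagging is that a single $\rw$ step on the left does \emph{not} in general lift to a single $\rw$ step on the right ---it lifts to $\rw^k$ where $k$ is the number of occurrences of $X$ in $e$--- which is exactly why the statement uses $\rw^*$ rather than $\rw$ on the right-hand side.
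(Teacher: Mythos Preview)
Your proposal is correct and follows essentially the same approach as the paper: structural induction on $e$ for part \emph{i)}, with the same base cases and the same argument-by-argument concatenation in the inductive step. For part \emph{ii)} the paper simply remarks that the proof ``follows the same structure'' (i.e., another structural induction on $e$), whereas you derive it by induction on the length of the derivation using part \emph{i)}; both routes are standard and equally straightforward here.
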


Note how in {\it i)}, each of the different copies of $e_1$ introduced in $e$ by the substitution has to be reduced to $e_2$ in a different term rewriting step in order to reach the expression $e[X/e_2]$.


Using 
this lemma  we can get a first soundness result stating that the result of one let-rewriting step
can also be obtained in zero or more steps of ordinary
rewriting, after erasing the sharing information by means of the 
let-binding elimination transformation.

\begin{lemma}[One-Step Soundness of let-rewriting wrt. term rewriting]\label{LemLSoundRw}
For all $e, e' \in LExp$ we have that $e \f e'$ implies $\tlr{e} \rw^* \tlr{e}'$.
\end{lemma}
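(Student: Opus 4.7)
My plan is to proceed by induction on the derivation of $e \f e'$, doing a case analysis on the outermost rule of Figure \ref{letrcalc}. Before starting, I would record a small auxiliary substitution lemma for $\tlr{\cdot}$: assuming the variable convention (so that $X \notin BV(e) \cup \bigcup_{Y \in var(s)}\{Y\}$), one has $\tlr{e[X/s]} \equiv \tlr{e}[X/\tlr{s}]$ and $FV(\tlr{e}) \subseteq FV(e)$. Both are routine structural inductions on $e$, and together they justify commuting $\tlr{\cdot}$ past the substitutions and bindings that appear in the other rules. Under the standing assumption that programs contain no lets on right-hand sides, $\tlr{\cdot}$ acts as the identity on both $r$ and the c-terms in the range of any $\theta\in CSubst$.

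\paragraph{Base cases (no Contx).}
For \crule{Fapp} $f(\overline{p})\theta \f r\theta$, the two sides are already let-free (since $\theta \in CSubst$ and $r$ has no lets), so $\tlr{f(\overline{p})\theta} \equiv f(\overline{p})\theta \rw r\theta \equiv \tlr{r\theta}$ in a single ordinary rewrite step. For \crule{LetIn}, \crule{Bind}, \crule{Elim} and \crule{Flat}, I would use the auxiliary substitution lemma to show that in fact $\tlr{e} \equiv \tlr{e'}$, so that zero rewrite steps suffice. For instance, for \crule{Bind} we have $\tlr{let~X=t~in~e} \equiv \tlr{e}[X/t] \equiv \tlr{e[X/t]}$ since $t\in CTerm$ implies $\tlr{t}\equiv t$; for \crule{Elim} the condition $X\notin FV(e_2)$ gives $X\notin FV(\tlr{e_2})$, hence the substitution is vacuous; for \crule{Flat} the condition $Y\notin FV(e_3)\supseteq FV(\tlr{e_3})$ lets the two nested substitutions commute into the same composite; and \crule{LetIn} is immediate once $X$ is chosen fresh.

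\paragraph{The inductive \crule{Contx} case.}
This is where the real work lies, and I would handle it by structural induction on the context $\con$. When $\con \equiv h(\ldots,\con',\ldots)$ the result is immediate from the IH applied inside $h$, and when $\con \equiv let~X=e_1~in~\con'$ (reducing in the body) one applies the substitution $[X/\tlr{e_1}]$ on both sides of $\tlr{\con'[e]} \rw^* \tlr{\con'[e']}$, using closure of $\rw^*$ under substitutions. The critical case is $\con \equiv let~X=\con'~in~e_3$, where let-elimination produces $\tlr{\con[e]} \equiv \tlr{e_3}[X/\tlr{\con'[e]}]$: every free occurrence of $X$ in $\tlr{e_3}$ becomes a separate copy of the reducing subexpression. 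By the IH we have $\tlr{\con'[e]} \rw^* \tlr{\con'[e']}$, and to update all copies simultaneously I invoke the Copy Lemma \ref{lemNoComp}\emph{(ii)} with that derivation, obtaining $\tlr{e_3}[X/\tlr{\con'[e]}] \rw^* \tlr{e_3}[X/\tlr{\con'[e']}] \equiv \tlr{\con[e']}$. The side conditions attached to \crule{Contx} for \crule{Fapp} steps, together with the variable convention, guarantee that this duplication never causes capture of extra variables. The main obstacle I anticipate is precisely this duplication phenomenon; the Copy Lemma is designed exactly for it, so once the auxiliary substitution lemma is in place the argument goes through cleanly.
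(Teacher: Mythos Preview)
Your proposal is correct and follows essentially the same route as the paper's proof: a case analysis on the let-rewriting rule, with the auxiliary identities $\tlr{e[X/s]} \equiv \tlr{e}[X/\tlr{s}]$ and $FV(\tlr{e}) \subseteq FV(e)$ (the paper's Lemma~\ref{TLemTranRw1}) handling the non-\crule{Fapp} base cases in zero steps, and a structural induction on $\con$ for \crule{Contx} whose critical $let~X=\con'~in~e_3$ case is dispatched by the Copy Lemma~\ref{lemNoComp}. The paper even singles out that same context case as the reason why a single $\f$-step may require several $\rw$-steps, exactly as you anticipated.
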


The remaining soundness results follow easily from this lemma. 
The first one shows how we can mimic let-rewriting with term rewriting through the let-binding elimination transformation. 
But then, 
as $\tlr{e} \equiv e$ for any $e \in Exp$, we conclude that for let-free expressions let-rewriting is a subrelation of term rewriting.
\begin{theorem}[Soundness of let-rewriting wrt. term rewriting]\label{TLSoundRw}
For any $e, e' \in LExp$ we have that $e \fe e'$ implies $\tlr{e} \rw^* \tlr{e'}$. As a consequence, if $e,e' \in Exp$ then $e \fe e'$ implies $e \rw^* e'$, i.e., $(\fe \cap \ (Exp \times Exp)) \subseteq \ \rw^*$.
\end{theorem}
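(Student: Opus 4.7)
The plan is to get this result as a straightforward consequence of the already-proved one-step version (Lemma \ref{LemLSoundRw}), lifted to multi-step derivations by induction on the number of let-rewriting steps, and then to specialize to let-free expressions via the observation that the transformation $\widehat{\cdot}$ is the identity on $Exp$.

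More concretely, first I would proceed by induction on $n$ where $e \fc{n} e'$. In the base case $n=0$ we have $e \equiv e'$, hence $\tlr{e} \equiv \tlr{e'}$, so $\tlr{e} \rw^* \tlr{e'}$ holds trivially by reflexivity of $\rw^*$. For the inductive step, suppose $e \fc{n+1} e'$; then there exists $e_1 \in LExp$ with $e \f e_1 \fc{n} e'$. By Lemma \ref{LemLSoundRw} applied to the first step we obtain $\tlr{e} \rw^* \tlr{e_1}$, and by induction hypothesis applied to the remaining $n$ steps we obtain $\tlr{e_1} \rw^* \tlr{e'}$. Transitivity of $\rw^*$ then yields $\tlr{e} \rw^* \tlr{e'}$, as required.

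For the ``as a consequence'' part I would simply observe (as was already remarked immediately after Definition \ref{DefLetElimTrans}) that $\tlr{e} \equiv e$ whenever $e \in Exp$, since the clauses defining $\widehat{\cdot}$ on variables and on $h(\overline{e})$ act homomorphically and the let-clause does not apply. Therefore, for $e, e' \in Exp$ with $e \fe e'$, the main statement just proved gives $e \equiv \tlr{e} \rw^* \tlr{e'} \equiv e'$, and the stated inclusion $(\fe \cap \ (Exp \times Exp)) \subseteq \ \rw^*$ follows immediately.

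There is no genuine obstacle here: all the technical content of the argument is packaged inside Lemma \ref{LemLSoundRw} (and, at its heart, in the Copy Lemma \ref{lemNoComp}, which handles the subtlety that collapsing a let-binding duplicates subexpressions and hence may require several ordinary-rewriting steps to simulate a single let-rewriting step). The only minor care needed is to remember that in the inductive step the number of term-rewriting steps produced is not bounded by the number of let-rewriting steps, so we must rely on transitivity of $\rw^*$ rather than on a step-by-step correspondence.
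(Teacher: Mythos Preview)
Your proposal is correct and follows essentially the same approach as the paper: an induction on the length of the let-rewriting derivation using Lemma \ref{LemLSoundRw} for the inductive step, followed by the observation that $\tlr{e} \equiv e$ for $e \in Exp$ to obtain the consequence. Your added remark about the lack of a step-count bound (due to duplication through collapsed let-bindings) is accurate and matches the reason the paper's Lemma \ref{LemLSoundRw} only concludes $\rw^*$ rather than a single step.
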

\begin{proof}
The first part follows from an immediate induction on the length of the let-derivation, using Lemma \ref{LemLSoundRw} for the inductive step. The rest is obvious taking into account that $e \equiv \tlr{e}$ and $e' \equiv \tlr{e'}$ when $e,e' \in Exp$.
\end{proof}

To conclude this part, we can combine this last result with the equivalence of \crwl\ and let-rewriting, thus getting  the following soundness result for \crwl\ with respect to term rewriting.
\begin{theorem}[Soundness of \crwl\ wrt. term rewriting]\label{TCrwlSoundRw}
For any $e \in Exp$, $t \in CTerm_{\perp}$, if $e \clto t$ then there exists $e' \in Exp$ such that $e \rw^* e'$ and $t \sqsubseteq |e'|$.
\end{theorem}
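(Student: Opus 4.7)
The plan is to derive this result by combining the completeness of let-rewriting with respect to \crwl\ (Theorem \ref{LComp}) with the soundness of let-rewriting with respect to term rewriting (Theorem \ref{TLSoundRw}). Since we are in the let-free setting of this section, both $e$ and $\prog$ contain no lets, so the relevant clauses of those theorems apply directly.

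First, from the hypothesis $\cl e \crwlto t$, I apply Theorem \ref{LComp}\emph{(iii)} to obtain some $e^\ast \in LExp$ with $e \f^\ast e^\ast$ and $t \sqsubseteq |e^\ast|$. Next, Theorem \ref{TLSoundRw} on $e \f^\ast e^\ast$ yields $\tlr{e} \rw^\ast \tlr{e^\ast}$; since $e \in Exp$ we have $\tlr{e} \equiv e$, so setting $e' \equiv \tlr{e^\ast}$ gives $e \rw^\ast e'$. What remains is to upgrade the bound $t \sqsubseteq |e^\ast|$ to $t \sqsubseteq |e'| = |\tlr{e^\ast}|$.

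The core technical step is therefore the auxiliary inequality $|e^\ast| \sqsubseteq |\tlr{e^\ast}|$ for every $e^\ast \in LExp$, which I would prove by structural induction on $e^\ast$. The variable and constructor cases are immediate from the definitions of $|\cdot|$ and $\tlr{\cdot}$ together with monotonicity of $\sqsubseteq$ under contexts, and the function-call case is trivial because both sides equal $\perp$. The interesting case is $e^\ast \equiv let~X = a~in~b$, where $|e^\ast| = |b|[X/|a|]$ while $|\tlr{e^\ast}| = |\tlr{b}[X/\tlr{a}]|$. Here I would first use the induction hypotheses $|a| \sqsubseteq |\tlr{a}|$ and $|b| \sqsubseteq |\tlr{b}|$ together with monotonicity of substitution to reduce the goal to $|\tlr{b}|[X/|\tlr{a}|] \sqsubseteq |\tlr{b}[X/\tlr{a}]|$, and then dispatch that by a short inner induction on the c-term $|\tlr{b}|$, noting that for any $s \in CTerm_\perp$ and $u \in Exp$, $s[X/|u|] \sqsubseteq |s[X/u]|$ (the only non-trivial subcase being $s \equiv X$, where both sides are $|u|$).

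Once $|e^\ast| \sqsubseteq |\tlr{e^\ast}|$ is established, transitivity of $\sqsubseteq$ with $t \sqsubseteq |e^\ast|$ yields $t \sqsubseteq |\tlr{e^\ast}| = |e'|$, completing the proof. The main obstacle, as outlined, is isolating the shell/transformation lemma cleanly; everything else is bookkeeping, since the hard semantic content has already been absorbed by Theorems \ref{LComp} and \ref{TLSoundRw}.
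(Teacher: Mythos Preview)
Your proof is correct and follows essentially the same route as the paper: completeness of let-rewriting (Theorem~\ref{LComp}) to get $e \fe e^\ast$ with $t \sqsubseteq |e^\ast|$, then soundness of let-rewriting with respect to term rewriting (Theorem~\ref{TLSoundRw}) to get $e \rw^* \tlr{e^\ast}$, and finally a comparison of shells. The only difference is in that last step: you prove the inequality $|e^\ast| \sqsubseteq |\tlr{e^\ast}|$ by a nested induction, whereas the paper simply uses the \emph{equality} $|e^\ast| = |\tlr{e^\ast}|$, which is Lemma~\ref{TLemTranRw1}\emph{(i)}. That equality follows directly from Lemma~\ref{LCasc1} ($|e_1[X/e_2]| \equiv |e_1|[X/|e_2|]$) in the let case, so your inner induction is in fact establishing an identity rather than merely an inequality, and the argument could be shortened accordingly.
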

\begin{proof}
Assume $e \clto t$.
By Theorem \ref{LComp}, there exists $e'' \in LExp$ such that $e \fe e''$ and $t \sqsubseteq |e''|$. Then, by Theorem \ref{TLSoundRw}, we have $\tlr{e}
\rw^* \tlr{e''}$. As $e\in Exp$, we have $e \equiv \tlr{e}$ and we can choose $e' \equiv \tlr{e''}\in Exp$ so we get
$e \rw^* e'$. It is easy to check that $|e''|=|\tlr{e''}|$
and then we have $t\sqsubseteq |e''| = | \tlr{e''} | = |e'|$.
\end{proof}

\subsection{Completeness of \crwl\ wrt. classical \nuevo{term} re\-writing}
\label{letcompl}

We prove here the completeness  of the \crwl\ framework \wrt\ term rewriting  for the class of \crwl-deterministic programs, which are defined as follows.

\begin{definition}[\crwl-deterministic program]\label{DefCrwlDetProg}
A program $\prog$ is \emph{\crwl-deterministic} iff for any expression $e \in Exp_{\perp}$ its denotation $\den{e}^{\cal P}$  is a directed set. In other words,  iff for all $e \in Exp_{\perp}$ and $t_1, t_2\in \den{e}^{\cal P}$, there exists $t_3 \in \den{e}^{\cal P}$ with $t_1 \sqsubseteq t_3$ and $t_2 \sqsubseteq t_3$.
\end{definition}

Thanks to the equivalence of \crwl\ and let-rewriting, it is  easy to characterize \crwl-determinism also in terms of let-rewriting derivations.

\begin{lemma}\label{detlet}
A program $\prog$ is \crwl-deterministic iff for any $e \in Exp$, $e', e'' \in LExp$ with $\prog \vdash e \fe e'$ and $\prog \vdash e \fe e''$ there exists $e''' \in LExp$ such that $\prog \vdash e \fe e'''$ and $|e'''| \sqsupseteq |e'|,|e'''| \sqsupseteq |e''|$.
%
\end{lemma}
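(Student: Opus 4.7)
My plan is to exploit the strong equivalence between \crwll\ and let-rewriting established in Theorem \ref{TheorEquivLetRwDown}, which states that $\den{e} = \{|e'| \mid e \fe e'\}\!\!\downarrow$. This identifies semantic values up to $\sqsubseteq$ with shells of let-rewriting reducts, which is exactly the bridge we need to translate between the semantic notion of directedness and the operational statement of the lemma.

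For the $(\Rightarrow)$ direction, assume $\prog$ is \crwl-deterministic and take $e \in Exp$ with $e \fe e'$ and $e \fe e''$. By Theorem \ref{TheorEquivLetRwDown} both $|e'|$ and $|e''|$ lie in $\den{e}$; directedness supplies a common upper bound $t \in \den{e}$ with $|e'|, |e''| \sqsubseteq t$. Applying Theorem \ref{TheorEquivLetRwDown} once more (now in the `completeness' direction), $t$ is dominated by the shell of some reduct, so there is $e'''$ with $e \fe e'''$ and $t \sqsubseteq |e'''|$. By transitivity $|e'|, |e''| \sqsubseteq |e'''|$, which is the required property.

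For the $(\Leftarrow)$ direction the main subtlety is that the operational hypothesis only mentions $e \in Exp$, whereas \crwl-determinism asks about all $e \in Exp_{\perp}$. To bridge this gap, given $e \in Exp_{\perp}$ and $t_1, t_2 \in \den{e}$, I would replace every occurrence of $\perp$ in $e$ by a fresh variable, obtaining $e^* \in Exp$ such that $e = e^*[\overline{X}/\overline{\perp}]$ and $e \sqsubseteq e^*$. By polarity of \crwll\ (Proposition \ref{propCrwlletPolar}) we get $t_1, t_2 \in \den{e^*}$. Theorem \ref{TheorEquivLetRwDown} then produces reducts $e^* \fe e_1$ and $e^* \fe e_2$ with $t_i \sqsubseteq |e_i|$. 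Since $e^* \in Exp$, the operational hypothesis yields $e_3$ with $e^* \fe e_3$ and $|e_1|, |e_2| \sqsubseteq |e_3|$; by Theorem \ref{TheorEquivLetRwDown} again, $|e_3| \in \den{e^*}$.

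The final step is to push $|e_3|$ back into $\den{e}$. By closedness of \crwll\ under c-substitutions (Proposition \ref{closednessCSubst}), $|e_3|[\overline{X}/\overline{\perp}] \in \den{e^*[\overline{X}/\overline{\perp}]} = \den{e}$. Because the variables $\overline{X}$ were chosen fresh, they do not occur in $t_1$ or $t_2$, so $t_i = t_i[\overline{X}/\overline{\perp}]$; monotonicity of substitution application wrt $\sqsubseteq$ then gives $t_i \sqsubseteq |e_3|[\overline{X}/\overline{\perp}]$. Thus $|e_3|[\overline{X}/\overline{\perp}]$ is the required common upper bound in $\den{e}$, establishing directedness. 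The only delicate point in the whole argument is this lifting from $Exp$ to $Exp_{\perp}$; once the fresh-variable trick is in place, everything else is assembly of previously proved soundness/completeness facts.
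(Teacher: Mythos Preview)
Your proposal is correct and, in its overall structure, matches the paper's proof: both directions are obtained by shuttling between $\den{e}$ and the set of shells of let-rewriting reducts via the soundness/completeness results (the paper cites Theorems \ref{LSound} and \ref{LComp} directly, while you use their packaged form, Theorem \ref{TheorEquivLetRwDown}).

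The one genuine difference is in the $(\Leftarrow)$ direction. The paper's proof simply starts ``assume $e\in Exp$ with $t_1,t_2\in\den{e}$'' and never addresses the fact that Definition \ref{DefCrwlDetProg} quantifies over all $e\in Exp_\perp$; it tacitly treats directedness for $e\in Exp$ as sufficient. You noticed this gap and close it with the fresh-variable lifting: replace each $\perp$ in $e$ by a fresh variable to obtain $e^*\in Exp$, use polarity (Proposition \ref{propCrwlletPolar}) to transport $t_1,t_2$ into $\den{e^*}$, run the operational argument there, and then push the resulting upper bound back into $\den{e}$ via closedness under c-substitutions (Proposition \ref{closednessCSubst}) applied with $[\overline{X}/\overline{\perp}]$, exploiting freshness so that $t_i[\overline{X}/\overline{\perp}]=t_i$. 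This is a sound and standard manoeuvre, and it makes your $(\Leftarrow)$ argument strictly more complete than the one printed in the paper.
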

\begin{proof}
For the left to right implication, assume a \crwl-deterministic program $\prog$ and $e\in Exp$, $e',e''\in LExp$ with $e\fe e'$ and $e\fe e''$. By part $iii)$ of Theorem \ref{LSound} we have $|e'|,|e''|\in \den{e}$ and then by Definition \ref{DefCrwlDetProg} there exists $t\in \den{e}$ such that $|e'|,|e''|\sqsubseteq t$. Now, by part $iii)$ of Theorem \ref{LComp} there exists $e'''\in LExp$ such that $e\fe e'''$ and $t\sqsubseteq |e'''|$, so we have $|e'|,|e''|\sqsubseteq t\sqsubseteq |e'''|$ as expected.

Regarding the converse implication, assume $e\in Exp$ with $t_1,t_2\in \den{e}$. By part $iii)$ of Theorem \ref{LComp} there exist $e',e''\in LExp$ such that $e\fe e'$, $e\fe e''$ and $t_1\sqsubseteq |e'|$, $t_2\sqsubseteq |e''|$. Then by hypothesis there exists $e'''\in LExp$ such that $e\fe e'''$ and $|e'|,|e''|\sqsubseteq |e'''|$. Now, by part $iii)$ of Theorem \ref{LSound} we have $|e'''|\in \den{e}$ and this $|e'''|$ is the $t_3$ of Definition  \ref{DefCrwlDetProg} we are looking for, i.e., $t_3\in \den{e}$ and $t_1,t_2\sqsubseteq t_3$.
%
%
%
\end{proof}

\crwl-determinism is intuitively close to confluence of term rewriting, 
but these two properties are not equivalent, as shown by the following example of 
a \crwl-deterministic but not confluent program.
\begin{example}
Consider the program $\cal P$ given by the 
rules \[f \ra a\hh5 f \ra \mathit{loop}\hh5 \mathit{loop}\ra \mathit{loop}\]
where $a$ is a constructor. It is clear that $\rw_{\prog}$ is not confluent
($f$ can be reduced to $a$ and \textit{loop}, which cannot be joined into a common reduct),
but it is \crwl-deterministic, since
$\den{f}^{\cal P} = \{\perp,a\}$, $\den{\mathit{loop}}^{\cal P} = \{\perp\}$ and $\den{a}^{\cal P} = \{\perp,a\}$,
which are all directed sets.
\end{example}

We conjecture  that the reverse implication is true, i.e., that confluence of term rewriting implies \crwl-determinism.
Nevertheless, a precise proof for this fact seems surprisingly complicated and we have not yet completed it.
A key ingredient in our completeness proof is the notion of \crwl-denotation of a substitution, which is the set of c-substitutions whose range can be obtained by \crwl-reduction over the range of the starting expression.

\begin{definition}[\crwl-denotation for a substitution]
Given a program $\prog$, the \crwl-denotation of a $\sigma \in Subst_{\perp}$ is defined as:
$$
\den{\sigma}^{\prog}_{\it CRWL} = \{\theta\! \in\! CSubst_{\perp}~|~\forall X \in \var,~\cl \sigma(X) \clto \theta(X) \}
$$
We will usually omit the subscript \crwl\ and/or the superscript $\prog$ when implied by the context.
\end{definition}

Any substitution $\theta$ in the denotation of some substitution $\sigma$ contains less information than $\sigma$, because it only holds in its range a finite part of the possibly infinite denotation of the expressions in the range of $\sigma$. We 
formalize this property in the following result.
%


\begin{proposition}\label{propDenSubstElemsDsord}
For all $\sigma \in Subst_\perp$, $\theta \in \den{\sigma}$, we have that $\theta \dsord \sigma$.
\end{proposition}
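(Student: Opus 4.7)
The plan is to unfold definitions and reduce the claim to a single application of the polarity property of $\crwllet$ (Proposition \ref{propCrwlletPolar}). Fix an arbitrary $X \in \var$; it suffices to show $\den{\theta(X)} \subseteq \den{\sigma(X)}$. So pick any $t \in \den{\theta(X)}$ and aim to derive $\sigma(X) \clto t$.

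The first key observation is that $\theta(X) \in CTerm_\perp$ because $\theta \in CSubst_\perp$. Hence, by Lemma \ref{lemmashells} $i)$, the statement $\theta(X) \clto t$ is equivalent to $t \ordap \theta(X)$. So from $t \in \den{\theta(X)}$ I immediately obtain $t \ordap \theta(X)$.

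The second ingredient is the definition of $\den{\sigma}$: since $\theta \in \den{\sigma}$, by definition $\sigma(X) \clto \theta(X)$. Now I apply the polarity property of \crwll\ (Proposition \ref{propCrwlletPolar}) with the trivial inequality $\sigma(X) \ordap \sigma(X)$ and the refinement $t \ordap \theta(X)$: from $\sigma(X) \clto \theta(X)$ we conclude $\sigma(X) \clto t$, i.e.\ $t \in \den{\sigma(X)}$. Since $X$ and $t$ were arbitrary, this yields $\forall X \in \var.~\den{\theta(X)} \subseteq \den{\sigma(X)}$, which is precisely $\theta \dsord \sigma$.

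No serious obstacle is expected: the proof is essentially a two-line composition of Lemma \ref{lemmashells} $i)$ (c-terms reduce only to their approximations) and the polarity property of \crwll. The only point to be a bit careful about is noting that, although $\dsord$ was introduced over $LSubst_\perp$, the definition applies verbatim to $Subst_\perp$, so the statement of the proposition is meaningful as written.
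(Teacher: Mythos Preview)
Your proof is correct and follows essentially the same approach as the paper: use Lemma~\ref{lemmashells}~i) to turn $\theta(X)\clto t$ into $t\ordap\theta(X)$, combine with $\sigma(X)\clto\theta(X)$ from the definition of $\den{\sigma}$, and conclude via polarity (Proposition~\ref{propCrwlletPolar}). The paper additionally splits on whether $X\in dom(\theta)$, but that case distinction is unnecessary---your uniform argument already covers both cases since $\theta(X)\in CTerm_\perp$ holds regardless.
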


%
Besides, we will use the notion of deterministic substitution, which is a substitution with only deterministic expressions in its range.
\begin{definition}[Deterministic substitution]
The set $DSubst_{\perp}$ of \emph{deterministic substitutions} for a given program $\prog$ is defined as
$$
DSubst_\perp = \{\sigma \in Subst_\perp~|~\forall X \in dom(\sigma). \den{\sigma(X)} \mbox{ is a directed set}\}
$$
\end{definition}
Then $CSubst_\perp \subseteq DSubst_\perp$, and under any program $\forall \sigma \in Subst_\perp. \den{\sigma} \subseteq CSubst_\perp \subseteq DSubst_\perp$. 
Note that the determinism of substitutions depends on the program, which gives meaning to the functions in its range. Obviously if a program is deterministic then $Subst_\perp = DSubst_\perp$.\\

A good thing about deterministic substitutions is that their denotation is always a directed set.

\begin{proposition}\label{auxDenSubs2}
For all $\sigma \in DSusbt_{\perp}$, $\den{\sigma}$ is a directed set.
\end{proposition}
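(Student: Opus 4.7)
The plan is to prove directedness pointwise, building the required upper bound one variable at a time from the hypothesis that each $\den{\sigma(X)}$ is directed. Fix $\theta_1, \theta_2 \in \den{\sigma}$. By definition of $\den{\sigma}$, for every $X \in \var$ both $\theta_1(X)$ and $\theta_2(X)$ belong to $\den{\sigma(X)}$. The key observation is that $\den{\sigma(X)}$ is directed for \emph{every} variable $X$, not just those in $dom(\sigma)$: for $X \in dom(\sigma)$ this is the hypothesis $\sigma \in DSubst_\perp$, and for $X \notin dom(\sigma)$ we have $\sigma(X) \equiv X$, so $\den{\sigma(X)} = \den{X} = \{X,\perp\}$, which is a chain and thus trivially directed.

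Next, for each $X$ I would select an element $t_X \in \den{\sigma(X)}$ such that $\theta_1(X), \theta_2(X) \sqsubseteq t_X$, using directedness. The one technical subtlety is that $\theta_3$ must be a (finite) substitution, so I cannot just pick upper bounds freely; I need $dom(\theta_3)$ to be finite. This is arranged by choosing $t_X \equiv X$ whenever $X \notin dom(\theta_1) \cup dom(\theta_2)$. That choice is legal because in this case $\theta_1(X) \equiv \theta_2(X) \equiv X$, so $X$ is an upper bound of $\{\theta_1(X),\theta_2(X)\}$, and moreover $X \in \den{\sigma(X)}$ since the assumption $\theta_1 \in \den{\sigma}$ already gives $\sigma(X) \crwlto X$.

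Finally, define $\theta_3(X) = t_X$. Then $\theta_3$ is a partial c-substitution with $dom(\theta_3) \subseteq dom(\theta_1) \cup dom(\theta_2)$ (hence finite), it belongs to $\den{\sigma}$ by construction (since $\sigma(X) \crwlto \theta_3(X)$ for all $X$), and $\theta_1 \sqsubseteq \theta_3$, $\theta_2 \sqsubseteq \theta_3$ pointwise. This exhibits the required upper bound and proves that $\den{\sigma}$ is directed.

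The whole argument is essentially a pointwise lift of directedness from $\den{\sigma(X)}$ to $\den{\sigma}$; the only place where one has to be careful is the finite-domain bookkeeping, which is easily handled by the canonical choice $t_X \equiv X$ outside $dom(\theta_1)\cup dom(\theta_2)$. No further machinery (compositionality, polarity, etc.) is needed beyond the definitions themselves.
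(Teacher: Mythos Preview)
Your proof is correct and follows essentially the same pointwise construction as the paper's own proof: both observe that $\den{\sigma(X)}$ is directed for every $X$ (splitting on whether $X \in dom(\sigma)$, with $\den{X}=\{X,\perp\}$ in the other case) and build $\theta_3$ by picking an upper bound in each $\den{\sigma(X)}$. Your explicit bookkeeping to keep $dom(\theta_3)$ finite is a detail the paper glosses over, but it is indeed required since substitutions are defined as finite mappings.
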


But the fundamental property of deterministic substitutions is that, for any \crwl-statement starting from an instance of an expression that has been constructed using a deterministic substitution, there is another \crwl-statement to the same value from another instance of the same expression that now has been built using a c-substitution taken from the denotation of the starting substitution. This property is a direct consequence of Proposition \ref{auxDenSubs2}.

\begin{lemma}\label{auxDenSubs3}
For all $\sigma \in DSusbt_{\perp}$, $e \in Exp_{\perp}, t \in CTerm_{\perp}$,
$$
\mbox{if } e\sigma \clto t \mbox{ then } \exists \theta \in \den{\sigma} \mbox{ such that } e\theta \clto t
$$
\end{lemma}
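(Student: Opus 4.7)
The plan is to proceed by induction on the structure of the \crwl-proof of $e\sigma \clto t$, analyzing the rule applied at its root. The conceptual core is that a single $\theta$ must simultaneously support all the independent reductions performed on the various copies of $\sigma(X)$ that arise at different occurrences of variables of $e$ in $e\sigma$; directedness of $\den{\sigma}$, established precisely by Proposition \ref{auxDenSubs2}, will allow me to merge finitely many substitutions coming from the inductive hypotheses into a single upper bound inside $\den{\sigma}$, as the lemma's claim that the result follows directly from Proposition \ref{auxDenSubs2} suggests.

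In the base cases \crule{B} and \crule{RR} the choice of $\theta$ is immediate: for \crule{B} one has $t = \bot$, so it suffices to take $\theta$ mapping every variable to $\bot$ and close $\theta \in \den{\sigma}$ by one use of \crule{B} on each $\sigma(Y)$; for \crule{RR} one has $e\sigma$ a variable, hence $e$ is some variable $Y$ with $\sigma(Y)$ itself a variable $X = t$, and taking $\theta(Y) = X$ works with $\sigma(Y) \clto \theta(Y)$ via \crule{RR}. In the inductive cases \crule{DC} and \crule{OR} I will distinguish whether the compound shape of $e\sigma$ comes from $e$ itself or from $\sigma$: when $e$ is a variable $X$ and $\sigma(X)$ is the compound expression, the entire given derivation is already a proof of $\sigma(X) \clto t$, so $t \in \den{\sigma(X)}$ and it suffices to set $\theta(X) = t$ and $\theta(Y) = \bot$ elsewhere.

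The interesting case is when the root symbol of $e$ agrees with the root of $e\sigma$, so the premises of the last rule have the form $e_i \sigma \clto t_i$ (for \crule{DC}) or $e_i\sigma \clto p_i\mu$ together with $r\mu \clto t$ (for \crule{OR}). I will apply the induction hypothesis to each such premise arising from an instance of $\sigma$, obtaining substitutions $\theta_1, \ldots, \theta_n \in \den{\sigma}$ witnessing each subproof (the premise $r\mu \clto t$ in \crule{OR} does not involve $\sigma$ and is simply kept). Directedness of $\den{\sigma}$ then yields some $\theta \in \den{\sigma}$ with $\theta_i \ordap \theta$ for all $i$; since substitution application is monotonic under $\ordap$, this gives $e_i\theta_i \ordap e_i\theta$, and polarity of \crwl\ (Proposition \ref{propCrwlletPolar}) upgrades each subproof $e_i\theta_i \clto t_i$ into a subproof $e_i\theta \clto t_i$. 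Reapplying the very same rule \crule{DC} or \crule{OR} (with the unchanged matching substitution $\mu$ in the \crule{OR} case) delivers the required proof $e\theta \clto t$.

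The main bookkeeping obstacle will be keeping clear the distinction between the case where the top symbol of $e\sigma$ comes from $e$ and the case where it comes from $\sigma$, together with the mild verification that $\theta$ produced by directedness still lies in $\den{\sigma}$ rather than merely above each $\theta_i$ pointwise; once these points are sorted, directedness plus polarity really do carry the argument through.
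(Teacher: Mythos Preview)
Your proposal is correct and follows essentially the same approach as the paper: a case analysis on whether $e$ is a variable (handled by directly constructing $\theta$ with $\theta(X)=t$ and $\bot$ elsewhere on $dom(\sigma)$), and otherwise an induction on the \crwl-proof where in the \crule{DC}/\crule{OR} cases the induction hypothesis yields $\theta_i\in\den{\sigma}$ that are then merged into a common upper bound via Proposition~\ref{auxDenSubs2} and lifted using monotonicity. The only cosmetic differences are that the paper separates the variable case before starting the induction (you handle it inside each rule case), and the paper invokes Proposition~\ref{PropMonSubstCrwlLet} directly rather than combining pointwise monotonicity of substitution application with polarity; both routes are equivalent.
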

\begin{proof}[Proof (sketch)]
We proceed by a case distinction over $e$. If $e \equiv X \in dom(\sigma)$ then we have $e\sigma \equiv \sigma(X) \clto t$, and we can define
$$
\theta(Y) = \left \{\begin{array}{ll}

                     t & \mbox{ if } Y \equiv X \\
                     \perp & \mbox{ if } Y \in dom(\sigma) \setminus \{X\} \\

                     Y & \mbox{ otherwise }\\
                    \end{array}\right.
$$
Then it is easy to check that $\theta \in \den{\sigma}$ and besides $e\theta \equiv \theta(X) \equiv t \clto t$ by Lemma \ref{lemmashells}, so we are done. If $e \equiv X \in \var \setminus dom(\sigma)$ then we have $e\sigma \equiv \sigma(X) \equiv X \clto t$, and given $\overline{Y} = dom(\sigma)$ it is easy to check that $[\overline{Y/\perp}] \in \den{\sigma}$, and besides $e[\overline{Y/\perp}] \equiv X \clto t$ by hypothesis.

Finally if $e \not\in \var$ we proceed by induction on the structure of the
proof for $e\sigma \clto t$. The interesting cases are those for \crule{DC} and
\crule{OR} where we use that $\sigma \in DSusbt_{\perp}$, so by Proposition \ref{auxDenSubs2} its denotation
is directed. Then there must exist some $\theta
\in \den{\sigma}$ which is greater than each of the $\theta_i$ obtained by
induction hypothesis over the premises of the starting \crwl-proof for $e\sigma
\clto t$. Using the monotonicity of Proposition \ref{PropMonSubstCrwlLet} we can
prove $e\theta \clto t$, which also holds for
\crwl, by Theorem \ref{thEquivCrwlCrwllet} (see \ref{proofs}, page~\pageref{DEMO_auxDenSubs3} for details).
%
\end{proof}

Now we are finally ready to prove our first completeness result of \crwl\ wrt. term rewriting, for deterministic programs.
\begin{lemma}[Completeness lemma for \crwl\ wrt. term rewriting]\label{LCompCdRw}
Let $\prog$ be a \crwl-deterministic program, and  $e,e' \in Exp$. Then:
$$
e \rw^* e' \mbox{ implies } \den{e'} \subseteq \den{e}
$$
\end{lemma}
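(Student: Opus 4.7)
The plan is to reduce the claim to its single-step version and then extend by a trivial induction on the length of the derivation. The base case $e \rw^0 e$ is immediate, so all the work sits in the one-step lemma, which combines compositionality of \crwl\ (Theorem~\ref{thCompoCrwl}) with the key auxiliary Lemma~\ref{auxDenSubs3} that exploits determinism.

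For the one-step case, suppose $e = \con[f(\overline{p})\sigma] \rw \con[r\sigma] = e'$ via some context $\con$, a program rule $f(\overline{p}) \ra r \in \prog$, and a substitution $\sigma \in Subst$. Given any $t \in \den{e'}$, Theorem~\ref{thCompoCrwl} yields some $s \in \den{r\sigma}$ with $t \in \den{\con[s]}$. It therefore suffices to prove $s \in \den{f(\overline{p})\sigma}$, since a second application of compositionality then gives $\den{\con[s]} \subseteq \den{\con[f(\overline{p})\sigma]} = \den{e}$, hence $t \in \den{e}$.

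To obtain $s \in \den{f(\overline{p})\sigma}$, I would invoke \crwl-determinism of $\prog$, which forces $Subst_\perp = DSubst_\perp$ and so $\sigma \in DSubst_\perp$. Applying Lemma~\ref{auxDenSubs3} to $r\sigma \crwlto s$ produces a $\theta \in \den{\sigma} \subseteq CSubst_\perp$ with $r\theta \crwlto s$. Since $\theta \in \den{\sigma}$ means $\sigma(X) \crwlto \theta(X)$ for every variable $X$, a short induction on the structure of each linear c-term pattern $p_i$ (lifting variable statements through \crule{DC}) yields $p_i\sigma \crwlto p_i\theta$. One application of \crule{OR} with the rule $f(\overline{p}) \ra r$ and the c-substitution $\theta$ then derives $f(\overline{p})\sigma \crwlto s$, closing the one-step case.

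The crux of the argument—and what makes the result specific to deterministic programs—is the appeal to Lemma~\ref{auxDenSubs3}. Without determinism, distinct occurrences of a variable $X$ in $r$ could be refined to incompatible values of $\sigma(X)$ inside a \crwl-proof of $r\sigma \crwlto s$, and no single c-substitution $\theta$ would simultaneously witness those values; this is exactly the call-time versus run-time choice mismatch already exhibited in Proposition~\ref{exNoTransTrsCrwl}. Determinism forces each $\den{\sigma(X)}$ to be directed, so Proposition~\ref{auxDenSubs2} supplies one coherent $\theta \in \den{\sigma}$ that handles all occurrences at once, and it is precisely this coherence that powers the \crule{OR} step above.
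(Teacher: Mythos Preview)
Your proposal is correct and follows essentially the same route as the paper: reduce to a single step, use \crwl-determinism to invoke Lemma~\ref{auxDenSubs3} on $r\sigma \crwlto s$, apply \crule{OR}, and handle the surrounding context via compositionality (Theorem~\ref{thCompoCrwl}). The only minor variation is in how you establish $f(\overline{p})\sigma \crwlto s$: you prove the premises $p_i\sigma \crwlto p_i\theta$ directly by a structural induction on the patterns, whereas the paper first derives $f(\overline{p})\theta \crwlto s$ (using the trivial premises $p_i\theta \crwlto p_i\theta$) and then lifts to $f(\overline{p})\sigma$ via Proposition~\ref{propDenSubstElemsDsord} ($\theta \dsord \sigma$) and monotonicity for substitutions (Proposition~\ref{PropMonSubstCrwlLet}); both arguments are equally valid and of comparable length.
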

\begin{proof}
We can just prove this result for $e \rw e'$, then its extension for an arbitrary number of term rewriting steps holds by a simple induction on the length of the term rewriting derivation, using transitivity of $\subseteq$.

Assume $e \rw e'$, then the step must be of the shape $e \equiv \con[f(\overline{p})\sigma] \rw \con[r\sigma] \equiv e'$ for some program rule $(f(\overline{p}) \tor r) \in \prog$, $\sigma \in Subst$. First, let us focus on the case for $\con = [\ ]$, and then assume some $t \in CTerm_\perp$ such that $\cl e' \equiv r\sigma \clto t$. As $\prog$ is deterministic then $\sigma \in DSubst_\perp$, therefore by Lemma \ref{auxDenSubs3} there must exist some $\theta \in \den{\sigma}$ such that $\cl r\theta \clto t$. But then we can use $\theta$ to build the following \crwl-proof.
$$
\infer[OR]{f(\overline{p})\theta \clto t}
{
\ldots
\ p_i\theta \clto p_i\theta
\ldots
\ r\theta \clto t
}
$$
where for each $p_i \in \overline{p}$ we have $\cl p_i\theta \clto p_i\theta$ by Lemma \ref{lemmashells}, as $p_i \in CTerm$ because $\prog$ is a constructor system, and so $p_i\theta \in CTerm_\perp$, as $\theta \in \den{\sigma} \subseteq CSubst_\perp$. But we also have $\theta \dsord \sigma$ by Proposition \ref{propDenSubstElemsDsord}, therefore by applying the monotonicity for substitutions from 
Proposition \ref{PropMonSubstCrwlLet} ---which also holds for \crwl, by Theorem \ref{thEquivCrwlCrwllet}--- we get $\cl e \equiv f(\overline{p})\sigma \clto t$. Hence $\den{e'} = \den{r\sigma} \subseteq \den{f(\overline{p})\sigma} = \den{e}$.

Finally, we can generalize this result to arbitrary contexts by using the compositionality of \crwl\ from Theorem \ref{thCompoCrwl}. Given a term rewriting step $e \equiv \con[f(\overline{p})\sigma] \rw \con[r\sigma] \equiv e'$ then by the proof for $\con = [\ ]$ we get $\den{r\sigma} \subseteq \den{f(\overline{p})\sigma}$, but then
$$
\begin{array}{ll}
\den{e'} = \den{\con[r\sigma]} \\
= \bigcup\limits_{t \in \den{r\sigma}} \den{\con[t]} & \mbox{ by Theorem \ref{thCompoCrwl}} \\
\subseteq \bigcup\limits_{t \in \den{f(\overline{p})\sigma}} \den{\con[t]} & \mbox{ as $\den{r\sigma} \subseteq \den{f(\overline{p})\sigma}$} \\
= \den{\con[f(\overline{p})\sigma]} = \den{e} & \mbox{ by Theorem \ref{thCompoCrwl}} \\
\end{array}
$$
\end{proof}

%


The previous lemma, together with the equivalence of \crwl\ and let-rewriting given by Theorem \ref{TheorEquivLetRwDown} 
and Theorem \ref{thEquivCrwlCrwllet}, allows us to obtain a strong relationships between term rewriting, let-rewriting and \crwl, for the class of \crwl-deterministic programs.

\begin{theorem}\label{TAdCrwlRw}
Let $\prog$ be a \crwl-deterministic program, and  $e,e'\in Exp, t \in CTerm$. Then:
\begin{enumerate}
    \item[a)] $e \rw^* e'$ implies $e \fe e''$ for some $e'' \in LExp$ with $|e''| \sqsupseteq |e'|$.
    \item[b)] $e \rw^* t$\ iff\ $e \fe t$\ iff\ $\cl e \clto t$.
\end{enumerate}
\end{theorem}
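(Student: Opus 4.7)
The plan is to derive Theorem \ref{TAdCrwlRw} as a corollary of the heavy lifting already done: Lemma \ref{LCompCdRw} provides the key bridge from term rewriting to \crwl\ semantics under determinism, while Corollary \ref{LCorrTot} and Theorem \ref{TLSoundRw} provide the equivalence of \crwl\ and let-rewriting and the soundness of let-rewriting wrt. term rewriting, respectively. So I expect both parts to follow by chaining these results together, with no fundamentally new argument needed.

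For part $a)$, I would start from the hypothesis $e \rw^* e'$ and apply Lemma \ref{LCompCdRw} to conclude $\den{e'} \subseteq \den{e}$. By Lemma \ref{lemmashells} $ii)$ we have $|e'| \in \den{e'}$, hence $|e'| \in \den{e}$, i.e., $\cl e \clto |e'|$ (using Theorem \ref{thEquivCrwlCrwllet} to switch between \crwl\ and $\crwllet$ since $e$ is let-free). Then Theorem \ref{LComp} $i)$ (in its $\cl$-form) gives some $e'' \in LExp$ with $e \fe e''$ and $|e'| \sqsubseteq |e''|$, which is exactly the conclusion $|e''| \sqsupseteq |e'|$.

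For part $b)$, I would prove the cycle $e \rw^* t \Rightarrow \cl e \clto t \Rightarrow e \fe t \Rightarrow e \rw^* t$. The first implication uses Lemma \ref{LCompCdRw} with $e'\equiv t$: from $\den{t} \subseteq \den{e}$ and $t \in \den{t}$ (again via Lemma \ref{lemmashells} $ii)$, noting $t\in CTerm$) we get $t \in \den{e}$, i.e., $\cl e \clto t$. The second implication is exactly Corollary \ref{LCorrTot} in the let-free direction (\crwl-to-let-rewriting completeness for total c-terms). The third implication is the let-free case of Theorem \ref{TLSoundRw}, which gives $e \rw^* t$ directly since $\tlr{e}\equiv e$ and $\tlr{t}\equiv t$.

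There is essentially no hard step left: Lemma \ref{LCompCdRw} is where determinism is used, and it has already been proved. The only minor care needed is bookkeeping around \crwl\ vs. $\crwllet$ (handled via Theorem \ref{thEquivCrwlCrwllet}, which applies because neither $\prog$ nor $e$ contain lets) and making sure each cited result is used in its let-free specialization. So the theorem truly is, as its placement suggests, a clean packaging of the machinery developed in the previous subsections.
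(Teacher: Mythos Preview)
Your proposal is correct and follows essentially the same approach as the paper: both parts chain Lemma~\ref{LCompCdRw} with Lemma~\ref{lemmashells} to land in $\den{e}$, and then invoke the equivalence of \crwl\ and let-rewriting together with soundness of let-rewriting wrt.\ term rewriting. The only cosmetic difference is that the paper cites Theorem~\ref{TheorEquivLetRwDown} where you cite Theorem~\ref{LComp} and Corollary~\ref{LCorrTot}, but these are equivalent packagings of the same results.
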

%
%
%
%
%

Notice that in part {\it a)} 
we cannot ensure $e \rw^* e'$ implies $e \fe e'$, 
because term rewriting can reach some intermediate
expressions not reachable by let-rewriting. For instance, given the deterministic program with
the rules $g \rw a$ and $f(x) \rw c(x,x)$, we have $f(g) \rw^* c(g,a)$, but $f(g) \not\fe c(g,a)$. 
Still, parts {\it a)} 
is a strong completeness results for let-rewriting wrt. term rewriting for deterministic programs, since it says that the outer constructed part obtained in a rewriting derivation can be also obtained or even refined in a let-rewriting derivation. Combined with Theorem \ref{TLSoundRw}, part {\it a)} 
expresses a kind of equivalence between let-rewriting and term rewriting, valid for general derivations, even non-terminating ones. For derivations reaching a constructor term (not further reducible), part 
{\it b)} gives an even stronger equivalence result.

\subsection{Let-narrowing versus narrowing for deterministic systems}\label{subsect:letnarrVsNarr}

Joining the results of the previous section with the adequacy of let-narrowing to let-rewriting,
we can easily establish some relationships between let-narrowing and ordinary term rewriting/narrowing, 
summarized in the following result.

\begin{theorem}\label{lnr-wrt-rw}
For any program $\prog$, $e \in Exp, \theta \in CSubst$ and $ t \in CTerm$:
\begin{description}
    \item[a)] If $e \fnre_{\theta} t$ 
    then $e\theta \rw^* t$.
    \item[b)] If in addition ${\cal P}$ is \crwl-deterministic, then:
\begin{description}
    \item[b$_1$)]  If $e\theta \rw^* t$ then $\exists t'\in CTerm,\ \sigma, \theta' \in CSubst$ such that $e \fnre_{\sigma} t'$, 
    $t'\theta' \equiv t$ and $\sigma\theta' = \theta[var(e)]$.
    \item[b$_2$)] If $e \nr^*_{\theta} t$, the same conclusion of \emph{(b$_1$)} holds.
\end{description}
\end{description}
 \end{theorem}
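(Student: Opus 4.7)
The plan is to derive each of the three statements by chaining previously established results; no new inductions or semantic arguments seem to be necessary.

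For part \emph{(a)}, I would simply compose two soundness results. From $e \fnre_{\theta} t$, Theorem~\ref{SoundLNarr} (soundness of let-narrowing wrt.\ let-rewriting) gives $e\theta \f^* t$. Since $e\in Exp$ and $t\in CTerm$, neither contains let-bindings, so $\tlr{e\theta}\equiv e\theta$ and $\tlr{t}\equiv t$; thus Theorem~\ref{TLSoundRw} (soundness of let-rewriting wrt.\ term rewriting) yields $e\theta \rw^* t$.

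For part \emph{(b$_1$)}, the \crwl-determinism hypothesis is what allows us to bridge term rewriting back to let-rewriting. From $e\theta \rw^* t$ with $t\in CTerm$, Theorem~\ref{TAdCrwlRw}\emph{(b)} applied to the deterministic program gives $e\theta \fe t$. Now we invoke Theorem~\ref{TLNarrComp} (completeness of let-narrowing wrt.\ let-rewriting) to lift this to a let-narrowing derivation: there exist $e''\in LExp$ and $\theta'\in CSubst$ with $e \fnre_{\sigma} e''$, $e''\theta'\equiv t$ and $\sigma\theta' = \theta[FV(e)]$. The remaining subtlety is to check that $e''$ is itself a c-term, as required by the statement (which asks for $t'\in CTerm$, not merely $t'\in LExp$). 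This follows from a structural inspection: since $t\in CTerm$ has no function symbols and no let-bindings, and since $\theta'\in CSubst$ can neither introduce function symbols nor remove let-bindings already present in $e''$, the expression $e''$ cannot contain any function symbol or let-binding; hence $e''\in CTerm$ and we may take $t'\equiv e''$. Observing that $FV(e)=var(e)$ for $e\in Exp$ finishes this case.

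For part \emph{(b$_2$)}, the classical soundness of ordinary narrowing wrt.\ ordinary term rewriting gives that $e\nr^*_{\theta}t$ implies $e\theta \rw^* t$ (since $t\in CTerm$ is already a ground-like result wrt.\ $\theta$, or more formally, $e\theta \rw^* t\theta \equiv t$ because $t$ has no variables bound by $\theta$ after projection to $var(e)$; otherwise a straightforward restriction of $\theta$ to $var(e)$ is used). Then apply part \emph{(b$_1$)} directly.

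I do not foresee a hard step. The only place requiring a small argument beyond quoting earlier theorems is the verification that the intermediate $e''$ produced by the lifting lemma is a c-term, which is essentially a syntactic observation about $CSubst$ preserving the absence of function symbols and lets.
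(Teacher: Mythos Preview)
Your proposal is correct and follows essentially the same route as the paper: part~(a) via Theorem~\ref{SoundLNarr} composed with Theorem~\ref{TLSoundRw}, part~(b$_1$) via Theorem~\ref{TAdCrwlRw}(b) followed by Theorem~\ref{TLNarrComp}, and part~(b$_2$) via soundness of ordinary narrowing plus~(b$_1$). You are in fact slightly more careful than the paper in one spot: the paper simply writes the result of the lifting as some $t'$ without commenting on why it lies in $CTerm$, whereas you supply the (easy) argument that a c-substitution cannot remove function symbols or lets from $e''$, so $e''\theta'\equiv t\in CTerm$ forces $e''\in CTerm$. Your hedging in~(b$_2$) about $t\theta\equiv t$ is unnecessary, since classical narrowing soundness already yields $e\theta \rw^* t$ directly (the narrowing result has the substitution applied), but this does not affect correctness.
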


Part $a)$ expresses soundness of $\fnr$ wrt. \nuevo{term} rewriting, and part $b)$ is a completeness
result for $\fnr$ wrt. \nuevo{term} rewriting/narrowing, for the class of deterministic programs.

\begin{proof}
 Part $a)$ 
 follows from soundness of let-narrowing wrt. let-rewriting (Theorem \ref{SoundLNarr}) and soundness of let-rewriting wrt. \nuevo{term} rewriting of Theorem \ref{TAdCrwlRw}. 

 For part $b_1)$, for let-narrowing, assume $e\theta \rw^* t$. By the  completeness of let-rewriting wrt. \nuevo{term} rewriting for deterministic programs
 (Theorem \ref{TAdCrwlRw}), we have $e\theta \f^* t$, and then by
 the completeness of let-narrowing wrt. let-rewriting
 (Theorem \ref{TLNarrComp}), there exists a narrowing derivation  $e \fnre_{\sigma} t'$ with
 $t'\theta' = t$ and $\sigma\theta' = \theta[FV(e)]$. But notice that for $e\in Exp$, the sets
 $FV(e)$ and $var(e)$ coincide, and the proof is finished. 

 Finally, $b_2)$ follows simply from soundness of (ordinary) narrowing
 wrt. term rewriting and $b_1)$.
 \end{proof}



\section{Conclusions}
\label{conclusions}

This paper contains a  thorough presentation of the theory of first order let-rewriting and let-narrowing for constructor-based term rewriting systems. These two relations are  simple notions of one-step reduction that express sharing as it is required by the call-time choice semantics of non-determinism adopted in the functional logic programming paradigm.
In a broad sense, let-rewriting and let-narrowing can be seen as particular syntactical presentations of
term graph rewriting and narrowing. However, keeping our formalisms very close
to the syntax and basic notions of term rewriting systems (terms,
substitutions, syntactic unification,\ldots) has been an essential aid in
establishing strong equivalence results with respect to the \crwl-framework ---a
well-established realization of call-time choice semantics---, which was one of the main aims of
the paper.

Along the way of proving such equivalence we have developed powerful semantic tools that are interesting in themselves. Most remarkably,
the $\crwllet$-logic, a conservative extension of \crwl\ that deals with let-bindings, and the notion of hypersemantics of expressions and contexts,  for which we prove deep compositionality results not easily achievable by thinking directly in terms of reduction sequences.

We have shown in several places the methodological power of having provably equivalent reduction-based and logic-based semantics.
In some occasions, we have used the properties of the \crwl-semantics to investigate interesting aspects of reductions, as replaceability conditions or derived operational rules, like bubbling. In others, we have followed the converse way. For instance, by transforming let-rewriting reductions into ordinary term rewriting reductions, we  easily concluded that let-rewriting (call-time choice) provides less computed values than term rewriting (run-time choice). By using again semantic methods, we proved the opposite inclusion for deterministic programs, obtaining for such programs an equivalence result of let-rewriting   and term rewriting.

In our opinion, the different pieces of this work can be used separately for different
purposes. The \crwllet-logic provides a denotational semantics reflecting call-time choice for
programs making use of local bindings. The let-rewriting and let-narrowing relations provide clear and abstract descriptions
of how computations respecting call-time choice can proceed. They can be useful to explain basic operational aspects of functional logic languages to students or novice programmers, for instance. They  have been  used also as underlying formalisms to investigate other aspects of functional logic programming that need a clear notion of reduction;  for instance, when proving essential properties of type systems, like subject reduction or progress.
In addition, all the pieces are interconnected by strong theoretical results, which may be useful depending on the pursued goal.

Just like classical term rewriting and narrowing, the let-rewriting and narrowing relations define too broad computation spaces  as to be adopted directly as concrete operational procedures of a programming language. To that purpose, they should be accompanied by a strategy that selects only certain computations. In this paper we have only given an example-driven discussion of strategies.
We are quite confident that some known on-demand evaluation
strategies, like lazy, needed or natural rewriting/narrowing, can be adapted to our formal setting.
In \cite{pepm10_adriJuan,jaimeprole2011} we work out in more detail two concrete on-demand strategies for slight variants of let-rewriting and narrowing formalisms.

A subject of future work that might be of interest to the functional logic community is that of completing the comparison  of different formalisms proposed in the field to capture call-time choice semantics: \crwl, admissible term graph rewriting/narrowing, natural semantics \emph{\`{a} la} Launchbury, and let-rewriting/narrowing. Proving their equivalence  would greatly enrich the set of tools available to the functional logic programming theoretician, since any known or future result obtained for one of the approaches could be applied to the rest on a sound technical basis.


\appendix
\section{Detailed proofs for the results}
\label{proofs}

In the proofs we will use the usual notation for positions, subexpressions and repacements from \cite{BaaderNipkow-98}. The \emph{set of positions} of an expression $e \in Exp$ is a set $O(e)$ of strings of positive integers defined as:
\begin{itemize}
    \item If $e \equiv X \in \var$, then $O(e) = {\epsilon}$, where $\epsilon$ is the empty string.
    \item If $e \equiv h(e_1, \ldots, e_n)$ with $h \in \Sigma$, then
    $$
    O(e) =  \{ \epsilon \} \cup \bigcup_{i =1}^{n} \{ ip ~|~ p \in O(e_i) \}
    $$
\end{itemize}

The \emph{subexpression of $e$ at position $p \in O(e)$}, denoted $e|_p$, is defined as:
$$\begin{array}{rcl}
e|_{\epsilon} & = & e \\
h(e_1, \ldots, e_n)|_{ip} & = & e_i|_p \\
\end{array}$$

For a position $p \in O(e)$, we define the \emph{replacement of the subexpression of $e$ at position $p$ by $e'$} ---denoted $e[e']_p$--- as follows:
$$\begin{array}{rcl}
e[e']_{\epsilon} & = & e' \\
h(e_1, \ldots, e_n)[e']_{ip} & = & h(e_1, \ldots, e_i[e']_p, \ldots, e_n) \\
\end{array}$$


When performing proofs by induction we will usually use IH to refer to the induction hypothesis of the current induction. We will use an asterisk to denote the use of a let-rewriting rule one or more times, as in \lrrule{Flat*}.  We will also use the following auxiliary results.

\newcommand{\teoremi}[1]{\noindent {\em #1}~\\}

\subsection{Lemmas}

The following lemmas are used in the proofs for the results in the article. Most of them are straightforwardly proved by induction, so we only detail the proof in the interesting cases.


\begin{lemma}\label{shellPatterns}
$\forall t \in CTerm_ {\perp}. ~|t| = t$.
\end{lemma}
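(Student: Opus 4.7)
The plan is a straightforward structural induction on $t \in CTerm_\perp$. Recall that partial c-terms are built from variables, the constant $\perp$ (a 0-ary constructor), and applications of constructor symbols to partial c-terms; crucially, they contain no occurrences of function symbols or let-bindings. This means that in an induction on the structure of $t$, only the variable case and the constructor case of the definition of the shell operator $|\cdot|$ need to be considered --- the function-application clause and the let clause are simply not applicable.

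First I would dispatch the base case $t \equiv X \in \var$, which is immediate from the defining equation $|X| = X$. Then for the inductive case $t \equiv c(t_1,\ldots,t_n)$ with $c \in CS^n$ and $t_1,\ldots,t_n \in CTerm_\perp$, I would apply the induction hypothesis to each $t_i$ to obtain $|t_i| = t_i$, and then conclude by the constructor clause:
$$|c(t_1,\ldots,t_n)| = c(|t_1|,\ldots,|t_n|) = c(t_1,\ldots,t_n).$$
Note that this case also covers $t \equiv \perp$ by taking $n = 0$ and $c \equiv \perp$, since $\perp$ is treated as a 0-ary constructor.

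There is no real obstacle here; the only subtlety worth remarking is to make explicit that the function-application and let clauses of the shell definition cannot fire because the syntax of $CTerm_\perp$ excludes these formation rules. Once this observation is made, the induction is essentially a one-line unfolding of definitions at each level.
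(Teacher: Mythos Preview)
Your proposal is correct and matches the paper's approach: the paper simply states that this lemma is ``straightforwardly proved by induction'' without detailing the argument, and your structural induction on $t \in CTerm_\perp$ is exactly that routine proof.
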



\begin{lemma}\label{lemma:crwlPatterns}
$\forall t \in CTerm_ {\perp}. ~{\cal P}\vdash_{{\it CRWL}_{let}} t \crwlto t$.
\end{lemma}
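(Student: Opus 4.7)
The plan is to prove the statement by structural induction on the partial constructor term $t \in CTerm_\perp$. Recall that elements of $CTerm_\perp$ are built from variables, the constant $\perp$, and constructor applications, so there are three base/inductive cases to examine, and each one corresponds directly to one of the rules of the $CRWL_{let}$ calculus shown in Figure \ref{fig:crwl} (plus the inherited rule \textbf{(Let)}, which is not needed here since c-terms contain no let-bindings).

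First I would handle the variable case $t \equiv X$: the rule \textbf{(RR)} immediately yields $X \crwlto X$ with no premises. Next, for $t \equiv \perp$, the rule \textbf{(B)} applied to $\perp$ derives $\perp \crwlto \perp$ in a single step (alternatively, since $\perp$ is a $0$-ary constructor, \textbf{(DC)} with $n=0$ also works). Finally, for the inductive case $t \equiv c(t_1, \ldots, t_n)$ with $c \in CS^n$ and each $t_i \in CTerm_\perp$, the induction hypothesis gives $\mathcal{P} \vdash_{CRWL_{let}} t_i \crwlto t_i$ for every $i \in \{1, \ldots, n\}$, and then a single application of rule \textbf{(DC)} combines these premises to conclude $c(t_1, \ldots, t_n) \crwlto c(t_1, \ldots, t_n)$.

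There is no real obstacle here: this is a routine structural induction, and the statement is essentially a ``reflexivity on c-terms'' property that mirrors exactly the shape of the \crwl\ rules \textbf{(RR)}, \textbf{(B)} (or \textbf{(DC)} for the constant $\perp$), and \textbf{(DC)}. The result is used as a basic building block elsewhere (e.g., for showing $|e| \in \den{e}$ in Lemma \ref{lemmashells} and in the proof of Lemma \ref{LCompCdRw}), so what matters is its availability rather than any depth in its proof.
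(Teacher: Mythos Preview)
Your proof is correct and matches the paper's approach: the paper simply notes that this lemma is straightforward by induction on the structure of $t$, and your case analysis using \textbf{(RR)}, \textbf{(B)}/\textbf{(DC)}, and \textbf{(DC)} with the induction hypothesis is exactly the intended argument.
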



\begin{lemma}\label{LLetOrd2}
 Given $\theta, \theta' \in LSubst_{\perp}$, $e \in LExp_{\perp}$, if $\theta \sqsubseteq \theta'$ then $e\theta \sqsubseteq e\theta'$.
\end{lemma}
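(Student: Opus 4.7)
The plan is to prove the lemma by structural induction on $e \in LExp_\perp$, using the characterizations of $\sqsubseteq$ on $LExp_\perp$ and on $LSubst_\perp$ already recorded in the paper.

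For the base cases, if $e \equiv \bot$ then $e\theta \equiv \bot \equiv e\theta'$ and the result is trivial. If $e \equiv X \in \var$, then by definition of $\sqsubseteq$ on $LSubst_\perp$ (pointwise, $X\theta \sqsubseteq X\theta'$ for all $X$), we have $e\theta = X\theta \sqsubseteq X\theta' = e\theta'$. For the inductive step with $e \equiv h(e_1,\ldots,e_n)$ where $h\in\Sigma$, applying $\theta$ commutes with $h$, so $e\theta = h(e_1\theta,\ldots,e_n\theta)$ and similarly for $\theta'$. By the induction hypothesis, $e_i\theta \sqsubseteq e_i\theta'$ for each $i$, and then repeated use of the context clause in the definition of $\sqsubseteq$ (i.e., $s \sqsubseteq s' \Rightarrow {\cal C}[s] \sqsubseteq {\cal C}[s']$) yields $h(e_1\theta,\ldots,e_n\theta) \sqsubseteq h(e_1\theta',\ldots,e_n\theta')$.

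The most delicate case, though still routine, is $e \equiv let~X = e_1~in~e_2$. By the variable convention and the hypothesis on substitution application stated in Section \ref{let-rewriting}, we may assume $X \notin dom(\theta)\cup\vran(\theta)\cup dom(\theta')\cup\vran(\theta')$, so that $e\theta \equiv let~X = e_1\theta~in~e_2\theta$ and $e\theta' \equiv let~X = e_1\theta'~in~e_2\theta'$. By the induction hypothesis, $e_1\theta \sqsubseteq e_1\theta'$ and $e_2\theta \sqsubseteq e_2\theta'$. The characterization of $\sqsubseteq$ on let-expressions noted at the start of Section \ref{crwllet} (namely, $let~X=a~in~b \sqsubseteq let~X=a'~in~b'$ iff $a \sqsubseteq a'$ and $b \sqsubseteq b'$) immediately gives $e\theta \sqsubseteq e\theta'$.

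The only potential pitfall is the let-case, where one has to be careful that the variable convention really guarantees that the bound variable $X$ is unaffected by both $\theta$ and $\theta'$ simultaneously; once that is in place, nothing beyond a bookkeeping induction is needed. No new techniques are required — the lemma is an elementary monotonicity property that will later serve as a building block (e.g.\ in combination with Proposition \ref{PropMonSubstCrwlLet}).
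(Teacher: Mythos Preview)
Your proposal is correct and follows exactly the approach the paper intends: the paper lists this lemma among those ``straightforwardly proved by induction'' without detailing it, and your structural induction on $e$ (with the variable convention handling the let-case) is precisely that routine argument.
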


\begin{lemma}\label{auxMon}
Given $\theta \in LSubst_{\perp}$, $e, e' \in LExp_{\perp}$, if $e \sqsubseteq e'$ then $e\theta \sqsubseteq e'\theta$.
\end{lemma}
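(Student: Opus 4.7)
The plan is to proceed by structural induction on $e \in LExp_{\perp}$ (equivalently, on the derivation of $e \sqsubseteq e'$ from the inductive definition of the approximation ordering). The induction matches the four possible syntactic shapes of $e$: the undefined constant $\perp$, a variable, a symbol application $h(e_1,\ldots,e_n)$ with $h \in \Sigma$, and a let-expression $let~X=e_1~in~e_2$. In each case I will exploit the fact that $e \sqsubseteq e'$ forces $e'$ to have a matching shape (or to be anything, in the $\perp$ case).

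The first two cases are immediate: if $e \equiv \perp$ then $e\theta \equiv \perp \sqsubseteq e'\theta$ by the defining clause $\perp \sqsubseteq e$; if $e$ is a variable $X$, then $e \sqsubseteq e'$ forces $e' \equiv X$, and so $e\theta = X\theta = e'\theta$. The symbol application case is the easy inductive step: $e \sqsubseteq e'$ must come from $e' \equiv h(e'_1,\ldots,e'_n)$ with $e_i \sqsubseteq e'_i$, and by IH $e_i\theta \sqsubseteq e'_i\theta$ for every $i$, so applying the context rule componentwise gives $e\theta = h(e_1\theta,\ldots,e_n\theta) \sqsubseteq h(e'_1\theta,\ldots,e'_n\theta) = e'\theta$.

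The let case is the main obstacle, simply because it is where the variable-capture issue could appear: here $e \equiv let~X=e_1~in~e_2$ and $e \sqsubseteq e'$ forces $e' \equiv let~X=e'_1~in~e'_2$ with $e_1 \sqsubseteq e'_1$ and $e_2 \sqsubseteq e'_2$. The standard variable convention adopted in Section~\ref{let-rewriting} guarantees that $X \notin dom(\theta) \cup \vran(\theta)$ in both $e$ and $e'$, so the substitution rule for lets yields $e\theta = let~X = e_1\theta~in~e_2\theta$ and $e'\theta = let~X = e'_1\theta~in~e'_2\theta$. Applying the IH to each binder component gives $e_1\theta \sqsubseteq e'_1\theta$ and $e_2\theta \sqsubseteq e'_2\theta$, and the context closure of $\sqsubseteq$ extended to let-expressions (as noted at the beginning of Section~\ref{crwllet}) closes the argument. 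This completes the induction and establishes $e\theta \sqsubseteq e'\theta$.
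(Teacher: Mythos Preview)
Your proof is correct and follows the same approach as the paper, which simply lists Lemma~\ref{auxMon} among the auxiliary results that are ``straightforwardly proved by induction'' without spelling out the details. Your case analysis on the structure of $e$, together with the variable convention to handle the let case, is exactly the intended argument.
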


\begin{lemma}\label{LLetOrd3}
For every $e, e' \in LExp_{\perp}$, ${\cal C} \in Cntxt$, if $|e| \sqsubseteq |e'|$ then $|{\cal C}[e]| \sqsubseteq |{\cal C}[e']|$.
\end{lemma}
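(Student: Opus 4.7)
The plan is to proceed by structural induction on the context $\mathcal{C}$, using the definition of shell and the two monotonicity-for-substitutions lemmas stated just above (Lemmas \ref{LLetOrd2} and \ref{auxMon}). The base case $\mathcal{C}=[\ ]$ is immediate since $|\mathcal{C}[e]|=|e|\sqsubseteq|e'|=|\mathcal{C}[e']|$ by hypothesis.

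For the inductive step, I would split on the four shapes of non-hole contexts. If $\mathcal{C}\equiv h(e_1,\ldots,\mathcal{C}',\ldots,e_n)$, I distinguish the constructor case from the function case. When $h\in CS$, unfolding the shell gives $|\mathcal{C}[e]|=h(|e_1|,\ldots,|\mathcal{C}'[e]|,\ldots,|e_n|)$ and similarly for $e'$; the induction hypothesis yields $|\mathcal{C}'[e]|\sqsubseteq|\mathcal{C}'[e']|$, and the result follows from the congruence property of $\sqsubseteq$ under contexts. When $h\in FS$, both shells equal $\bot$, so the inequality is trivial.

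The two interesting cases are the let-contexts. For $\mathcal{C}\equiv let~X=e_1~in~\mathcal{C}'$, unfolding shells gives $|\mathcal{C}[e]|=|\mathcal{C}'[e]|[X/|e_1|]$ and $|\mathcal{C}[e']|=|\mathcal{C}'[e']|[X/|e_1|]$; by the induction hypothesis we have $|\mathcal{C}'[e]|\sqsubseteq|\mathcal{C}'[e']|$, and applying the same c-substitution $[X/|e_1|]$ to both sides preserves the ordering by Lemma \ref{auxMon}. For $\mathcal{C}\equiv let~X=\mathcal{C}'~in~e_2$, the shells become $|e_2|[X/|\mathcal{C}'[e]|]$ and $|e_2|[X/|\mathcal{C}'[e']|]$; here the induction hypothesis gives $|\mathcal{C}'[e]|\sqsubseteq|\mathcal{C}'[e']|$, so the substitutions $[X/|\mathcal{C}'[e]|]$ and $[X/|\mathcal{C}'[e']|]$ are ordered pointwise, and Lemma \ref{LLetOrd2} lets us conclude.

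I do not foresee any real obstacle here: the proof is a routine structural induction and the two auxiliary lemmas (\ref{LLetOrd2}) and (\ref{auxMon}) are exactly tailored to handle the two let-cases. The only point requiring a little care is keeping the variable convention active so that the substitutions $[X/|e_1|]$ and $[X/|\mathcal{C}'[e]|]$ do not inadvertently capture free variables of $|\mathcal{C}'[e]|$ or $|e_2|$, but this is harmless under the standing convention assumed in Section~\ref{let-rewriting}.
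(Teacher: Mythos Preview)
Your proposal is correct and follows essentially the same approach as the paper's proof: a structural induction on $\mathcal{C}$, using Lemma~\ref{LLetOrd2} for the case $\mathcal{C}\equiv let~X=\mathcal{C}'~in~e_2$ and Lemma~\ref{auxMon} for the case $\mathcal{C}\equiv let~X=e_1~in~\mathcal{C}'$. The only difference is cosmetic---you explicitly split the $h(\ldots,\mathcal{C}',\ldots)$ case into constructor and function subcases, whereas the paper simply says ``Directly by IH'' there.
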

\begin{proof}\label{DEMO_LLetOrd3}
We proceed by induction on the structure of ${\cal C}$. The base case is straightforward because of the hypothesis. For the Inductive Step we have:
        \begin{itemize}
         \item ${\cal C} \equiv h(\ldots, {\cal C'}, \ldots)$. Directly by IH.
         \item ${\cal C} \equiv let~X={\cal C'}~in~e_1$, so ${\cal C}[e] \equiv let~X={\cal C'}[e]~in~e_1$. Then:
$$
\begin{array}{l}
|{\cal C}[e]| = |let~X={\cal C'}[e]~in~e_1| = |e_1|[X/|{\cal C'}[e]|] \\
\sqsubseteq_{\mi{IH}^{(*)}} |e_1|[X/|{\cal C'}[e']|] = |let~X={\cal C'}[e']~in~e_1| = |{\cal C}[e']|  \\
\end{array}
$$
$(*)$ By IH we have $|{\cal C'}[e]| \sqsubseteq |{\cal C'}[e']|$, therefore $[X/|{\cal C'}[e]|] \sqsubseteq [X/|{\cal C'}[e']|]$. Finally, by Lemma \ref{LLetOrd2}, $|e_1|[X/|{\cal C'}[e]|] \sqsubseteq |e_1|[X/|{\cal C'}[e']|]$.
         \item ${\cal C} \equiv let~X=e_1~in~{\cal C'}$. Similar to the previous case but using Lemma \ref{auxMon} to obtain $|{\cal C'}[e]|$ $[X/|e_1|] \sqsubseteq |{\cal C'}[e']|[X/|e_1|]$ from the IH $|{\cal C'}[e]| \sqsubseteq |{\cal C'}[e']|$.
       \end{itemize}
\end{proof}

\begin{lemma}\label{lemma:context}
If $|e| = |e'|$ then $|\cntx[e]| = |\cntx[e']|$
\end{lemma}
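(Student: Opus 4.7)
The plan is to derive this lemma as a direct corollary of Lemma~\ref{LLetOrd3}, which has just been established. Since $\sqsubseteq$ is a partial order on $CTerm_\perp$, the equality $|e| = |e'|$ is equivalent to the conjunction $|e| \sqsubseteq |e'|$ and $|e'| \sqsubseteq |e|$. Applying Lemma~\ref{LLetOrd3} to each of these inequalities yields $|\cntx[e]| \sqsubseteq |\cntx[e']|$ and $|\cntx[e']| \sqsubseteq |\cntx[e]|$ respectively, and antisymmetry of $\sqsubseteq$ closes the argument.

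Alternatively, if a self-contained proof is preferred, one could mimic the structural induction over $\cntx$ used for Lemma~\ref{LLetOrd3}, replacing every occurrence of $\sqsubseteq$ by $=$. The base case $\cntx = [\ ]$ holds by hypothesis; the inductive cases for $h(\ldots,\cntx',\ldots)$, $let~X = \cntx'~in~e_1$, and $let~X = e_1~in~\cntx'$ follow by appealing to the inductive hypothesis together with the obvious fact that substitution and the shell operator respect syntactic equality (an equational analogue of Lemmas~\ref{LLetOrd2} and~\ref{auxMon}). No obstacle is expected here, since the corollary route is essentially one line and the inductive route is a verbatim specialisation of the preceding proof.

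Given the brevity, I would go with the corollary approach, as it reuses the work already done and keeps the appendix clean.
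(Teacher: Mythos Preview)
Your proposal is correct and matches the paper's proof exactly: the paper also derives this lemma as an immediate corollary of Lemma~\ref{LLetOrd3}, splitting the equality into the two inequalities, applying Lemma~\ref{LLetOrd3} in each direction, and concluding by antisymmetry of $\sqsubseteq$.
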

\begin{proof}\label{DEMO_lemma:context}
Since $\sqsubseteq$ is a partial order, we know by reflexivity that $|e| \sqsubseteq |e'|$ and $|e'| \sqsubseteq |e|$. Then by Lemma \ref{LLetOrd3} we have $|\cntx[e]| \sqsubseteq |\cntx[e']|$ and $|\cntx[e']| \sqsubseteq |\cntx[e]|$. Finally, by antisymmetry of the partial order $\sqsubseteq$ we have that $|\cntx[e]| = |\cntx[e']|$.
\end{proof}

\begin{lemma}\label{LCasc1}
For all $e_1, e_2 \in LExp, X \in {\cal V}$, $|e_1[X/e_2]| \equiv |e_1|[X/|e_2|]$
\end{lemma}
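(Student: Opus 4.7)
The plan is to prove Lemma \ref{LCasc1} by structural induction on $e_1$, leveraging the inductive definition of the shell operator and the substitution lemma for let-expressions (Lemma \ref{auxBind}).

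For the base cases, suppose $e_1$ is a variable. If $e_1 \equiv X$, then $|e_1[X/e_2]| \equiv |e_2|$ and $|e_1|[X/|e_2|] \equiv X[X/|e_2|] \equiv |e_2|$, so both sides coincide. If $e_1 \equiv Y$ with $Y \not\equiv X$, then $|e_1[X/e_2]| \equiv Y$ and $|e_1|[X/|e_2|] \equiv Y$, again trivially equal.

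For the inductive step, there are three subcases depending on the head symbol of $e_1$. If $e_1 \equiv c(\overline{a})$ with $c \in CS$, the shell distributes over constructor arguments and substitution on both sides, so the result follows directly from the IH applied component-wise. If $e_1 \equiv f(\overline{a})$ with $f \in FS$, then $|e_1[X/e_2]| \equiv \bot$ and $|e_1|[X/|e_2|] \equiv \bot[X/|e_2|] \equiv \bot$.

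The interesting case is $e_1 \equiv let~Y=a_1~in~a_2$. By the variable convention we may assume $Y \not\equiv X$ and $Y \notin FV(e_2)$, hence $Y \notin dom([X/|e_2|]) \cup \vran([X/|e_2|])$. Expanding on the left: $|e_1[X/e_2]| \equiv |let~Y=a_1[X/e_2]~in~a_2[X/e_2]| \equiv |a_2[X/e_2]|[Y/|a_1[X/e_2]|]$, which by two applications of the IH equals $|a_2|[X/|e_2|]\,[Y/|a_1|[X/|e_2|]]$. Expanding on the right: $|e_1|[X/|e_2|] \equiv (|a_2|[Y/|a_1|])[X/|e_2|]$, and applying Lemma \ref{auxBind} (with the variable convention justifying its side condition) gives $|a_2|[X/|e_2|]\,[Y/|a_1|[X/|e_2|]]$. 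Both sides therefore coincide, completing the induction. The only subtle point is the bookkeeping of bound variables in the let-case, which the variable convention handles cleanly and which matches exactly the hypothesis of the substitution lemma.
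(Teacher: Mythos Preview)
Your proof is correct and follows essentially the same approach as the paper: structural induction on $e_1$, with the let-case being the only interesting one, handled via the variable convention and Lemma~\ref{auxBind}. The paper's presentation is terser (it only spells out the let-case), but the argument is identical.
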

\begin{proof}\label{DEMO_LCasc1}
By induction on the structure of $e_1$. The most interesting case is when $e_1 \equiv let~Y=s_1~in~s_2$.
    By the variable convention $Y \not\in dom([X/e_2])$ and $Y \not\in \vran([X/e_2])$, so:
$$
\begin{array}{ll}
|e_1[X/e_2]|\equiv |let~Y=s_1[X/e_2]~in~s_2[X/e_2]| \\
\equiv |s_2[X/e_2]|[Y/|s_1[X/e_2]|] \\
\equiv_{\mathit{IH}} |s_2|[X/|e_2|][Y/(|s_1|[X/|e_2|])]\\
\equiv |s_2|[Y/|s_1|][X/|e_2|] & \mbox{(*)} \\
\equiv |let~Y=s_1~in~s_2|[X/|e_2|] \equiv |e_1|[X/|e_2|]
\end{array}
$$
(*) Using Lemma \ref{auxBind} with the matching $[e/|s_2|, \theta/[X/|e_2|], X/Y, e'/|s_1|]$.
\end{proof}

\begin{lemma}\label{lemaAproxTheta}
Given $\theta \in LSubst_{\perp}$, $e,e' \in LExp_{\perp}$, if $e \ordap e'$ then $e\theta \ordap e'\theta$.
%
\end{lemma}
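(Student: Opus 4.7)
The plan is to proceed by structural induction on $e\in LExp_\perp$. Since $\ordap$ is defined as the least partial order satisfying $\perp \ordap e$ plus closure under contexts, an inequality $e \ordap e'$ has essentially no freedom once $e$ is fixed and non-$\perp$: $e'$ must share the same root symbol as $e$ and the immediate subexpressions must be related componentwise. This lets me do a clean case split on the shape of $e$.

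The base cases are immediate. If $e \equiv \perp$ then $e\theta \equiv \perp \ordap e'\theta$ by the first generator of $\ordap$. If $e \equiv X \in \var$, then the only $e'$ with $X \ordap e'$ is $e' \equiv X$ itself (since the $\perp$-generator does not apply and the context-closure generator would force a common root), so $e\theta \equiv X\theta \equiv e'\theta$ and the conclusion holds by reflexivity. For $e \equiv h(e_1,\ldots,e_n)$ with $h\in\Sigma^n$, the assumption $e \ordap e'$ forces $e' \equiv h(e'_1,\ldots,e'_n)$ with $e_i \ordap e'_i$ for every $i$; applying the induction hypothesis to each pair gives $e_i\theta \ordap e'_i\theta$, and then the definition of substitution together with context closure yields $e\theta \equiv h(e_1\theta,\ldots,e_n\theta) \ordap h(e'_1\theta,\ldots,e'_n\theta) \equiv e'\theta$.

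The interesting case is $e \equiv let~X = e_1~in~e_2$. Here $e \ordap e'$ forces $e' \equiv let~X = e'_1~in~e'_2$ with $e_1 \ordap e'_1$ and $e_2 \ordap e'_2$ (by the context-closure generator applied to the two let-positions). Invoking the variable convention we assume $X \notin dom(\theta) \cup \vran(\theta)$, so that substitution distributes as $(let~X=e_i~in~e_j)\theta \equiv let~X = e_1\theta~in~e_2\theta$, and similarly for $e'$. The induction hypothesis applied to $e_1 \ordap e'_1$ and $e_2 \ordap e'_2$ delivers $e_1\theta \ordap e'_1\theta$ and $e_2\theta \ordap e'_2\theta$, and two applications of context closure yield $e\theta \ordap e'\theta$.

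The only mild obstacle I expect is the let-case, specifically the bookkeeping needed to justify that the variable convention may be assumed to hold while applying $\theta$, so that substitution commutes with the let-constructor without variable capture; but this is exactly the convention installed in Section~\ref{let-rewriting} and is routine. Everything else is structural pattern-matching on $e$ together with the two generating clauses of $\ordap$.
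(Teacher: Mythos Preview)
Your proof is correct and follows exactly the approach the paper indicates: the paper states this lemma among a batch of auxiliary results introduced with ``Most of them are straightforwardly proved by induction, so we only detail the proof in the interesting cases,'' and gives no explicit proof for this one. Your structural induction on $e$, with the expected case analysis and the use of the variable convention in the let-case, is precisely the routine argument the paper has in mind.
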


\begin{lemma}\label{T28}\label{lemSintaxtConSubst}
For every $\sigma \in LSubst_{\perp}$, ${\cal C} \in Cntxt$ and $e \in LExp_{\perp}$ such that $(dom(\sigma) \cup \vran(\sigma)) \cap BV({\cal C}) = \emptyset$ we have that $({\cal C}[e])\sigma \equiv {\cal C}\sigma[e\sigma]$. 
\end{lemma}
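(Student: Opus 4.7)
The plan is to proceed by structural induction on the one-hole context $\mathcal{C}$, using the grammar for contexts with lets given in Section \ref{let-rewriting}:
$$\mathcal{C} ::= [\ ] \mid h(\ldots, \mathcal{C}, \ldots) \mid let~X=\mathcal{C}~in~e \mid let~X=e~in~\mathcal{C}$$
At each step, I will unfold both sides using the definitions of context filling, of substitution application on $LExp$ (including the clause $(let~X=e_1~in~e_2)\theta = let~X=e_1\theta~in~e_2\theta$) and of $BV(\mathcal{C})$, then invoke the induction hypothesis on the smaller context. The hypothesis $(dom(\sigma) \cup \vran(\sigma)) \cap BV(\mathcal{C}) = \emptyset$ must be carefully propagated to the subcontext inside the inductive cases; since $BV(\mathcal{C}') \subseteq BV(\mathcal{C})$ in all inductive clauses, the side condition on $\sigma$ is immediately inherited by $\mathcal{C}'$ and the IH applies.

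The base case $\mathcal{C}\equiv[\ ]$ is immediate: $([\,][e])\sigma = e\sigma = [\,][e\sigma] = ([\,]\sigma)[e\sigma]$. For the constructor/function case $\mathcal{C}\equiv h(e_1,\ldots,\mathcal{C}',\ldots,e_n)$, one just pushes $\sigma$ inside and applies IH to $\mathcal{C}'$. The two interesting inductive cases are the let-cases. For $\mathcal{C}\equiv let~X=\mathcal{C}'~in~e'$ we have $X \in BV(\mathcal{C})$, so by hypothesis $X \notin dom(\sigma)\cup \vran(\sigma)$; consequently the variable convention stated in Section \ref{let-rewriting} does not force any renaming, and
$$(\mathcal{C}[e])\sigma = (let~X=\mathcal{C}'[e]~in~e')\sigma = let~X=(\mathcal{C}'[e])\sigma~in~e'\sigma \stackrel{IH}{=} let~X=\mathcal{C}'\sigma[e\sigma]~in~e'\sigma = \mathcal{C}\sigma[e\sigma].$$
The case $\mathcal{C}\equiv let~X=e'~in~\mathcal{C}'$ is symmetric: again $X \in BV(\mathcal{C})$ disables any capture, and the substitution distributes into both the definiens and the body, after which IH on $\mathcal{C}'$ finishes the step.

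The main (and essentially the only) obstacle is keeping the variable-capture bookkeeping clean in the two let-cases: one needs the assumption on $\sigma$ precisely to guarantee that the bound variable of the surrounding let is not touched by $\sigma$, so that the clause $(let~X=e_1~in~e_2)\sigma = let~X=e_1\sigma~in~e_2\sigma$ can be applied without first renaming $X$. Since this is exactly what the hypothesis $(dom(\sigma)\cup\vran(\sigma))\cap BV(\mathcal{C}) = \emptyset$ provides, the induction closes cleanly. No appeal to Lemma \ref{auxBind} or to any other previous result is needed; this lemma is a purely syntactic distributivity property and will later serve as a basic building block in the other proofs of the appendix (for instance when reasoning about $({\cal C}[e])\sigma$ in substitutivity or lifting arguments).
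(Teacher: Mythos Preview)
Your overall plan is correct and coincides with the paper's own proof: structural induction on $\mathcal{C}$, with the two let-cases being the only nontrivial ones and the side condition on $\sigma$ propagating to the inner context because $BV(\mathcal{C}')\subseteq BV(\mathcal{C})$.

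There is, however, one slip in your bookkeeping. In the case $\mathcal{C}\equiv let~X=\mathcal{C}'~in~e'$ (hole in the \emph{definiens}) you assert $X\in BV(\mathcal{C})$, but by the paper's definition of $BV$ for contexts one has $BV(let~X=\mathcal{C}'~in~e')=BV(\mathcal{C}')$: the set $BV(\mathcal{C})$ collects only the let-bound variables visible \emph{from the hole}, and here $X$ binds in the body $e'$, not over the hole position. Consequently the lemma's hypothesis does \emph{not} by itself give you $X\notin dom(\sigma)\cup\vran(\sigma)$ in this case; you must appeal to the ambient variable convention instead to justify that no renaming of $X$ is needed when pushing $\sigma$ under the let. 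The paper does exactly this (its remark $(**)$ that ``we do not need any renaming'' is really an invocation of the convention, while remark $(*)$ correctly records $BV(let~X=\mathcal{C}'~in~e_1)=BV(\mathcal{C}')$). For the symmetric case $\mathcal{C}\equiv let~X=e'~in~\mathcal{C}'$ your argument is fine, since there $BV(\mathcal{C})=\{X\}\cup BV(\mathcal{C}')$ and the hypothesis applies directly. With this correction your proof goes through and matches the paper's.
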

\begin{proof}
By induction on the structure of ${\cal C}$. The most interesting cases are those concerning let-expressions:
\begin{itemize}
        \item ${\cal C} \equiv let~X={\cal C'}~in~e_1$: therefore ${\cal C}[e] \equiv let~X={\cal C'}[e]~in~e_1$. Then
$$
\begin{array}{c}
({\cal C}[e])\sigma \equiv let~X=({\cal C'}[e])\sigma~in~e_1\sigma \equiv_{\mathit{IH}}^{(*)} let~X={\cal C'}\sigma[e\sigma]~in~e_1\sigma \\
\equiv (let~X=({\cal C'}[])\sigma~in~e_1\sigma)[e\sigma] \equiv^{(**)} ((let~X={\cal C'}[]~in~e_1)\sigma)[e\sigma] \equiv {\cal C}\sigma[e\sigma]  \\
\end{array}
$$
$(*)$: by definition $BV(let~X = {\cal C'}~in~e) = BV({\cal C'})$, so $(dom(\sigma) \cup \vran(\sigma)) \cap BV({\cal C}) = \emptyset = (dom(\sigma) \cup \vran(\sigma)) \cap BV({\cal C'})$.\\
$(**)$: we can apply the last step because by hypothesis we can assure that we do not need any renaming to apply $(let~X={\cal C'}[]~in~e_1)\sigma$.
        \item ${\cal C} \equiv let~X=e_1~in~{\cal C'}$: therefore ${\cal C}[e] \equiv let~X=e_1~in~{\cal C'}[e]$. Then
$$
\begin{array}{c}
({\cal C}[e])\sigma \equiv let~X=e_1\sigma~in~({\cal C'}[e])\sigma \equiv_{\mathit{IH}} let~X=e_1\sigma~in~{\cal C'}\sigma[e\sigma] \\
\equiv (let~X=e_1\sigma~in~ ({\cal C'}[])\sigma)[e\sigma] \equiv^{(*)} ((let~X=e_1~in~{\cal C'}[])\sigma)[e\sigma] \equiv {\cal C}\sigma[e\sigma]
\end{array}
$$
$(*)$: we can apply the last step because by hypothesis we can assure that we do not need any renaming to apply $(let~X=e_1~in~{\cal C'}[])\sigma$.
\end{itemize}
\end{proof}

\begin{lemma}\label{lDios}
For any $e \in Exp_{\perp}$, $t \in CTerm_{\perp}$ and program ${\cal P}$, if
${\cal P}\vdash e \clto t$ then there is a derivation for ${\cal P}\vdash e \clto t$ in which every
free variable used belongs to $FV(e \clto t)$.
\end{lemma}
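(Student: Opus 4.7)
The plan is to take any CRWL-derivation of $e \crwlto t$ and uniformly eliminate every free variable not in $W := FV(e \crwlto t) = FV(e) \cup FV(t)$ by replacing it with $\perp$. Concretely, I would define the partial c-substitution $\sigma = [Y/\bot \mid Y \in \var \setminus W] \in CSubst_\perp$ and apply $\sigma$ to every expression appearing in every judgment of the given derivation. The result should be a valid CRWL-derivation of $e\sigma \clto t\sigma$, which equals $e \clto t$ because $\sigma$ is the identity on $W \supseteq FV(e) \cup FV(t)$. Since $\sigma$ sends every variable outside $W$ to the nullary constructor $\perp$, the transformed derivation necessarily uses only variables in $W$.

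To justify the key claim I would proceed rule by rule on the CRWL calculus of Figure \ref{fig:crwl}. Rule $(B)$ transports trivially: $e \clto \perp$ becomes $e\sigma \clto \perp$. For $(RR)$, $X \clto X$ remains an $(RR)$ step if $X \in W$; otherwise it becomes $\perp \clto \perp$, which is derivable by $(B)$ (or as a zero-premise $(DC)$). The $(DC)$ rule translates directly, since constructors commute with substitution. The interesting case is $(OR)$ with a rule $(f(\bar p) \to r)\in\prog$ and parameter-passing $\theta \in CSubst_\perp$: after renaming rule variables (by the variable convention) so that they lie outside $\mathrm{dom}(\sigma)$, I would apply $(OR)$ again with the composed substitution $\theta' = \lambda X.\,(X\theta)\sigma$, so that the transformed premises read $e_i\sigma \clto p_i\theta'$ and $r\theta' \clto t\sigma$. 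Routine term-induction shows $(p_i\theta)\sigma = p_i\theta'$ and $(r\theta)\sigma = r\theta'$; and $\theta' \in CSubst_\perp$ because substituting $\perp$ for variables inside partial c-terms gives partial c-terms.

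The hard part, such as it is, lies in the $(OR)$ case: one has to be careful that the rule's formal parameters (both pattern variables and extra variables on the right-hand side) are not accidentally in $\mathrm{dom}(\sigma)$ and do not get captured when forming $\theta' = \theta \circ \sigma$. The variable convention together with an explicit $\alpha$-renaming of each fresh copy of the rule suffices. A quick sanity check for the rest is that $\sigma$ is global (applied once to the whole proof tree), so that occurrences of the same variable in different judgments are substituted coherently — this is what makes the premise $e_i \clto p_i\theta$ and the sibling premise $r\theta \clto t$ turn into well-matched premises $e_i\sigma \clto p_i\theta'$ and $r\theta' \clto t\sigma$ of the new $(OR)$ node. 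An alternative route would be to invoke Proposition \ref{closednessCSubst} directly to obtain existence of a derivation of $e\sigma \clto t\sigma$; however, that existence statement does not by itself certify that the produced derivation avoids forbidden variables, so the rule-by-rule analysis sketched above (which effectively reconstructs the proof of closedness) still seems the most transparent path.
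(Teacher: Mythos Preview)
The paper does not give a proof here; it simply cites \cite{DiosLopez07}. Your approach---uniformly instantiating every variable outside $W=FV(e)\cup FV(t)$ to $\perp$ and checking that each CRWL rule instance survives the substitution---is the standard way to prove this kind of result and is essentially correct.

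Two small technical points deserve attention. First, as written your $\sigma$ has infinite domain, while the paper defines substitutions as \emph{finite} mappings; the easy fix is to restrict $\sigma$ to the (finite) set of variables actually occurring in the given derivation. Second, your remark in the $(OR)$ case about ``renaming rule variables \ldots\ so that they lie outside $\mathrm{dom}(\sigma)$'' is stated backwards: $\mathrm{dom}(\sigma)=\var\setminus W$, so ``outside $\mathrm{dom}(\sigma)$'' means \emph{inside} the finite set $W$, which is neither possible in general nor what you want. Fortunately no renaming is needed at all: defining $\theta'$ on the rule variables by $X\theta' = (X\theta)\sigma$ (and leaving it undefined elsewhere, so that $\theta'\in CSubst_\perp$ has finite domain) already gives $p_i\theta' \equiv (p_i\theta)\sigma$ and $r\theta' \equiv (r\theta)\sigma$ by a straightforward term induction, regardless of whether the rule variables fall inside or outside $W$. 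With these adjustments your argument goes through.
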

\begin{proof}\label{DEMO_lDios}
A simple extension of the proof in \cite{DiosLopez07}.
\end{proof}

\begin{lemma}\label{LAlfCrwlP}
For every ${\it CRWL}_{\it let}$ derivation $e \clto t$ there exists $e' \in
LExp_{\perp}$ which is syntactically equivalent to $e$ module
$\alpha$-conversion, and a $CRWL_{let}$ derivation for $e' \clto t$ such
that if ${\cal B}$ is the set of bound variables used in $e' \clto t$ and
${\cal E}$ is the set of free variables used in the instantiation of extra
variables in $e' \clto t$ then ${\cal B} \cap ({\cal E} \cup var(t)) =
\emptyset$.
\end{lemma}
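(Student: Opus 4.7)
The plan is to combine two tools: Lemma~\ref{lDios} (which bounds the free variables appearing in a \crwll-derivation) and $\alpha$-conversion (which allows free renaming of bound variables). First, I would apply Lemma~\ref{lDios} to the given derivation of $e \clto t$, obtaining a derivation $D$ in which every free variable used belongs to $FV(e \clto t) = FV(e) \cup var(t)$. The extension of Lemma~\ref{lDios} from \crwl\ to \crwll\ is immediate, since the only new rule {\bf (Let)} neither introduces nor consumes free variables beyond those already present in its premises. In particular, the set $\mathcal{E}$ of free variables used to instantiate extra variables in {\bf (OR)} steps of $D$ satisfies $\mathcal{E} \subseteq FV(e) \cup var(t)$.

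Next, I would perform a joint $\alpha$-renaming of all bound variables that appear in $D$, covering both (i) the bound variables of $e$ itself (which travel through the derivation via {\bf (Let)} steps) and (ii) the bound variables occurring in the right-hand sides of rules used at each {\bf (OR)} step. Since program rules can always be taken as fresh variants, and since \crwll-derivability is closed under $\alpha$-conversion of $e$ (by a routine induction on the proof tree that rewrites each {\bf (Let)} step $\mifrac{}{e_1\clto t_1 \quad e_2[X/t_1]\clto t}{let\ X=e_1\ in\ e_2 \clto t}$ so as to discharge a freshly chosen $X'$ instead of $X$), we can select all these renamings from a pool of variables completely disjoint from $FV(e) \cup var(t)$. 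This yields an expression $e'$ equivalent to $e$ modulo $\alpha$-conversion and a derivation $D'$ of $e' \clto t$ whose set $\mathcal{B}$ of used bound variables satisfies $\mathcal{B} \cap (FV(e) \cup var(t)) = \emptyset$.

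Since the joint renaming preserves the already-established bound $\mathcal{E} \subseteq FV(e) \cup var(t)$ (the variables appearing in the ranges of the extra-variable instantiations were in $FV(e)\cup var(t)$ and are left untouched by the $\alpha$-step, which only renames let-bound positions), we conclude $\mathcal{B} \cap (\mathcal{E} \cup var(t)) \subseteq \mathcal{B} \cap (FV(e) \cup var(t)) = \emptyset$, as required.

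The main obstacle I expect is the bookkeeping of the $\alpha$-renaming across the whole proof tree: one must propagate a renaming of a bound variable $X$ consistently through every {\bf (Let)} step that discharges $X$ and through the body $e_2[X/t_1]$ whose structure has already been substituted, and one must simultaneously re-choose each c-substitution $\theta$ in an {\bf (OR)} step so that the renamed bound variables do not clash with $\vran(\theta)$. This is best handled by formalising the renaming as a single syntactic transformation on derivations, proceeding by structural induction and invoking the substitution lemma (Lemma~\ref{auxBind}) to commute the renaming past the substitutions occurring in {\bf (Let)} premises; the argument is routine but slightly tedious.
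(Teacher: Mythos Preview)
Your proposal is correct and follows essentially the same route as the paper: both combine Lemma~\ref{lDios} (to confine the free variables used in the derivation to $FV(e)\cup var(t)$) with $\alpha$-conversion of the bound variables in $e$ and in the right-hand sides introduced at {\bf(OR)} steps, so that $\mathcal{B}$ becomes disjoint from that free-variable set and hence from $\mathcal{E}\cup var(t)$. The only cosmetic difference is that the paper phrases the $\alpha$-conversion as maintaining the local invariant $BV(a)\cap FV(a)=\emptyset$ for every expression $a$ occurring in the derivation, whereas you perform a single global renaming against the fixed set $FV(e)\cup var(t)$; both formulations yield the same conclusion.
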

\begin{proof}\label{DEMO_LAlfCrwlP}
By Lemma \ref{lDios}, if ${\cal F}$ is the set of free variables used in $e'
\clto t$, then ${\cal F} \subseteq FV(e' \clto t)$, in fact ${\cal F} = FV(e'
\clto t)$, as $FV(e')$ and $FV(t)$ are used in the top derivation of the
derivation tree for $e' \clto t$. As by definition ${\cal E} \cup var(t)
\subseteq {\cal F}$, if we prove ${\cal B} \cap {\cal F} = \emptyset$ then
${\cal B} \cap ({\cal E} \cup var(t)) = \emptyset$ is a trivial consequence. To
prove that we will prove that for every $a \in LExp_{\perp}$ used in the
derivation for $e' \clto t$ we have $BV(a) \cap FV(a) = \emptyset$. We can build
$e'$ using $\alpha$-conversion to ensure that $BV(e') \cap FV(e') =
\emptyset$. This can be easily maintained as an invariant during the derivation,
as the new let-bindings that appear during the derivation are those
introduced in the instances of the rule used during the \textbf{OR} steps, and
be can ensure by $\alpha$-conversion that $BV(a) \cap FV(a) = \emptyset$ for
these instances too, as $\alpha$-conversion leaves the hypersemantics
untouched.
\end{proof}

\subsection{Proofs for Section \ref{sectPrelimCRWL}}

\teoremi{Theorem \ref{thCompoCrwl} (Compositionality of \crwl)}
For any $\con \in Cntxt$, $e, e' \in Exp_\perp$
$$
\den{\con[e]} = \bigcup\limits_{t \in \den{e}} \den{\con[t]}
$$
As a consequence: $\den{e} = \den{e'} \Leftrightarrow \forall \con \in Cntxt. \den{\con[e]} = \den{\con[e']}$

\begin{proof}\label{DEMO_thCompoCrwl}
We prove that $\con[e] \clto t \Leftrightarrow \exists s \in CTerm_{\perp}$ such that $e \clto s$ and $\con[s] \clto t$.
~\\~\\
$\Rightarrow$) Induction on the size of the proof for $\con[e] \clto t$.

{\bf Base case} The base case only allows the proofs $\con[e]\clto\bot$ using \clrule{B}, $\con[e]\equiv X\clto X$ using \clrule{RR} and $\con[e]\equiv c \clto c$ with $c \in CS$ using \clrule{DC}, that are clear. When $\con = [\ ]$ the proof is trivial with $s = t$ and using Lemma \ref{lemma:crwlPatterns}.

{\bf Inductive step} Direct application of the IH.

~\\~\\
$\Leftarrow$) By induction on the size of the proof for $\con[s] \clto t$

{\bf Base case} 
The base case only allows the proofs $\con[s]\clto\bot$, $\con[s]\equiv X\clto X$ and $\con[s]\equiv c \clto c$ with $c \in CS$, that are clear. When $\con = [\ ]$ we have that $\exists s \in CTerm_{\perp}$ such that $e \clto s$ and $s \clto t$. Since $s \clto t$ by Lemma \ref{lemmashells} we have $t \ordap s$, and using Proposition \ref{propCrwlletPolar} $e \clto t$ ---as $e \ordap e$ because $\ordap$ is a partial order.

{\bf Inductive step} Direct application of the IH.

\end{proof}

\subsection{Proofs for Section \ref{discussion}}

\teoremi{Theorem \ref{brcrwl}}
Let ${\cal P}$ be a \crwl-program, $e \in Exp_\perp$ and $t \in CTerm_\perp$. Then:
$${\cal P} \conscrwl e \crwlto t\ \mathit{iff}\ e \rightarrowtail^*_{\cal P} t$$
\begin{proof}
It is easy to see that $\rightarrowtail^*$ 
coincides with the  relation  defined by the   {\it BRC}-proof calculus of \cite{GHLR99}, that is,
${\cal P} \vdash_{\it BRC} e \crwlto e'\ \leftrightarrow e \rightarrowtail^* e'$.
But in that paper it is proved that {\it BRC}-derivability
and \crwl-derivability (called there {\it GORC}-derivability) are equivalent.
\end{proof}

\subsection{Proofs for Section \ref{let-rewriting}}

\teoremi{Lemma \ref{auxBind} (Substitution lemma for let-expressions)}
Let $e, e'\in LExp_{\perp}$, $\theta \in Subst_{\perp}$ and $X \in \var$ such that $X \not\in dom(\theta) \cup \vran(\theta)$.
Then:
$$(e[X/e'])\theta \equiv e\theta[X/e'\theta]$$

\begin{proof}\label{DEMO_auxBind}
By induction over the structure of $e$. The most interesting cases are the base cases:
\begin{itemize}
\item $e \equiv X$: Then
$
\begin{array}{l}
\\
(e[X/e'])\theta \equiv (X[X/e'])\theta \equiv e'\theta \equiv X[X/e'\theta] \\
\equiv_{X \not\in dom(\theta)} X\theta[X/e'\theta] \equiv e\theta[X/e'\theta] \\
\end{array}
$
\item $e \equiv Y \not\equiv X$: Then
$
\begin{array}{l}
\\
(e[X/e't])\theta \equiv (Y[X/e'])\theta \equiv Y\theta \\
\equiv_{X \not\in ran(\theta)} Y\theta[X/e'\theta] \equiv e\theta[X/e'\theta] \\
\end{array}
$
\end{itemize}
\end{proof}

\subsection{Proofs for Section \ref{sect:letRwRelation}}

\teoremi{Lemma \ref{LRwCerr} (Closedness under $CSubst$ of let-rewriting)}
For any $e, e' \in LExp$, $\theta \in CSubst$ we have that $e \f^n e'$ implies $e\theta \f^n e'\theta$.
\begin{proof}\label{DEMO_LRwCerr}
We prove that $e \f e'$ implies $e\theta \f e'\theta$ by a case distinction over the rule of the let-rewriting calculus applied:
 \begin{description}
        \item[\lrrule{Fapp}] Assume $f(t_1, \ldots, t_n) \f r$, using $(f(p_1, \ldots, p_n) \to e) \in \mathcal{P}$ and $\sigma \in CSubst$ such that $\forall i.p_i\sigma = t_i$ and $e\sigma = r$. But since $\sigma \theta \in CSubst$ and $\forall i.p_i\sigma\theta = t_i\theta$ then we can perform a \lrrule{Fapp} step $f(t_1, \ldots, t_n)\theta \equiv f(t_1\theta, \ldots, t_n\theta) \f e\sigma\theta \equiv r\theta$.


        \item[\lrrule{LetIn}] Easily since $X \not\in dom(\theta)$ because $X$ is fresh.

        \item[\lrrule{Bind}] Assume $let~X=t~in~e \f e[X/t]$ and some $\theta \in CSubst$. Then $t \in CTerm$ by the conditions of \lrrule{Bind}, hence $t\theta \in CTerm$ too and we can perform a \lrrule{Bind} step $(let~X=t~in~e)\theta \equiv let~X=t\theta~in~e\theta \f e\theta[X/t\theta]$. Besides $X \not\in (dom(\theta) \cup \vran(\theta))$ by the variable convention, and so $e\theta[X/t\theta] \equiv e[X/t]\theta$ by Lemma \ref{auxBind}, so are done.

        \item[\lrrule{Elim}] Easily as $X \not\in FV(e_2\theta)$ because $X \not\in \vran(\theta)$ by the variable convention.

        \item[\lrrule{Flat}] Similar to the previous case since $Y \not\in FV(e_3\theta)$.

        \item[\lrrule{Contx}] Assume ${\cal C}[e] \f {\cal C}[e']$ because $e \f' e'$ by one of the previous rules, and some $\theta \in CSubst$. Then we have already proved that $e\theta \f e'\theta$. Besides by the variable convention we have $BV({\cal C}) \cap (dom(\theta) \cup \vran(\theta)) = \emptyset$, hence by Lemma \ref{T28} $(\con[e])\theta \equiv \con\theta[e\theta]$. Furthermore, if $e \f e'$ was a \lrrule{Fapp} step using $\sigma \in CSubst$ to build the instance of the program rule $(f(\overline{p})\sigma \tor r\sigma)$, then $\vran(\sigma|_{\setminus var(\overline{p})}) \cap BV(\con) = \emptyset$ by the conditions of \lrrule{Contx}, and therefore $\vran((\sigma\theta)|_{\setminus var(\overline{p})}) \cap BV(\con) = \emptyset$. But as $\sigma\theta$ is the substitution used in the \lrrule{Fapp} step $e\theta \f e'\theta$, then $\con\theta[e\theta] \f \con\theta[e'\theta]$ by \lrrule{Contx}. 
        On the other hand, if $e \f e'$ was not a \lrrule{Fapp} step then $\con\theta[e\theta] \f \con\theta[e'\theta]$ too, and finally we can apply Lemma \ref{T28} again to get $\con\theta[e'\theta] \equiv (\con[e'])\theta$.
        \end{description}

The proof for $e \f^n e'$ proceeds straightforwardly by induction on the length $n$ of the derivation.
\end{proof}

\teoremi{Proposition \ref{termlr} (Termination of $\fnf$)}
Under any program we have that $\fnf$ is terminating.
\begin{proof}\label{DEMO_termlr}
We define for any $e\in LExp$ the size $(k_1,k_2,k_3)$, where
\\[0.5ex]
{\em
\begin{tabular}{l}
$k_1 \equiv$ number of subexpressions in $e$ to which \lrrule{LetIn} is applicable.
\\
$k_2 \equiv$ number of \emph{lets} in $e$.
\\
$k_3 \equiv$ sum of the levels of nesting of all let-subexpressions in $e$.
\end{tabular}
\\[0.5ex]
}

\noindent
Sizes are lexicographically ordered. We prove now that application of \emph{\lrrule{LetIn}, \lrrule{Bind}, \lrrule{Elim}, \lrrule{Flat}} in any context (hence, also the application of \lrrule{Contxt}) decreases the size, what proves termination of $\fnf$. The effect of each rule in the size is summarized as follows (in each case, we stop
at the decreasing component):
\begin{center}
\begin{tabular}{ll}
\emph{(LetIn):} & $(<,\_,\_)$
\\
\emph{(Bind):} & $(=,<,\_)$
\\
\emph{(Elim):} & $(\leq,<,\_)$
\\
\emph{(Flat):} & $(=,=,<)$
\\
\end{tabular}
\end{center}
\end{proof}

\teoremi{Lemma \ref{lemma:bigPeeling} (Peeling lemma)}
For any $e, e' \in LExp$ if $e \nf{\it lnf} e'$ ---i.e, $e'$ is a $\fnf$ normal form for $e$--- then $e'$ has the shape $e' \equiv let~\overline{X=f(\overline{t})}~in~e''$ such that $e'' \in \var$ or $e'' \equiv h(\overline{t'})$ with $h \in \Sigma$, $\overline{f} \subseteq FS$ and $\overline{t},\overline{t'} \subseteq CTerm$. \\
Moreover if $e \equiv h(e_1, \ldots, e_n)$ with $h \in \Sigma$, then
$$
e \equiv h(e_1, \ldots, e_n) \fnfe let~\overline{X=f(\overline{t})}~in~h(t_1, \ldots, t_n) \equiv e'
$$
under the conditions above, and verifying also that $t_i \equiv e_i$ whenever $e_i \in CTerm$.

\begin{proof}\label{DEMO_lemma:bigPeeling}
We prove it by contraposition: if an expression $e$ does not have that shape, $e$ is not a $\fnf$ normal form. We define the set of expressions which are not cterms as:

\begin{tabular}{ll@{}}
$nt$ ::=    & $c(\ldots,nt,\ldots)$ \\
            & $|~f(\overline{e})$ \\
            & $|~let~X=e_1~in~e_2$ \\
\end{tabular}

We also define the set of expressions which do not have the presented shape recursively as:

\begin{tabular}{ll}
$ne$ ::=    & $h(\ldots, nt, \ldots)$ \\
            & $|~let~X=f(\overline{t})~in~ne$ \\
            & $|~let~X=f(\ldots,nt,\ldots)~in~e$ \\
            & $|~let~X=c(\overline{e})~in~e$ \\
            & $|~let~X=(let~Y=e'~in~e'')~in~e$ \\
\end{tabular}

We prove by induction on the structure of an expression $ne$ that it is always possible to perform a $\fnf$ step:

{\bf Base case:}
\begin{itemize}
\item $ne \equiv h(\ldots, nt, \ldots)$: there are various cases depending on $nt$:
    \begin{itemize}
        \item at some depth the non-cterm will contain a subexpression $c'(\ldots, nt', \ldots)$ where $nt'$ is a function application $f(\overline{e})$ or a let-rooted expression $let~X=e_1~in~e_2$. Therefore we can apply the rule \lrrule{Contx} with \lrrule{LetIn} in that position.
        \item $f(\overline{e})$: we can apply the rule \lrrule{LetIn} and perform the step $$h(\ldots, f(\overline{e}), \ldots) \fnf let~X=f(\overline{e})~in~h(\ldots, X, \ldots)$$
        \item $let~X=e_1~in~e_2$: the same as the previous case.
    \end{itemize}
\item $let~X=f(\ldots,nt,\ldots)~in~e$: we can perform a \lrrule{Contx} with \lrrule{LetIn} step in $f(\ldots,nt,\ldots)$ as in the previous $h(\ldots, nt, \ldots)$ case.
\item $let~X=c(\overline{e})~in~e$: if $\overline{e}$ are cterms $\overline{t}$, then $c(\overline{t})$ is a cterm and we can perform a \lrrule{Bind} step $let~X=c(\overline{t})~in~e \fnf e[X/c(\overline{t})]$. If $\overline{e}$ contains any expression $ne$ then we can perform a \lrrule{Contx} with \lrrule{LetIn} step as in the previous $h(\ldots, nt, \ldots)$ case.
\item $let~X=(let~Y=e'~in~e'')~in~e$: by the variable convention we can assume that $Y \notin FV(e)$, so we can perform a \lrrule{Flat} step $let~X=(let~Y=e'~in~e'')~in~e \fnf let~Y=e'~in~let~X=e''~in~e$.
\end{itemize}

{\bf Inductive step:}
\begin{itemize}
    \item $let~X=f(\overline{t})~in~ne$: by IH we have that $ne \fnf ne'$, so by the rule \lrrule{Contx} we can perform a step $let~X=f(\overline{t})~in~ne \fnf let~X=f(\overline{t})~in~ne'$.
\end{itemize}

Notice that if the original expression has the shape $h(e_1, \ldots, e_n)$ the arguments $e_i$ which are cterms remain unchanged in the same position. The reason is that no rule can affect them: the only rule applicable at the top is \lrrule{LetIn}, and it can not place them in a let binding outside $h(\ldots)$; besides cterms do not match with the left-hand side of any rule, so they can not be rewritten by any rule.
\end{proof}

\teoremi{Lemma \ref{LCascCrec} (Growing of shells)}
Under any program $\prog$ and for any $e, e' \in LExp$
\begin{enumerate}
    \item[i)] $e \fe e'$ implies $|e| \sqsubseteq |e'|$
    \item[ii)] $e \fnfe e'$ implies $|e| \equiv |e'|$
\end{enumerate}

\begin{proof}[Proof for Lemma \ref{LCascCrec}]\label{DEMO_LCascCrec}
We prove the lemma for one step ($e \f e'$ and $e \fnf e'$) by a case distinction over the rule of the let-rewriting calculus applied:
\begin{description}
\item[(Fapp)] The step is $f(t_1, \ldots, t_n) \f r$, and $|f(t_1, \ldots, t_n)| = \perp \sqsubseteq |r|$.

\item[(LetIn)] The equality $|h(e_1,\ldots, e, \ldots, e_n)| = |let~X=e~in~h(e_1,\ldots, X, \ldots, e_n)|$ follows easily by a case distinction on $h$.

\item[(Bind)] The step is $let~X=t~in~e \f e[X/t]$, so $|let~X=t~in~e| = |e|[X/|t|] = |e[X/t]|$ by Lemma \ref{LCasc1}.

\item[(Elim)] The step is $let~X=e_1~in~e_2 \f e_2$ with $X \notin FV(e_2)$. Then $|let~X=e_1~in~e_2| = |e_2|[X/|e_1|] = |e_2|$. Since the variables in the shell of an expression is a subset of the variables in the original expression, we can conclude that if $X \notin FV(e_2)$ then $X \notin FV(|e_2|)$.

\item[(Flat)] The step is $let~X = (let~Y =e_1~in~e_2)~in~e_3 ~\f~ let~Y = e_1~in~(let~X=e_2~in~e_3)$ with $Y \notin FV(e_3)$. By the variable convention we can assume that $X \notin FV(let~Y =e_1~in~e_2)$ ---in particular $X \notin FV(e_1)$. Then:
        $$
        \begin{array}{ll}
        |let~Y = e_1~in~(let~X=e_2~in~e_3)| & \\
        = |let~X=e_2~in~e_3|[Y/|e_1|] & \\
        = (|e_3|[X/|e_2|])[Y/|e_1|] & \\
        \end{array}
        $$
        Notice that $X \notin dom([Y/|e_1|])$ and $X \notin \vran([Y/|e_1|]) = FV(|e_1|)$ because $X \notin FV(e_1)$ and $FV(|e_1|) \subseteq FV(e_1)$. Therefore we can use Lemma \ref{auxBind}:
        $$
        \begin{array}{ll}
        (|e_3|[X/|e_2|])[Y/|e_1|] & \\
        = (|e_3|[Y/|e_1|])[X/(|e_2|[Y/|e_1|])] & \mbox{By Lemma \ref{auxBind}} \\
        = |e_3|[X/(|e_2|[Y/|e_1|])] & Y \notin FV(e_3) \mbox{, so } Y \notin FV(|e_3|) \\
        = |e_3|[X/|let~Y =e_1~in~e_2|] & \\
        = |let~X = (let~Y =e_1~in~e_2)~in~e_3| & \\
        \end{array}
        $$

\item[(Contx)] The step is $\cntx[e] \f \cntx[e']$ with $e \f e'$ using any of the previous rules. Then we have $|e| \sqsubseteq |e'|$, and by Lemma \ref{LLetOrd3} $\cntx[e] \sqsubseteq \cntx[e']$. If the step is $\cntx[e] \fnf \cntx[e']$ then rule (Fapp) has not been used in the reduction $e \fnf e'$ and by the previous rules we have $|e| = |e'|$. In that case by Lemma \ref{lemma:context} we have $\cntx[e] = \cntx[e']$.
\end{description}
The extension of this result to $\fe$ and $\fnfe$ is a trivial induction over the number of steps of the derivation.
\end{proof}

\subsection{Proofs for Section \ref{crwllet}}

\teoremi{Theorem \ref{thEquivCrwlCrwllet} (\crwl\ vs. \crwll)}
For any program $\prog$ without lets, and any $e \in Exp_\perp$:
$$
\dc{e}^{\prog} = \dcl{e}^{\prog}
$$
\begin{proof}
As any calculus rule from \crwl\ is also a rule from \crwll, then any \crwl-proof is also a \crwll-proof, therefore $\dc{e} \subseteq \dcl{e}$.
For the other inclusion, assume no let-binding is present in the program and let $e \in Exp$. Then, for any $t \in CTerm_\perp$, as the rules of \crwll\ do not introduce any let-binding and the rule (Let) is only used for let-rooted expressions, the \crwll-proof $\prog \vdcrwll e \clto t$ will be also a \crwl-proof for $\prog \vdcrwll e \clto t$, hence $\dcl{e} \subseteq \dc{e}$ too.
\end{proof}

%

The following Lemma is used to prove point \emph{iii)} of Lemma \ref{lemmashells}. Notice that this Lemma uses the notions of hyperdenotation ($\denn{~}$) and hyperinclusion ($\ohs$) presented in the final part of Section \ref{crwllet}.

\begin{lemma}\label{LemEx2iii}
Under any program $\mathcal{P}$ and for any $e \in LExp_\perp$ we have that $\denn{e} \leqhyp \lambda \theta.(|e\theta|\!\!\uparrow)\!\!\downarrow$.
\end{lemma}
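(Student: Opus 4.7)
The statement unfolds as follows: for every $\theta \in CSubst_\perp$ and $t \in \den{e\theta}$, we must exhibit $s \in LExp_\perp$ with $|e\theta| \ordap s$ and $t \ordap s$. Because both $|e\theta|$ and $t$ lie in $CTerm_\perp$, the existence of such an $s$ is equivalent to $|e\theta|$ and $t$ admitting a common upper bound in $(CTerm_\perp,\ordap)$, i.e., being compatible as partial c-terms. The role of $\theta$ is inessential: the plan is to prove the more general fact that for every $e' \in LExp_\perp$ and every $t \in \den{e'}$, $|e'|$ and $t$ are compatible, and then specialize to $e' = e\theta$.

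I would proceed by induction on the size of the $\crwllet$-proof of $e' \clto t$. The easy cases all follow a pattern where one side is $\bot$: in \lrrule{B}, $t = \bot$; in \lrrule{OR}, $|e'| = \bot$; in \lrrule{RR} the two sides coincide; and in \lrrule{DC}, the per-argument IH witnesses combine under the shared constructor root into $s = c(s_1, \ldots, s_n)$. The delicate case is \lrrule{Let}, where $e' = let~X=e_1~in~e_2$ is derived from premises $e_1 \clto t_1$ and $e_2[X/t_1] \clto t$. The two IH instances supply a common upper bound $s_1$ of $|e_1|$ and $t_1$, and a common upper bound $s$ of $|e_2|[X/t_1] = |e_2[X/t_1]|$ and $t$ (using Lemmas \ref{shellPatterns} and \ref{LCasc1} to rewrite the shell of the substituted expression). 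What we still need is a common upper bound of $|e'| = |e_2|[X/|e_1|]$ and $t$.

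The main obstacle is that $|e_1|$ and $t_1$ can be genuinely incomparable (only jointly bounded by $s_1$), so replacing $t_1$ with $|e_1|$ inside $|e_2|$ breaks the match with $t$ exactly at those positions of $|e_2|$ occupied by $X$. The plan to bypass this is to first establish, by structural induction on $e$ and uniformly in $\sigma \in CSubst_\perp$, the auxiliary positional fact: whenever $e\sigma \clto t$, at every position $p$ of $|e|$ at which the shell equals some variable $X \in dom(\sigma)$ and which also belongs to $O(t)$, we have $t|_p \ordap \sigma(X)$. The base case $e \equiv X$ is precisely Lemma \ref{lemmashells} i); constructor cases propagate component-wise; and the inner \lrrule{Let} case uses the elementary fact that $u \ordap v$ forces $O(u) \subseteq O(v)$ with $u|_q \ordap v|_q$ pointwise, in order to chain the positional control across a nested let-binding. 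Once the auxiliary fact is in hand, applying it to the second premise $e_2[X/t_1] \clto t$ yields $t|_p \ordap t_1 \ordap s_1$ at every $X$-position $p$ of $|e_2|$, while at every non-$X$ position $q$ of $|e_2|$ the upper bound $s$ already dominates both the shell $|e_2|$ and $t$. Assembling $s'$ by setting $s'|_p = s_1$ at the $X$-positions of $|e_2|$ and $s'|_q = s|_q$ elsewhere then yields, by a direct pointwise check, $|e_2|[X/|e_1|] \ordap s'$ and $t \ordap s'$, completing the \lrrule{Let} step and thus the induction.
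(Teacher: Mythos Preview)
Your proof is correct, but it takes a noticeably more laborious route than the paper's. The key strategic difference is your remark that ``the role of $\theta$ is inessential'' and your decision to specialize it away before starting the induction. The paper does exactly the opposite: it keeps $\theta$ universally quantified and proceeds by structural induction on $e$. In the \crule{Let} case, after obtaining from the IH on $e_1$ an upper bound $t'_1$ of $|e_1\theta|$ and $t_1$, the paper uses monotonicity (Proposition~\ref{PropMonSubstCrwlLet}) to upgrade the second premise to $e_2\theta[X/t'_1] \clto t$, and then applies the IH on $e_2$ with the \emph{modified} substitution $\theta[X/t'_1]$. This immediately yields a $t'$ with $t \ordap t'$ and $|e_2\theta[X/t'_1]| \ordap t'$, and since $|e_1\theta| \ordap t'_1$ one gets $|e\theta| = |e_2\theta|[X/|e_1\theta|] \ordap |e_2\theta|[X/t'_1] = |e_2\theta[X/t'_1]| \ordap t'$ directly, with no positional reasoning.

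By discarding $\theta$, you lose precisely this ability to re-instantiate the IH with a larger substitution, and you are forced to recover the lost information through your auxiliary positional lemma (which is itself a structural induction on $e$, so you end up running two nested inductions) and then to hand-assemble the witness $s'$ by surgery on $s$ at the $X$-positions of $|e_2|$. That assembly is correct --- the $X$-positions are pairwise incomparable leaves of $|e_2|$, they all lie in $O(s)$ because $|e_2|[X/t_1] \ordap s$, and the cases $p \notin O(t)$ are handled since $t \ordap s$ forces $t$ to be $\perp$ at some prefix --- but it is a fair amount of work that the paper's parametric-in-$\theta$ formulation sidesteps entirely. The moral is that the hyperdenotation formulation is not just cosmetic here: keeping the substitution free in the induction hypothesis is what makes the \crule{Let} case go through in one step.
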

\begin{proof}
We will use the following equivalent characterization of $(e\!\!\uparrow)\!\!\downarrow$:
$$
(e\!\!\uparrow)\!\!\downarrow = \{e_1 \in LExp_\perp~|~\exists e_2 \in LExp_\perp.~e \ordap e_2 \wedge e_1 \ordap e_2\}
$$
note that $\{e_2 \in LExp_\perp~|~e \ordap e_2 \}$ is precisely the set $e\!\!\uparrow$. Besides note that:
$$
\begin{array}{l}
\denn{e} \leqhyp \lambda \theta.(|e\theta|\!\!\uparrow)\!\!\downarrow \\
\Leftrightarrow \forall \theta \in CSubst_\perp.~\den{e\theta} \subseteq (|e\theta|\!\!\uparrow)\!\!\downarrow \\
\Leftrightarrow \forall \theta \in CSubst_\perp, t \in CTerm_\perp.~e\theta \clto t\\
~~~~~~\Rightarrow t \in (|e\theta|\!\!\uparrow)\!\!\downarrow  \\
\Leftrightarrow \forall \theta \in CSubst_\perp, t \in CTerm_\perp.~e\theta \clto t\\
~~~~~~\Rightarrow \exists t' \in CTerm_\perp.~ |e\theta| \ordap t' \wedge t \ordap t'
\end{array}
$$
where $t' \in CTerm_\perp$ is implied by $|e\theta| \ordap t'$. To prove this last formulation first consider the case when $t \equiv \perp$. Then we are done with $t' \equiv |e\theta|$ because then $|e\theta| \ordap |e\theta| \equiv t'$ and $t \equiv \perp \ordap |e\theta| \equiv t'$.

For the other case we proceed by induction on the structure of $e$. Regarding the base cases:
\begin{itemize}
    \item If $e \equiv \perp$ then $t \equiv \perp$ and we are in the previous case.
    \item If $e \equiv X \in \var$ then $e\theta \equiv \theta(X) \clto t$, and as $\theta \in CSubst_\perp$ then $\theta(X) \in CTerm_\perp$ which implies $t \ordap \theta(X)$ by Lemma \ref{lemmashells}. But then we can take $t' \equiv \theta(X)$ for which $t \ordap \theta(X) \equiv t'$ and $|e\theta| \equiv |\theta(X)| \equiv \theta(X)$ ---by Lemma \ref{shellPatterns} since $\theta(X) \in CTerm_\perp$---, and $\theta(X) \ordap \theta(X) \equiv t'$.
    \item If $e \equiv c \in DC$ then either $t \equiv \perp$ and we are in the previous case, or $t \equiv c$. But then we can take $t' \equiv c$ for which $|e\theta| \equiv c \ordap c \equiv t'$, and $t \equiv c \ordap c \equiv t'$.
    \item If $e \equiv f \in FS$ then $|e\theta| \equiv |f| \equiv \perp$, and so $|e\theta|\!\!\uparrow = CTerm_\perp$ and $(|e\theta|\!\!\uparrow)\!\!\downarrow = CTerm_\perp \supseteq \den{e\theta}$, so we are done.
\end{itemize}
Concerning the inductive steps:
\begin{itemize}
    \item If $e \equiv f(e_1, \ldots, e_n)$ for $f \in FS$ then $|e\theta| \equiv \perp$ and we proceed like in the case for $e \equiv f$.
    \item If $e \equiv c(e_1, \ldots, e_n)$ for $c \in DC$ then either $t \equiv \perp$ and we are in the previous case, or $t \equiv c(t_1, \ldots, t_n)$ such that $\forall i.~e_i\theta \clto t_i$. But then by IH we get $\forall i.~\exists t'_i.~|e_i\theta| \ordap t'_i \wedge t_i \ordap t'_i$, so we can take $t' \equiv c(t'_1, \ldots, t'_n)$ for which $|e\theta| \equiv c(|e_1\theta|, \ldots, |e_n\theta|) \ordap c(t'_1, \ldots, t'_n) \equiv t'$ and $t \equiv c(t_1, \ldots, t_n) \ordap c(t'_1, \ldots, t'_n) \equiv t'$.
    \item If $e \equiv let~X=e_1~in~e_2$ then either $t \equiv \perp$ and we are in the previous case, or we have the following proof:
$$
\infer[Let]{e\theta \equiv let~X=e_1\theta~in~e_2\theta \clto t}
           {
          e_1\theta \clto t_1
        \ ~~e_2\theta[X/t_1] \clto t
            }
$$

Then by IH over $e_1$ we get that $\exists t'_1.~|e_1\theta| \ordap t'_1 \wedge t_1 \ordap t'_1$. Hence $[X/t_1] \ordap [X/t'_1]$ so by Proposition \ref{PropMonSubstCrwlLet} we have that $e_2\theta[X/t_1] \clto t$ implies $e_2\theta[X/t'_1] \clto t$. But then we can apply the IH over $e_2$ with $\theta[X/t'_1]$ to get some $t' \in CTerm_\perp$ such that $t \ordap t'$ and $|e_2\theta[X/t'_1]| \ordap t'$, which implies:

$$
\begin{array}{ll}
t' \sqsupseteq |e_2\theta[X/t'_1]| \\
\equiv |e_2\theta|[X/|t'_1|] & \mbox{by Lemma \ref{LCasc1}} \\
\equiv |e_2\theta|[X/t'_1] & \mbox{by Lemma \ref{shellPatterns} as $t'_1 \in CTerm_\perp$} \\
\sqsupseteq |e_2\theta|[X/|e_1\theta|] & \mbox{as $|e_1\theta| \ordap t'_1$} \\
\equiv |let~X=e_1\theta~in~e_2\theta| \equiv |e\theta|
\end{array}
$$


\end{itemize}
\end{proof}

\teoremi{Lemma \ref{lemmashells}}
For any program  $e \in LExp_\perp$, $t,t' \in CTerm_\perp$:
  \begin{enumerate}
  \item $ t \clto t'$ iff $t' \ordap t$.
  \item $|e|\in\den{e}$.
  \item $\den{e} \subseteq (|e|\!\!\uparrow)\!\!\downarrow$, where for a given $E \subseteq LExp_\bot$ its upward closure is $E\!\!\uparrow = \{e' \in LExp_\bot |~\exists e \in E.~e \ordap e'\}$, its downward closure is $E\!\!\downarrow = \{e' \in LExp_\bot |~ \exists e \in E.~e' \ordap e\}$, and those operators are overloaded for let-expressions as $e\!\!\uparrow = \{e\}\!\!\uparrow$ and $e\!\!\downarrow = \{e\}\!\!\downarrow$.
\end{enumerate}

\begin{proof}\label{DEMO_lemmashells}
\begin{enumerate}
  \item Easily by induction on the structure of $t$.
  \item Straightforward by induction on the structure of $e$. In the case of let expressions, the proof uses $|e| \in CTerm_{\perp}$ and Proposition \ref{closednessCSubst} in order to apply the  \crwll ~rule (Let).
  \item By Lemma \ref{LemEx2iii} we have that $\denn{e} \leqhyp \lambda \theta.(|e\theta|\!\!\uparrow)\!\!\downarrow$. By definition of hyperinclusion ---Definition \ref{def:setFunOp}--- we know that $\denn{e}\epsilon \subseteq (\lambda \theta.(|e\theta|\!\!\uparrow)\!\!\downarrow)\epsilon$, so $\denn{e}\epsilon = \den{e\epsilon} \equiv \den{e} \subseteq (|e|\!\!\uparrow)\!\!\downarrow \equiv (|e\epsilon|\!\!\uparrow)\!\!\downarrow = (\lambda \theta.(|e\theta|\!\!\uparrow)\!\!\downarrow)\epsilon$.
\end{enumerate}
\end{proof}

\teoremi{Proposition \ref{propCrwlletPolar} (Polarity of \crwll)}
For any program  $e, e' \in LExp_\perp$, $t, t' \in CTerm_\perp$, if $e \ordap e'$ and $t' \ordap t$ then $ e \clto t$ implies $ e' \clto t'$ with a proof of the same size or smaller---where the size of a \crwll-proof is measured as the number of rules of the calculus used in the proof.

\begin{proof}\label{DEMO_propCrwlletPolar}
By induction on the size of the \crwl-derivation. All the cases are straightforward except the \clrule{Let} rule:
\begin{description}
    \item[\clrule{Let}] We have the derivation:
     $$
     \infer[(Let)]{e \equiv let~X=e_1~in~e_2 \clto t}
         {e_1 \clto t_1
         &
         e_2[X/t_1] \clto t
         }
     $$
     Since $e \ordap e'$ then $e' \equiv let~X=e_1'~in~e_2'$ with $e_1 \ordap e'_1$ and $e_2 \ordap e'_2$. As $e_1 \ordap e'_1$ and $t_1 \ordap t_1$ ---because $\ordap$ is reflexive--- then by IH we have $ e'_1 \clto t_1$. We know that $e_2 \ordap e'_2$ so by Lemma \ref{lemaAproxTheta} we have $e_2[X/t_1] \ordap e'_2[X/t_1]$ and by IH $\prog \vdcrwll e'_2[X/t_1] \clto t'$ such that $t' \ordap t$. Therefore:
     $$
     \infer[(Let)]{e' \equiv let~X=e'_1~in~e'_2 \clto t'}
         {e'_1 \clto t_1
         &
         e'_2[X/t_1] \clto t'
         }
     $$
     \end{description}
\end{proof}

\teoremi{Proposition \ref{closednessCSubst} (Closedness under c-substitutions)}
For any   $e \in LExp_\perp$, $t \in CTerm_\perp$, $\theta \in CSubst_\perp$, $t \in \den{e}$ implies $t\theta \in \den{e\theta}$.

\begin{proof}\label{DEMO_closednessCSubst}
By induction on the size of the \crwll-proof. All the cases are straightforward except the \clrule{Let} rule:
\begin{description}
    \item[\clrule{Let}] In this case the expression is $e \equiv let~X=e_1~in~e_2$ so we have a derivation
    $$
    \infer[\clrule{Let}]
        {let~X=e_1~in~e_2 \clto t}
        {e_1 \clto t_1 & e_2[X/t_1] \clto t}
    $$
    By IH we have that $e_1\theta \clto t_1\theta$ and $(e_2[X/t_1])\theta \clto t\theta$. By the variable convention we assume that $X \notin dom(\theta)\cup \vran(\theta)$, so by Lemma \ref{auxBind} $e_2[X/t_1]\theta \equiv e_2\theta[X/t_1\theta]$ and $e_2\theta[X/t_1\theta] \clto t\theta$. Then we can construct the proof:
    $$
    \infer[\clrule{Let}]
        {let~X=e_1\theta~in~e_2\theta \clto t\theta}
        {e_1\theta \clto t_1\theta & e_2\theta[X/t_1\theta] \clto t\theta}
    $$
    \end{description}

\end{proof}

\teoremi{Theorem \ref{lem:weakcomp} (Weak Compositionality of \crwll)}
For any $\con \in Cntxt$, $e \in LExp_\perp$
$$
\den{\con[e]} = \bigcup\limits_{t \in \den{e}} \den{\con[t]}\qquad \mbox{if } BV(\con) \cap FV(e) = \emptyset
$$
As a consequence, $\den{let~X=e_1~in~e_2} = \bigcup_{t_1 \in \den{e_1}}\den{e_2[X/t_1]}$.

\begin{proof}\label{DEMO_lem:weakcomp}
We prove that $\con[e] \clto t \Leftrightarrow \exists s \in CTerm_{\perp}$ such that $e \clto s$ and $\con[s] \clto t$.
~\\~\\
$\Rightarrow$) By induction on the size of the proof for $\con[e] \clto t$. The proof proceeds in a similar way to the proof for Theorem \ref{thCompoCrwl}, page \pageref{DEMO_thCompoCrwl}, so we only have to prove the \clrule{Let} case:
%
%
%
\begin{description}
\item[\clrule{Let}] There are two cases depending on the context $\con$ (since $\con \neq [\ ]$):
    \begin{itemize}
    \item $\con \equiv let~X=C'~in~e_2$) Straightforward.
    \item $\con \equiv let~X=e_1~in~\con'$) The proof is
    $$
    \infer[\clrule{Let}]
        { \con[e] \equiv let~X=e_1~in~\con'[e] \clto t}
        { e_1 \clto t_1 & \con'[e][X/t_1] \clto t}
    $$
    We assume that $X \notin var(t_1)$ by the variable convention, since $X$ is bound in $\con$ and we can rename it freely. Moreover, we assume also that $X \notin BV(\con')$ because $X$ is bound in $\con$, so we could rename the bound occurrences in $\con'$. Therefore $(dom([X/t_1]\cup\vran([X/t_1]))\cap BV(\con') =  \emptyset$ and $\con'[e][X/t_1] \equiv (\con'[X/t_1])[e[X/t_1]]$ by Lemma \ref{T28}. Since $BV(\con) \cap FV(e) = \emptyset$ by the premise and $X \in BV(\con)$ then $X \notin FV(e)$, so $(\con'[X/t_1])[e[X/t_1]] \equiv \con'[X/t_1][e]$. Then by IH $\exists s \in CTerm_{\perp}$ such that $e \clto s$ and $\con'[X/t_1][s] \clto t$. Therefore we can build:
    $$
    \infer[\clrule{Let}]
        { \con[s] \equiv let~X=e_1~in~\con'[s]  \clto t}
        { e_1 \clto t_1 & \con'[s][X/t_1] \equiv^{(*)} \con'[X/t_1][s] \clto t}
    $$
    (*) Using Lemma \ref{T28} as above and the assumption that $X \notin var(s)$ by the variable convention, since $X$ is bound in $\con$ and we can rename it freely.
    \end{itemize}
\end{description}
~\\~\\
$\Leftarrow$) By induction on the size of the proof for $\con[s] \clto t$. As before, the proof proceeds in a similar way to the proof for Theorem \ref{thCompoCrwl}, page \pageref{DEMO_thCompoCrwl}, so we only have to prove the \clrule{Let} case:

%
\begin{description}
    \item[\clrule{Let}] If we use \clrule{Let} then there are two cases depending on the context $\con$ (since $\con \neq [\ ]$):
    \begin{itemize}
    \item $\con = let~X=\con'~in~e_2$) Straighforward.
        \item $\con = let~X=e_1~in~\con'$) then we have $e \clto s$ and
        $$
        \infer[\clrule{Let}]
            {\con[s] \equiv let~X=e_1~in~\con'[s] \clto t}
            { e_1 \clto t_1 & \con'[s][X/t_1] \clto t}
        $$
        By the same reasoning as in the second case of the \clrule{Let} rule of the $\Rightarrow$) part of this theorem, $\con'[s][X/t_1] \equiv \con'[X/t_1][s]$. Then by IH $\con'[X/t_1][e] \clto t$. Again by the same reasoning we have $\con'[e][X/t_1] \equiv \con'[X/t_1][e]$, so we can build the proof:
        $$
        \infer[\clrule{Let}]
            {\con[e] \equiv let~X=e_1~in~\con'[e] \clto t}
            { e_1 \clto t_1 & \con'[e][X/t_1] \equiv \con'[X/t_1][e] \clto t}
        $$
    \end{itemize}
\end{description}

~\\~\\This ends the proof of the main part of the theorem. With respect to the consequence $\dcl{let~X=e_1~in~e_2} = \bigcup_{t_1 \in \dcl{e_1}}\dcl{e_2[X/t_1]}$ we have:
$$
\begin{array}{lr}
\dcl{let~X=e_1~in~e_2} & \\
= \dcl{(let~X=[\ ]~in~e_2)[e_1]} & \\
= \bigcup\limits_{t_1 \in \dcl{e_1}} \dcl{let~X=t_1~in~e_2} & \mbox{by Theorem \ref{lem:weakcomp}} \\
= \bigcup\limits_{t_1 \in \dcl{e_1}} \dcl{e_2[X/t_1]} & \mbox{by Proposition \ref{propFnfPreservHipSem}} \\
\end{array}
$$

In the last step we replace $let~X=t_1~in~e_2$ by $e_2[X/t_1]$ which is a \lrrule{Bind} step of $\fnf$, so by Proposition \ref{propFnfPreservHipSem} it preserves the denotation.
\end{proof}

For Proposition \ref{PropMonSubstCrwlLet}, in this Appendix we prove a generalization of the statement appearing in Section \ref{crwllet} (page \pageref{PropMonSubstCrwlLet}). However, it is easy to check that Proposition \ref{PropMonSubstCrwlLet} in Section \ref{crwllet} follows easily from points {\em 2} and {\em 3} here.

\teoremi{Proposition \ref{PropMonSubstCrwlLet} (Monotonicity for substitutions of \crwll)}
For any program $e \in LExp_\perp$, $t \in CTerm_\perp$, $\sigma, \sigma' \in LSubst_\perp$
\begin{enumerate}
    \item If $\forall X \in \var, s \in CTerm_\perp$ given $ \sigma(X) \clto s$ with size $K$ we also have $\sigma'(X) \clto s$ with size $K' \leq K$, then $ e\sigma \clto t$ with size $L$ implies $ e\sigma' \clto t$ with size $L' \leq L$.
    \item If $\sigma \ordap \sigma'$ then $ e\sigma \clto t$ implies $ e\sigma' \clto t$ with a proof of the same size or smaller.
    \item If $\sigma \dsord \sigma'$ then $\den{e\sigma} \subseteq \den{e\sigma'}$.
\end{enumerate}

\begin{proof}\label{DEMO_PropMonSubstCrwlLet}
\begin{enumerate}
\item If $e \equiv X \in \var$, assume $X\sigma \clto t$, then $X\sigma' \clto t$ with a proof of the same size or smaller, by hypothesis. Otherwise we proceed by induction on the structure of the proof $e\sigma \clto t$.
\begin{description}
 \item[Base cases]~
 \begin{description}
   \item[\clrule{B}] Then $t \equiv \perp$ and $e\sigma' \clto \perp$ with a proof of size $1$ just applying rule \clrule{B}.
   \item[\clrule{RR}] Then $e \in \var$ and we are in the previous case.
   \item[\clrule{DC}] Then $e \equiv c \in CS^0$, as $e \not\in \var$, hence $e\sigma \equiv c \equiv e\sigma'$ and every proof for $e\sigma \clto t$ is a proof for $e\sigma' \clto t$.
  \end{description}
 \item[Inductive steps]~
   \begin{description}
     \item[\clrule{DC}] Then $e \equiv c(e_1, \ldots, e_n)$, as $e \not\in \var$, and we have:
$$
\infer[\clrule{DC}]{e\sigma \equiv c(e_1\sigma, \ldots, e_n\sigma) \clto c(t_1, \ldots, t_n) \equiv t}
          {e_1\sigma \clto t_1 & \ldots & e_n\sigma \clto t_n }
$$
By IH or the proof of the other cases $\forall i \in \{1, \ldots, n\}$ we have $e_i\sigma' \clto t_i$ with a proof of the same size or smaller, so we can built a proof for $e\sigma' \equiv c(e_1\sigma', \ldots, e_n\sigma') \clto c(t_1, \ldots, t_n) \equiv t$ using \clrule{DC}, with a size equal or smaller than the size of the starting proof.
   \item[\clrule{OR}]  Similar to the previous case.
   \item[\clrule{Let}] Then $e \equiv let~X=e_1~in~e_2$, as $e \not\in \var$, and we have:
$$
\infer[\clrule{Let}]
    {let~X=e_1\sigma~in~e_2\sigma \clto t}
    {e_1\sigma \clto t_1 & e_2\sigma[X/t_1] \clto t}
$$
    By IH we have $e_1\sigma \clto t_1$. By the variable convention we assume that $X \notin dom(\sigma)\cup\vran(\sigma)$ and $X \notin dom(\sigma')\cup\vran(\sigma')$. Then it is easy to check that $\forall Y \in \var, s,t \in CTerm_\perp$, given $Y(\sigma[X/t]) \clto s$ with size $K$ we also have $Y(\sigma'[X/t]) \clto s$ with size $K' \leq K$. Then by IH we have $e_2\sigma'[X/t_1] \clto t$. Therefore we can construct a proof with a size equal or smaller than the starting one:
    $$
\infer[\clrule{Let}]
    {let~X=e_1\sigma'~in~e_2\sigma' \clto t}
    {e_1\sigma' \clto t_1 & e_2\sigma'[X/t_1] \clto t}
$$
   \end{description}
 \end{description}

\item By induction on the size of the \crwll-proof. The cases for classical CRWL appear in \cite{vado02}, so we only have to prove the case for the \clrule{Let} rule:
    \begin{description}
    \item[\clrule{Let}] In this case the expression is $e \equiv let~X=e_1~in~e_2$ so we have a proof
    $$
    \infer[\clrule{Let}]
        {let~X=e_1\sigma~in~e_2\sigma \clto t}
        {e_1\sigma \clto t_1 & e_2\sigma[X/t_1] \clto t}
    $$
    By IH we have that $e_1\sigma \clto t_1$. By the variable convention we can assume that $BV(e)\cap(dom(\sigma)\cup \vran(\sigma)) = \emptyset$ and $BV(e)\cap(dom(\sigma')\cup \vran(\sigma')) = \emptyset$. With the previous properties it is easy to see that $\sigma[X/t_1] \ordap \sigma'[X/t_1]$, so by IH $e_2\sigma'[X/t_1] \clto t$. Therefore we can build the proof:
    $$
    \infer[\clrule{Let}]
        {let~X=e_1\sigma'~in~e_2\sigma' \clto t}
        {e_1\sigma' \clto t_1 & e_2\sigma'[X/t_1] \clto t}
    $$

    \end{description}

\item By induction on the structure of $e$:
    \begin{description}
    \item[$e \equiv X \in \var$ -] In this case $\dcl{X\sigma} \subseteq \dcl{X\sigma'}$ because by the hypothesis $\sigma \dsord \sigma'$.
    \item[$e \equiv h(e_1, \ldots, e_n)$ -] Applying Theorem \ref{lem:weakcomp} with $\con \equiv h([\ ], e_2\sigma, \ldots, e_n\sigma)$ we have 
    $\dcl{h(e_1\sigma, \ldots, e_n\sigma)} = \dcl{\con[e_1\sigma]} = \bigcup\limits_{t \in \dcl{e_1\sigma}} \dcl{\con[t]}$ because $BV(\con) = \emptyset$. On the other hand, by Theorem \ref{lem:weakcomp} we also know that 
    $$\begin{array}{lll}
      \dcl{h(e_1\sigma', e_2\sigma, \ldots, e_n\sigma)} & = & \dcl{\con[e_1\sigma']} \\
                                                        & = & \bigcup\limits_{t \in \dcl{e_1\sigma'}} \dcl{\con[t]}
      \end{array}$$
    Since by IH we have $\dcl{e_1\sigma} \subseteq \dcl{e_1\sigma'}$ it is easy to check that $$\bigcup\limits_{t \in \dcl{e_1\sigma}} \dcl{\con[t]} \subseteq \bigcup\limits_{t \in \dcl{e_1\sigma'}} \dcl{\con[t]}$$ so $\dcl{h(e_1\sigma, e_2\sigma, \ldots, e_n\sigma)} \subseteq \dcl{h(e_1\sigma', e_2\sigma, \ldots, e_n\sigma)}$. Using the same reasoning in the rest of subexpressions $e_i\sigma$ we can prove:\\ $\dcl{h(e_1\sigma', e_2\sigma, \ldots, e_n\sigma)} \subseteq \dcl{h(e_1\sigma', e_2\sigma', e_3\sigma \ldots, e_n\sigma)}$\\
$\dcl{h(e_1\sigma', e_2\sigma', e_3\sigma \ldots, e_n\sigma)} \subseteq \dcl{h(\ldots, e_3\sigma', e_4\sigma \ldots, e_n\sigma)}$\\
\ldots \\
$\dcl{\ldots, e_{n-1}\sigma', e_n\sigma)} \subseteq \dcl{h(e_1\sigma', \ldots, e_n\sigma')}$\\
Then by  transitivity of $\subseteq$ we have:\\
 $\dcl{h(e_1, \ldots, e_n)\sigma} \equiv \dcl{h(e_1\sigma, \ldots, e_n\sigma)} \subseteq$
 \\
$\dcl{h(e_1\sigma', \ldots, e_n\sigma')} \equiv \dcl{h(e_1, \ldots, e_n)\sigma'}$.
    \item[$e \equiv let~X=e_1~in~e_2$ -] As Theorem \ref{lem:weakcomp} states, $\dcl{let~X=e_1\sigma~in~e_2\sigma} = \bigcup\limits_{t_1 \in \dcl{e_1\sigma}}\dcl{e_2\sigma[X/t_1]}$. By the Induction Hypothesis we have that $\dcl{e_1\sigma} \subseteq \dcl{e_1\sigma'}$. Due to the variable convention we assume that $X \notin dom(\sigma)\cup\vran(\sigma)$ and $X \notin dom(\sigma')\cup\vran(\sigma')$, so it is easy to check that $\sigma[X/t] \dsord \sigma'[X/t]$ for any $t \in CTerm$. Then by the Induction Hypothesis we know that $\dcl{e_2\sigma[X/t]} \subseteq \dcl{e_2\sigma'[X/t]}$. Therefore 
    $$\begin{array}{rll}
    \dcl{(let~X=e_1~in~e_2)\sigma} & = & \bigcup \limits_{t_1 \in \dcl{e_1\sigma}}\dcl{e_2\sigma[X/t_1]} \\ 
                                   & \subseteq & \bigcup\limits_{t_1 \in \dcl{e_1\sigma'}}\dcl{e_2\sigma'[X/t_1]} \\ 
                                   & = & \dcl{let~X=e_1\sigma'~in~e_2\sigma'} \\ 
                                   & = & \dcl{(let~X=e_1~in~e_2)\sigma'} \\
                                   
    \end{array}$$
    \end{description}
\end{enumerate}
\end{proof}


\teoremi{Theorem \ref{CompHipSem} (Compositionality of hypersemantics)}
For all $\con \in Cntxt$, $e \in LExp_\perp$
\begin{displaymath}
\denn{\con[e]} = \denn{\con}\denn{e}
\end{displaymath}
As a consequence: $\denn{e} = \denn{e'} \Leftrightarrow \forall \con \in Cntxt. \denn{\con[e]} = \denn{\con[e']}$.

\begin{proof}\label{DEMO_CompHipSem}
By induction over the structure of contexts. 
The base case is $\con = []$, so $\denn{\con[e]} = \denn{e} = \denn{[]}\denn{e} = \denn{\con}\denn{e}$, as $\denn{[]}$ is the identity function by definition. Regarding the inductive step:
\begin{itemize}
 \item $\con = h(e_1, \ldots, \con', \ldots, e_n)$: Then
$$
\begin{array}{ll}
\denn{\con}\denn{e} 
= \lambda\theta.\bigcup\limits_{t \in \denn{\con'}\denn{e}\theta}\den{h(e_1\theta, \ldots, t, \ldots, e_n\theta)} \\
= \lambda\theta.\bigcup\limits_{t \in \denn{\con'[e]}\theta}\den{h(e_1\theta, \ldots, t, \ldots, e_n\theta)} & \mbox{ by IH} \\
= \lambda\theta.\bigcup\limits_{t \in \den{(\con'[e])\theta}}\den{h(e_1\theta, \ldots, t, \ldots, e_n\theta)} & \mbox{ by definition} \\
= \lambda\theta.\den{h(e_1\theta, \ldots, (\con'[e])\theta, \ldots, e_n\theta)} & \mbox{ by Lemma \ref{lem:weakcomp}} \\
= \lambda\theta.\den{(\con[e])\theta}
= \denn{\con[e]}
\end{array}
$$
 \item $\con = let~X=\con'~in~s$: Then
$$
\begin{array}{ll}
\denn{\con}\denn{e} 
= \lambda\theta.\bigcup\limits_{t \in \denn{\con'}\denn{e}\theta}\den{let~X = t~in~s\theta} & \mbox{ by definition}\\
= \lambda\theta.\bigcup\limits_{t \in \denn{\con'}\denn{e}\theta}\den{s\theta[X/t]} &
\mbox{ by rule (Bind)$^{(*)}$ }\\
= \lambda\theta.\bigcup\limits_{t \in \denn{\con'[e]}\theta}\den{s\theta[X/t]} & \mbox{ by IH}\\
= \lambda\theta.\bigcup\limits_{t \in \den{(\con'[e])\theta}}\den{s\theta[X/t]} & \mbox{ by definition}\\
= \lambda\theta.\den{let~X=(\con'[e])\theta~in~s\theta} & \mbox{ by Lemma \ref{lem:weakcomp}} \\
= \denn{\con[e]}
\end{array}
$$
(*): by Proposition \ref{propFnfPreservHipSem} $\den{let~X = t~in~s\theta} = \den{s\theta[X/t]}$ since $let~X = t~in~s\theta \fnf s\theta[X/t]$. 
 \item $\con = let~X=s~in~\con'$: Then
$$
\begin{array}{ll}
\denn{\con}\denn{e} 
= \lambda\theta.\bigcup\limits_{t \in \denn{s}\theta}\denn{\con'}\denn{e}(\theta[X/t]) \\
= \lambda\theta.\bigcup\limits_{t \in \denn{s}\theta}\denn{\con'[e]}(\theta[X/t]) & \mbox{ by IH} \\
= \lambda\theta.\bigcup\limits_{t \in \denn{s}\theta}\den{(\con'[e])(\theta[X/t])} & \mbox{ by definition} \\
= \lambda\theta.\bigcup\limits_{t \in \den{s\theta}}\den{(\con'[e])(\theta[X/t])} & \mbox{ by definition} \\
= \lambda\theta.\bigcup\limits_{t \in \den{s\theta}}\den{((\con'[e])\theta)[X/t]} \\
= \lambda\theta.\den{let~X=s\theta~in~(\con'[e])\theta} & \mbox{ by Lemma \ref{lem:weakcomp}} \\
= \denn{\con[e]}
\end{array}
$$
\end{itemize}
\end{proof}


\teoremi{Proposition \ref{HipSemDecUnion}}
Consider two sets $A,B$, and let ${\cal F}$ be the set of functions $A \rightarrow\partes{B}$. Then:
\begin{enumerate}
    \item[i)] $\ohs$ is indeed a partial order on ${\cal F}$, and $\sd{f}$ is indeed a decomposition of $f\in {\cal F}$, i.e., $\Uhs{(\sd{f})} = f$.
    \item[ii)] Monotonicity of hyperunion wrt. inclusion: for any ${\cal I}_1,{\cal I}_2 \subseteq {\cal F}$
$$
{\cal I}_1 \subseteq {\cal I}_2 \mbox{ implies } \Uhs{\cal I}_1 \ohs \Uhs{\cal I}_2
$$
    \item[iii)] Distribution of unions: for any ${\cal I}_1,{\cal I}_2 \subseteq {\cal F}$
$$
\Uhs{({\cal I}_1 \cup {\cal I}_2)} = (\Uhs{\cal I}_1) \uhs (\Uhs{\cal I}_2)
$$
    \item[iv)] Monotonicity of decomposition wrt. hyperinclusion: for any $f_1, f_2 \in {\cal F}$
$$
f_1 \ohs f_2 \mbox{ implies } \sd{f_1} \subseteq \sd{f_2}
$$
\end{enumerate}

\begin{proof}\label{DEMO_HipSemDecUnion}~
\begin{enumerate}
    \item[i)] The binary relation $\ohs$ is a partial order on ${\cal F}$ because:
    \begin{itemize}
        \item It is reflexive, as for any function $f$ and any $x \in A$ we have that $f(x) = f(x)$, and thus $f(x) \subseteq f(x)$, therefore $f \ohs f$.
        \item It is transitive because given some functions $f_1, f_2, f_3$ such that $f_1 \ohs f_2$ and $f_2 \ohs f_3$, then for any $x \in A$ we have $f_1(x) \subseteq f_2(x) \subseteq f_3(x)$ by definition of $\ohs$, hence $f_1 \ohs f_3$.
        \item It is antisymmetric \wrt\ extensional function equality, because for any pair of hypersemantics $f_1, f_2$ such that $f_1 \ohs f_2$ and $f_2 \ohs f_1$ and any $x \in A$ we have that $f_1(x) \subseteq f_2(x)$ and $f_2(x) \subseteq f_1(x)$ by definition of $\ohs$, hence $f_1(x) = f_2(x)$ by antisymmetry of $\subseteq$ and $f_1 = f_2$.
    \end{itemize}
In order to prove that $\sd{f}$ is indeed a decomposition of $f\in {\cal F}$ we first perform a little massaging by using the definitions of $\Uhs{}$ and $\sd{}$.
$$
\begin{array}{l}
\Uhs{(\sd{f})} = \Uhs{\{\hl a.\{b\}~|~a \in A, b \in f(a)\}} 
= \lambda x \in A.\bigcup\limits_{a \in A}\bigcup\limits_{b \in f(a)} (\hl a.\{b\}) x
\end{array}
$$
Now we will use the fact that $\ohs$ is a partial order, and therefore it is antisymmetric, so mutual inclusion by $\ohs$ implies equality.

\begin{itemize}
%
    \item \underline{$f \ohs \Uhs{(\sd{f})}$}: Given arbitraries $a \in A$, $b \in f(a)$ then
$$
\begin{array}{ll}
(\Uhs{(\sd{f})})a = \bigcup\limits_{x \in A}\bigcup\limits_{y \in f(x)} (\hl x.\{y\})a \\
\supseteq \bigcup\limits_{y \in f(a)} (\hl a.\{y\})a &  \mbox{ as $a \in A$} \\

= \bigcup\limits_{y \in f(a)} \{y\} \ni b & \mbox{ as $b \in f(a)$}
\end{array}
$$
 \item \underline{$\Uhs{(\sd{f})} \ohs f$}: Given arbitraries $a \in A$, $b \in (\Uhs{(\sd{f})})a$ then we have that $b \in \bigcup\limits_{x \in A}\bigcup\limits_{y \in f(x)} (\hl x.\{y\})a$, therefore $\exists x \in A, y \in f(x)$ such that $b \in (\hl x.\{y\})a$. But then $a \equiv x$ ---otherwise $(\hl x.\{y\})a = \emptyset$--- and $y \equiv b$ ---because $b \in (\hl x.\{y\})a = \{y\}$---, and so $y \in f(x)$ implies $b \in f(a)$.
\end{itemize}
        \item[ii)] Given an arbitrary $a \in A$ then
$$
\begin{array}{ll}
(\Uhs{\cal I}_1)a = \bigcup\limits_{f \in {\cal I}_1}f(a) & \mbox{ by definition of $\Uhs{}$}\\
\subseteq \bigcup\limits_{f \in {\cal I}_2}f(a) & \mbox{ as } {\cal I}_1 \subseteq {\cal I}_2\\
= (\Uhs{{\cal I}_2})a & \mbox{ by definition of $\Uhs{}$}\\
\end{array}
$$
\item[iii)]
$$
\begin{array}{ll}
\Uhs{({\cal I}_1 \cup {\cal I}_2)} = \lambda a.\bigcup\limits_{f \in ({\cal I}_1 \cup {\cal I}_2)} f(a) & \mbox{ by definition of $\Uhs{}$} \\
= \lambda a.\bigcup\limits_{f \in {\cal I}_1} f(a) ~\cup~ \bigcup\limits_{f \in {\cal I}_2} f(a) \\
= \lambda a.(\Uhs{{\cal I}_1}) a \cup (\Uhs{{\cal I}_2}) a & \mbox{ by definition of $\Uhs{}$} \\
= (\Uhs{{\cal I}_1}) \uhs (\Uhs{{\cal I}_2}) & \mbox{ by definition of $\uhs$}
\end{array}
$$

\item[iv)] Suppose an arbitrary $\hat{\lambda}a.\{b\} \in \Delta f_1$ with $a \in A$ and $b \in f_1(a)$ by definition. Since $f_1 \ohs f_2$ then $f_1(a) \subseteq f_2(a)$. Therefore $b \in f_2(a)$ and $\hat{\lambda}a.\{b\} \in \Delta f_2$.
\end{enumerate}
\end{proof}

\teoremi{Proposition \ref{HipSemDistCntx} (Distributivity under context of hypersemantics union)}
\begin{displaymath}
\denn{\con}(\Uhs{\hdes}) = \Uhss{\hde\!\in\!\hdes}\denn{\con}\hde
\end{displaymath}
\begin{proof}\label{DEMO_HipSemDistCntx}
We proceed by induction on the structure of $\con$. Regarding the base case, then $\con = []$ and so:
$$
\begin{array}{ll}
\denn{\con}(\Uhs{\hdes}) = \Uhs{\hdes} & \mbox{ by definition of $\denn{\con}$} \\
= \Uhss{\hde \in H} \hde \\
= \Uhss{\hde \in H} \denn{\con} \hde & \mbox{ by definition of $\denn{\con}$} \\
\end{array}
$$
For the inductive step we have several possibilities.
\begin{itemize}
    \item $\con \equiv h(e_1, \ldots, \con', \ldots, e_n)$: then
$$
\begin{array}{ll@{}}
\denn{\con}(\Uhs{\hdes}) = \lambda\theta.\bigcup\limits_{t \in \denn{\con'}(\Uhs{\hdes})\theta} \den{h(e_1\theta, \ldots, t, \ldots, e_n\theta)} & \mbox{ by definition of $\denn{\con}$} \\
= \lambda\theta.\bigcup\limits_{t \in ((\Uhs{\{\denn{\con'}\hde~|~ \hde \in \hdes\}})\theta)} \den{h(e_1\theta, \ldots, t, \ldots, e_n\theta)} & \mbox{ by IH } \\
= \lambda\theta.\bigcup\limits_{t \in (\bigcup\limits_{\hde \in \hdes}\denn{\con'}\hde\theta)} \den{h(e_1\theta, \ldots, t, \ldots, e_n\theta)} & \mbox{ by definition of $\Uhs{}$ } \\
= \lambda\theta.\bigcup\limits_{\hde \in \hdes} \bigcup\limits_{t \in \denn{\con'}\hde\theta} \den{h(e_1\theta, \ldots, t, \ldots, e_n\theta)} \\
= \lambda\theta.\bigcup\limits_{\hde \in \hdes}\denn{\con}\hde\theta & \mbox{ by definition of $\denn{\con}$} \\
= \Uhss{\hde \in \hdes} \denn{\con}\hde & \mbox{ by definition of $\Uhs{}$}
\end{array}
$$

    \item $\con \equiv let~X=\con'~in~e$: then
$$
\begin{array}{ll}
\denn{\con}(\Uhs{\hdes}) = \lambda\theta.\bigcup\limits_{t \in \denn{\con'}(\Uhs{\hdes})\theta} \den{let~X=t~in~e\theta} & \mbox{ by definition of $\denn{\con}$} \\
= \lambda\theta.\bigcup\limits_{t \in ((\Uhs{\{\denn{\con'}\hde~|~ \hde \in \hdes\}})\theta)} \den{let~X=t~in~e\theta} & \mbox{ by IH } \\
= \lambda\theta.\bigcup\limits_{t \in (\bigcup\limits_{\hde \in \hdes}\denn{\con'}\hde\theta)} \den{let~X=t~in~e\theta} & \mbox{ by definition of $\Uhs{}$ } \\
= \lambda\theta.\bigcup\limits_{\hde \in \hdes} \bigcup\limits_{t \in \denn{\con'}\hde\theta} \den{let~X=t~in~e\theta} \\
= \lambda\theta.\bigcup\limits_{\hde \in \hdes}\denn{\con}\hde\theta & \mbox{ by definition of $\denn{\con}$} \\
= \Uhss{\hde\in\hdes} \denn{\con}\hde & \mbox{ by definition of $\Uhs{}$}
\end{array}
$$

    \item $\con \equiv let~X=e~in~\con'$: then
$$
\begin{array}{ll}
\denn{\con}(\Uhs{\hdes}) = \lambda\theta.\bigcup\limits_{t \in \denn{e}\theta}\denn{\con'}(\Uhs{\hdes})(\theta[X/t]) & \mbox{ by definition of $\denn{\con}$} \\
= \lambda\theta.\bigcup\limits_{t \in \denn{e}\theta}(\Uhs{\{\denn{\con'}\hde~|~ \hde \in \hdes\}})(\theta[X/t]) & \mbox{ by IH } \\
= \lambda\theta.\bigcup\limits_{t \in \denn{e}\theta}\bigcup\limits_{\hde \in \hdes}\denn{\con'}\hde(\theta[X/t]) & \mbox{ by definition of $\Uhs{}$ } \\
= \lambda\theta.\bigcup\limits_{\hde \in \hdes}\bigcup\limits_{t \in \denn{e}\theta}\denn{\con'}\hde(\theta[X/t]) & \mbox{ as $\hdes$ is independent from $t$} \\
= \lambda\theta.\bigcup\limits_{\hde \in \hdes}\denn{\con}\hde\theta & \mbox{ by definition of $\denn{\con}$} \\
= \Uhss{\hde \in \hdes} \denn{\con}\hde & \mbox{ by definition of $\Uhs{}$}
\end{array}
$$

\end{itemize}

\end{proof}

\subsection{Proofs for Section \ref{sectEqLetrwCrwlLet}}

\teoremi{Theorem \ref{T27} (Hyper-Soundness of let-rewriting)}
For all $e, e' \in LExp$, if $e \fe e'$ then $\denn{e'} \ohs \denn{e}$.

\begin{proof}\label{DEMO_T27}

We first prove the theorem for a single step of $\f$. We proceed assumming some $\theta \in CSubst_{\perp}$ such that $e'\theta \crwlto t$
and then proving $e\theta \crwlto t$. The case where $t \equiv \perp$ holds
trivially using the rule \textbf{B}, so we will prove the rest by a case distinction
on the rule of the let-rewriting calculus applied:
\begin{description}
\item[\lrrule{Fapp}] 
Assume $f(t_1, \ldots, t_n) \f r$ with $(f(p_1, \ldots, p_n) \tor e) \in \prog$, $\sigma \in CSubst$, such that $\forall i.p_i\sigma \equiv t_i$ and $e\sigma \equiv r$, and $\theta \in CSubts_{\perp}$ such that $r\theta \crwlto t$. Then as $\sigma\theta \in CSubts_{\perp}, \forall i. p_i\sigma\theta \equiv t_i\theta$ and $e\sigma\theta \equiv r\theta$ we can use the \clrule{OR} rule to build the following proof:
$$
\infer[\clrule{OR}]
    {f(t_1\theta, \ldots, t_n\theta) \crwlto t}
    { \infer[]{t_1\theta \crwlto t_1\theta}{\mbox{Lemma \ref{lemma:crwlPatterns}}}
      & ~~\ldots~~
      & \infer[]{t_n\theta \crwlto t_n\theta}{\mbox{Lemma \ref{lemma:crwlPatterns}}}
      & r\theta \crwlto t
    }
$$

\item[\lrrule{LetIn}] Assume $h(\ldots , e, \ldots) \f let~X=e~in~h(\ldots, X, \ldots)$ by \lrrule{LetIn} and $\theta \in CSubts_{\perp}$ such that $(let~X=e~in~h(\ldots, X, \ldots))\theta \crwlto t$. This proof must be of the shape of:
$$
\infer[\clrule{Let}]
    {let~X=e\theta~in~h(d_1\theta, \ldots, X\theta, \ldots, d_n\theta) \clto t}
    { e\theta \clto t_1~
      & ~h(d_1\theta, \ldots, X\theta, \ldots, d_n\theta)[X/t_1] \clto t
    }
$$
for some $d_1, \ldots, d_n \in LExp, t_1 \in CTerm_\perp$. Besides $X \not\in (dom(\theta) \cup \vran(\theta))$ by the variable convention\footnote{Actually, to prove this theorem properly, we cannot restrict the substitution to fulfill these restrictions, so in fact we rename the bound variables in an $\alpha$-conversion fashion and use the equivalence $e[X/e'] \equiv e[X/Y][Y/e']$ (with $Y$ the new bound variable), to use the hypothesis. This will be done implicitly when needed during the remaining of the proof.}, hence $X\theta \equiv X$ and so $h(d_1\theta, \ldots, X\theta, \ldots, d_n\theta)[X/t_1] \equiv h(d_1\theta, \ldots, t_1, \ldots, d_n\theta)$, as $X$ is fresh by the conditions in \lrrule{LetIn} and so it does not appear in any $d_i$. Now we have two possibilities:
\begin{enumerate}
\item[a)] $h \equiv c \in DC$ : Then $h(d_1\theta, \ldots, t_1, \ldots, d_n\theta) \crwlto t$ must proved by \clrule{DC}:
$$
\infer[\clrule{DC}]
    { c(d_1\theta, \ldots, t_1, \ldots, d_n\theta) \crwlto c(s_1, \ldots, t'_1, \ldots, s_n) \equiv t~~~}
    {
      d_1\theta \crwlto s_1
     \ \ldots
      \ t_1 \crwlto t'_1
      \ \ldots
      \ d_n\theta \crwlto s_n
      }
$$
for some $s_1, \ldots, s_n, t'_1 \in CTerm_{\perp}$. 
Then $t_1 \crwlto t'_1$ implies $t'_1 \sqsubseteq t_1$ by Lemma \ref{lemmashells}, hence $e\theta \crwlto t_1$ implies $e\theta \crwlto t'_1$ by Proposition \ref{propCrwlletPolar}, and we can build the following proof: 
$$
\infer{h(\ldots , e, \ldots)\theta \equiv c(d_1\theta, \ldots, e\theta, \ldots, d_n\theta) \crwlto c(s_1, \ldots, t'_1, \ldots, s_n) \equiv t}
     {
      d_1\theta \crwlto s_1
     \ \ldots
      \ e\theta \crwlto t'_1
      \ \ldots
     \ d_n\theta \crwlto s_n
      }
$$

        \item[b)] $h \equiv f \in FS$ : Then $h(d_1\theta, \ldots, t_1, \ldots, d_n\theta) \crwlto t$ must be proved by \clrule{OR}:
$$
\infer[\clrule{OR}]
    {f(d_1\theta, \ldots, t_1, \ldots, d_n\theta) \crwlto t}
     {
      d_1\theta \crwlto s_1\sigma
      & \ldots
      & t_1 \crwlto t'_1\sigma
      & \ldots
      & d_n\theta \crwlto s_n\sigma
      & r\sigma \crwlto t
     }
$$
for some $s_1\sigma, \ldots, s_n\sigma, t'_1\sigma \in CTerm_{\perp}$, $(f(s_1,\ldots, t'_1, \ldots s_n) \tor r) \in \prog$, $\sigma \in CSubst_\bot$.
Then we can prove $e\theta \crwlto t'_1\sigma$ like in the previous case, to build the following proof:
$$
\infer[\clrule{OR}]
    {h(\ldots , e, \ldots)\theta \equiv f(d_1\theta, \ldots, e\theta, \ldots, d_n\theta) \crwlto t}
     { d_1\theta \crwlto s_1\sigma
     & \ldots
     & e\theta \crwlto t'_1\sigma
     & \ldots
     & d_n\theta \crwlto s_n\sigma
     & r\sigma \crwlto t
     }
$$
\end{enumerate}

\item[\lrrule{Bind}] Assume $let~X=t_1~in~e \f e[X/t_1]$ by \lrrule{Bind} and $\theta \in CSubst_{\perp}$ such that $(e[X/t_1])\theta \crwlto t$. Then $X \not\in (dom(\theta) \cup\vran(\theta))$ by the variable convention, so we can apply Lemma \ref{auxBind} (Substitution lemma) to get $e\theta[X/t_1\theta] \equiv (e[X/t_1])\theta$. Besides $t_1 \in CTerm$ and $\theta \in CSubst_{\perp}$ by hypothesis, hence $t_1\theta \in CTerm_{\perp}$ and we can build the following proof:
$$
\infer[\clrule{Let}]{let~X=t_1\theta~in~e\theta \crwlto t}
   { \infer[]{t_1\theta \crwlto t_1\theta}{\mbox{Lemma \ref{lemma:crwlPatterns}}}
    & e\theta[X/t_1\theta] \equiv (e[X/t_1])\theta \crwlto t
   }
$$

\item[\lrrule{Elim}] Assume $let~X=e_1~in~e_2 \f e_2$ by \lrrule{Elim} and $\theta \in CSubts_{\perp}$ such that $e_2\theta \crwlto t$. Then $X \not\in \vran(\theta)$ by the variable convention 
and $X \not\in FV(e_2)$ by the condition of \lrrule{Elim}, hence $e_2\theta[X/\perp] \equiv e_2\theta$ and we can build the following proof:
$$
\infer[\clrule{Let}]{let~X=e_1\theta~in~e_2\theta \crwlto t}
   { \infer[\clrule{B}]{e_1\theta \crwlto \perp}{}
    & e_2\theta[X/\perp] \equiv e_2\theta \crwlto t
   }
$$

\item[\lrrule{Flat}] Assume $let~X=(let~Y=e_1~in~e_2)~in~e_3 \f let~Y=e_1~in~(let~X=e_2~in~e_3)$ by \lrrule{Flat} and $\theta \in CSubts_{\perp}$ such that
  $(let~Y=e_1~in~(let~X=e_2~in~e_3))\theta
  \crwlto t$. This proof must be must be of the shape of:
$$
\infer[\clrule{Let}]
    {let~Y=e_1\theta~in~(let~X=e_2\theta~in~e_3\theta) \crwlto t}
    {e_1\theta \crwlto t_1 &
     \infer[\clrule{Let}]{(let~X=e_2\theta~in~e_3\theta)[Y/t_1] \crwlto t}
    {e_2\theta[Y/t_1] \crwlto t_2
    & e_3\theta[Y/t_1][X/t_2] \crwlto t
        }
   }
$$
for some $t_1, t_2 \in CTerm_{\perp}$. Besides $Y \not\in \vran(\theta)$ by the variable convention and $Y \not\in FV(e_3)$ by the condition of \lrrule{Flat}, hence $e_3\theta[Y/t_1] \equiv e_3\theta$  and we can build the following proof: 
$$
\infer[\clrule{Let}]
    {let~X=(let~Y=e_1\theta~in~e_2\theta)~in~e_3\theta \crwlto t}
    {   \infer[\clrule{Let}]
            {let~Y=e_1\theta~in~e_2\theta \crwlto t_2}
            {\infer[]{e_1\theta \crwlto t_1}{\mi{Hypothesis}}
         &
         \infer[]
            {e_2\theta[Y/t_1] \crwlto t_2}
            {\mi{Hypothesis}}
          }
     & e_3\theta[X/t_2] \equiv e_3\theta[Y/t_1][X/t_2] \crwlto t
     }\\
$$

 \item[\lrrule{Contx}] 
By the proof of the other cases, $\denn{e'} \leqhyp \denn{e}$, but then $\denn{{\cal C}[e']} \leqhyp \denn{{\cal C}[e]}$ by Lemma \ref{T25}, and we are done.
\end{description}

The proof for several steps is a trivial induction on the length of the derivation $e \fe e'$.
\end{proof}

\teoremi{Proposition \ref{propFnfPreservHipSem} (The $\fnf$ relation preserves hyperdenotation)}
For all $e, e' \in LExp$, if $e ~\fnfe~ e'$ then $\denn{e} = \denn{e'}$---and therefore $\den{e} = \den{e'}$.

\begin{proof}\label{DEMO_propFnfPreservHipSem}
We first prove the lemma for one step of $\fnf$ by case distinction over the rule applied to reduce $e$ to $e'$. By Theorem \ref{T27} we already have that $\forall e, e' \in LExp$ if $e \fnf~ e'$ then $\denn{e'} \leqhyp \denn{e}$, so all that is left is proving that $\denn{e} \leqhyp \denn{e'}$ also, and finally applying the transitivity of $\leqhyp$, as it is a partial order by Lemma \ref{HipSemDecUnion}-i. We proceed assumming some $\theta \in CSubst_{\perp}$ such that $e\theta \crwlto t$ and then proving $e'\theta \crwlto t$. The case where $t \equiv \perp$ holds trivially using the rule \clrule{B}, so we will prove the other by a case distinction
on the rule of the $let$ calculus applied:
\begin{description}

\item[\lrrule{LetIn}] Assume $h(d_1, \ldots , e, \ldots, d_n) \f let~X=e~in~h(d_1, \ldots, X, \ldots, d_n)$ by the \lrrule{LetIn} rule and $\theta \in CSubts_{\perp}$ such that 
$$h(d_1, \ldots , e, \ldots, d_n)\theta \equiv h(d_1\theta, \ldots , e\theta, \ldots, d_n\theta) \clto t$$
Then by the compositionality of Theorem \ref{thCompoCrwllet} we have that $\exists t_1 \in \den{e\theta}$ such that $h(d_1\theta, \ldots , t_1, \ldots, d_n\theta) \clto t$. Besides $X$ is fresh and $X \not\in (dom(\theta) \cup \vran(\theta))$ by the variable convention, hence $$(let~X=e~in~h(d_1, \ldots, X, \ldots, d_n))\theta \equiv let~X=e\theta~in~h(d_1\theta, \ldots, X, \ldots, d_n\theta)$$
and
$$h(d_1\theta, \ldots, X, \ldots, d_n\theta)[X/t_1] \equiv h(d_1\theta, \ldots, t_1, \ldots, d_n\theta)$$
and so we can do:
\small{
$$
\infer[\clrule{Let}]{(let~X=e~in~h(d_1, \ldots, X, \ldots, d_n))\theta \equiv let~X=e\theta~in~h(d_1\theta, \ldots, X, \ldots, d_n\theta) \clto t}
           {
            \infer[]{e\theta \clto t_1}{\mathit{hypothesis}}~
          \ \infer[]{~h(d_1\theta, \ldots, X, \ldots, d_n\theta)[X/t_1] \equiv h(d_1\theta, \ldots, t_1, \ldots, d_n\theta) \clto t}{\mathit{hypothesis}}
            }
$$
}

\item[\lrrule{Bind}] Assume $let~X=t_1~in~e \f e[X/t_1]$ by \lrrule{Bind} and $\theta \in CSubst_{\perp}$ such that $(let~X=t_1~in~e)\theta \equiv let~X=t_1\theta~in~e\theta \crwlto t$. Then it must be with a proof of the following shape:
$$
\infer[\clrule{Let}]{let~X=t_1\theta~in~e\theta \crwlto t}
   { t_1\theta \crwlto t'_1~
   \ ~e\theta[X/t'_1] \crwlto t
   }
$$
But $\theta \in CSubst_{\perp}$ and $t_1 \in CTerm$ implies $t_1\theta \in CTerm_\perp$, and so $t_1\theta \crwlto t'_1$ implies $t'_1 \ordap t_1\theta$ by Lemma \ref{lemmashells}-1. Hence $[X/t'_1] \ordap [X/t_1\theta]$ and so $e\theta[X/t'_1] \crwlto t$ implies $e\theta[X/t_1\theta] \crwlto t$ by the monoticity of Proposition \ref{PropMonSubstCrwlLet}. Besides $X \not\in (dom(\theta) \cup\vran(\theta))$ by the variable convention, and so we can apply Lemma \ref{auxBind} (substitution lemma) to get $(e[X/t_1])\theta \equiv e\theta[X/t_1\theta]$, so we are done.

\item[\lrrule{Elim}] Assume $let~X=e_1~in~e_2 \f e_2$ by \lrrule{Elim} and $\theta \in CSubts_{\perp}$ such that $(let~X=e_1~in~e_2)\theta \equiv let~X=e_1\theta~in~e_2\theta \crwlto t$. Then it must be with a proof of the following shape:
$$
\infer[\clrule{Let}]{let~X=e_1\theta~in~e_2\theta \crwlto t}
   { e_1\theta \crwlto t_1~
   \ ~e_2\theta[X/t_1] \crwlto t
   }
$$
Then $X \not\in \vran(\theta)$ by the variable convention and $X \not\in FV(e_2)$ by the condition of \lrrule{Elim}, hence $e_2\theta \equiv e_2\theta[X/t_1] \crwlto t$, so we are done.

\item[\lrrule{Flat}] Straightforward since $e_3\theta[Y/t_1] \equiv e_3\theta$ because $Y \not\in \vran(\theta)$ by the variable convention and $Y \not\in FV(e_3)$ by the condition of \lrrule{Flat}.

 \item[\lrrule{Contx}] By the proof of the other cases, $\denn{e} \leqhyp \denn{e'}$, but then $\denn{{\cal C}[e]} \leqhyp \denn{{\cal C}[e']}$ by Lemma \ref{T25}, and we are done.
\end{description}
\end{proof}

The following lemmas ---Lemmas \ref{T35}, \ref{FVShells}, \ref{lemmaShellSubst1} and \ref{lemmaShellSubst2}--- will be used to prove Lemma \ref{T32}.

\begin{lemma}\label{T35}
Let linear $e,e_1 \in Exp$ such that $e\theta \ordap e_1$ for $\theta \in Subst_{\perp}$. Then $\exists \theta' \in Subst$ such that $e\theta' \equiv e_1$ and $\theta \ordap \theta'$.
\end{lemma}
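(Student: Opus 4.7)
The plan is to proceed by structural induction on the linear expression $e$, exploiting linearity in exactly one step: when $e$ is rooted by a symbol with several arguments, the pairwise disjointness of variables across the children lets us combine the substitutions returned by the inductive hypothesis with no risk of conflict.

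For the base cases, if $e \equiv c \in CS^0$ then $c\theta \equiv c$ and $c \ordap e_1$ forces $e_1 \equiv c$, so any $\theta' \sqsupseteq \theta$ with $\theta' \in Subst$ (obtained by e.g.\ replacing each occurrence of $\perp$ in the range of $\theta$ with some fixed ground constructor) works. If $e \equiv X \in \var$, then $\theta(X) \ordap e_1$, and I would define $\theta'(X) = e_1$ (which is total since $e_1 \in Exp$) and set $\theta'(Y)$ for every other variable $Y$ to be a fixed total refinement of $\theta(Y)$, obtained by pointwise replacement of $\perp$ by a constant. Then clearly $e\theta' \equiv e_1$ and $\theta \ordap \theta'$.

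For the inductive step, write $e \equiv h(s_1, \ldots, s_n)$ with $h \in \Sigma^n$ and each $s_i$ linear. From $h(s_1\theta, \ldots, s_n\theta) \ordap e_1$ together with $e_1 \in Exp$ (so $e_1 \not\equiv \perp$), it follows that $e_1 \equiv h(t_1, \ldots, t_n)$ with $s_i\theta \ordap t_i$ for every $i$. Applying the induction hypothesis to each pair $(s_i, t_i)$ yields substitutions $\theta_1', \ldots, \theta_n' \in Subst$ such that $s_i\theta_i' \equiv t_i$ and $\theta \ordap \theta_i'$ for every $i$. Now, since $e$ is linear, the sets $var(s_1), \ldots, var(s_n)$ are pairwise disjoint, so I can safely glue these substitutions together by defining
\[
\theta'(X) \;=\;
\begin{cases}
\theta_i'(X) & \text{if } X \in var(s_i) \text{ for the (unique) such } i,\\
\theta^{\circ}(X) & \text{otherwise},
\end{cases}
\]
where $\theta^{\circ}$ is any fixed total refinement of $\theta$ into $Exp$ (again via the $\perp$-elimination trick above). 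This yields $s_i\theta' \equiv s_i\theta_i' \equiv t_i$, hence $e\theta' \equiv h(t_1, \ldots, t_n) \equiv e_1$, and $\theta \ordap \theta'$ pointwise.

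The main subtlety, rather than difficulty, is the bookkeeping needed to end up in $Subst$ rather than $Subst_\perp$: $\theta$ may send some variables outside $var(e)$ to partial expressions, so one needs a uniform way to sharpen those to total expressions without disturbing the approximation ordering. A single auxiliary lemma replacing each $\perp$ in a partial expression by a fixed constructor constant suffices, and it preserves $\ordap$ monotonically. Beyond that, the real content of the proof is the observation that linearity is exactly what is required to patch the partial refinements obtained on each child into a single coherent $\theta'$; without linearity, two occurrences of the same variable in different $s_i$ could demand incompatible refinements.
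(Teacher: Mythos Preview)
Your proof is correct and follows essentially the same route as the paper: structural induction on $e$, with linearity used precisely to glue the per-argument substitutions in the inductive step, and a $\perp$-replacement trick to land in $Subst$ rather than $Subst_\perp$. One small caveat: you replace $\perp$ by ``a fixed ground constructor,'' but the signature is not assumed to contain any $0$-ary constructor; the paper sidesteps this by replacing each $\perp$ in $\theta(Y)$ with the variable $Y$ itself (via an auxiliary function $rep_\perp$), which always yields a total expression and still satisfies $\theta(Y) \sqsubseteq rep_\perp(\theta(Y),Y)$.
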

\begin{proof}
By induction on the structure of $e$. For the base case ($e \equiv X \in \var$) we define a function $rep_{\perp} : Exp_{\perp} \to Exp \to Exp$ $rep_{\perp}(e,e')$ that replaces the occurrences of $\perp$ in $e$ by the expression $e'$. We define this function recursively on the structure of $e$:
\begin{itemize}
  \item $rep_{\perp}(\perp,e') = e'$
  \item $rep_{\perp}(Z,e') = Z$
  \item $rep_{\perp}(h(e_1, \ldots, e_n),e') = h(rep_{\perp}(e_1,e'), \ldots, rep_{\perp}(e_n,e'))$
\end{itemize}
  It is easy to check that $rep_{\perp}(e,e') = e''$ implies $e \ordap e''$. Then we define $\theta' \in Subst$ as:
  $$
    \theta'(Y) = \left\{    \begin{array}{ll}
                                e_1 & \mathit{if}~X \equiv Y \\
                                rep_{\perp}( \theta(Y), Y) & \mathit{if}~ Y \in dom(\theta) \smallsetminus \{X\} \\
                            \end{array} \right.
  $$
  Trivially $e\theta' \equiv X\theta' \equiv e_1$ and $\theta \ordap \theta'$ because $e\theta \ordap e_1$ by the premise and $\theta(Y) \ordap rep_{\perp}( \theta(Y), Y)$.

Regarding the inductive step ---$e \equiv h(e_1, \ldots, e_n)$--- we know that 
$$e\theta \equiv h(e_1\theta, \ldots, e_n\theta) \ordap e_1 \equiv h(e'_1, \ldots, e'_n)$$
so $e_i\theta \ordap e'_i$. Then by IH $\exists \theta'_i \in Subst$ such that $e_i\theta'_i \equiv e'_i$ and $\theta \ordap \theta'_i$. Then we define $\theta'$ as:
  $$
    \theta'(Y) = \left\{    \begin{array}{ll}
                                \theta'_1(Y) & \mathit{if}~Y \in var(e_1) \\
                                \theta'_2(Y) & \mathit{if}~Y \in var(e_2) \\
                                \ldots & \\
                                \theta'_n(Y) & \mathit{if}~Y \in var(e_n) \\
                                rep_{\perp}( \theta(Y), Y) & \mathit{if}~ Y \in dom(\theta) \smallsetminus var(e) \\
                            \end{array} \right.
  $$
  The substitution $\theta'$ is well defined because $e$ is linear. Then $e\theta' \equiv h(e_1\theta', \ldots, e_n\theta') \equiv h(e_1\theta'_1, \ldots, e_n\theta'_n) = h(e'_1, \ldots, e'_n) \equiv e_1$ and $\theta \ordap \theta'$ by IH and the fact that $\theta(Y) \ordap rep_{\perp}( \theta(Y), Y)$.
\end{proof}



\begin{lemma}\label{FVShells}
For any $e \in LExp_{\perp}$, $FV(|e|) \subseteq FV(e)$.
\end{lemma}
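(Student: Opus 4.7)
The plan is a straightforward structural induction on $e \in LExp_{\perp}$, since the shell operation $|\cdot|$ is defined recursively on the structure of expressions.

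First I would dispatch the base cases. If $e \equiv X \in \var$ then $|e| = X$ and $FV(|e|) = \{X\} = FV(e)$. If $e \equiv c \in CS^0$ then $|e| = c$ and $FV(|e|) = \emptyset \subseteq FV(e)$. If $e \equiv f \in FS^0$ or $e \equiv \bot$ then $|e| = \bot$ and $FV(|e|) = \emptyset \subseteq FV(e)$.

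For the inductive step there are three cases. If $e \equiv c(e_1,\ldots,e_n)$ with $c \in CS^n$, then $|e| = c(|e_1|,\ldots,|e_n|)$, so $FV(|e|) = \bigcup_i FV(|e_i|) \subseteq \bigcup_i FV(e_i) = FV(e)$ directly from the induction hypothesis. If $e \equiv f(e_1,\ldots,e_n)$ with $f \in FS^n$, then $|e| = \bot$ and $FV(|e|) = \emptyset \subseteq FV(e)$.

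The only mildly interesting case is $e \equiv let~X = e_1~in~e_2$, where $|e| = |e_2|[X/|e_1|]$. A standard elementary fact about substitutions is that for any expression $s$ and substitution $[X/s']$, $FV(s[X/s']) \subseteq (FV(s) \setminus \{X\}) \cup FV(s')$, with equality when $X \in FV(s)$ and inclusion $FV(s[X/s']) = FV(s)$ otherwise. Applying this to $|e_2|[X/|e_1|]$ and then invoking the induction hypothesis on $e_1$ and $e_2$, we get
\[
FV(|e|) \subseteq (FV(|e_2|) \setminus \{X\}) \cup FV(|e_1|) \subseteq (FV(e_2) \setminus \{X\}) \cup FV(e_1) = FV(e),
\]
which matches the definition of $FV$ on let-expressions. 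There is no real obstacle; the only thing to be careful with is correctly tracking the let-binding in the free-variables computation, which is why I would state the substitution fact explicitly before applying it.
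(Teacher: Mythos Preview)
Your proof is correct and follows exactly the approach the paper indicates: a straightforward structural induction on $e$. The paper's own proof is simply the one-line remark ``Straightforward by induction on the structure of $e$,'' so your detailed case analysis (including the let-case via the standard free-variables-of-substitution inclusion) is just an explicit unfolding of that same argument.
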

\begin{proof}
Straightforward by induction on the structure of $e$.
\end{proof}

\begin{lemma}\label{lemmaShellSubst1}
Given $e \in LExp$, $\theta \in LSubst_\perp$, $|e\theta| = |e|\hat{\theta}$ where $\hat{\theta}$ is defined as $ X\hat{\theta} = |X\theta|$
\end{lemma}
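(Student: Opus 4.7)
The plan is to prove this by structural induction on $e$, with the let-case as the only slightly delicate one, where the substitution lemma (Lemma \ref{auxBind}) combined with Lemma \ref{FVShells} does the work.

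First I would dispatch the base cases. If $e \equiv X \in \var$ then both sides equal $|X\theta|$ directly from the definition of $\hat{\theta}$. If $e \equiv c \in CS^0$ then both sides are $c$. If $e \equiv f \in FS^0$ then $|e\theta| = |f| = \bot$ and $|e|\hat{\theta} = \bot\hat{\theta} = \bot$.

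For the inductive step on $h(e_1, \ldots, e_n)$ with $h \in \Sigma$, a case distinction on whether $h \in CS$ or $h \in FS$ is immediate: if $h \in FS$ both sides collapse to $\bot$; if $h \in CS$ the definitions of shell and substitution commute with $h$, so the IH applied componentwise gives the result.

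The main (but still routine) case is $e \equiv let~X = e_1~in~e_2$. By the variable convention we may assume $X \notin dom(\theta) \cup \vran(\theta)$, which in turn yields $X \notin dom(\hat{\theta}) \cup \vran(\hat{\theta})$: the domain inclusion is immediate from the definition of $\hat{\theta}$, and for the variable range we use Lemma \ref{FVShells} to get $\vran(\hat{\theta}) \subseteq \vran(\theta)$. Then chaining definitions:
$$
\begin{array}{ll}
|(let~X\!=\!e_1~in~e_2)\theta| & = |let~X\!=\!e_1\theta~in~e_2\theta| \\
                               & = |e_2\theta|[X/|e_1\theta|] \\
                               & = |e_2|\hat{\theta}\,[X/|e_1|\hat{\theta}] \quad \text{by IH on } e_1, e_2 \\
                               & = (|e_2|[X/|e_1|])\hat{\theta} \quad \text{by Lemma \ref{auxBind}} \\
                               & = |let~X\!=\!e_1~in~e_2|\hat{\theta}.
\end{array}
$$

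I do not expect any real obstacle here: the statement is essentially a compatibility/commutation property between the shell operator and substitution, and the inductive structure mirrors the recursive clauses in the definition of $|\cdot|$. The only place where care is needed is the bookkeeping for the let-case, where the variable convention must be transferred from $\theta$ to $\hat{\theta}$ so that Lemma \ref{auxBind} applies; Lemma \ref{FVShells} is exactly the tool for that.
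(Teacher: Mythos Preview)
Your proof is correct and follows essentially the same approach as the paper: structural induction on $e$, with the let-case handled via the variable convention, Lemma~\ref{FVShells} to transfer the convention from $\theta$ to $\hat{\theta}$, and then Lemma~\ref{auxBind}. The only cosmetic difference is that the paper treats $f(e_1,\ldots,e_n)$ as a base case (no IH is needed since both sides are $\bot$), whereas you place it in the inductive step with the same trivial reasoning.
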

\begin{proof}
By induction on the structure of $e$. We have two base cases:
\begin{itemize}
  \item $e \equiv X \in \var$. Then $|e\theta| \equiv |X\theta| = X\hat{\theta} = |X|\theta \equiv |e|\hat{\theta}$.
  \item $e \equiv f(e_1, \ldots, e_n)$. Then $|e\theta| \equiv |f(e_1, \ldots, e_n)\theta| = |f(e_1\theta, \ldots, e_n\theta)| = \perp = \perp\hat{\theta} = |f(e_1, \ldots, e_n)|\hat{\theta} \equiv |e|\hat{\theta}$.
\end{itemize}

Regarding the inductive step we have:
\begin{itemize}
  \item $e \equiv c(e_1, \ldots, e_n)$. Straightforward.
  \item $e \equiv let~X=e_1~in~e_2$. Then $|e\theta| = |(let~X=e_1~in~e_2)\theta| = |let~X=e_1\theta~in~e_2\theta| = |e_2\theta|[X/|e_1\theta|]$. By IH we have that $|e_1\theta| = |e_1|\hat{\theta}$ and $|e_2\theta| = |e_2|\hat{\theta}$, so $|e_2\theta|[X/|e_1\theta|] = |e_2\theta| = (|e_2|\hat{\theta})[X/|e_1|\hat{\theta}]$. By the variable convention we can assume that $X \notin dom(\theta)\cup\vran(\theta)$, and since $dom(\hat{\theta}) = dom(\theta)$ and $\vran(\hat{\theta}) \subseteq \vran(\theta)$ ---using Lemma \ref{FVShells}--- we can use Lemma \ref{auxBind} and obtain $(|e_2|\hat{\theta})[X/|e_1|\hat{\theta}] = (|e_2|[X/|e_1|])\hat{\theta}$. Finally, $(|e_2|[X/|e_1|])\hat{\theta} = |let~X=e_1~in~e_2|\hat{\theta} = |e|\hat{\theta}$.
\end{itemize}
\end{proof}

\begin{lemma}\label{lemmaShellSubst2}
Given $e \in LExp$, $\theta \in LSubst_\perp$, if $|e| = \perp$ then $|e\theta| = \perp$.
\end{lemma}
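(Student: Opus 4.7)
The plan is to derive this statement as an almost immediate corollary of the preceding Lemma \ref{lemmaShellSubst1}. That lemma already gives us the identity $|e\theta| = |e|\hat\theta$, where $\hat\theta$ is the substitution defined by $X\hat\theta = |X\theta|$. From here, once we assume $|e| = \perp$, the computation is a one-liner: $|e\theta| = |e|\hat\theta = \perp\hat\theta = \perp$, using the fact that applying any substitution to the 0-ary constructor $\perp$ yields $\perp$ itself (since $\perp$ is a constant, not a variable, the definition of substitution application in the $h(e_1,\ldots,e_n)$ clause with $n=0$ leaves it untouched).

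As a sanity check and to justify why I prefer routing through Lemma \ref{lemmaShellSubst1} rather than a direct structural induction on $e$, let me sketch what the direct route would look like. The base cases are easy: if $e \equiv X \in \var$ then $|e| = X \neq \perp$ so the hypothesis is vacuous (unless $X$ itself were the constant $\perp$, but $\perp \not\in \var$); if $e$ is a 0-ary constructor $c$ different from $\perp$, again $|e| = c \neq \perp$; if $e \equiv \perp$ or $e \equiv f \in FS^0$, then $e\theta \equiv e$ and we are done. The constructor case $c(e_1,\ldots,e_n)$ with $n \geq 1$ is vacuous since the shell is then a non-$\perp$ constructor-rooted term, and the function case $f(\overline{e})$ is immediate since $|f(\overline{e}\theta)| = \perp$ by definition.

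The only mildly interesting case in the direct approach would be the let case $e \equiv let~X = e_1~in~e_2$, where $|e| = |e_2|[X/|e_1|]$ and one would need to analyze when this equals $\perp$ (either $|e_2| = \perp$, or $|e_2| \equiv X$ with $|e_1| = \perp$) and then invoke the induction hypothesis appropriately on $e_1$ or $e_2$. None of this is hard, but it duplicates work that Lemma \ref{lemmaShellSubst1} already handles uniformly. Hence I would just write two lines: cite Lemma \ref{lemmaShellSubst1}, then observe $\perp\hat\theta \equiv \perp$. I do not anticipate any obstacle; the only care required is ensuring that the application of $\hat\theta$ to $\perp$ is recognized as the identity, which is immediate from the definitional clause for substitution on constant symbols.
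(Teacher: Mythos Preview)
Your approach is correct and in fact more direct than the paper's. The paper sets up a structural induction on $e$, dismisses the variable and constructor-rooted cases as vacuous, handles the function-rooted base case trivially, and then in the let case $e \equiv let~X=e_1~in~e_2$ performs a chain of rewritings via Lemmas \ref{LCasc1} and \ref{auxBind} to reach $|(e_2[X/e_1])\theta|$, at which point it applies Lemma \ref{lemmaShellSubst1} to the expression $e_2[X/e_1]$ and concludes $(|e_2|[X/|e_1|])\hat\theta = \perp\hat\theta = \perp$. Notably the induction hypothesis is never actually invoked, so the induction scaffolding is superfluous. Your proposal cuts straight to the point by applying Lemma \ref{lemmaShellSubst1} to $e$ itself, giving $|e\theta| = |e|\hat\theta = \perp\hat\theta = \perp$ in one line; this is strictly simpler and buys you freedom from the auxiliary lemmas \ref{LCasc1} and \ref{auxBind} that the paper's detour requires.
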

\begin{proof}
By induction on the structure of $e$. Notice that $e$ cannot be a variable $X$ or an applied constructor symbol $c(e_1, \ldots, e_n)$ because in those cases $|e| \neq \perp$. The base case $e \equiv f(e_1, \ldots, e_n)$ is straightforward. Regarding the inductive step we have $e \equiv let~X=e_1~in~e_2$ such that $|let~X=e_1~in~e_2| = |e_2|[X/|e_1|] = \perp$. Then $|e\theta| = |(let X=e_1~in~e_2)\theta| = |let~X=e_1\theta~in~e_2\theta| = |e_2\theta|[X/|e_1\theta|]$. By Lemma \ref{LCasc1} $|e_2\theta|[X/|e_1\theta|] = |(e_2\theta)[X/e_1\theta]|$, and since $X \notin dom(\theta)\cup\vran(\theta)$ by the variable convention then we can apply Lemma \ref{auxBind} and $|(e_2\theta)[X/e_1\theta]| = |(e_2[X/e_1])\theta|$. Finally by Lemma \ref{lemmaShellSubst1} $|(e_2[X/e_1])\theta| = |e_2[X/e_1]|\hat{\theta}$, and by Lemma \ref{LCasc1} $|e_2[X/e_1]|\hat{\theta} = (|e_2|[X/|e_1|]) \hat{\theta} = \perp \hat{\theta} = \perp$.
\end{proof}

\teoremi{Lemma \ref{T32} (Completeness lemma for let-rewriting)}
For all $e \in LExp$ and $t \in CTerm_{\perp}$ such that $t \not\equiv \perp$,
$$
e \clto t \mbox{ implies } e \fe let~\overline{X = a}~in~t'
$$
for some $t' \in CTerm$ and  $\overline{a} \subseteq LExp$ in such a way that $t\sqsubseteq |let~\overline{X=a}~in~t'|$ and $|a_i| = \perp$ for every $a_i \in \overline{a}$. As a consequence, $t \sqsubseteq t'[\overline{X/\perp}]$.

\begin{proof}\label{DEMO_T32}
By induction on the size $s$ of the {\it CRWL}$_{let}$-proof, that we measure as the number of ${\it CRWL}_{let}$ rules applied. Concerning the base cases:
\begin{description}
  \item[\clrule{B}] This contradicts the hypothesis because then $t \equiv
  \perp$, so we are done. In the rest of the proof we will assume that $t
  \not\equiv \perp$ because otherwise we would be in this case.

  \item[\clrule{RR}] Then we have $X \crwlto X$. But then $X \f^0 X$ and
  $X \sqsubseteq X \equiv |X|$, so we are done with $\overline{X} =
  \emptyset$.

  \item[\clrule{DC}] Then we have $c \crwlto c$. But then $c \f^0 c$ and
      $c \sqsubseteq c \equiv |c|$, so we are done with $\overline{X} =
      \emptyset$.
\end{description}

Now we treat the inductive step:
\begin{description}
 \item[\clrule{DC}] Then we have $e \equiv c(e_1,\ldots,e_n)$ and the
    {\it CRWL}$_{let}$-proof has the shape:
    $$
    \infer[\clrule{DC}]{c(e_1,\ldots,e_n)\crwlto c(t_1,\ldots,t_n)}{e_1\crwlto
      t_1,\ldots,e_n\crwlto t_n}
    $$
    In the general case some $t_i$ will be equal to $\perp$ and some others will be different. 
For the sake of simplicity we consider the case when $n=2$ with $t_1=\bot$ and $t_2\not\equiv \bot$, the proof can be easily extended to the general case.
Then  we have $c(e_1, e_2) \crwlto c(\perp, t_2)$, so by IH over the second
    argument we get $e_2 \fe let~\overline{X_2=a_2}~in~t'_2$
    with $t'_2 \in CTerm$, $|a_{2_i}| = \perp$ for every $a_{2_i} \in \overline{a_2}$ and
    $|let~\overline{X_2=a_2}~in~t'_2| =
    t'_2[\overline{X_2/\perp}] \sqsupseteq t_2$. So:
    $$
    \begin{array}{ll}
      c(e_1, e_2) \fe c(e_1, let~\overline{X_2=a_2}~in~t'_2) & \mbox{ by IH} \\
      \f let~Y=(let~\overline{X_2=a_2}~in~t'_2)~in~c(e_1, Y) & \mbox{ by \lrrule{LetIn}} \\
      \fe let~\overline{X_2=a_2}~in~let~Y=t'_2~in~c(e_1, Y) & \mbox{ by \lrrule{Flat*}} \\
      \f let~\overline{X_2=a_2}~in~c(e_1, t'_2) & \mbox{ by \lrrule{Bind}} \\
    \end{array}
    $$
    Then there are several possible cases:
    \begin{enumerate}
    \item[a)] $e_1 \equiv f_1(\overline{e_1})$: Then
    $let~\overline{X_2=a_2}~in~c(f_1(\overline{e_1}), t'_2) \f
    let~\overline{X_2=a_2}~in~let~Z=f_1(\overline{e_1})~in~c(Z, t'_2)$, by
    \lrrule{LetIn}. So we are done as  $|a_{2_i}| = \perp$ for every
    $a_{2_i}$ by the IH, $|f_1(\overline{e_1})| = \perp$ and
$|let~\overline{X_2=a_2}~in~let$ $Z=f_1(\overline{e_1})~in~c(Z, t'_2)| = c(Z,t'_2)[\overline{X_2/\perp}, Z/\perp] \sqsupseteq c(\perp, t_2)$
    because
    $t'_2[\overline{X_2/\perp}] \sqsupseteq t_2$ by the IH, and $Z$ is fresh and
    so it does not appear in $t'_2$
    \item[b)] $e_1 \equiv t'_1 \in CTerm$: Then we are done as $|a_{2_i}| = \perp$
    for every $a_{2_i} \in \overline{a_2}$ by the IH, and $| let~\overline{X_2=a_2}~in~c(t'_1,
    t'_2)| = c(t'_1, t'_2)[\overline{X_2/\perp}] \sqsupseteq c(\perp, t_2)$,
    because $t'_2[\overline{X_2/\perp}] \sqsupseteq t_2$ by the IH
    \item[c)] $e_1 \equiv c_1(\overline{e_1}) \not\in CTerm$ with $c_1 \in CS$: 
    Then by Lemma \ref{T36} we have the derivation $c_1(\overline{e_1}) \fe let~\overline{X_1=f_1(\overline{t'_1})}~in~c_1(\overline{t_1})$.
    But then:
    $$
    \begin{array}{@{}l@{}l@{}}
      let~\overline{X_2=a_2}~in~c(c_1(\overline{e_1}), t'_2)\\
      \fe let~\overline{X_2=a_2}~in~c(let~\overline{X_1=f_1(\overline{t'_1})}~in~c_1(\overline{t_1}),t'_2) & \mbox{ Lemma \ref{T36}} \\
      \f let~\overline{X_2=a_2}~in~let~Y=(let~\overline{X_1=f_1(\overline{t'_1})}~in~c_1(\overline{t_1}))~in~c(Y, t'_2) & \mbox{ by \lrrule{LetIn}} \\
      \fe let~\overline{X_2=a_2}~in~let~\overline{X_1=f_1(\overline{t'_1})}~in~let~Y=c_1(\overline{t_1})~in~c(Y, t'_2) & \mbox{ by \lrrule{Flat*}} \\
      \f let~\overline{X_2=a_2}~in~let~\overline{X_1=f_1(\overline{t'_1})}~in~c(c_1(\overline{t_1}),t'_2) & \mbox{ by \lrrule{Bind}} \\
    \end{array}
    $$
    In the last step notice that $Y$ is fresh and it cannot appear in $t'_2$. Then we are done as $|f_i(\overline{t'_i})| = \perp$, $|a_{2_i}| = \perp$ for every $a_{2_i} \in \overline{a_2}$ by the IH, and $|let~\overline{X_2=a_2}~in~let~\overline{X_1=f_1(\overline{t'_1})}~in~c(c_1(\overline{t_1}),t'_2)| = c(c_1(\overline{t_1}),t'_2)[\overline{X_1/\perp}][\overline{X_2/\perp}]$ $\sqsupseteq c(\perp,t_2)$
%
    because $t'_2[\overline{X_2/\perp}] \sqsupseteq t_2$ by the IH, and no variable in $\overline{X_1}$ appears in $t'_2$ by $\alpha$-conversion, as those are bound variables which were present in $c_1(\overline{e_1})$ or that appeared after applying Lemma \ref{T36} to it, and this expression was placed in a position parallel to the position of $t'_2$.
  \item[d)] $e_1 \equiv let~X=e_{11}~in~e_{12}$: 
Then by Lemma \ref{T36} $let~X=e_{11}~in~e_{12} \fe let~\overline{X_1=f_1(\overline{t'_1})}~in~e''$ where $e'' \in \var$ or $e'' \equiv h_1(\overline{t_1})$. Then:

$$
\begin{array}{@{}ll@{}}
let~\overline{X_2=a_2}~in~c(let~X=e_{11}~in~e_{12}, t'_2) &  \\
\fe let~\overline{X_2=a_2}~in~c(let~\overline{X_1=f_1(\overline{t'_1})}~in~e'', t'_2) & \mbox{by Lemma \ref{T36}} \\
\f let~\overline{X_2=a_2}~in~let~Y=(let~\overline{X_1=f_1(\overline{t'_1})}~in~e'')~in~c(Y, t'_2) & \mbox{by \lrrule{LetIn} } \\
\fe let~\overline{X_2=a_2}~in~let~\overline{X_1=f_1(\overline{t'_1})}~in~let~Y=e''~in~c(Y, t'_2) & \mbox{by \lrrule{Flat$^*$}} \\
\end{array}
$$
Then we have two possibilities depending on $e''$:
    \begin{enumerate}
    \item[i)] $e'' \equiv Z \in \var$: Then we can do:
$$
\begin{array}{ll}
let~\overline{X_2=a_2}~in~let~\overline{X_1=f_1(\overline{t'_1})}~in~let~Y=Z~in~c(Y, t'_2) \\
\f let~\overline{X_2=a_2}~in~let~\overline{X_1=f_1(\overline{t'_1})}~in~c(Z, t'_2) & \mbox{ by \lrrule{Bind}}\\
\end{array}
$$
Then we are done as $|f_1(\overline{t'_1})| = \perp$, $|a_{2_i}| = \perp$ for every $a_{2_i} \in \overline{a_2}$ by IH, and $|let~\overline{X_2=a_2}~in~let~\overline{X_1=f_1(\overline{t'_1})}~in~c(Z, t'_2)| = c(Z, t'_2)[\overline{X_1/\perp}][\overline{X_2/\perp}] \sqsupseteq c(\perp, t_2)$, as $t'_2[\overline{X_2/\perp}] \sqsupseteq t_2$ by IH, and no variable in $\overline{X_1}$ appears in $t'_2$ by $\alpha$-conversion, like in the case {\it c)}.
    \item[ii)] $e'' \equiv h_1(\overline{t_1})$: there are two possible cases:
        \begin{enumerate}
    \item[A)] $h_1 = f_1 \in FS$: We are done as $|f_1(\overline{t'_1})| = \perp$, $|a_{2_i}| = \perp$ for every $a_{2_i} \in \overline{a_2}$ by IH, $|f_1(\overline{t_1})| = \perp$, and $|let~\overline{X_2=a_2}~in~let~\overline{X_1=f_1(\overline{t'_1})}~in~let~Y=f_1(\overline{t_1})~in~c(Y, t'_2)| = c(Y, t'_2)[Y/\perp][\overline{X_1}/\overline{\perp}][\overline{X_2}/\overline{\perp}] \sqsupseteq c(\perp, t_2)$, as by IH $t'_2[\overline{X_2}/\overline{\perp}] \sqsupseteq t_2$, $Y$ is fresh and so it does not appear in $t'_2$, and no variable in $\overline{X_1}$ appears in $t'_2$ as in the case {\it i)}.
        \item[B)] $h_1 = c_1 \in DC$: Then we can do a \lrrule{Bind} step:
$$
\begin{array}{ll}
let~\overline{X_2=a_2}~in~let~\overline{X_1=f_1(\overline{t'_1})}~in~let~Y=c_1(\overline{t_1})~in~c(Y, t'_2) \\
\f let~\overline{X_2=a_2}~in~let~\overline{X_1=a_1}~in~c(c_1(\overline{t_1}), t'_2) &\\
\end{array}
$$
Then we are done as $|f_1(\overline{t'_1})| = \perp$, $|a_{2_i}| = \perp$ for every $a_{2_i} \in \overline{a_2}$ by IH, and 
$$\begin{array}{ll}
  & |let~\overline{X_2=a_2}~in~let~\overline{X_1=f_1(\overline{t'_1})}~in~c(c_1(\overline{t_1}), t'_2)| \\
  = & c(c_1(\overline{t_1}), t'_2)[\overline{X_1/\perp}][\overline{X_2/\perp}] \\
  \sqsupseteq & c(\perp, t_2)\\
  \end{array}$$
as $t'_2[\overline{X_2/\perp}] \sqsupseteq t_2$ by IH, and no variable in $\overline{X_1}$ appears in $t'_2$, as we saw in {\it i)}.
    \end{enumerate}
    \end{enumerate}

\end{enumerate}

\item[\clrule{OR}] If $f$ has no arguments ($n=0$) then we have:
  $$
  \infer[\clrule{OR}]{f \crwlto t}{r\theta \crwlto t}
  $$
  with $(f \crwlto r) \in \prog$ and $\theta \in CSubst_\bot$. Let us define $\theta' \in
  CSubst$ as the substitution which is equal to $\theta$ except that every
  $\perp$ introduced by $\theta$ is replaced with some constructor symbol or
  variable. Then $\theta \sqsubseteq \theta'$, so by Proposition \ref{PropMonSubstCrwlLet} we have
  $r\theta' \crwlto t$ with a proof of the same size. But then applying the
  IH to this proof we get $r\theta' \fe let~\overline{X=a}~in~t'$ under the
  conditions of the lemma. Hence $f \f r\theta' \fe
  let~\overline{X=a}~in~t'$ applying (Fapp) in the first step, and we are
  done. \\

  If $n>0$, we will proceed as in the case for \clrule{DC}, doing a preliminary version for $f(e_1, e_2) \crwlto t$ which can be easily extended for the general case. Then we have:
  $$
  \infer[\clrule{OR}]{f(e_1, e_2) \crwlto t}
  {e_1 \crwlto \perp
    \ e_2 \crwlto t_2~
    \ ~r\theta \crwlto t
  }
  $$
  such that $t_2 \not\equiv \perp$, and with $(f(p_1, p_2) \to r) \in \prog$, $\theta \in CSubst_\perp$, such that $p_1\theta = \perp$ and $p_2\theta = t_2$. 
  Then applying the IH to $e_2 \crwlto t_2$ we get that $e_2
  \fe let~\overline{X_2=a_2}~in~t'_2$ such that $|a_{2_i}|=\perp$ for every
  $a_{2_i}$ and $|let~\overline{X_2=a_2}~in~t'_2|=t'_2[\overline{X_2/\perp}]
  \sqsupseteq t_2$. Then we can do:
  $$
  \begin{array}{ll}
    f(e_1, e_2) \fe f(e_1, let~\overline{X_2=a_2}~in~t'_2) & \mbox{ by IH} \\
    \f let~Y=(let~\overline{X_2=a_2}~in~t'_2)~in~f(e_1, Y) & \mbox{ by \lrrule{LetIn}} \\
    \fe let~\overline{X_2=a_2}~in~let~Y=t'_2~in~f(e_1, Y) & \mbox{ by \lrrule{Flat*}} \\
    \f let~\overline{X_2=a_2}~in~f(e_1, t'_2) & \mbox{ by \lrrule{Bind}} \\
  \end{array}
  $$
  Then applying Lemma \ref{T36} we get
  $$f(e_1, t'_2) \fe
  let~\overline{X_1=f_1(\overline{t'})}~in~f(t'_1, t'_2)$$
  Now as $t'_2[\overline{X_2/\perp}] \sqsupseteq t_2$ then $(t'_1,
  t'_2) \sqsupseteq (\perp, t_2)$, so by Lemma \ref{T35} there must exist
  $\theta' \in CSubst$ such that $\theta \sqsubseteq \theta'$ and $(p_1,
  p_2)\theta' \equiv (t'_1, t'_2)$. Then by
  Proposition \ref{PropMonSubstCrwlLet}, as $r\theta \crwlto t$ then $r\theta'
  \crwlto t$ with a proof of the same size. As $\theta' \in CSubst$ and $e \in
  LExp$ (because it is part of the program) then $r\theta' \in LExp$ and we can
  apply the IH to that proof getting that $r\theta' \fe
  let~\overline{X=a}~in~t'$ such that $|a_{i}|=\perp$ for every $a_{i}$ and
  $|let~\overline{X=a}~in~t'|=t'[\overline{X/\perp}] \sqsupseteq t$. Then we can do:
  $$
  \begin{array}{ll}
    let~\overline{X_2=a_2}~in~f(e_1, t'_2)\\
    \fe let~\overline{X_2=a_2}~in~let~\overline{X_1=f_1(\overline{t'})}~in~f(t'_1, t'_2) & \mbox{ by Lemma \ref{T36}} \\
    \equiv let~\overline{X_2=a_2}~in~let~\overline{X_1=f_1(\overline{t'})}~in~f(p_1, p_2)\theta' \\
    \f let~\overline{X_2=a_2}~in~let~\overline{X_1=f_1(\overline{t'})}~in~r\theta' & \mbox{ by \lrrule{Fapp}} \\
    \fe let~\overline{X_2=a_2}~in~let~\overline{X_1=f_1(\overline{t'})}~in~let~\overline{X=a}~in~t' & \mbox{ by $2^{nd}$ IH} \\
  \end{array}
  $$
  Then $|a_{2_i}|=\perp$ for every $a_{2_i} \in \overline{a_2}$ by IH, $|f_1(\overline{t'})|=\perp$ and $|a_{i}|=\perp$ for every $a_{i}$ by IH.
Besides the variables in $\overline{X_1} \cup \overline{X_2}$ either belong to $BV(e_1) \cup BV(e_2)$ or are fresh, hence none of them may appear in $t$ (by Lemma \ref{LAlfCrwlP} over $f(e_1, e_2) \crwlto t$ or by freshness).
%
So
  $t'[\overline{X/\perp}] \sqsupseteq t$ implies that $\forall p \in O(t')$ such
  that $t'|_{p} = Y$ for some $Y \in \overline{X_1}\cup \overline{X_2}$ then
  $t|_{p} = \perp$. But then
  $|let~\overline{X_2=a_2}~in~let~\overline{X_1=a_1}~in~let~\overline{X=a}~in~t'|
  \equiv t'[\overline{X/\perp}][\overline{X_1/\perp}][\overline{X_2/\perp}] \sqsupseteq
  t$.

\item[\clrule{Let}] Then $e \equiv let~X=e_1~in~e_2$ and we have a proof of the following shape:
$$
\infer[\clrule{Let}]{let~X=e_1~in~e_2 \clto t}
            { e_1 \clto t_1~
            \ ~e_2[X/t_1] \clto t
            }
$$
Then we have two possibilities:
\begin{enumerate}
 \item[a)] $t_1 \equiv \perp$: Then $e_2[X/t_1] \equiv e_2[X/\perp] \sqsubseteq e_2$. Hence, as $e_2[X/t_1] \clto t$ and $[X/t_1] \ordap \epsilon$, by Proposition \ref{PropMonSubstCrwlLet} we get $e_2\epsilon \equiv e_2 \clto t$ with a proof of the same size or smaller, and so by IH we get $e_2 \fe let~\overline{X=a}~in~t'$, with $t' \in CTerm$, $|a_i| \equiv \perp$ for every $a_i$ and $|let~\overline{X=a}~in~t'| \equiv t'[\overline{X/\perp}] \sqsupseteq t$, and we can do:
$$
let~X=e_1~in~e_2 \fe let~X=e_1~in~let~\overline{X=a}~in~t'
$$
%
Besides $X \not\in var(t)$ by Lemma \ref{LAlfCrwlP} over $let~X=e_1~in~e_2 \clto t$, and 
then $t'[\overline{X/\perp}] \sqsupseteq t$ implies $\forall p \in O(t')$ such that $t'|_{p} \equiv X$ then $t|_{p} \equiv \perp$, and we have several possible cases:
\begin{enumerate}
  \item[i)] $e_1 = f_1(\overline{e_1})$: Then we are donde because $|\overline{a}| \equiv \overline{\perp}$ by IH, $|f_1(\overline{e_1})| \equiv \perp$ and $|let~X=f_1(\overline{e_1})~in~let~\overline{X=a}~in~t'| \equiv t'[\overline{X/\perp}][X/\perp] \sqsupseteq t$, as $t'[\overline{X/\perp}] \sqsupseteq t$ and $\forall p \in O(t')$ such that $t'|_{p} \equiv X$ then $t|_{p} \equiv \perp$, as we saw above.

  \item[ii)] $e_1 = t'_1 \in CTerm$: But then
$$
\begin{array}{ll@{}}
let~X=t'_1~in~let~\overline{X=a}~in~t' \f let~\overline{X=a[X/t'_1]}~in~t'[X/t'_1] & \mbox{by \lrrule{Bind}}
\end{array}
$$
and we are done because $|\overline{a}| \equiv \overline{\perp}$ by IH, and so $|\overline{a}[X/t'_1]| \equiv \overline{\perp}$ by Lemma \ref{lemmaShellSubst2}. Besides, as in {\it i)}, $t'[\overline{X/\perp}] \sqsupseteq t$ combined with the fact that $\forall p \in O(t')$ such that $t'|_{p} \equiv X$ we have $t|_{p} \equiv \perp$, implies that $|let~\overline{X=a[X/t'_1]}~in~t'[X/t'_1]| \equiv t'[X/t'_1][\overline{X/\perp}] \sqsupseteq t$.

  \item[iii)] $e_1 = c_1(\overline{e_1}) \not\in CTerm$ with $c_1 \in CS$: Then by Lemma \ref{T36} we have $c_1(\overline{e_1}) \fe let~\overline{X_1=f_1({\overline{t_1}})}~in$ $c_1(\overline{t_1})$, hence
$$
\begin{array}{@{}ll@{}}
let~X=c_1(\overline{e_1})~in~let~\overline{X=a}~in~t' \\
\fe let~X=(let~\overline{X_1=f_1({\overline{t_1}})}~in~c_1(\overline{t_1}))~in~let~\overline{X=a}~in~t' & \mbox{by Lemma \ref{T36}} \\
\fe let~\overline{X_1=f_1({\overline{t_1}})}~in~let~X=c_1(\overline{t_1})~in~let~\overline{X=a}~in~t' & \mbox{by \lrrule{Flat$^*$}} \\
\f let~\overline{X_1=f_1({\overline{t_1}})}~in~let~\overline{X=a[X/c_1(\overline{t_1})]}~in~t'[X/c_1(\overline{t_1})] & \mbox{by \lrrule{Bind}} \\
\end{array}
$$
As by IH $|\overline{a}| \equiv \overline{\perp}$ then $|\overline{a[X/c_1(\overline{t_1})]}| \equiv \overline{\perp}$ by Lemma \ref{lemmaShellSubst2}. At this point we have to check that $|let~\overline{X_1=a_1}~in~let~\overline{X=a[X/c_1(\overline{t_1})]}$ $in$ $t'[X/c_1(\overline{t_1})]|$ $\equiv t'[X/c_1(\overline{t_1})][\overline{X/\perp}][\overline{X_1/\perp}] \sqsupseteq t$.
The variables in $\overline{X_1}$ either belong to $BV(c_1(\overline{e_1}))$ or are fresh, hence by $\alpha$-conversion none of them may appear in $t'$, because in $let~X=c_1(\overline{e_1})~in~let~\overline{X=a}~in~t'$ the expression $t'$ has no access to the variables bound in  $c_1(\overline{e_1})$. Hence $t'[X/c_1(\overline{t_1})][\overline{X/\perp}][\overline{X_1/\perp}] \equiv t'[X/t''][\overline{X/\perp}]$, for some $t'' \in CTerm_{\perp}$.
But then, as in {\it ii)}, $t'[\overline{X/\perp}] \sqsupseteq t$ combined with the fact that $\forall p \in O(t')$ such that $t'|_{p} \equiv X$ we have $t|_{p} \equiv \perp$, implies that $t'[X/t''][\overline{X/\perp}]\sqsupseteq t$.
%
  \item[iv)] $e_1 \equiv let~Y=e_{11}~in~e_{12}$: Then by Lemma \ref{T36} we have $let~Y=e_{11}~in~e_{12} \fe let~\overline{X_1=f_1({\overline{t_1}})}~in~h_1(\overline{t_1})$, and so
$$
\begin{array}{@{}l@{}l@{}}
let~X=(let~Y=e_{11}~in~e_{12})~in~let~\overline{X=a}~in~t' \\
\fe let~X=(let~\overline{X_1=f_1({\overline{t_1}})}~in~h_1(\overline{t_1}))~in~let~\overline{X=a}~in~t' & \mbox{ by Lemma \ref{T36}} \\
\fe let~\overline{X_1=f_1({\overline{t_1}})}~in~let~X=h_1(\overline{t_1})~in~let~\overline{X=a}~in~t' & \mbox{ by \lrrule{Flat$^*$}} \\
\end{array}
$$
Then either $h \in CS$ and we are like in {\it iii)} before the final \lrrule{Bind} step, or $h \in FS$ and $|h_1(\overline{t_1})| = \perp$ and $|\overline{a}| = \overline{\perp}$ (by IH), and $|let~\overline{X_1=a_1}~in~let~X=h_1(\overline{t_1})~in~let~\overline{X=a}~in~t'| \equiv t'[\overline{X/\perp}][X/\perp][\overline{X_1/\perp}] \equiv t'[\overline{X/\perp}][X/\perp]$ because $\overline{X_1} \cap var(t') = \emptyset$, as we saw in {\it iii)}. But then, as in {\it ii)}, $t'[\overline{X/\perp}] \sqsupseteq t$ combined with the fact that $\forall p \in O(t')$ such that $t'|_{p} \equiv X$ we have $t|_{p} \equiv \perp$, implies that $t'[\overline{X/\perp}][X/\perp] \sqsupseteq t$.
%
\end{enumerate}
 \item[b)] $t_1 \not\equiv \perp$: Then by IH we get $e_1 \fe let~\overline{X_1=a_1}~in~t'_1$, with $t'_1 \in CTerm$, $|a_{1_i}| \equiv \perp$ for every $a_{1_i}$ and $|let~\overline{X_1=a_1}~in~t'_1| \equiv t'_1[\overline{X_1/\perp}] \sqsupseteq t_1$. Hence $t_1 \sqsubseteq t'_1$ and so $e_2[X/t_1] \sqsubseteq e_2[X/t'_1]$, but then $e_2[X/t_1] \clto t$ implies $e_2[X/t'_1] \clto t$ with a proof of the same size or smaller, by Proposition \ref{propCrwlletPolar}. Therefore we may apply the IH to that proof to get $e_2[X/t'_1] \fe let~\overline{X=a}~in~t'$, with $t' \in CTerm$, $|a_i| \equiv \perp$ for every $a_i$ and $|let~\overline{X=a}~in~t'| \equiv t'[\overline{X/\perp}] \sqsupseteq t$. But then we can do:
$$
\begin{array}{ll}
let~X=e_1~in~e_2 \fe let~X=(let~\overline{X_1=a_1}~in~t'_1)~in~e_2 & \mbox{ by IH} \\
\fe let~\overline{X_1=a_1}~in~let~X=t'_1~in~e_2 & \mbox{ by \lrrule{Flat$^*$} } \\
\f let~\overline{X_1=a_1}~in~e_2[X/t'_1] & \mbox{ by \lrrule{Bind} } \\
\fe let~\overline{X_1=a_1}~in~let~\overline{X=a}~in~t' & \mbox{ by IH} \\
\end{array}
$$
Then by the IH's we have $|\overline{a}| = \overline{\perp}$ and $|\overline{a_1}| = \overline{\perp}$. Besides the variables in $\overline{X_1}$ either belong to $BV(e_1)$ or are fresh, hence none of them may appear in $t$ (by Lemma \ref{LAlfCrwlP} over $let~X=e_1~in~e_2 \clto t$ or by freshness). So  $t'[\overline{X/\perp}] \sqsupseteq t$ implies that $\forall p \in O(t')$ such that $t'|_{p} = Y$ for some $Y \in \overline{X_1}$ then $t|_{p} = \perp$. But then $|let~\overline{X_1=a_1}~in~let~\overline{X=a}~in~t'| \equiv t'[\overline{X/\perp}][\overline{X_1/\perp}] \sqsupseteq t$.
        \end{enumerate}
\end{description}
\end{proof}


\subsection{Proofs for Section \ref{SemEqs}}
\label{DEMO_lDistHD}

\teoremi{Lemma \ref{lDistHD}}
If $BV(\con) \cap FV(e_1) = \emptyset$ and $X \not\in FV(\con)$ then
$\denn{\con[let~X = e_1~in~e_2]} = \denn{let~X = e_1~in~\con[e_2]}$


\begin{proof}
One step of the rule (Dist) can be replaced by two steps (CLetIn) + (Bind):
\begin{center}
$\con[let~X = e_1~in~e_2] \f let~U = e_1~in~\con[let~X = U~in~e_2] \f let~U = e_1~in~\con[e_2[X/U]]$
\end{center}
followed by a renaming of $U$ by $X$ in the last expression.
Then the lemma follows from preservation of hypersemantics by (CLetIn) and (Bind) (Lemma \ref{lemCLetInPreserv} and Proposition \ref{propFnfPreservHipSem}).
\end{proof}

\teoremi{Proposition \ref{propHypSemPropAlter} ((Hyper)semantic properties of $?$)}
For any $e_1, e_2 \in LExp_\perp$
\begin{enumerate}
    \item[i)]
$
\den{e_1~?~e_2} = \den{e_1} \cup \den{e_2}
$
    \item[ii)]
$
\denn{e_1~?~e_2} = \denn{e_1} \uhs \denn{e_2}
$
\end{enumerate}
\begin{proof}
\begin{enumerate}
    \item[i)] Direct from definition of ? and the \crwl-proof calculus.
        \item[ii)]
$$
\begin{array}{ll}
\denn{e_1~?~e_2} = \lambda\theta.\den{(e_1~?~e_2)\theta} & \mbox{ by definition of $\denn{~}$}\\
= \lambda\theta.\den{e_1\theta~?~e_2\theta} \\
= \lambda\theta.(\den{e_1\theta} ~\cup~ \den{e_2\theta}) & \mbox{ by {\it i)}} \\
= \lambda\theta.(\denn{e_1}\theta ~\cup~ \denn{e_2}\theta) & \mbox{ by definition of $\denn{~}$}\\
= \denn{e_1} \uhs \denn{e_2} & \mbox{ by definition of $\uhs$}\\
\end{array}
$$
\end{enumerate}
\end{proof}

\subsection{Proofs for Section \ref{let-narrowing}}

\teoremi{Theorem \ref{SoundLNarr} (Soundness of the let-narrowing relation $\fnrl$)}
For any $e, e' \in LExp$, $e \fnre_{\theta} e'$ implies $e\theta \f^* e'$.

\begin{proof}\label{DEMO_SoundLNarr}
First we prove the soundness of narrowing for one step, proceeding by a case distinction over the rule used in $e \fnr_{\theta} ~e'$. The cases of (Elim), (Bind), (Flat) and (LetIn) are trivial, since narrowing and rewriting coincide for these rules.
\begin{description}
 \item[(Narr)] Then we have $f(\overline{t}) \fnr_{\theta} ~r\theta$ for $(f(\overline{p}) \tor r) \in {\cal P}$ fresh, $\theta \in CSubst$ such that $f(\overline{t})\theta \equiv f(\overline{p})\theta$.
But then $(f(\overline{p}) \tor r)\theta \equiv f(\overline{p})\theta \tor r\theta \equiv f(\overline{t})\theta \tor r\theta$, so we can do $e\theta \equiv f(\overline{t})\theta \f r\theta \equiv e'$ by (Fapp).\\

 \item[(Contxt)] Then we have ${\cal C}[e] \fnr_\theta  ~{\cal C}\theta[e']$ because $e \fnr_{\theta} e'$. Let us do a case distinction over the rule applied in $e \fnr_{\theta} ~e'$:\\
\begin{enumerate}
 \item[a)] $e \fnr_{\theta} ~e' \equiv f(\overline{t}) \fnr_{\theta} ~r \theta$ by
   (Narr), for $(f(\overline{p}) \tor r) \in {\cal P}$ fresh, so $f(\overline{t})\theta \f r\theta$ by \clrule{Fapp}. Then $({\cal
     C}[e])\theta \equiv ({\cal C}[e])\theta|_{\setminus var(\overline{p})}$,
   because the variables in $var(\overline{p})$ are fresh as $(f(\overline{p})
   \tor r)$ is. But then, as $dom(\theta) \cap BV({\cal C}) = \emptyset$ and
   $vRan(\theta|_{\setminus var(\overline{p})}) \cap BV({\cal C}) = \emptyset$
   by the conditions in (Contx), and $dom(\theta) \cap BV({\cal C}) = \emptyset$
   implies $dom(\theta|_{\setminus var(\overline{p})}) \cap BV({\cal C}) =
   \emptyset$, we can apply Lemma \ref{T28} getting $({\cal
     C}[e])\theta|_{\setminus var(\overline{p})} \equiv$\\ ${\cal
     C}\theta|_{\setminus var(\overline{p})}[e\theta|_{\setminus
     var(\overline{p})}] \equiv {\cal C}\theta|_{\setminus
     var(\overline{p})}[f(\overline{t})\theta|_{\setminus var(\overline{p})}]
   \equiv {\cal C}\theta[f(\overline{t})\theta]$, because the variables in
   $var(\overline{p})$ are fresh. Besides $\vran(\theta|_{\setminus var(\overline{p})}) \cap BV(\con) = \emptyset$, so we can apply \clrule{Contx} combined with an inner \clrule{Fapp} to do $(\con[e])\theta \equiv \con\theta[f(\overline{t})\theta] \f \con\theta[r\theta] \equiv \con\theta[e']$. \\

 \item[b)] In case a different rule was applied in $e \fnr_{\theta} ~e'$ then $\theta = \epsilon$.
 By the proof of the other cases we have $e\theta \equiv e \f e'$, so $({\cal C}[e])\theta \equiv {\cal C}[e] \f {\cal C}[e'] \equiv {\cal C}\theta[e']$ (remember $\theta = \epsilon$).
\end{enumerate}
 \end{description}

Now we prove the lemma for any number of steps $\f$, proceeding by induction over the length $n$ of $e \fnrc{n}_{\theta} e'$. The case $e \fnrc{0}_{\epsilon} e \equiv e'$ is straightforward because $e \fc{0} e \equiv e'$. For $n > 0$ we have the derivation $e \fnrc{}_\sigma e'' \fnrc{n-1}_{\gamma} e'$ with $\theta = \gamma \circ \sigma$. By the proof for one step $e\sigma \f e''$, and by the closeness under $CSubst$ of let-rewriting (Lemma \ref {LRwCerr}) $e\sigma\gamma \f e''\gamma$. By IH $e''\gamma \fe e'$, so we can link $e\theta \equiv e\sigma\gamma \f e''\gamma \fe e'$.
\end{proof}

\teoremi{Lemma \ref{lem:lifting} (Lifting lemma for the let-rewriting relation $\f$)}
Let $e,e' \in LExp$ such that $e\theta \fe e'$ for some $\theta \in CSubst$, and let
${\cal W}, {\cal B} \subseteq {\cal V}$ with $dom(\theta) \cup FV(e) \subseteq {\cal W}$, $BV(e) \subseteq {\cal B}$
and $(dom(\theta) \cup \vran(\theta)) \cap {\cal B} = \emptyset$, and for each \crule{Fapp} step of $e\theta \fe e'$  using a rule $R \in \prog$ and a substitution $\gamma \in CSubst$ then $\vran(\gamma|_{vExtra(R)}) \cap {\cal B} = \emptyset$.
Then there  exist a derivation $e ~\fnrl^*_{\sigma}~ e''$ and $\theta' \in CSubst$ such that:
$$ \mbox{(i)~} e''\theta' = e'
\qquad \mbox{(ii)~} \sigma\theta' = \theta[{\cal W}]
\qquad \mbox{(iii)~} (dom(\theta') \cup \vran(\theta')) \cap {\cal B} = \emptyset
$$
Besides, the let-narrowing derivation can be chosen to use mgu's at each \crule{Narr} step.

\begin{proof}\label{DEMO_lem:lifting}
Let us do a case distinction over the rule applied in $e\theta \f e'$:
\begin{description}

\item[(Fapp)] $e \equiv f(\overline{t})$, so:

\begin{center}
\begin{tikzpicture}[scale=0.9, auto] 
\tikzstyle{nthis}=[npath, minimum size=11mm]
    \node[nthis] (e) at (-2,0) {$f(\overline{t})$};
        \node at (-3, 0) {$e \equiv$};
    \node[nthis] (epp) at (2,0) {$r\sigma$};
        \node at (3.05, 0) {$\equiv e''$};
    \node[nthis] (etheta) at (-2,-2) {$f(\overline{t})\theta$};
    \node[nthis] (ep) at (2,-2) {$r\gamma$};
        \node at (3.05, -2) {$\equiv e'$};
    \draw [->, apath, densely dotted, snake=snake, line after snake=2mm] (e) to node {$l$} node[swap] {$\sigma$} (epp);
        \draw [->, apath] (etheta) to node[swap] {$l$} (ep);
    \draw [|->, apath, shorten <=1pt] (e) to node [swap] {$\theta$} (etheta);
    \draw [|->, apath, shorten <=1pt, dashed] (epp) to node {$\theta'$} (ep);

\begin{pgfonlayer}{background}
   \filldraw [fondoTerm]
      (e.north -| e.west) rectangle (ep.south -| ep.east);
\end{pgfonlayer}

\end{tikzpicture}
\end{center}

With an (Fapp) step $e\theta \equiv f(\overline{t})\theta \f r\gamma$ with $(f(\overline{p}) \to r) \in \prog$, $\gamma \in CSubst$, such that $f(\overline{t})\theta \equiv f(\overline{p})\gamma$ and $f(\overline{p}) \to r$ is a fresh variant.
We can assume that $dom(\gamma) \subseteq FV(f(\overline{p}) \tor r)$ without loss of generality. But then $dom(\theta) \cap dom(\gamma)= \emptyset$, and so $\theta \uplus \gamma$ is correctly defined, and it is a unifier of $f(\overline{t})$ and $f(\overline{p})$. So, there must exist $\sigma = mgu(f(\overline{t}), f(\overline{p}))$, which we can use to perform a (Narr) step, because $\sigma \in CSubst$ and $f(\overline{t})\sigma \equiv f(\overline{p})\sigma$.
$$e \equiv f(\overline{t}) \fnrl_\sigma~ r\sigma \equiv e''$$
As this unifier is an mgu then $dom(\sigma) \subseteq FV(f(\overline{t})) \cup FV(f(\overline{p}))$, $\vran(\sigma) \subseteq FV(f(\overline{t})) \cup FV(f(\overline{p}))$ and $\sigma \ordSus (\theta \uplus \gamma)$, so there must exist $\theta'_1 \in CSubst$ such that $\sigma\theta'_1 = \theta \uplus \gamma$. Besides we can define $\theta'_0 = \theta|_{\setminus ( dom(\theta'_1) \cup FV(f(\overline{t})) )}$ and then we can take $\theta' = \theta'_0 \uplus \theta'_1$ which is correctly defined as obviously $dom(\theta'_0) \cap dom(\theta'_1) = \emptyset$. Besides $dom(\theta'_0) \cap (FV(f(\overline{t})) \cup FV(f(\overline{p})) = \emptyset$, as if $Y \in FV(f(\overline{t}))$ then $Y \not\in dom(\theta'_0)$ by definition; and if $Y \in FV(f(\overline{p}))$ then $Y \not\in dom(\theta)$ as $\overline{p}$ belong to the fresh variant, and so $Y \notin dom(\theta'_0)$. Then the conditions in Lemma \ref{lem:lifting} hold:

\begin{itemize}
 \item \underline{Condition i)} $e''\theta' \equiv e'$: As $e''\theta' \equiv r\sigma\theta' \equiv r\sigma\theta'_1$ because given $Y \in FV(r\sigma)$, if $Y \in FV(r)$ then it belongs to the fresh variant and so $Y \notin dom(\theta) \supseteq dom(\theta'_0)$; and if $Y \in \vran(\sigma)\subseteq FV(f(\overline{t})) \cup FV(f(\overline{p}))$ then $Y \notin dom(\theta'_0)$ because $dom(\theta'_0) \cap (FV(f(\overline{t})) \cup FV(f(\overline{p}))) = \emptyset$. 
But $r\sigma\theta'_1 \equiv r(\theta \uplus \gamma) \equiv r\gamma \equiv e'$, because $\sigma\theta'_1 = \theta \uplus \gamma$ and $r$ is part of the fresh variant.

\item \underline{Condition ii)}  $\sigma\theta' = \theta[{\cal W}]$: Given $Y \in {\cal W}$, if $Y \in FV(f(\overline{t}))$ then $Y \notin dom(\gamma)$ and so $Y\theta \equiv Y(\theta \uplus \gamma) \equiv Y\sigma\theta'_1$, as $\sigma\theta'_1 = \theta \uplus \gamma$. But $Y\sigma\theta'_1 \equiv Y\sigma\theta'$ because given $Z \in var(Y\sigma)$, if $Z \equiv Y$ then as $Y \in FV(f(\overline{t}))$ then $Z \equiv Y \notin dom(\theta'_0)$ by definition of $\theta'_0$; if $Z \in \vran(\sigma)$ then 
$Z \not\in dom(\theta'_0)$, as we saw before.\\
On the other hand, $({\cal W} \setminus FV(f(\overline{t}))) \cap (FV(f(\overline{t})) \cup FV(f(\overline{p}))) = ({\cal W} \setminus FV(f(\overline{t})) \cap FV(f(\overline{t}))) \cup ({\cal W} \setminus FV(f(\overline{t})) \cap FV(f(\overline{p}))) = \emptyset \cup \emptyset = \emptyset$, because $FV(f(\overline{p}))$ are part of the fresh variant. So, if $Y \in {\cal W} \setminus FV(f(\overline{t}))$, then $Y \not\in dom(\sigma) \subseteq FV(f(\overline{t})) \cup FV(f(\overline{p}))$. Now if $Y \in dom(\theta'_0)$ then $Y\theta \equiv Y\theta'_0$ (by definition of $\theta'_0$), $Y\theta'_0 \equiv Y\theta'$ (as $Y \in dom(\theta'_0)$), $Y\theta' \equiv Y\sigma\theta'$ (as $Y \not\in dom(\sigma)$). If $Y \in dom(\theta'_1)$, $Y\theta \equiv Y(\theta \uplus \gamma)$ (as $Y \in {\cal W} \setminus FV(f(\overline{t}))$ implies it does not appear in the fresh instance), $Y(\theta \uplus \gamma) \equiv Y\sigma\theta'_1$ (as $\sigma\theta'_1 = \theta \uplus \gamma$), $Y\sigma\theta'_1 \equiv Y\theta_1'$ (as $Y \not\in dom(\sigma)$), $Y\theta'_1 \equiv Y\theta'$ (as  $Y \in dom(\theta'_1)$) and $Y\theta' \equiv Y\sigma\theta'$ (as $Y \not\in dom(\sigma)$). And if $Y \not\in (dom(\theta'_0) \cup dom(\theta'_1))$ then $Y \not\in dom(\theta')$, and as $Y \not\in dom(\sigma)$ and $Y\theta \equiv Y(\theta \uplus \gamma)$, then $Y\theta \equiv Y(\theta \uplus \gamma) \equiv Y\sigma\theta'_1 \equiv Y \equiv Y\sigma\theta'$.

\item \underline{Condition iii.1)} $dom(\theta') \cap {\cal B} = \emptyset$. Remember $\theta' = \theta'_0 \uplus \theta'_1$:
\begin{itemize}
 \item $dom(\theta'_0) \cap {\cal B} = \emptyset$: Given $Y \in dom(\theta'_0)$ then $Y \in dom(\theta)$ by definition of $\theta'_0$, and so $Y \not\in {\cal B}$, because $dom(\theta) \cap {\cal B} = \emptyset$ by hypothesis.
 \item $dom(\theta'_1) \cap {\cal B} = \emptyset$: As $\sigma$ is an mgu and $\sigma \ordSus \theta \uplus \gamma$, then $dom(\sigma) \subseteq dom(\theta \uplus \gamma)$. Given $Z \in {\cal B}$ then $Z \not\in dom(\theta)$, as $dom(\theta) \cap {\cal B} = \emptyset$ by hypothesis, and $Z \notin dom(\gamma) \subseteq FV(f(\overline{p}) \tor r)$ which are fresh, so $Z \not\in dom(\sigma)$. But then, as $\sigma\theta'_1 = \theta \uplus \gamma$, $Z \equiv Z(\theta \uplus \gamma) \equiv Z\sigma\theta'_1 \equiv Z\theta'_1$, so $Z \not\in dom(\theta'_1)$.
\end{itemize}

\item \underline{Condition iii.2)} $\vran(\theta') \cap {\cal B} = \emptyset$. Remember $\theta' = \theta'_0 \uplus \theta'_1$:
\begin{itemize}
 \item $\vran(\theta'_0) \cap {\cal B} = \emptyset$: Given $Y \in dom(\theta'_0)$ then $Y\theta'_0 \equiv Y\theta$ by definition of $\theta'_0$. As $\vran(\theta) \cap {\cal B} = \emptyset$ by hypothesis then it must happen $var(Y\theta) \cap {\cal B} = \emptyset$, so $var(Y\theta'_0) \cap {\cal B} = \emptyset$.
 \item $\vran(\theta'_1) \cap {\cal B} = \emptyset$: As $\sigma\theta'_1 = \theta \uplus \gamma$ then we can assume $dom(\theta'_1) \subseteq \vran(\sigma) \cup (dom(\theta \uplus \gamma) \setminus dom(\sigma))$.

\begin{itemize}
 \item Let $X \in dom(\theta'_1) \cap \vran(\sigma)$ be such that $X\theta'_1 \equiv r[Z]$ with $Z \in {\cal B}$. We will see that this $Z \in {\cal B}$ can appear in $X\theta'_1$ without leading to contradiction. The intuition is, as $\vran(\theta) \cap {\cal B} = \emptyset$ and $\vran(\gamma|_{vExtra(R)}) \cap {\cal B} = \emptyset$, then every $Z \in {\cal B}$ must come from an appearance in $e$ of the same variable, transmitted to $e'$ by the matching substitution $\gamma$, and so transmitted to $e''$ by $\sigma$.\\

As $X \in \vran(\sigma)$ then there must exist $Y \in dom(\sigma)$ such that $Y \longmapsto^\sigma r_1[X]_p \longmapsto^{\theta'_1} r_2[s[Z]]_p$. But as $\sigma\theta'_1 = \theta \uplus \gamma$ then $Y \longmapsto^{\theta \uplus \gamma} r_2[s[Z]]_p$. Then, $Z \in \vran(\theta \uplus \gamma)$, but $Z \in {\cal B}, \vran(\theta) \cap {\cal B} = \emptyset, \vran(\gamma|_{vExtra(R)}) \cap {\cal B} = \emptyset,dom(\gamma) \subseteq FV(f(\overline{p}) \tor s)$, so it must happen $Z \in \vran(\gamma|_{FV(\overline{p})})$, and as a consequence $Y \in FV(\overline{p})$. Let $o \in O(f(\overline{p}))$ (set of positions in $f(\overline{p})$) be such that $f(\overline{p})|_o \equiv Y$, then:
\begin{itemize}
 \item $((f(\overline{t}))\sigma)|_o \equiv ((f(\overline{p}))\sigma)|_o \equiv ((f(\overline{p}))|_o)\sigma \equiv Y\sigma \equiv r_1[X]_p$.
\item As $f(\overline{t}) \not\in dom(\gamma)$, which are the fresh variables of the variant of the program rule, $((f(\overline{t}))\theta)|_o \equiv ((f(\overline{t}))(\theta \uplus \gamma))|_o \equiv ((f(\overline{p}))(\theta \uplus \gamma))|_o \equiv ((f(\overline{p}))|_o)(\theta \uplus \gamma) \equiv Y(\theta \uplus \gamma) \equiv r_2[s[Z]]_p$
\end{itemize}
So, as $X \in dom(\theta'_1)$ then $X \not\in {\cal B}$ and $Z \in {\cal B}$ has been introduced by $\theta$, but this is impossible as $\vran(\theta) \cap {\cal B} = \emptyset$.
\item Let $Y \in dom(\theta) \setminus dom(\sigma)$ be. Then $Y\theta \equiv Y(\theta \uplus \gamma)$ (as $Y \in dom(\theta$), $Y(\theta \uplus \gamma) \equiv Y\sigma\theta'_1$ (as $\sigma\theta'_1 = \theta \uplus \gamma$), $Y\sigma\theta'_1 \equiv Y\theta'_1$ (as $Y \not\in dom(\sigma)$. But then no variable in ${\cal B}$ can appear in $Y\theta'_1 \equiv Y\theta$ as $(dom(\theta) \cup \vran(\theta)) \cap {\cal B} = \emptyset$.

\item Let $Y \in dom(\gamma) \setminus dom(\sigma)$ be. Then $Y\gamma \equiv Y(\theta \uplus \gamma) \equiv Y\sigma\theta'_1 \equiv Y\theta'_1$, reasoning like in the previous case. As $dom(\gamma) \subseteq FV(f(\overline{p}) \tor s)$ it can happen:
\begin{itemize}
 \item $Y \not\in FV(f(\overline{p}))$: Then no variable in ${\cal B}$ can appear in $Y\gamma$ because $\vran(\gamma|_{vExtra(R)}) \cap {\cal B} = \emptyset$ by the hypothesis.
 \item $Y \in FV(f(\overline{p}))$: Let $Z \in {\cal B}$ appearing in $Y\gamma$, then $Z$ appears in $f(\overline{t})$, so it must happen $Y \in dom(\sigma)$ because otherwise $\sigma$ could not be a unifier of $f(\overline{t})$ and $f(\overline{p})$. But this is a contradiction so this case is impossible.
\end{itemize}

\end{itemize}

\end{itemize}

\end{itemize}

\item[(LetIn)] In this case $e\theta \equiv h(e_1\theta, \ldots, e\theta, \ldots, e_n\theta)$ and $e \equiv h(e_1, \ldots, e, \ldots, e_n)$. Then the let-rewriting step is $$e\theta \equiv h(e_1\theta, \ldots, e\theta, \ldots, e_n\theta) \f let~X=e\theta ~in~h(e_1\theta, \ldots, X, \ldots, e_n\theta) \equiv e'$$ with $h \in \Sigma$, $e\theta \equiv f(\overline{e'})$ ---$f \in FS$--- or $e\theta \equiv let~Y=e'_1~in~e'_2$, and $X$ is a fresh variable. Notice that $e\theta$ is a let-rooted expression or a $f(\overline{e'})$ iff $e$ is a let-rooted expression or a function application, as $\theta \in CTerm$. Then we can apply a let-narrowing step:
$$ e \equiv h(e_1, \ldots, e, \ldots, e_n) \fnrl_\sigma ~let~X=e~in~h(e_1, \ldots, X, \ldots, e_n) \equiv e''$$
with $\sigma \equiv \epsilon$ and $\theta' \equiv \theta$. Then the conditions in Lemma \ref{lem:lifting} hold:

\begin{itemize}
\item[i)] $e''\theta' \equiv (let~X=e~in~h(e_1, \ldots, X, \ldots, e_n))\theta \equiv$\\
          $let~X=e\theta~in~h(e_1\theta, \ldots, X\theta, \ldots, e_n\theta) \equiv$\\
          $let~X=e\theta~in~h(e_1\theta, \ldots, X, \ldots, e_n\theta) \equiv e'$, since $X$ is fresh an it cannot appear in $dom(\theta')$.

\item[ii)] $\sigma \theta' \equiv \epsilon \theta \equiv \theta = \theta[{\cal W}]$.
\item[iii)] $(dom(\theta') \cup \vran(\theta')) \cap {\cal B} = (dom(\theta) \cup \vran(\theta)) \cap {\cal B} = \emptyset$ by hypothesis.
\end{itemize}

\item[(Bind)] In this case $e\theta \equiv let~X = t\theta~in~e_2\theta$ and $e \equiv let~X = t~in~e_2$. Then the let-rewriting step is $let~X = t\theta~in~e_2\theta \f e_2\theta[X/t\theta]$ with $t\theta \in CTerm$. As $\theta \in CTerm$, if $t\theta \in CTerm$ then $t \in CTerm$, so we can apply a let-narrowing step:
$$ e \equiv let~X = t~in~e_2 \fnrl_\sigma ~e_2[X/t] \equiv e''$$
with $\sigma \equiv \epsilon$ and $\theta' \equiv \theta$. Then the conditions in Lemma \ref{lem:lifting} hold:
\begin{itemize}
\item[i)] $e''\theta' \equiv e_2[X/t]\theta$. By the variable convention we can assume that $X \notin dom(\theta) \cup \vran(\theta)$, so by Lemma \ref{auxBind} $e_2[X/t]\theta \equiv e_2\theta[X/t\theta] \equiv e'$.

\item[ii) and iii)] As before.
\end{itemize}

\item[(Elim)] We have $e\theta \equiv let~X=e_1\theta ~in~ e_2\theta$, so $e \equiv let ~X=e_1 ~in~ e_2$. Then the let-rewriting step is $e\theta \equiv let~X=e_1\theta ~in~ e_2\theta \f e_2\theta$ with $X \notin FV(e_2\theta)$. By the variable convention $(dom(\theta) \cup \vran(\theta)) \cap BV(e) = \emptyset$, so as $X \in BV(e)$ then $X \notin dom(\theta) \cup \vran(\theta)$. Then $X \notin FV(e_2\theta)$ implies $X \notin FV(e_2)$ and we can apply a let-narrowing step:
$$ e \equiv let ~X=e_1 ~in~ e_2 \fnrl_\sigma ~e_2 \equiv e''$$
with $\sigma \equiv \epsilon$ and $\theta' \equiv \theta$. Then the conditions in Lemma \ref{lem:lifting} hold trivially.

\item[(Flat)] In this case $e\theta \equiv let ~X=(let~Y=e_1\theta ~in~e_2\theta) ~in~e_3\theta$ and $e \equiv let ~X=(let~Y=e_1 ~in~e_2) ~in~e_3$. The let-rewriting step is $e\theta \equiv let ~X=(let~Y=e_1\theta ~in~e_2\theta)~in~e_3\theta \f let~Y=e_1\theta ~in~let~X=~e_2\theta ~in~e_3\theta \equiv e'$ with $Y \notin FV(e_3\theta)$. By a similar reasoning as in the (Elim) case we conclude that $Y \notin dom(\theta) \cup \vran(\theta)$, so $Y \notin FV(e_3)$. Then we can apply a let-narrowing step:
$$ e \equiv let ~X=(let~Y=e_1 ~in~e_2) ~in~e_3 \fnrl_\sigma ~let~Y=e_1 ~in~let~X=~e_2 ~in~e_3 \equiv e''$$
with $\sigma \equiv \epsilon$ and $\theta' \equiv \theta$. Then the conditions in Lemma \ref{lem:lifting} hold trivially.

 \item[(Contx)] Then we have $e \equiv {\cal C}[s]$. By the variable convention $(dom(\theta) \cup \vran(\theta)) \cap BV(e) = \emptyset$, so by lemma \ref{T28} $e\theta \equiv ({\cal C}[s])\theta \equiv {\cal C}\theta[s\theta]$, and the step was
$$
e\theta \equiv {\cal C}\theta[s\theta] \f {\cal C}\theta[s'] \equiv e' \mbox{, because } s\theta \f s'
$$
Then we know that the lemma holds for $s\theta \f s'$, by the proof of the other cases, so taking ${\cal W'} = {\cal W} \cup FV(s)$ and ${\cal B'} = {\cal B}$ (as $BV(s) \subseteq BV({\cal C}[s])$) we can do $s \fnrl_{\sigma_2} s''$ for some $\theta'_2$ under the conditions stipulated. Now we can put this step into (Contx) to do:
$$
e \equiv {\cal C}[s] \fnrl_{\sigma_2} {\cal C}\sigma_2[s'']  \equiv e'' \mbox{ taking } \sigma=\sigma_2 \mbox{ and } \theta' = \theta'_2
$$
because if $s \fnrl_{\sigma_2} s''$ was a (Narr) step which lifts a (Fapp) step that uses the fresh variant $(f(\overline{p}) \tor r) \in {\cal P}$ and adjusts with $\gamma \in CSubst$, then:
\begin{itemize}
 \item $dom(\sigma_2) \cap BV({\cal C}) = \emptyset$: As $\sigma_2 = mgu(s, f(\overline{p}))$ then $dom(\sigma_2) \subseteq FV(s) \cup FV(f(\overline{p}))$. As $\sigma_2 \ordSus \theta \uplus \gamma$ and it is an mgu then $dom(\sigma_2) \subseteq dom(\theta \uplus \gamma)$. If $X \in FV(s) \cap dom(\sigma_2)$ then $X \not\in dom(\gamma) \subseteq FV(f(\overline{p}) \tor r)$, so it must happen $X \in dom(\theta)$; but then $X \not\in BV({\cal C})$ because $dom(\theta) \cap BV({\cal C}) = \emptyset$ by the variable convention.\\
Otherwise it could happen $X \in FV(f(\overline{p})) \cap dom(\sigma_2)$, then $X$ appears in the fresh variant and so it cannot appear in ${\cal C}$.
\item $\vran(\sigma_2|_{\setminus var(\overline{p})}) \cap BV({\cal C}) = \emptyset$: As $dom(\sigma_2) \subseteq FV(s) \cup FV(f(\overline{p}))$ then we have $\vran(\sigma_2|_{\setminus var(\overline{p})}) = \vran(\sigma_2|_{FV(s)})$. But as $\sigma_2 = mgu(s, f(\overline{p}))$ then $\vran(\sigma|_{FV(s)}) \subseteq FV(f(\overline{p}))$, which are part of the fresh variant, so every variable in $\vran(\sigma_2|_{\setminus var(\overline{p})})$ is fresh and so cannot appear in ${\cal C}$.
\end{itemize}

Then the conditions in Lemma \ref{lem:lifting} hold:
\begin{itemize}
 \item[ii)] $\sigma\theta' = \theta[{\cal W}]$: Because ${\cal W} \subseteq {\cal W'}$, and $\sigma_2\theta'_2 = \theta[{\cal W'}]$, by the proof of the other cases.

 \item[i)] $e''\theta' \equiv e'$: As 
$BV({\cal C}\sigma_2) = BV({\cal C})$, by the variable convention, $BV({\cal C}) \subseteq BV(e) \subseteq BV({\cal B})$, by the hypothesis, and $(dom(\theta'_2) \cup \vran(\theta'_2)) \cap {\cal B} = \emptyset$, by the proof of the other cases, then $(dom(\theta'_2) \cup \vran(\theta'_2)) \cap BV({\cal C}\sigma_2) = \emptyset$. But then:
$$
e''\theta' \equiv ({\cal C}\sigma_2[s''])\theta'_2 \equiv \underbrace{{\cal C}\sigma_2\theta'_2}_{{\cal C}\theta}[\underbrace{s''\theta'_2}_{s'}] \equiv e'
$$
Because we have $s''\theta'_2 \equiv s'$, by the proof of the other cases, and because $FV({\cal C}) \subseteq FV(e) \subseteq {\cal W}$ and $\sigma_2\theta'_2 = \theta[{\cal W}]$, as we saw in the previous case (remember $\sigma=\sigma_2$ and $\theta' = \theta'_2$).

\item[iii)] $(dom(\theta') \cup \vran(\theta')) \cap {\cal B} = \emptyset$: Because $\theta' = \theta'_2$ and the proof of the other cases.
\end{itemize}
\end{description}

The proof for any number of steps proceeds by induction over the number $n$ of steps of the derivation $e\theta \f^n e'$. The base case where $n=0$ is straightforward, as then we have $e\theta \f^0 e\theta \equiv e'$ so we can do $e \fnrl^0_\epsilon~e \equiv e''$, so $\sigma = \epsilon$ and taking $\theta' = \theta$ the lemma holds. In the inductive step we have $e\theta \f e_1 \fe e'$, and we will try to build the following diagram:

\begin{center}
\begin{tikzpicture}[scale=0.8, auto] 
\tikzstyle{nthis}=[npath, minimum size=9mm]
    \node[nthis] (e) at (-2,0) {$e$};
    \node[nthis] (epp1) at (2,0) {$e''_1$};
    \node[nthis] (epp2) at (6,0) {$e''_2$};
    \node (epp) at (7.1,0) {$\equiv e''$};
    \node[nthis] (etheta) at (-2,-2) {$e\theta$};
    \node[nthis] (e1) at (2,-2) {$e_1$};
    \node[nthis] (ep) at (6,-2) {$e'$};
    \draw [->, apath, densely dotted, snake=snake, line after snake=2mm] (e) to node {$l$} node[swap] {$\sigma_1$} (epp1);
    \draw [->, apath, densely dotted, snake=snake, line after snake=2mm] (epp1) to node {$l^*$} node[swap] {$\sigma_2$} (epp2);
        \draw [->, apath] (etheta) to node [swap] {$l$} (e1);
        \draw [->, apath] (e1) to node {$*$} node [swap] {$l$} (ep);
    \draw [|->, apath, shorten <=1pt] (e) to node [swap] {$\theta$} (etheta);
    \draw [|->, apath, shorten <=1pt, dashed] (epp1) to node [swap] {$\theta_1$} (e1);
    \draw [|->, apath, shorten <=1pt, dashed] (epp2) to node {$\theta'_2 = \theta'$} (ep);

\begin{pgfonlayer}{background}
   \filldraw [fondoTerm]
      (e.north -| e.west) rectangle (ep.south -| ep.east);
\end{pgfonlayer}

\end{tikzpicture}
\end{center}

By the previous proof for one step we have $e \fnrl_{\sigma_1} e''_1$ and $\theta'_1 \in CSubst$ under the conditions stipulated. In order to this with the IH we define the sets ${\cal B}_1 = {\cal B} \cup BV(e_1)$ and ${\cal W}_1 = ({\cal W} \setminus dom(\sigma_1)) \cup \vran(\sigma_1) \cup vE$, where $vE$ is the set of extra variables in the fresh variant $f(\overline{p}) \tor s$ used in $e \fnrl_{\sigma_1} e''_1$, if it was a (Narr) step; or empty otherwise. We also define $\theta_1 = \theta'_1|_{{\cal W}_1}$. Then:
\begin{itemize}
 \item $FV(e''_1) \cup dom(\theta_1) \subseteq {\cal W}_1$: We have $dom(\theta_1) \subseteq {\cal W}_1$ by definition of $\theta_1$. On the other hand we have $FV(e''_1) \subseteq {\cal W}_1$ because given $X \in FV(e''_1)$ we have two possibilities:
 \begin{itemize}
 \item[a)] $X \in FV(e)$): then $X \notin dom(\sigma_1)$ since otherwise it disappears in the step $e \fnrl_{\sigma_1} e''$. As $dom(\theta) \cup FV(e) \subseteq {\cal W}$ then $X \in {\cal W} \setminus dom(\sigma_1)$, so $X \in {\cal W}_1$.
 \item[b)] $X \notin FV(e)$) : then there are two possibilities:
    \begin{itemize}
        \item[i)] $X$ has been inserted by $\sigma_1$, so $X \in \vran(\sigma_1)$ and $X \in {\cal W}_1$.
        \item[ii)] $X$ has been inserted as an extra variable in a (Narr) step. Since the narrowing substitution is a mgu then $\sigma_1$ cannot affect $X$, so $X \in {\cal W}_1$ because $X \in vE$.
    \end{itemize}
\end{itemize}

 \item $e''_1\theta_1 \equiv e_1$: Because as we have seen, $FV(e''_1) \subseteq {\cal W}_1$, and so $e''_1\theta_1 \equiv e''_1\theta'_1|_{{\cal W}_1} \equiv e''_1\theta'_1 \equiv e_1$, by the proof for one step.

 \item $BV(e''_1) \subseteq {\cal B}_1$: As $\theta'_1 \in CSubst$, $e''_1\theta'_1 \equiv e_1$ and no $CSubst$ can introduce any binding then $BV(e_1) = BV(e''_1)$. But ${\cal B}_1 = {\cal B} \cup BV(e_1)$, so $BV(e''_1) = BV(e_1) \subseteq {\cal B}_1$.

 \item $(dom(\theta_1) \cup \vran(\theta_1)) \cap {\cal B}_1 = \emptyset$: As $\theta'_1 \in CSubst$, $e''_1\theta'_1 \equiv e_1$ and no $CSubst$ can introduce any binding then $BV(e_1) = BV(e''_1)$. Then it can happen:
\begin{enumerate}
 \item[a)] $BV(e''_1) \subseteq BV(e)$: Then ${\cal B} =  {\cal B}_1$, as $BV(e_1) = BV(e''_1) \subseteq BV(e) \subseteq {\cal B}$ by hypothesis. Then, as $(dom(\theta'_1) \cup \vran(\theta'_1)) \cap {\cal B} = \emptyset$ by the proof for one step, then $(dom(\theta'_1) \cup \vran(\theta'_1)) \cap {\cal B}_1 = \emptyset$, and so $(dom(\theta_1) \cup \vran(\theta_1)) \cap {\cal B}_1 = \emptyset$, because $\theta_1 = \theta'_1|_{{\cal W}_1}$ and so its domain and variable range is smaller than the domain of $\theta'_1$.
 \item[b)] $BV(e''_1) \supset BV(e)$: Then $e \fnrl_{\sigma_1} e''_1$ must have been a (LetIn) step and so $\sigma = \epsilon$ and $\theta'_1 = \theta$. As the new bounded variable $Z$ is fresh wrt. $\theta$ then it is also fresh for $\theta'_1 = \theta$, and so ${\cal B}_1 = {\cal B} \cup \{Z\}$ has no intersection with $dom(\theta'_1) \cup \vran(\theta'_1)$ nor with $dom(\theta_1) \cup \vran(\theta_1)$, which is smaller.
\end{enumerate}

\item $\sigma_1\theta_1 = \theta[{\cal W}]$: It is enough to see that $\sigma_1\theta_1 = \sigma_1\theta'_1[{\cal W}]$, because we have $\sigma_1\theta'_1 = \theta[{\cal W}]$ by the proof for one step, and this is true because given $X \in {\cal W}$:
\begin{enumerate}
 \item[a)] If $X \in dom(\sigma_1)$ then $FV(X\sigma_1) \subseteq \vran(\sigma_1) \subseteq {\cal W}_1$, so as $\theta_1 = \theta'_1|_{{\cal W}_1}$ then $X\sigma_1\theta_1 \equiv X\sigma_1\theta'_1|_{{\cal W}_1} \equiv X\sigma_1\theta'_1$.
 \item[b)] If $X \in {\cal W} \setminus dom(\sigma_1)$ then $X \in {\cal W}_1$ by definition, and so $X\sigma_1\theta_1 \equiv X\theta_1$ (as $X \not\in dom(\sigma_1)$), $X\theta_1 \equiv X\theta'_1|_{{\cal W}_1} \equiv X\theta'_1$ (as $X \in {\cal W}_1$), and $X\theta'_1 \equiv X\sigma\theta'_1$ (as $X \not\in dom(\sigma_1)$).
\end{enumerate}
\end{itemize}

So we have $e''_1\theta_1 \equiv e_1$ and $e_1 \fe e'$, but then we can apply the induction hypothesis to $e''_1\theta_1 \fe e'$ using ${\cal W}_1$ and ${\cal B}_1$, which fulfill the hypothesis of the lemma, as we have seen. Then we get $e''_1 \fnrl^*_{\sigma_2}~ e''_2$ and $\theta'_2 \in CSubst$ under the conditions stipulated. But then we have:
$$
e \fnrl_{\sigma_1} e''_1 \fnrl^*_{\sigma_2}~ e''_2 \mbox{ taking } e'' \equiv e''_2 \mbox{, } \sigma = \sigma_1\sigma_2 \mbox{ and } \theta' = \theta'_2
$$
for which we can prove the conditions in Lemma \ref{lem:lifting}:
\begin{itemize}
 \item[i)] $e''\theta' \equiv e'$: As $e''\theta' \equiv e''_2\theta'_2 \equiv e'$ by IH.
 \item[ii)] $\sigma\theta' = \theta[{\cal W}]$: That is, $\sigma_1\sigma_2\theta'_2 = \theta[{\cal W}]$. As we have  $\sigma_1\theta_1 = \theta[{\cal W}]$, as we saw before, all that is left is proving $\sigma_1\sigma_2\theta'_2 =\sigma_1\theta_1[{\cal W}]$, which happens because given $X \in {\cal W}$:
\begin{enumerate}
 \item[a)] If $X \in dom(\sigma_1)$ then $FV(X\sigma_1) \subseteq \vran(\sigma_1) \subseteq {\cal W}_1$, so as $\sigma_2\theta'_2 = \theta_1[{\cal W}_1]$ by IH, then $(X\sigma_1)\sigma_2\theta'_2 \equiv (X\sigma_1)\theta_1$.
 \item[b)] If $X \in {\cal W} \setminus dom(\sigma_1)$ then $X \in {\cal W}_1$ by definition, and so, as $\sigma_2\theta'_2 = \theta_1[{\cal W}_1]$ by IH, then $X\sigma_1\sigma_2\theta'_2 \equiv X\sigma_2\theta'_2$ (as $X \not\in dom(\sigma_1)$), $X\sigma_2\theta'_2 \equiv X\theta_1$ (as $X \in {\cal W}_1$), $X\theta_1 \equiv X\sigma_1\theta_1$ (as $X \not\in dom(\sigma_1)$).
\end{enumerate}
 \item[iii)] $(dom(\theta') \cup \vran(\theta')) \cap {\cal B} = \emptyset$: That is $(dom(\theta'_2) \cup \vran(\theta'_2)) \cap {\cal B} = \emptyset$, which happens as $(dom(\theta'_2) \cup \vran(\theta'_2)) \cap {\cal B}_1 = \emptyset$ by IH and ${\cal B} \subseteq {\cal B}_1$.
\end{itemize}

\end{proof}

\subsection{Proofs for Section \ref{letR-classR}}

\nc{\btlr}[1]{\widehat{#1}} 


The let-binding elimination transformation $~\tlr{\_}~$\ satisfies the following interesting properties, which illustrate that its definition is sound.
\begin{lemma}\label{TLemTranRw1}\label{TLemTranRw2}
For all $e,e' \in LExp$, $\con \in Cntxt$, $X \in \var$ we have:
\begin{enumerate}
    \item[i)] $|\tlr{e}| \equiv |e|$.
    \item[ii)] If $e \in Exp$ then $\tlr{e} \equiv e$.
    \item[iii)] $FV(\tlr{e}) \subseteq FV(e)$
    \item[iv)] $\tlr{e[X/e']} = \tlr{e}[X/\tlr{e}']$.
\end{enumerate}
\end{lemma}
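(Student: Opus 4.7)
My plan is to prove all four parts by structural induction on $e \in LExp$, following the three-case split of the transformation $\tlr{\_}$: variables, function/constructor applications $h(e_1,\ldots,e_n)$, and let-expressions $let~X=e_1~in~e_2$. Throughout I will rely on the variable convention to make sure that when handling a let-expression $let~Y=e_1~in~e_2$ under a substitution $[X/e']$ (in part \emph{iv}) or when measuring free variables (in part \emph{iii}), we may assume $Y \not\in \{X\} \cup FV(e') \cup \vran([X/e'])$.

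For parts \emph{i)} and \emph{ii)} the argument is essentially routine unfolding. In \emph{ii)}, since $e \in Exp$ contains no let-binding, the recursive cases of $\tlr{\_}$ simply descend into subterms, so induction gives $\tlr{e}\equiv e$. For \emph{i)}, the variable and constructor/function cases are immediate from the defining equations of $|\cdot|$ and $\tlr{\_}$; the let case is the only interesting one, where
\[
|\tlr{let~X=e_1~in~e_2}| \equiv |\tlr{e_2}[X/\tlr{e_1}]|
\]
and I would apply Lemma~\ref{LCasc1} to rewrite this as $|\tlr{e_2}|[X/|\tlr{e_1}|]$, and then the induction hypothesis gives $|e_2|[X/|e_1|] \equiv |let~X=e_1~in~e_2|$. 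Part \emph{iii)} is a straightforward structural induction as well: in the let case one uses that the free variables of $\tlr{e_2}[X/\tlr{e_1}]$ are contained in $(FV(\tlr{e_2})\setminus\{X\})\cup FV(\tlr{e_1})$, which by IH is contained in $(FV(e_2)\setminus\{X\})\cup FV(e_1)=FV(let~X=e_1~in~e_2)$.

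The only step requiring genuine care is part \emph{iv)}, the commutativity of $\tlr{\_}$ with substitution. The base cases $e\equiv X$ and $e\equiv Y\not\equiv X$ are immediate, and the constructor/function case $h(\overline{e})$ reduces to the IH on each $e_i$. The let case $e\equiv let~Y=e_1~in~e_2$ is where I expect the main obstacle: one must show
\[
\tlr{e_2[X/e']}\bigl[Y/\tlr{e_1[X/e']}\bigr]\;\equiv\;\tlr{e_2}\bigl[Y/\tlr{e_1}\bigr]\bigl[X/\tlr{e'}\bigr].
\]
By the induction hypothesis applied to $e_1$ and $e_2$, the left-hand side equals $\tlr{e_2}[X/\tlr{e'}]\bigl[Y/\tlr{e_1}[X/\tlr{e'}]\bigr]$. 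By the variable convention we have $Y\notin dom([X/\tlr{e'}])\cup\vran([X/\tlr{e'}])$, so the substitution Lemma~\ref{auxBind} applies and converts the right-hand side above into exactly $\tlr{e_2}[X/\tlr{e'}]\bigl[Y/\tlr{e_1}[X/\tlr{e'}]\bigr]$, closing the case. The main subtlety is thus ensuring the side-condition of Lemma~\ref{auxBind} holds, which is guaranteed by our naming conventions; once this is handled, the rest of the proof is purely mechanical.
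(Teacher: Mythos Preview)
Your proposal is correct and follows essentially the same approach as the paper: structural induction on $e$ for all four parts, with the substitution Lemma~\ref{auxBind} handling the let case of part~\emph{iv)}. The paper's own proof is merely a two-line sketch (``easily by induction'' for \emph{i)--iii)}, and ``trivial induction using Lemma~\ref{auxBind}'' for \emph{iv)}), so your write-up is just a more detailed version of the same argument; your explicit use of Lemma~\ref{LCasc1} in the let case of part~\emph{i)} is the natural tool and is implicit in the paper's sketch.
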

\begin{proof}
\begin{enumerate}
    \item[i--iii)] Easily by induction on the structure of $e$.
    \item[iv)] A trivial induction on the structure of $e$, using Lemma \ref{auxBind} for the case when $e$ has the shape $e \equiv let~X=e_1~in~e_2$.
\end{enumerate}
\end{proof}


\teoremi{Lemma \ref{lemNoComp} (Copy lemma)}
For all $e, e_1, e_2 \in Exp$, $X \in \var$:
\begin{enumerate}
    \item[i)]  $e_1 \rw e_2$ implies $e[X/e_1] \rw^* e[X/e_2]$.
    \item[ii)]  $e_1 \rw^* e_2$ implies $e[X/e_1] \rw^* e[X/e_2]$.
\end{enumerate}

\begin{proof}\label{DEMO_lemNoComp}
To prove {\it i)} we proceed by induction on the structure of $e$. Concerning the base cases:
    \begin{itemize}
        \item If $e \equiv X$ then $e[X/e_1] \equiv e_1 \rw e_2 \equiv e[X/e_2]$, by hypothesis.
        \item If $e \equiv Y \in \var \setminus \{X\}$ then $e[X/e_1] \equiv  Y \rw^0 Y \equiv e[X/e_2]$.
        \item Otherwise $e \equiv h$ for some $h \in \Sigma$, so $e[X/e_1] \equiv h \rw^0 h \equiv e[X/e_2]$
    \end{itemize}
Regarding the inductive step, then $e \equiv h(e'_1, \dots, e'_n)$ and so
$$
\begin{array}{ll}
e[X/e_1] \equiv h(e'_1[X/e_1], \dots, e'_n[X/e_1]) \\
\rw^* h(e'_1[X/e_2], \dots, e'_n[X/e_2]) & \mbox{ by IH, $n$ times } \\
\equiv e[X/e_2] \\
\end{array}
$$
The proof for {\it ii)} follows the same structure.
\end{proof}

\teoremi{Lemma \ref{LemLSoundRw} (One-Step Soundness of let-rewriting wrt. term rewriting)}
For all $e, e' \in LExp$ we have that $e \f e'$ implies $\tlr{e} \rw^* \tlr{e}'$.

\begin{proof}\label{DEMO_LemLSoundRw}
We proceed by a case distinction over the rule of let-rewriting used in the step $e \f e'$.
\begin{description}
    \item[\crule{Fapp}] Then we have:
$$
e \equiv f(\overline{p})\theta \f r\theta \equiv e' \mbox{ for some } (f(\overline{p}) \tor r) \in \prog, \theta \in CSubst
$$
But then $f(\overline{p})\theta, r\theta \in Exp$, therefore $\tlr{f(\overline{p})\theta} \equiv f(\overline{p})\theta$ and $\tlr{r\theta} \equiv r\theta$, by Lemma \ref{TLemTranRw1} {\em ii)}, and so we can link $\tlr{e} \equiv \tlr{f(\overline{p})\theta} \equiv f(\overline{p})\theta \rw r\theta \equiv \tlr{r\theta} \equiv \tlr{e'}$, by a term rewriting step.
    \item[\crule{LetIn}] Then we have:
$$
e \equiv h(e_1, \ldots, e_k, \ldots, e_n) \f let~X=e_k~in~h(e_1, \ldots, X, \ldots, e_n) \equiv e'$$
where $X$ is a fresh variable (among other conditions). But then
$$
\begin{array}{ll}
\tlr{e'} \equiv \tlr{h(e_1, \ldots, X, \ldots, e_n)}[X/\tlr{e_k}] \equiv h(\tlr{e_1}, \ldots, X, \ldots, \tlr{e_n})[X/\tlr{e_k}] & \\
\equiv h(\tlr{e_1}, \ldots, \tlr{e_k}, \ldots, \tlr{e_n}) & \mbox{ as $X$ is fresh} \\
\equiv \tlr{h(e_1, \ldots, e_k, \ldots, e_n)} \equiv \tlr{e} & \\
\end{array}
$$
Therefore $\tlr{e} \rw^0 \tlr{e} \equiv \tlr{e'}$.

    \item[\crule{Bind}] Then we have:
$$
e \equiv let~X=t~in~e_1 \f e_1[X/t] \equiv e' \mbox{ with $t \in CTerm$}
$$
But then $\tlr{e} \equiv \tlr{e_1}[X/\tlr{t}] \equiv \tlr{e_1[X/t]} \equiv \tlr{e'}$, by Lemma \ref{TLemTranRw2} {\em iv)}, hence $\tlr{e} \rw^0 \tlr{e} \equiv \tlr{e'}$.

    \item[\crule{Elim}] Then we have:
$$
e \equiv let~X=e_1~in~e_2 \f e_2 \equiv e' \mbox{ with $X \not\in FV(e_2)$}
$$
But then
$$
\begin{array}{ll}
\tlr{e} \equiv \tlr{e_2}[X/\tlr{e_1}] \\
\equiv \tlr{e_2[X/e_1]} & \mbox{ by Lemma \ref{TLemTranRw2}}~iv) \\
\equiv \tlr{e_2} \equiv \tlr{e'} & \mbox{ as $X \not\in FV(e_2)$}
\end{array}
$$
Therefore $\tlr{e} \rw^0 \tlr{e} \equiv \tlr{e'}$.

\item[\crule{Flat}] Then we have:
$$
e \equiv let~X = (let~Y = e_1~in~e_2)~in~e_3 \f let~Y=e_1~in~(let~X = e_2~in~e_3) \equiv e'
$$
where $Y \notin FV(e_3)$. But then
$$
\begin{array}{@{}ll@{}}
\tlr{e} \equiv \tlr{e_3}[X/\tlr{let~Y=e_1~in~e_2}] \equiv \tlr{e_3}[X/(\tlr{e_2}[Y/\tlr{e_1}])] \\
\equiv \tlr{e_3}[X/\tlr{e_2}][Y/\tlr{e_1}] & Y \notin FV(\tlr{e_3}) \mbox{ by Lemma \ref{TLemTranRw1} {\em iii)} }\\
\equiv (\tlr{let~X=e_2~in~e_3})[Y/\tlr{e_1}] \equiv \tlr{e'} & \\
\end{array}
$$
Therefore $\tlr{e} \rw^0 \tlr{e} \equiv \tlr{e'}$.

    \item[\crule{Contx}] Then we have:
$$
e \equiv \con[e_1] \f \con[e_2] \equiv e'
$$
with $e_1 \f e_2$ by some of the previous rules, therefore $\tlr{e_1} \rw^* \tlr{e_2}$ by the proof of the previous cases. We will prove that $\tlr{e_1} \rw^* \tlr{e_2}$ implies $\tlr{\con[e_1]} \rw^* \tlr{\con[e_2]}$, thus getting $\tlr{e} \rw^* \tlr{e'}$ as a trivial consequence.

We proceed by induction on the structure of $\con$. Regarding the base case then $\con \equiv []$ and so $\tlr{\con[e_1]} \equiv \tlr{e_1} \rw^* \tlr{e_2} \equiv \tlr{\con[e_2]}$ by hypothesis. For the inductive step:\begin{itemize}
    \item If $\con \equiv let~X=\con'~in~a$ then by IH we get $\tlr{\con'[e_1]} \rw^* \tlr{\con'[e_2]}$, and so
$$
\begin{array}{ll}
\tlr{\con[e_1]} \equiv \tlr{a}[X/\tlr{\con'[e_1]}] \\
\rw^* \tlr{a}[X/\tlr{\con'[e_2]}] \mbox{ by IH and Lemma \ref{lemNoComp}} \\
\equiv \tlr{\con[e_2]}
\end{array}
$$
Notice that it is precisely because of this case that we cannot say that $e \f e'$ implies $\tlr{e} \rw^* \tlr{e'}$ in zero or one steps, because the copies of $\tlr{\con'[e_1]}$ made by the substitution $[X/\tlr{\con'[e_1]}]$ may force the zero or one steps derivation from $\tlr{\con'[e_1]}$ to be repeated several times in derivation $\tlr{a}[X/\tlr{\con'[e_1]}] \rw^* \tlr{a}[X/\tlr{\con'[e_2]}]$. This is typical situation when mimicking term graph rewriting derivations by term rewriting.

    \item If $\con \equiv let~X=a~in~\con'$ then $\tlr{\con[e_1]} \equiv \tlr{\con'[e_1]}[X/\tlr{a}] \rw^* \tlr{\con'[e_2]}[X/\tlr{a}] \equiv \tlr{\con[e_2]}$, by IH combined with closedness under substitutions of term rewriting.

    \item Otherwise $\con \equiv h(a_1, \ldots, \con', \ldots, a_n)$ and then $\tlr{\con[e_1]} \equiv h(\tlr{a_1}, \ldots, \tlr{\con'[e_1]}, \ldots, \tlr{a_n})$ $ \rw^* h(\tlr{a_1}, \ldots, \tlr{\con'[e_2]}, \ldots, \tlr{a_n}) \equiv \tlr{\con[e_2]}$ by IH.

\end{itemize}
\end{description}
\end{proof}

\teoremi{Proposition \ref{propDenSubstElemsDsord}}
For all $\sigma \in Subst_\perp$, $\theta \in \den{\sigma}$, we have that $\theta \dsord \sigma$.

\begin{proof}\label{DEMO_propDenSubstElemsDsord}
Given some $X \in \var$, we have two possibilities. If $X \in dom(\theta)$ then taking any $t \in CTerm_\perp$ such that $\cl \theta(X) \clto t$, by Lemma \ref{lemmashells} we have $t \ordap \theta(X)$, because $\theta \in \den{\sigma} \subseteq CSubst_\perp$. But $\theta \in \den{\sigma}$ implies $\cl \sigma(X) \clto \theta(X)$, therefore $\cl \sigma(X) \clto t$ by the polarity from Proposition \ref{propCrwlletPolar}, which holds for \crwl\ too. Hence $\den{\theta(X)} \subseteq \den{\sigma(X)}$.

On the other hand, if $X \not\in dom(\theta)$ then for any $t \in CTerm_\perp$ such that $\cl \theta(X) \equiv X \clto t$ we have that $t \equiv \perp$ or $t \equiv X$. If $t \equiv \perp$ then $\cl \sigma(X) \clto t$ by rule \crule{B}. Otherwise $\theta \in \den{\sigma}$ implies $\cl \sigma(X) \clto \theta(X) \equiv X \equiv t$. Hence $\den{\theta(X)} \subseteq \den{\sigma(X)}$.
\end{proof}

\teoremi{Proposition \ref{auxDenSubs2}}
For all $\sigma \in DSusbt_{\perp}$, $\den{\sigma}$ is a directed set.

\begin{proof}\label{DEMO_auxDenSubs2}
For any preorder $\leq$, any directed set $D$ wrt. it and any elements $e_1, e_2 \in D$ by $e_1 \sqcup_D e_2$ we denote the element $e_3 \in D$ such that $e_1 \leq e_3$ and $e_2 \leq e_3$ that must exist because $D$ is directed.

Now, given any $\sigma \in DSubst_\perp$ we have that $\forall X \in \var, \den{\sigma(X)}$ is a directed set, because if $X \in dom(\sigma)$ then we can apply the definition of $DSubst_\perp$ and otherwise $\den{X} = \{X, \perp\}$, which is directed. Now given $\theta_1, \theta_2 \in \den{\sigma}$ we can define $\theta_3 \in CSubst_\perp$ as $\theta_3(X) = \theta_1(X) \sqcup_{\sigma(X)} \theta_2(X)$, which fulfills:
\begin{enumerate}
 \item $\theta_i \ordap \theta_3$ for $i \in \{1,2\}$, because for any $X \in \var$ we have that $\den{\sigma(X)}$ is directed (as we saw above) and $\theta_i(X) \in \den{\sigma(X)}$ (because $\theta_1, \theta_2 \in \den{\sigma}$), therefore $\theta_i(X) \ordap \theta_1(X) \sqcup_{\sigma(X)} \theta_2(X) = \theta_3(X)$ by definition.
 \item $\theta_3 \in \den{\sigma}$, because $\forall X \in \var, \theta_3(X) = \theta_1(X) \sqcup_{\sigma(X)} \theta_2(X) \in \den{\sigma(X)}$ by definition.
\end{enumerate}
\end{proof}

We will use the following lemma about non-triviality of substitution denotations as an auxiliary result for proving Lemma \ref{auxDenSubs3}.
\begin{lemma}\label{lemInfDomDenSubst}
For all $\sigma \in Subst_\perp$ we have that $\den{\sigma} \not= \emptyset$ and given $\overline{X} = dom(\sigma)$ then $[\overline{X/\perp}] \in \den{\sigma}$.
\end{lemma}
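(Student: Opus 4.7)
The plan is to exhibit an explicit element of $\den{\sigma}$, namely the substitution $\theta = [\overline{X/\perp}]$ with $\overline{X} = dom(\sigma)$, which will simultaneously prove non-emptiness and the specific membership claim.

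First I would check that $\theta \in CSubst_\perp$: this is immediate because $\perp \in CTerm_\perp$, so every variable in $dom(\theta)$ is mapped to a partial c-term. Next, to show $\theta \in \den{\sigma}$, by the definition of \crwl-denotation for a substitution it suffices to verify that $\cl \sigma(Y) \clto \theta(Y)$ for every $Y \in \var$. I would split on whether $Y \in dom(\sigma)$.

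For $Y \in dom(\sigma) = \overline{X}$, we have $\theta(Y) = \perp$, so we must derive $\sigma(Y) \clto \perp$, which is an immediate application of the \crwl-rule \crule{B}. For $Y \notin dom(\sigma)$, we have $\sigma(Y) = Y$ and also $Y \notin dom(\theta)$ so $\theta(Y) = Y$; hence we need $Y \clto Y$, which follows from rule \crule{RR}. This covers all variables and yields $\theta \in \den{\sigma}$, so $\den{\sigma}$ is nonempty.

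There is no real obstacle here: the proof is a direct application of the base rules \crule{B} and \crule{RR} of the \crwl-calculus together with the definition of $\den{\sigma}$. The only small subtlety is making sure that $dom(\theta)$ is exactly $dom(\sigma)$ so that the case analysis on $Y \in dom(\sigma)$ correctly determines both $\sigma(Y)$ and $\theta(Y)$, but this holds by construction of $\theta$.
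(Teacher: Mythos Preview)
Your proposal is correct and essentially identical to the paper's own proof: both exhibit $[\overline{X/\perp}]$ as the witness, check it lies in $CSubst_\perp$, and verify $\sigma(Y)\clto\theta(Y)$ by the same two-case split using rule \crule{B} when $Y\in dom(\sigma)$ and rule \crule{RR} otherwise.
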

\begin{proof}
It is enough to prove that if $\overline{X} = dom(\sigma)$ then $[\overline{X/\perp}] \in \den{\sigma}$. First of all $[\overline{X/\perp}] \in CSubst_\perp$ by definition. Now consider some $Y \in \var$.
\begin{enumerate}
    \item[i)] If $Y \in \overline{X}$ then $\sigma(Y) \clto \perp \equiv Y[\overline{X/\perp}]$, by rule \crule{B}.
    \item[ii)] Otherwise $Y \not\in \overline{X} = dom(\sigma)$, hence $\sigma(Y) \equiv Y \clto Y \equiv Y[\overline{X/\perp}]$, by rule \crule{RR}.
\end{enumerate}
\end{proof}

\teoremi{Lemma \ref{auxDenSubs3}}
For all $\sigma \in DSusbt_{\perp}$, $e \in Exp_{\perp}, t \in CTerm_{\perp}$,
$$
\mbox{if } e\sigma \clto t \mbox{ then } \exists \theta \in \den{\sigma} \mbox{ such that } e\theta \clto t
$$

\begin{proof}\label{DEMO_auxDenSubs3}
We proceed by a case distinction over $e$:
\begin{itemize}
 \item If $e \equiv X \in dom(\sigma)$ : Then $e\sigma \equiv \sigma(X) \clto t$, so we can define:
$$
\theta(Y) = \left\{\begin{array}{ll}
t & \mbox{ if } Y \equiv X \\
\perp & \mbox{ if } Y \in dom(\sigma) \setminus \{X\}\\
Y & \mbox{ otherwise}
\end{array} \right.
$$
Then $\theta \in \den{\sigma}$ because obviously $\theta \in CSusbt_\perp$, and given $Z \in \var$.
\begin{enumerate}
    \item[a)] If $Z \equiv X$ then $\sigma(Z) \equiv \sigma(X) \clto t \equiv \theta(Z)$ by hypothesis.
    \item[b)] If $Z \in (dom(\sigma) \setminus \{X\})$ then $\sigma(Z) \clto \perp \equiv \theta(Z)$ by rule \crule{B}.
     \item[c)] Otherwise $Z \not\in dom(\sigma)$ and then $\sigma(Z) \equiv Z \clto Z \equiv \theta(Z)$ by rule \crule{RR}.
\end{enumerate}
But then $e\theta \equiv \theta(X) \equiv t \clto t$ by Lemma \ref{lemmashells}---which also holds for \crwl, because \crwl\ and \crwll\ coincide for c-terms--- , as $t \in CTerm_\perp$.

 \item If $e \equiv X \not\in dom(\sigma)$ : Then given $\overline{Y} = dom(\sigma)$ 
we have $[\overline{Y/\perp}] \in \den{\sigma}$ by Lemma \ref{lemInfDomDenSubst}, so we can take $\theta = \{[\overline{Y/\perp}]\}$ for which $\den{e\sigma} = \den{X\sigma} = \den{X} = \den{X[\overline{Y/\perp}]}= \den{X\theta}$.
 \item If $e \not\in \var$ then we proceed by induction over the structure of $e\sigma \clto t$:
\begin{description}
 \item[Base cases]~
 \begin{description}
  \item[\clrule{B}] Then $t \equiv \perp$, so given $\overline{Y} = dom(\sigma)$ we can take $\theta = \{[\overline{Y/\perp}]\}$ for which $e\theta \clto \perp$ by rule \crule{B}.
  \item[\clrule{RR}] Then $e \in \var$ and we are in the previous case.
  \item[\clrule{DC}] Similar to the case for $e \equiv X \not\in dom(\sigma)$.
 \end{description}
 \item[Inductive steps]~
  \begin{description}
   \item[\clrule{DC}] Then $e \equiv c(e_1, \ldots, e_n)$, as $e \not\in \var$, and we have:
$$
\infer[DC]{e\sigma \equiv c(e_1\sigma, \ldots, e_n\sigma) \clto c(t_1, \ldots, t_n) \equiv t}
          {
       e_1\sigma \clto t_1
    \ \ldots
    \  e_n\sigma \clto t_n
          }
$$
Then by IH or the proof of the other cases we have that $\forall i \in \{1, \ldots, n\}$. $\exists \theta_i \in \den{\sigma}$ such that $e_i\theta_i \clto t_i$. But as $\sigma \in DSusbt_{\perp}$ then $\den{\sigma}$ is directed by Lemma \ref{auxDenSubs2},
therefore there must exist some $\theta \in \den{\sigma}$ such that $\forall i \in \{1, \ldots, n\}. \theta_i \sqsubseteq \theta$, and so by Proposition \ref{PropMonSubstCrwlLet} ---which also holds for \crwl, by Theorem \ref{thEquivCrwlCrwllet}--- we have $\forall i \in \{1, \ldots, n\}. e_i\theta \clto t_i$, so we can build the following proof:
$$
\infer[DC]{e\theta \equiv c(e_1\theta, \ldots, e_n\theta) \clto c(t_1, \ldots, t_n) \equiv t}
          {
            e_1\theta \clto t_1
       \ \ldots
            e_n\theta \clto t_n
           }
$$
   \item[\clrule{OR}] Very similar to the proof of the previous case. We also have $e \equiv f(e_1, \ldots, e_n)$ (as $e \not\in \var$) and given a proof for $e\sigma \equiv f(e_1, \ldots, e_n)\sigma \clto t$, so we can apply the IH or the proof of the other cases to every $e_i\sigma \clto p_i\mu$ to get some $\theta_{i} \in \den{\sigma}$ such that $e_i\theta_i \clto p_i\mu$. Then we can use Lemma \ref{auxDenSubs2} and Proposition \ref{PropMonSubstCrwlLet} to use the obtained $\theta$ to compute the same values for the arguments of $f$, thus using the same substitution $\mu \in CSubst_\perp$ for parameter passing in \crule{OR}.
 \end{description}
\end{description}
\end{itemize}
\end{proof}

\teoremi{Theorem \ref{TAdCrwlRw}}
Let $\prog$ be a \crwl-deterministic program, and  $e,e'\in Exp, t \in CTerm$. Then:
\begin{enumerate}
    \item[a)] $e \rw^* e'$ implies $e \fe e''$ for some $e'' \in LExp$ with $|e''| \sqsupseteq |e'|$.
    \item[b)] $e \rw^* t$\ iff\ $e \fe t$\ iff\ $\cl e \clto t$.
\end{enumerate}
\begin{proof}

\begin{enumerate}
\item[a)] Assume $e \rw^* e'$. By Lemma \ref{LCompCdRw}, $\den{e'} \subseteq \den{e}$ and by Lemma \ref{lemmashells} we have $|e'| \in \den{e'}$, then $|e'| \in \den{e}$. 
Therefore, by Theorem \ref{TheorEquivLetRwDown} there exists
$e'' \in LExp$ such that $e \fe e''$ with $|e''| \sqsupseteq |e'|$.


\item[b)] The parts $e \fe t$ iff $\cl e \clto t$, and $e \fe t$ implies $e \rw^* t$ have been already proved for arbitrary programs in Theorems \ref{TheorEquivLetRwDown} and  \ref{TLSoundRw} respectively. What remains to be proved is that
$e \rw^* t$ implies $e \fe t$ (or the equivalent $\cl e \clto t$).
Assume $e \rw^* t$. Then $\den{t} \subseteq \den{e}$ by Lemma \ref{LCompCdRw}.
Now, by Lemma \ref{lemmashells} $t\in \den{t}$, and therefore $t\in \den{e}$, which exactly
means that $\cl e \clto t$.
\end{enumerate}
\end{proof}

\end{document}